\newtheorem{lemma}{Lemma}
\newtheorem{proposition}[lemma]{Proposition}
\newtheorem{theorem}[lemma]{Theorem}
\numberwithin{lemma}{section}
\theoremstyle{definition}
\newtheorem{definition}[lemma]{Definition}
\theoremstyle{definition}
\newtheorem{remark}[lemma]{Remark}
\theoremstyle{definition}
\theoremstyle{definition}
\newtheorem{example}[lemma]{Example}
\theoremstyle{definition}
\numberwithin{equation}{section}
\newcommand{\Dcal}{\mathcal{D}}
\newcommand{\Fcal}{\mathcal{F}}
\newcommand{\Mcal}{\mathcal{M}}
\newcommand{\Rbb}{\mathbb{R}}
\newcommand{\Cb}{\mathbb{C}}
\newcommand{\Nb}{\mathbb{N}}
\newcommand{\Rb}{\mathbb{R}}
\newcommand{\Zb}{\mathbb{Z}}
\DeclareMathOperator{\diag}{diag}
\DeclareMathOperator{\rk}{rk}
\DeclareMathOperator{\SCI}{SCI}
\DeclareMathOperator{\spc}{sp}
\DeclareMathOperator{\spne}{sp_{N,\epsilon}}
\newcommand{\tr}{\textit{true}}
\newcommand{\f}{\textit{false}}
\DeclareMathOperator{\clos}{cl}
\DeclareMathOperator{\diam}{diam}
\DeclareMathOperator{\dist}{dist}
\newcommand{\Kc}{\mathcal{K}}
\newcommand{\ii}{\mathrm{i}}
\newcommand{\N}{\mathbb{N}}
\newcommand{\R}{\mathbb{R}}
\newcommand{\C}{\mathbb{C}}
\newcommand{\Yes}{\textit{Yes}}
\newcommand{\No}{\textit{No}}
\def\XXint#1#2#3{{\setbox0=\hbox{$#1{#2#3}{\int}$} 
\vcenter{\hbox{$#2#3$}}\kern-.5\wd0}}
\newcommand{\be}{\begin{equation}}
\newcommand{\en}{\end{equation}}
\newcommand{\rat}{\operatorname{Rat}}
\newcommand*{\MA}[1]{{\color{magenta}#1}}
\begin{document}

\title[Computing spectra]{Computing spectra - On the Solvability Complexity Index Hierarchy \\and Towers of Algorithms}

\author{J. Ben-Artzi} 
\address{School of Mathematics, Cardiff University}
\email{Ben-ArtziJ@cardiff.ac.uk}

\author{M. J. Colbrook} 
\address{DAMTP, University of Cambridge}
\email{m.colbrook@damtp.cam.ac.uk}

\author{A. C. Hansen} 
\address{DAMTP, University of Cambridge}
\email{a.hansen@damtp.cam.ac.uk}

\author{O. Nevanlinna} 
\address{Department of Mathematics and Systems Analysis, Aalto University}
\email{olavi.nevanlinna@aalto.fi}

\author{M. Seidel} 
\address{West Saxon University of Applied Sciences, Zwickau}
\email{markus.seidel@fh-zwickau.de}

\keywords{Computational spectral problem, quantum mechanics, Smale's problem on iterative algorithms, Solvability Complexity Index hierarchy, foundations of computational mathematics, computer-assisted proofs}
\subjclass[2010]{47A10 (primary) and 81Q10, 34L16, 46N40 (secondary)}

\maketitle

%\vspace{-1mm}
%\input{content/title.tex}
%\input{content/abstract.tex}
%\input{content/abstract2.tex}

%\bnote{I suggest that here we gather some general editing remarks:}

%\begin{enumerate}
%\item
%We should decide if we use $\epsilon$ or $\varepsilon$
%\item
%Do we write the complex number $\sqrt{-1}$ as $i$ or as $\mathbf{{i}}$?
%\item
%We need consistent notation for objects such as open/closed balls and resolvents
%\item
%Consistency in referencing theorems in books/papers
%\item
%Consistency in enumerations and itemized lists
%\item
%Denote an $n\times n$ identity matrix $I$ or $I_n$?
%\end{enumerate}

%\bnote{end of general comments}

\section{Introduction}\label{sec:intro}
This paper resolves the long-standing computational spectral problem. That is to determine the existence of algorithms that can compute spectra $\mathrm{sp}(A)$ of classes of bounded operators $A = \{a_{ij}\}_{i,j \in \mathbb{N}} \in \mathcal{B}(l^2(\mathbb{N}))$, given the matrix elements $\{a_{ij}\}_{i,j \in \mathbb{N}}$, that are sharp in the sense that they realise the boundaries of what a digital computer can achieve. Similarly, for a Schr\"odinger operator $H = -\Delta+V$, determine the existence of algorithms that can compute the spectrum $\mathrm{sp}(H)$ given point samples of the potential function $V$. In order to solve the problems, we establish the Solvability Complexity Index (SCI) hierarchy, based on the SCI introduced by one of the authors in \cite{Hansen_JAMS}. This is a classification hierarchy for all types of problems in computational mathematics that allows for classifications determining the boundaries of what computers can achieve in scientific computing. As a consequence, the SCI hierarchy 
provides classifications of computational problems that can be used in computer-assisted proofs, see \S \ref{sec:comp_ass_proofs} and \S \ref{sec:role_SCI_comp_ass}. 

The SCI hierarchy captures many key computational issues in the history of mathematics including the insolvability of the quintic, Smale's problem on the existence of iterative generally convergent algorithm for polynomial root finding, the computational spectral problem, inverse problems, optimisation etc.

Given the many applications in mathematical physics, analysis, quantum chemistry, statistical mechanics, quantum mechanics, quasicrystals, optics etc., the problem of computing spectra of operators has fascinated and frustrated mathematicians since the early work by H. Goldstine, F. Murray and J. von Neumann \cite{Goldstine} in the 1950s, yielding a vast literature (see \S \ref{sec:prev_work}). W. Arveson \cite{Arveson_role_of94} pointed out in the early 1990s that: "{\it Unfortunately, there is a dearth of literature on this basic problem, and so far
as we have been able to tell, there are no proven techniques}" (see also A. B{\"o}ttcher's Problem I in \cite{Albrecht_Fields}).   Arveson considered computing spectra from matrix elements $\{a_{ij}\}_{i,j \in \mathbb{N}} \in \mathcal{B}(l^2(\mathbb{N}))$, however, the situation is not better for the Schr\"odinger case. In particular, despite more than 90 years of quantum mechanics, it is still unknown how to compute spectra of $-\Delta_{\mathrm{discrete}} + V$ on lattices and $-\Delta+V$ on $\mathrm{L}^2(\mathbb{R}^d)$ given point samples from the potential function $V$.  

We solve these problems by providing algorithms that compute spectra and approximate eigenvectors, allowing for problems that were previously out of reach. We prove lower bounds yielding sharp classification results and optimality of the algorithms. The results may be surprising and link to many areas of mathematics. 

\begin{labeling}{\,\,}
\item[\,\,\,\,\,\textit{Classifications and new algorithms}:] 
The SCI hierarchy induces a total ordering $\leq_{\mathrm{SCI}}$ (see Remark \ref{rem:SCI_order}) on the family of computational spectral problems describing their difficulty. For example, given infinite matrices of the form $A = \{a_{ij}\}_{i,j \in \mathbb{N}} \in \mathcal{B}(l^2(\mathbb{N}))$, we prove the following:
\begin{equation}\label{eq:SCI_ordering}
\begin{split}
\text{Computing } \mathrm{sp}(A), \, A \text{ is diagonal}  &=_{\mathrm{SCI}} \text{ computing } \mathrm{sp}(-\Delta+V) \text{ with bounded } V\\
&=_{\mathrm{SCI}} \text{ computing } \mathrm{sp}(-\Delta_{\mathrm{discrete}}+V) \text{ on any lattice}\\
&<_{\mathrm{SCI}} \text{ computing } \mathrm{sp}(A), \, A \text{ is compact}\\
&=_{\mathrm{SCI}} \text{ computing } \mathrm{sp}(-\Delta+V) \text{ with } V \text{ blowing up at } \infty\\
&<_{\mathrm{SCI}} \text{ computing } \mathrm{sp}(A), \, A \text{ is self-adjoint}.\\
\end{split}
\end{equation}
Indeed, \eqref{eq:SCI_ordering} shows that computing spectra of Schr\"odinger operators (the first equalities hold also in many non-Hermitian cases) from point samples of a bounded potential function $V$ is not harder than computing the spectrum of a diagonal infinite matrix, the simplest of the non-trivial infinite-dimensional spectral problems. Paradoxically, the problem of computing spectra of compact operators, for which the method has been known for decades, is strictly harder than the problem of computing spectra of such Schr\"odinger operators, which has been open for more than half a century. The new algorithms and classification results finally solve this problem allowing computations that before were unachievable, see \S \ref{numerics}.

\vspace{1mm}

\item[\,\,\,\,\,\textit{Higher part of the SCI hierarchy - why algorithms were not found}:] We prove that in order to compute spectra or essential spectra of arbitrary infinite matrices one needs three limits in the computation, and it is impossible with two limits - these problems are very high up in the SCI hierarchy. 
In particular, there does exist a family of algorithms $\{\Gamma_{n_3,n_2,n_1}\}$ such that for all $A = \{a_{ij}\}_{i,j \in \mathbb{N}} \in \mathcal{B}(l^2(\mathbb{N})$, 
\[
\lim_{n_3\rightarrow\infty}\lim_{n_2\rightarrow\infty}\lim_{n_1\rightarrow\infty}\Gamma_{n_3,n_2,n_1}(A) = \mathrm{sp}(A).
\]
Yet, for any family of algorithms $\{\Gamma_{n_2,n_1}\}$ based on two limits there is an $A$ such that 
\[
\lim_{n_2\rightarrow\infty}\lim_{n_1\rightarrow\infty}\Gamma_{n_2,n_1}(A) \neq \mathrm{sp}(A).
\]
In the self-adjoint case, however, one needs two limits. These phenomena, that are similar to the solution to Smale's problem (see below), explain Arveson's comment, why there have been no known techniques for the general cases, and why it has taken substantial time to resolve the computational spectral problem. Indeed, classical approaches (see \S \ref{sec:prev_work}), including the $C^*$-algebra techniques (see W. Arveson \cite{Arveson_cnum_lin94, Arveson_noncommute93,Arveson_role_of94,Arveson_Improper93,  Arveson_discrete91} and N. Brown \cite{brown2007quasi, Brown_2006, Brown_Memoars, brown2002}) also used for the Schr\"odinger case, yield algorithms based on one limit. By the results above, algorithms based on one limit can never capture the general problem even in the self-adjoint case. However, classical approaches yield invaluable classification results in the lower part of the SCI hierarchy. 

\vspace{1mm}

\item[\,\,\,\,\,\textit{Computer-assisted proofs}:]  As we point out in \S \ref{sec:role_SCI_comp_ass}, the recent proof of Kepler's conjecture (Hilbert's 18th problem) \cite{Hales_Annals, hales_Pi}, led by T. Hales, is a striking example of a computer-assisted proof relying on computing non-computable problems (in the Turing sense). This may seem paradoxical, however, as the SCI hierarchy reveals and explains, there are many computational problems that are non-computable, that still can be used in computer-assisted proofs. Another example of non-computable problems used in computer-assisted proofs is the Dirac--Schwinger conjecture in spectral theory proved by C. Fefferman and L. Seco  \cite{fefferman1990, fefferman1992, fefferman1993aperiodicity,  fefferman1994, fefferman1994_2, fefferman1995, fefferman1996interval, fefferman1996, fefferman1997}. The SCI hierarchy provides a natural framework for determining which computational problems are suitable for computer-assisted proofs and explains why, for example, Kepler's conjecture and the Dirac--Schwinger conjecture can be resolved despite the above mentioned paradox. In fact, in both of the proofs of these conjectures one implicitly proves $\Sigma_1$ classifications (see \S \ref{sec:informal}) in the SCI hierarchy. Moreover, our classification results and algorithms for the computational spectral problem open up for new use of computer-assisted proofs in mathematical physics since the $\Sigma_1$ classifications yield algorithms that will never make mistakes.     

\vspace{1mm}

\item[\,\,\,\,\,\textit{Smale's problem on the existence of iterative generally convergent algorithm}:] 
An example of how the SCI hierarchy encompasses important foundational results is the question of computing zeros of polynomials with a rational map applied iteratively (such as Newton's method \cite{Smale2}). The problem with Newton's method is that it may not converge. This problem prompted S. Smale  to ask whether there exists an alternative to Newton's method, namely, a purely iterative generally convergent algorithm (see \S \ref{roots_pols}). Smale asked \cite{smale_question}: ``\emph{Is there any
purely iterative generally convergent algorithm for polynomial zero
finding?}" His conjecture was that the answer is `no'.  This problem was settled by C. McMullen in \cite{McMullen1} as follows: yes, if the degree is three; no, if the degree is higher (see also \cite{McMullen2,Smale_McMullen}). However, in \cite{Doyle_McMullen} P. Doyle and C. McMullen demonstrated a striking phenomenon: this problem can be solved in the case of the quartic and the quintic using several limits. Indeed, Smale's question and Doyle and McMullen's results are classification problems in the SCI hierarchy (see \S \ref{roots_pols}).

\end{labeling}

%Moreover, our classification result for Hamiltonians means that the algorithm will always produce correct output up to an arbitrarily small accuracy parameter. This has consequences beyond applications in quantum mechanics and mathematical physics, as such results imply that one can now potentially use the algorithms for computer assisted proofs. 

%\setcounter{tocdepth}{1}
\tableofcontents

%\input{content/SCI_in_math1.tex}

%\bnote{I suggest that here we gather some general editing remarks:}

%\begin{enumerate}
%\item
%We should decide if we use $\epsilon$ or $\varepsilon$
%\item
%Do we write the complex number $\sqrt{-1}$ as $i$ or as $\mathbf{{i}}$?
%\item
%We need consistent notation for objects such as open/closed balls and resolvents
%\item
%Consistency in referencing theorems in books/papers
%\item
%Consistency in enumerations and itemized lists
%\item
%Denote an $n\times n$ identity matrix $I$ or $I_n$?
%\end{enumerate}

%\bnote{end of general comments}

\subsection{The SCI hierarchy - an informal introduction}\label{sec:informal} 
We give an informal description of the SCI hierarchy in order to present the main results. The detailed definitions can be found in \S \ref{sec:background}.
The SCI hierarchy is based on the concept of a computational problem. This is described by a function 
\[
\Xi:\Omega \rightarrow \mathcal{M}
\] that we want to compute, where $\Omega$ is some domain, and $(\mathcal{M},d)$ is a metric space. For example, $\Xi(T) = \mathrm{sp}(T)$ (the spectrum) for some bounded operator $T \in \Omega$ and $\mathcal{M}$ is the collection of non-empty compact subsets of $\mathbb{C}$ equipped with the Hausdorff metric. The SCI was first introduced in the paper {\it ``On the Solvability Complexity Index, the {$n$}-pseudospectrum and approximations of spectra of operators''} \cite{Hansen_JAMS} for spectral problems in order to introduce the concept of several limits for spectral computation. The SCI of a spectral problem is the smallest number of limits needed in order to compute the solution.  However, in the paper above, the main issue was left open: is it necessary to use several limits? In other words, could the SCI collapse to one for all spectral problems, or in fact for all problems in scientific computing? Moreover, as is easily seen, a hierarchy based on only the number of limits needed would not be refined enough to capture the boundaries of what is possible in spectral computation. 
    
In this paper we introduce the general SCI hierarchy (see \S \ref{sec:background} for the formal definition) for all types of computational problems, and the mainstay of the hierarchy are the $\Delta^{\alpha}_k$ classes. The $\alpha$ is related to the model of computation as explained below. Informally, we have the following description. Given a collection $\mathcal{C}$ of computational problems, then
\begin{itemize}
\item[(i)] $\Delta^{\alpha}_0$ is the set of problems that can be computed in finite time, the SCI $=0$.
\item[(ii)] $\Delta^{\alpha}_1$ is the set of problems that can be computed using one limit (the SCI $=1$) with control of the error, i.e. $\exists$ a sequence of algorithms $\{\Gamma_n\}$ such that $d(\Gamma_n(A), \Xi(A)) \leq 2^{-n}, \, \forall A \in \Omega$.
\item[(iii)] $\Delta^{\alpha}_2$ is the set of problems that can be computed using one limit (the SCI $=1$) without error control, i.e. $\exists$ a sequence of algorithms $\{\Gamma_n\}$ such that $\lim_{n\rightarrow \infty}\Gamma_n(A) = \Xi(A), \, \forall A \in \Omega$.
\item[(iv)] $\Delta^{\alpha}_{m+1}$, for $m \in \mathbb{N}$, is the set of problems that can be computed by using $m$ limits, (the SCI $\leq m$), i.e. $\exists$ a family of algorithms $\{\Gamma_{n_m, \hdots, n_1}\}$ such that 
$$
\lim_{n_m \rightarrow\infty}\hdots \lim_{n_1\rightarrow\infty}\Gamma_{n_m,\hdots, n_1}(A) = \Xi(A), \, \forall A \in \Omega.
$$
 \end{itemize}
In general, this hierarchy cannot be refined unless there is some extra structure on the metric space $\mathcal{M}.$ The hierarchy typically does not collapse, and we have:
\begin{equation}\label{SCI1}
 \Delta_0^{\alpha} \subsetneq \Delta_1^{\alpha} \subsetneq \Delta_2^{\alpha} \subsetneq \hdots \subsetneq \Delta^{\alpha}_{m} \subsetneq \hdots.
\end{equation}
However, depending on the collection $\mathcal{C}$ of computational problems, the hierarchy \eqref{SCI1} may terminate for a finite $m$, or it may continue for arbitrary large $m$. For computational spectral problems the hierarchy terminates, see Figure \ref{fig:SCI_bounded_matrix} and Figure \ref{fig:SCI_schrod}.   

\begin{remark}[Clash of notation]
The $\Delta$ notation for the Laplacian and the $\Delta^{\alpha}_{m}$ notation for the classes in the SCI hierarchy is a slight mismatch, however, the meaning will always be clear from the context. 
\end{remark}
 
 The SCI hierarchy can be refined if the metric space $\mathcal{M}$ allows for convergence from ``above'' and ``below'', for example when considering the Hausdorff metric, which is natural for spectral problems. The motivation behind the refinement is to characterise the intricate classifications of different problems. For example, consider $\Omega$ to be the class of all diagonal operators $T \in \mathcal{B}(l^2(\mathbb{N}))$ of the form
 \begin{equation}\label{eq:diag_oper}
 T =\begin{pmatrix}
a_1& & & \\
 &a_2& &  \\
 & &a_3&  \\
 & & & \ddots \\
\end{pmatrix},
\qquad a_j \in \mathbb{C}.
 \end{equation}
 The problem of computing the spectrum $\mathrm{sp}(T)$ of such $T$s is trivially not in $\Delta^{\alpha}_1$. However, one can simply choose an algorithm $\Gamma_n$ to collect $\{a_j\}_{j=1}^n$ and then one has that $\Gamma_n(T) \rightarrow \mathrm{sp}(T)$ as $n \rightarrow \infty$. Thus, the problem of computing spectra of operators in $\Omega$ is in $\Delta^{\alpha}_2$. However, we clearly have an extra feature that is not captured by the hierarchy \eqref{SCI1}. Indeed, we have that 
\[
\Gamma_n(T) \subset \mathrm{sp}(T), \quad n \in \mathbb{N}.
\]     
In particular, we have convergence from below, and this is much stronger than just convergence since $\Gamma_n(T)$ always produces a correct output. Such type of convergence becomes incredibly important as it provides an error control from below.
Moreover, clearly, the hierarchy \eqref{SCI1} does not capture this important feature. This gives the motivation behind the $\Sigma^{\alpha}_1$ class, which captures the concept of convergence from below. Similarly, the  $\Pi^{\alpha}_1$ class captures a convergence from above. Informally, for spectral problems we have the following additions to \eqref{SCI1}:  
\begin{itemize}
\item[(1)] $\Delta^{\alpha}_0 = \Pi^{\alpha}_0 = \Sigma^{\alpha}_0$ is the set of problems that can be solved in finite time, the SCI $=0$.
\item[(2)] $\Sigma^{\alpha}_1$: We have $\Delta^{\alpha}_1 \subset \Sigma^{\alpha}_1 \subset \Delta^{\alpha}_2 $ and $\Sigma^{\alpha}_1$ is the set of problems for which there exists a sequence of algorithms $\{\Gamma_n\}$ such that for every $A \in \Omega$ we have $\Gamma_n(A) \rightarrow \Xi(A)$ as $n \rightarrow \infty$. However, $\Gamma_n(A)$ is always contained in the $2^{-n}$ neighbourhood of $\Xi(A)$.
\item[(3)] $\Pi^{\alpha}_1$: We have $\Delta^{\alpha}_1 \subset \Pi^{\alpha}_1 \subset \Delta^{\alpha}_2 $ and $\Pi^{\alpha}_1$ is the set of problems for which there exists a sequence of algorithms $\{\Gamma_n\}$ such that for every $A \in \Omega$ we have $\Gamma_n(A) \rightarrow \Xi(A)$ as $n \rightarrow \infty$. However, the $2^{-n}$ neighbourhood of $\Gamma_n(A)$ always contains $\Xi(A)$.
\item[(4)] $\Sigma^{\alpha}_{m}$ is the set of problems that can be computed by passing to $m$ limits, and computing the $m$-th limit is a $\Sigma^{\alpha}_1$ problem.
\item[(5)] $\Pi^{\alpha}_{m}$ is the set of problems that can be computed by passing to $m$ limits, and computing the $m$-th limit is a $\Pi^{\alpha}_1$ problem. 
 \end{itemize}

 \begin{figure}[t]
\centering
\includegraphics[width=0.8\textwidth]{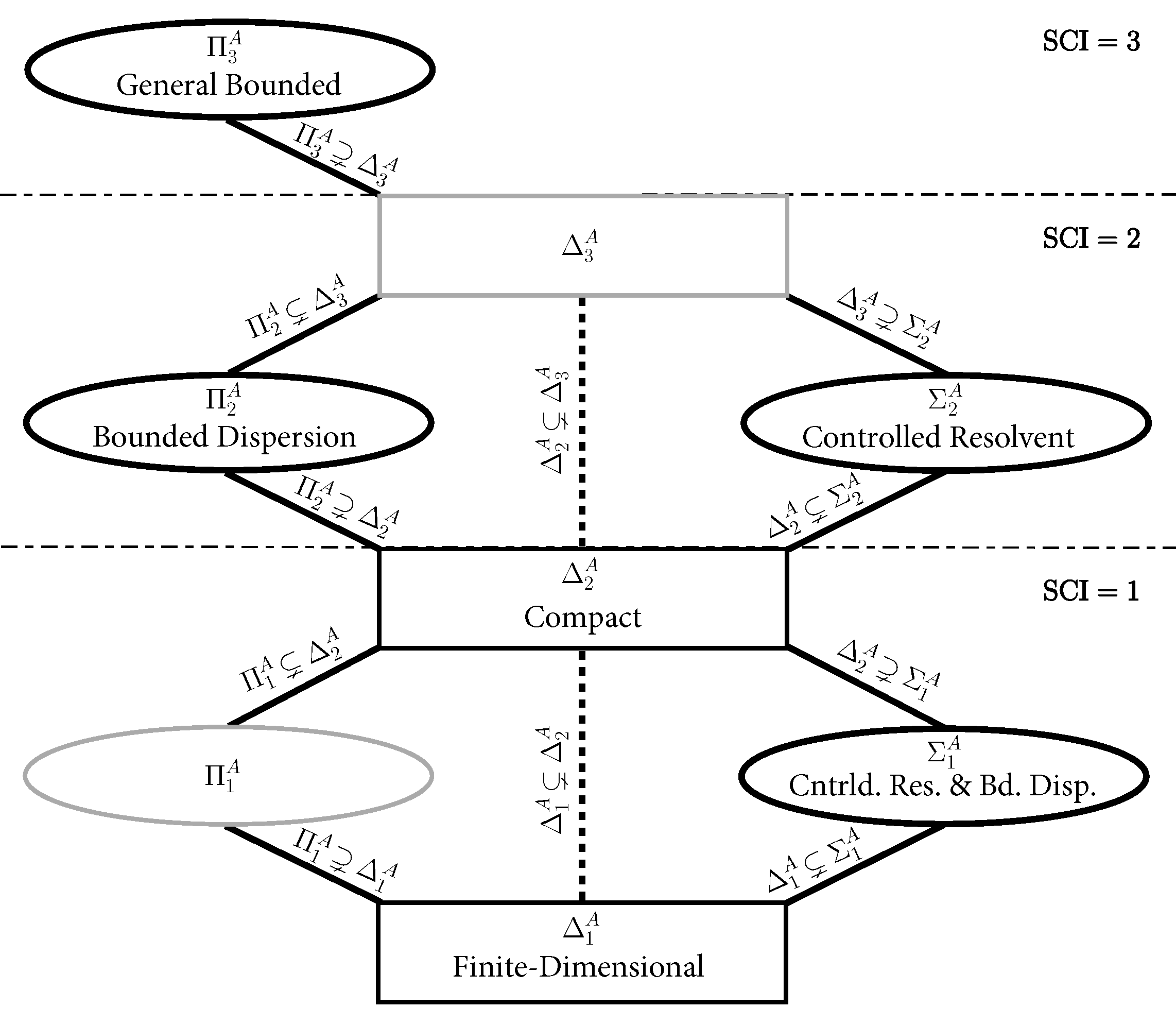}
\caption{Main results (Theorem \ref{spec_thm_main}); solution to the computational spectral problem: The SCI hierarchy for the computational problem of computing spectra of bounded infinite matrices acting on $l^2(\mathbb{N})$. Note that $\Sigma^A_3\setminus \Delta^A_2 = \emptyset$ and hence $\Sigma^A_3$ is not a part of this hierarchy. However, the classes enclosed in grey may contain spectral problems. The figure contains only some of the main classification results in Theorem \ref{spec_thm_main}.}
\label{fig:SCI_bounded_matrix}
\end{figure}
 
\begin{remark}[The general SCI hierarchy]\label{rem:general_SCI} The above sketch of the SCI hierarchy with convergence from below and above is well suited when considering the Hausdorff metric. However, the SCI hierarchy extends immediately to any metric space where there is a total ordering, for example, for $\mathcal{M} = \mathbb{R}$ and for decision problems where $\mathcal{M} = \{0,1\} = \{\mathrm{No}, \mathrm{Yes}\}$. For example, for decision problems a $\Sigma^{\alpha}_1$ (similarly $\Pi^{\alpha}_1$) classification of a computational problem with domain $\Omega$ means that there is a sequence of algorithms $\{\Gamma_n\}$ such that for $A \in \Omega$, $\Gamma_n(A)$ will provide the correct output for large $n$ (however, we do not know how big $n$ must be), but if $\Gamma_n(A) = \mathrm{Yes}$ ($\Gamma_n(A) = \mathrm{No}$ in the $\Pi^{\alpha}_1$ case), then the answer to the decision problem is $\mathrm{Yes}$ ($\mathrm{No}$ in the $\Pi^{\alpha}_1$ case).
\end{remark}

Schematically, the general SCI hierarchy can be viewed in the following way.
\begin{equation}\label{SCI_hierarchy}
%\begin{center}
\begin{tikzpicture}[baseline=(current  bounding  box.center)]
%\centering
  \matrix (m) [matrix of math nodes,row sep=1.2em,column sep=1.5em] {
  \Pi_0^{\alpha}   &                    & \MA{\Pi_1^{\alpha}} &    &  \Pi_2^{\alpha}&  & {}\\
  \Delta_0^{\alpha}&  \Delta_1^{\alpha} & \Sigma_1^{\alpha}\cup\Pi_1^{\alpha} & \Delta_2^{\alpha}&      \Sigma_2^{\alpha}\cup\Pi_2^{\alpha} & \Delta_3^{\alpha}& \cdots\\
	\Sigma_0^{\alpha}&                    & \MA{\Sigma_1^{\alpha}} & &  \Sigma_2^{\alpha}&  &{} \\
  };
 \path[-stealth, auto] (m-1-1) edge[draw=none]
                                    node [sloped, auto=false,
                                     allow upside down] {$=$} (m-2-1)
																		(m-3-1) edge[draw=none]
                                    node [sloped, auto=false,
                                     allow upside down] {$=$} (m-2-1)
																		
																		(m-2-2) edge[draw=none]
                                    node [sloped, auto=false,
                                     allow upside down] {$\subsetneq$} (m-2-3)
																		(m-2-3) edge[draw=none]
                                    node [sloped, auto=false,
                                     allow upside down] {$\subsetneq$} (m-2-4)
																		(m-2-4) edge[draw=none]
                                    node [sloped, auto=false,
                                     allow upside down] {$\subsetneq$} (m-2-5)
																		(m-2-5) edge[draw=none]
                                    node [sloped, auto=false,
                                     allow upside down] {$\subsetneq$} (m-2-6)
																		(m-2-6) edge[draw=none]
                                    node [sloped, auto=false,
                                     allow upside down] {$\subsetneq$} (m-2-7)

												(m-2-1) edge[draw=none]
                                    node [sloped, auto=false,
                                     allow upside down] {$\subsetneq$} (m-2-2)
											 (m-2-2) edge[draw=none]
                                    node [sloped, auto=false,
                                     allow upside down] {$\subsetneq$} (m-1-3)
											(m-2-2) edge[draw=none]
                                    node [sloped, auto=false,
                                     allow upside down] {$\subsetneq$} (m-3-3)
											 (m-1-3) edge[draw=none]
                                    node [sloped, auto=false,
                                     allow upside down] {$\subsetneq$} (m-2-4)
																		(m-3-3) edge[draw=none]
                                    node [sloped, auto=false,
                                     allow upside down] {$\subsetneq$} (m-2-4)
																		(m-2-4) edge[draw=none]
                                    node [sloped, auto=false,
                                     allow upside down] {$\subsetneq$} (m-1-5)
											(m-2-4) edge[draw=none]
                                    node [sloped, auto=false,
                                     allow upside down] {$\subsetneq$} (m-3-5)
																		(m-1-5) edge[draw=none]
                                    node [sloped, auto=false,
                                     allow upside down] {$\subsetneq$} (m-2-6)
																		(m-3-5) edge[draw=none]
                                    node [sloped, auto=false,
                                     allow upside down] {$\subsetneq$} (m-2-6)
											(m-2-6) edge[draw=none]
                                    node [sloped, auto=false,
                                     allow upside down] {$\subsetneq$} (m-1-7)
																		(m-2-6) edge[draw=none]
                                    node [sloped, auto=false,
                                     allow upside down] {$\subsetneq$} (m-3-7);
																		
\end{tikzpicture}
%\end{center}
\end{equation}
Note that the highlighted $\Sigma_1^{\alpha}$ and $\Pi_1^{\alpha}$ classes are crucial as they guarantee existence of algorithms that will never make mistakes, thus they become crucial in computer-assisted proofs, see \S \ref{sec:comp_ass_proofs} and \S \ref{sec:role_SCI_comp_ass}.

 \begin{figure}
\centering
\includegraphics[width=0.8\textwidth]{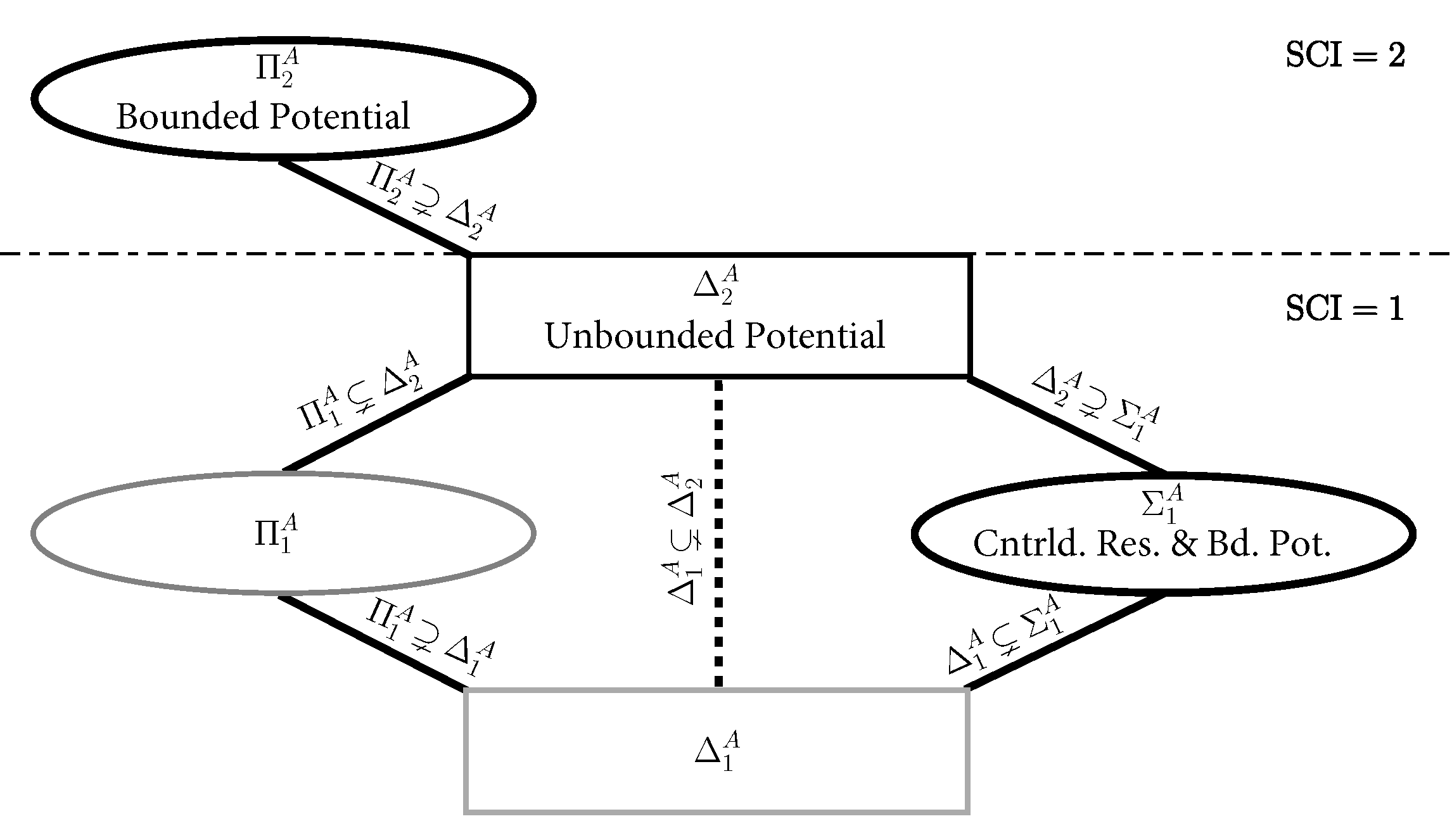}
\caption{Main results (Theorem \ref{main_self_adjoint} and Theorem \ref{thm:comp-res}): The SCI hierarchy for the problem of computing spectra of Schr\"odinger operators. $\Sigma^A_2\setminus \Delta^A_1 = \emptyset$, however, the classes enclosed in grey may contain spectral problems. The paradox of the $\Sigma^A_1$ result: Despite being a long-standing open problem, computing spectra of such Schr\"odinger operators is not harder than computing spectra of diagonal infinite-matrices (the simplest problem).}
\label{fig:SCI_schrod}
\end{figure}

\begin{remark}[The meaning of the $\alpha$, the model of computation] The $\alpha$ in the superscript indicates the model of computation, which is described in \S \ref{sec:background}. For $\alpha = G$, the underlying algorithm is general and can use any tools at its disposal. The purpose is to assure that lower bounds become universal regardless of the model of computation. The reader may think of a Blum--Shub--Smale (BSS) \cite{Smale_book} machine or a Turing machine \cite{Turing_Machine} with access to any oracle, although a general algorithm is even more powerful. However, for $\alpha = A$ this means that only arithmetic operations and comparisons are allowed. In particular, if rational inputs are considered, the algorithm is a Turing machine, and in the case of real inputs, a BSS machine. Hence, a result of the form
$
\notin \Delta_k^G \text{ is stronger than } \notin \Delta_k^A.
$  
Indeed, a $\notin \Delta_k^G$ result is universal and holds for any model of computation. Moreover, 
$
\in \Delta_k^A \text{ is stronger than } \in \Delta_k^G,
$  
and similarly for the $\Pi^{\alpha}_k$ and $\Sigma^{\alpha}_k$ classes. 
Note that classical hierarchies, such as the arithmetical hierarchy \cite{Odi}, become special cases of the SCI hierarchy (see Proposition \ref{thrm:prop_SCI_00} discussed in the appendix for completeness), and hence we keep the similar notation.
\end{remark}

\begin{remark}[The SCI ordering]\label{rem:SCI_order}
Note that the SCI hierarchy immediately implies a total ordering on the set of problems in the hierarchy. This is obvious when we only consider the $\Delta^{\alpha}_k$ classes, but can also be extended to the general case by considering $\Sigma^{\alpha}_k \cup \Pi^{\alpha}_k$ as one class in between the $\Delta^{\alpha}_k$s. This is the ordering  
$\leq_{\mathrm{SCI}}$ referred to in \S \ref{sec:intro}.
\end{remark}

\subsection{Smale's problem on iterative generally convergent algorithms and the SCI}\label{sec:Smale}
S. Smale initiated a comprehensive program on the foundations of computational mathematics in the 1980s \cite{Smale2,Smale_book}, focusing on problems in scientific computing rather than classical computer science. One of the key problems and algorithms Smale considered was polynomial root finding as well as Newton's method. As Newton's method may not converge, even for a cubic polynomial, a natural question would be if there exists an alternative approach. This question was formulated in terms of the existence of iterative generally convergent algorithms \cite{Smale2}. C. McMullen \cite{McMullen1, McMullen2, Smale_McMullen} solved the problem in the negative and, together with P. Doyle, realised that the problem of existence could be resolved by allowing more limits resulting in several iterative convergent algorithms used consecutively \cite{Doyle_McMullen}. They introduced a {\it tower of algorithm} in order to make the mathematical statement precise and also realised that for polynomials of degree $6$ and higher, one could not handle the problem regardless of the height of the tower (number of limits used). We have adopted the name towers of algorithms, however, we have made the concept general. The original towers of algorithms are now referred to as {\it Doyle--McMullen towers}, see \S \ref{roots_pols}. In \S \ref{roots_pols} we show how Smale's problem on the existence of iterative generally convergent algorithms and the theory of McMullen and Doyle become classification problems in the SCI hierarchy.

\subsection{Computer-assisted proofs in spectral theory and the SCI hierarchy}\label{sec:comp_ass_proofs}
The SCI hierarchy classifications determine the boundaries of what computers can achieve in scientific computing. As a consequence, the SCI hierarchy provides classifications of computational problems that can be used in computer-assisted proofs (for a detailed account see \S \ref{sec:role_SCI_comp_ass}), and these problems may be non-computable i.e. higher up than $\Delta^A_1$. Moreover, typically a computer assisted proof based on numerical calculation requires a classification result in the SCI hierarchy. An example of this in spectral theory is the proof of the Dirac--Schwinger conjecture.  

\textit{\textbf{Dirac--Schwinger conjecture - SCI classification: $\in \Sigma^A_1$, $\notin \Delta^G_1$}}:
The Dirac--Schwinger conjecture was proven in a series of papers by C. Fefferman and L. Seco \cite{fefferman1990, fefferman1992, fefferman1993aperiodicity,  fefferman1994, fefferman1994_2, fefferman1995, fefferman1996interval, fefferman1996, fefferman1997}, where one uses numerical computations to obtain asymptotic results on the ground state of an atom.  Consider the following Schr\"odinger operator
\[
H_{NZ} = \sum_{k=1}^N(-\Delta_{x_k} - Z|x_k|^{-1}) + \sum_{1\leq j < k \leq N}|x_j-x_k|^{-1}
\]
acting on antisymmetric functions in $\mathrm{L}^2(\mathbb{R}^{3N})$. The ground state energy $E(N,Z)$ for $N$ electrons and a nucleus of charge $Z$ is then defined by 
$
E(N,Z) := \inf\{\lambda \in \mathrm{sp}(H_{NZ})\}.
$ The ground state energy of an atom is then defined as $E(Z) := \min_{N\geq 1} E(N,Z)$. The key result of C. Fefferman and L. Seco was to show asymptotic behaviour of $E(Z)$ for large $Z$. In particular, 
\[
E(Z) = -c_0Z^{7/3} + \frac{1}{8}Z^2 - c_1Z^{5/3} + \mathcal{O}(Z^{5/3 - 1/2835}),
\] 
 for some explicitly defined constants $c_0$ and $c_1$.  The highly intricate computer-assisted proof hinges on several problems that are $\notin \Delta^G_1$ but are in $\Sigma^A_1$ (see for example Algorithm 3.7 and Algorithm 3.8 in \cite{fefferman1996interval}), and a crucial part of the proof implicitly establishes the $\Sigma^A_1$ classification in the SCI hierrchy. 

\textit{\textbf{Spectral problems that can be used in computer-assisted proofs}}: Our main results in Theorem \ref{spec_thm_main} and Theorem \ref{main_self_adjoint} provide the necessary $\Sigma^A_1$ classifications showing that computational spectral problems with any Jacobi operators with known growth of the resolvent can be used in computer-assisted proofs. This is also the case of Schr\"odinger operators $-\Delta + V$ where $V$ is bounded and of bounded variation. However, by Theorem \ref{thm:comp-res}, if we only know that $V$ blows up at infinity, the spectral problem $\notin (\Sigma^G_1 \cup \Pi^G_1)$ so such a Schr\"odinger operators cannot be used in a computer-assisted proof unless stronger assumptions are available.

%\begin{remark}[Proving $\Sigma_1^A$ or $\Pi_1^A$ results for computer assisted proofs]
%Note that a key part in all of the examples above is that one must prove either the $\Sigma_1^A$ or $\Pi_1^A$ in order to demonstrate that the verification is possible. Sometimes this is trivial, however, sometimes this may be very delicate as in the proof of Kepler's conjecture and intricate and technical as in the proof of the Dirac--Schwinger conjecture. 
%\end{remark}

\section{The main results}
The introduction of the SCI hierarchy implies an infinite classification theory even for the computational spectral problem by considering different classes of operators, and we provide the first foundations here. The precise formulations can be found in Theorem \ref{spec_thm_main}, Theorem \ref{main_self_adjoint}, Theorem \ref{thm:comp-res}, Theorem \ref{linear_systems_thrm} and Theorem \ref{thrm:norm_inverse}, however, we provide an informal and easy to read summary in this section. The fundamental question is as follows:
\begin{displayquote}
\normalsize
{\it Given a computational problem with a domain $\Omega$ and a problem function $\Xi: \Omega \rightarrow \mathcal{M}$, where in the SCI hierarchy is the problem when $\Xi$ represents the spectrum, essential spectrum, pseudospectrum or even a solution to an inverse problem?}
\end{displayquote}
Our results describing where a computational problem lies in the SCI hierarchy are mainly of the form: {\it computational problem} $\in \mathcal{S}$ and {\it computational problem} $\notin \mathcal{R}$, where $\mathcal{R}, \mathcal{S}$ are of the form $ \Sigma_k^{\alpha}, \Pi_k^{\alpha}, \Delta_k^{\alpha}$. This is typically written as
\[
\mathcal{R} \not\owns \text{ computational problem} \in \mathcal{S},
\]
where 
\[
\mathcal{R} = \Sigma_k^G, \Pi_k^G, \Delta_k^G, \qquad \mathcal{S} = \Sigma_j^A, \Pi_j^A, \Delta_j^A, \qquad k \leq j.
\]

\subsection{The main contribution of the paper}
The results are summarised as follows:

\begin{labeling}{\,\,\, {\it Theorem \ref{spec_thm_main}}:}
\item[\,\,\,{\bf  \emph{Theorem} \ref{spec_thm_main}}:] (Computational spectral problem, bounded operators). An informal summary follows in \S \ref{sec:bounded}, and the precise formulation is in \S \ref{finding_spectra}. 
\item[\,\,\,{\bf \emph{Theorem \ref{main_self_adjoint} \& Theorem \ref{thm:comp-res}}}:] (Computational spectral problem, Schr\"odinger operators). \S \ref{sec:schrodiner_main} provides an introductory summary, however, the precise statements are in \S \ref{quantum_mech}. 

\item[\,\,\,{\bf \emph{Theorem \ref{linear_systems_thrm} \& Theorem \ref{thrm:norm_inverse}}}:]  (Inverse problems in the SCI hierarchy).  A synopsis follows in \S \ref{sec:syn_inverse} whereas the exact formulations can be found in \S \ref{linear_systems}.

\item[\,\,\,{\bf \emph{New algorithms}}:]  (Spectral computations in mathematics and the sciences).  The proofs of the upper bounds in the theorems above yield new algorithms allowing for previously untouched problems both in the sciences and potentially in computer assisted proofs. Several examples can be found in \S \ref{numerics}. 
\end{labeling}

\begin{remark}[Model of recursiveness]
All our upper bounds hold in both the Turing model and the BSS model, thus we do not make any distinction when stating the main results and the theorems. When considering the Turing model with inputs with irrational (computable) numbers, the input to the algorithm representing such a number is an infinite string of numbers approximating the irrational number to any precision, see \S \ref{sec:inexact_input}. Lower bounds are universal for any model of computation.   
\end{remark}

\subsection{Computing spectra and approximate eigenvectors of bounded operators} \label{sec:bounded}
We are given operators $T \in \mathcal{B}(l^2(\mathbb{N}))$ and the task is to compute spectral properties from the matrix elements of $T$. We consider the following six problems.

\begin{labeling}{\,\,\, {\it Problem 1}:}
\item[\,\,\,{\it Problem 1}:] Compute spectra/essential spectra/pseudospectra of general operators.
\item[\,\,\,{\it Problem 2}:] Compute spectra/essential spectra/pseudospectra of self-adjoint/normal/known growth of resolvent (see Definition \ref{def:res_growth}) operators.
\item[\,\,\,{\it Problem 3}:]  Compute spectra/essential spectra/pseudospectra of operators with off-diagonal decay (see Definition \ref{def:disp}). 
\item[\,\,\,{\it Problem 4}:] Compute spectra and approximate eigenvectors of normal operators (with off-diagonal decay). 
\item[\,\,\,{\it Problem 5}:] Compute spectra/pseudospectra of compact operators.
\item[\,\,\,{\it Problem 6}:] Determine if a given point $z\in\mathbb{C}$ lies in the spectrum.
\end{labeling}

To avoid trivialities, when considering self-adjoint classes of operators we will restrict to $z\in\mathbb{R}$ and when considering compact operators we will restrict to $z\neq 0$. Moreover, by the essential spectrum we mean the spectrum that is invariant under compact perturbation, and the pseudospectrum is defined in \ref{eq:pseudo}.
We prove the following classifications. 
\begin{align}
&  \Delta^G_3 \not\owns \text{Prob 1 (sp.)} \in \Pi^A_3
&&  \Delta^G_3 \not\owns \text{Prob 1 (ess-sp.) } \in \Pi^A_3
&& \Delta^G_2 \not\owns \text{Prob 1 (pseudosp.)} \in \Sigma^A_2,
\label{eq:Prob1}\\
&  \Delta^G_2 \not\owns \text{Prob 2 (sp.)} \in \Sigma^A_2
&&  \Delta^G_3 \not\owns \text{Prob 2 (ess-sp.) } \in \Pi^A_3
&& \Delta^G_2 \not\owns \text{Prob 2 (pseudosp.)} \in \Sigma^A_2.
\label{eq:Prob2}\\
&  \Delta^G_2 \not\owns \text{Prob 3 (sp.)} \in \Pi^A_2
&&  \Delta^G_2 \not\owns \text{Prob 3 (ess-sp.) } \in \Pi^A_2
&& \Delta^G_1 \not\owns \text{Prob 3 (pseudosp.)} \in \Sigma^A_1.
\label{eq:Prob3}
\end{align}
Note that \eqref{eq:Prob2} means that the classification is the same for self-adjoint operators, normal operators and operators with known growth of the resolvent. These classes of operators are obviously increasingly included in each other. 

Problem 4 will only make sense for normal operators and for problems that are already in $\Sigma^{\alpha}_1$. Hence, we define the following set. 
\begin{itemize}
\item[] $\Sigma^{\alpha,\mathrm{eigv}}_1$: We have $\Sigma^{\alpha,\mathrm{eigv}}_1 \subset \Sigma^{\alpha}_1$ and $\Sigma^{\alpha,\mathrm{eigv}}_1$ is the set of problems for which there exists a sequence of algorithms $\{\Gamma_n\}$ such that for every $A \in \Omega$ we have $\Gamma_n(A) = \{(\lambda_{1,n}, \xi_{1,n}), \hdots, (\lambda_{K,n}, \xi_{K,n})\}$ for some $K = K(n) \in \mathbb{N}$, where $\lambda_{j,n}$ is contained in the $2^{-n}$ neighbourhood of $\mathrm{sp}(A)$ and $\|A\xi_{j,n}-\lambda_{j,n} \xi_{j,n}\| \leq 2^{-n}$ with $\| \xi_{j,n}\| = 1 + a_n$, $|a_n| \leq 2^{-n}$ for all $j\leq K$. Moreover, $\bigcup_{j=1}^K \lambda_{j,n} \rightarrow \mathrm{sp}(A)$ as $n \rightarrow \infty.$
\end{itemize}
In words $\Sigma^{\alpha,\mathrm{eigv}}_1$ can be described as follows.
\vspace{1mm}
\begin{displayquote}
\normalsize
{\it $\Sigma^{\alpha,\mathrm{eigv}}_1$ is the collection of computational spectral problems concerning normal operators that are in $\Sigma^{\alpha}_{1}$, where there exists an algorithm that can also compute approximate eigenvectors. }
\end{displayquote}
\vspace{1mm}
We prove for Problem 4 that 
\begin{align*}
 \text{Prob 4} \in \Sigma^{A,\mathrm{eigv}}_1.
\end{align*}
Note that we define Problem 4 for operators with off-diagonal decay. Indeed, if we do not have this assumption the computational spectral problem is not in $\Sigma^{G}_1$ by \eqref{eq:Prob2}, and hence the definition of $\Sigma^{\alpha,\mathrm{eigv}}_1$ would not make any sense.

Continuing, we prove for Problem 5 that 
\begin{align}
\Sigma_1^G\cup\Pi_1^G \not\owns \text{Prob 5 (sp.)} \in \Delta^A_2, \qquad \Sigma_1^G\cup\Pi_1^G \not\owns \text{Prob 5 (pseudosp.) } \in \Delta^A_2.
\label{eq:Prob4}
\end{align}
As for Problem 6 we prove the following:
\begin{align}
&\Delta_2^G \not\owns \text{Prob 6 (diagonal/compact/off-diagonal decay)} \in \Pi^A_2,
\label{eq:class1}\\
&\Delta_3^G \not\owns \text{Prob 6 (general/self-adjoint)} \in \Pi^A_3.
\label{eq:class2}
\end{align}
Finally, combining Problem 2 and Problem 3 we have 
\begin{equation}\label{eq:classification_Sigma_1}
 \Delta^G_1 \not\owns \text{Prob 2 $\cap$ Prob 3 (sp.)} \in \Sigma^A_1.
\end{equation}
The detailed statements can be found in Theorem \ref{spec_thm_main}.   

\begin{remark}[Solutions to the computational spectral problem for bounded operators]
Apart from Problem 5, all the problems above have been open since the beginning of spectral computations that dates back to the work of H. Goldstine, F. Murray and J. von Neumann \cite{Goldstine} in the 1950s. Their work implies $\Delta^A_1$ classifications on certain self-adjoint finite-dimensional problems. Problem 5 can be handled by finite-section method techniques (see \S \ref{sec:prev_work}), and has been known for decades, however we have included a short proof for completeness. The new lower bounds demonstrate that the finite section method for compact operators is optimal. 
%The results in \cite{Hansen_JAMS} provide upper bounds ($\Delta^A_4$ and $\Delta^A_3$) on parts of Problem    
%1 for the spectrum and pseudospectrum, however, these bounds are not sharp, as the sharp upper and lower bounds above demonstrate. 
%It is worth noting the subtle differences between the problems of computing spectra, essential spectra, pseudospectra or determining whether a $z \in \mathbb{C}$ is in the spectrum. Indeed, by examining \eqref{eq:Prob1}, \eqref{eq:Prob2} and \eqref{eq:Prob3} we see that the classifications change for the spectrum when adding more assumptions to the operator. Fascinatingly, this is not necessarily the case for the essential spectrum and the pseudospectrum. Observe also this subtlety when comparing \eqref{eq:class1}, \eqref{eq:class2} and \eqref{eq:classification_Sigma_1}. The infinite classification problem becomes how to characterise which extra assumptions on the classes yield which classifications. 
\end{remark}

\begin{remark}[New algorithms and computer-assisted proofs]\label{re:new_alg}
The $\Sigma_1^A$ classifications above means that spectra of, for example, Jacobi operators of the form 
\begin{equation}\label{eq:Jac}
J :=\begin{pmatrix}
b_1&c_1 & & & \\
a_1 &b_2&c_2 & & \\
& a_2&b_3& c_3& \\
& &  a_3&b_4& \ddots \\
& & & \ddots & \ddots\\
\end{pmatrix},
\end{equation}
with known growth of the resolvent can be used in potential computer-assisted proofs. However, note the rather subtle result that
\[
\Delta^G_2 \not\owns \text{ Computing essential spectra of diagonal self-adjoint operators} \in \Pi^A_2. 
\]
Thus, this suggests that one must have very specific assumptions on the class of operators in order to be able to use computer-assisted proofs regarding essential spectra. In particular, the essential spectrum is much harder to compute than the spectrum. Note also that \eqref{eq:Prob4} reveals that general compact operators are not suited for computer-assisted proofs in spectral theory. The question is which extra assumptions in addition to compactness are needed to get lower in the SCI hierarchy.   
\end{remark}

\subsection{Computing spectra and approximate eigenvectors of Schr\"odinger operators on $\mathrm{L}^2(\mathbb{R}^d)$}\label{sec:schrodiner_main} 

The problem of computing the spectrum of a Schr\"odinger operator 
	\begin{equation}\label{eq:schrodinger}
	 H=-\Delta+V,\qquad V:\R^d\to\C,
	\end{equation}
is a classical problem in computational quantum mechanics. We consider computing spectra/pseudospectra of closed Schr\"odinger operators from point samples of the potential $V(x)$, in particular, the following problems:
\begin{labeling}{\,\,\, {\it Problem I}:}
\item[\,\,\,{\it Problem I}:] Compute spectrum/pseudospectrum of $H$ when $\|V\|_{\infty} \leq M < \infty$ and $V \in  \mathrm{BV}_{\mathrm{loc}}(\mathbb{R}^d)$ (locally bounded total variation).
\item[\,\,\,{\it Problem II}:] Compute spectrum/pseudospectrum of $H$ when $\|V\|_{\infty} \leq M < \infty$, $V \in  \mathrm{BV}_{\mathrm{loc}}(\mathbb{R}^d)$ and there is known growth of the resolvent.
\item[\,\,\,{\it Problem III}:] Compute spectrum/pseudospectrum of $H$ when $V$ is continuous, takes values in a sector  of the complex plane (not containing the negative real line) and blows up at infinity.
\item[\,\,\,{\it Problem IV}:] Compute spectra and approximate eigenvectors of $H$ when $H$ is self-adjoint and  $\|V\|_{\infty} \leq M < \infty$, $V \in  \mathrm{BV}_{\mathrm{loc}}(\mathbb{R}^d)$.
\end{labeling}
Note that the assumption that $V \in  \mathrm{BV}_{\mathrm{loc}}(\mathbb{R}^d)$, the set of functions with locally bounded variation, is very mild as this class includes discontinuous functions and functions with arbitrary wild oscillations at infinity.  Note also that only requiring $V \in \mathrm{L}^{\infty}(\mathbb{R}^d)$ and $\|V\|_{\infty} \leq M$ is impossible as the concept of point samples of $V$ would not be well defined.  
We prove the following classifications.
\begin{align}
&  \Delta^G_1 \not\owns \text{Problem I (spectrum)} \in \Pi^A_2
&&  \Delta^G_1 \not\owns \text{Problem I (pseudosp.)} \in \Sigma^A_1,
\label{eq:3ProbI}
\\
&  \Delta^G_1 \not\owns \text{Problem II (spectrum)} \in \Sigma^A_1
&&  \Delta^G_1 \not\owns \text{Problem II (pseudosp.)} \in \Sigma^A_1,
\label{eq:3ProbII}
\\
&  \Sigma_1^G\cup\Pi_1^G \not\owns \text{Problem III (spectrum)} \in \Delta^A_2
&&  \Sigma_1^G\cup\Pi_1^G \not\owns \text{Problem III (pseudosp.)} \in \Delta^A_2,
\label{eq:3ProbIII}
\\
& \text{Problem IV} \in \Sigma^{A,\mathrm{eigv}}_1.
\label{eq:3ProbIV}
\end{align}
The detailed statements can be found in Theorem \ref{main_self_adjoint} and Theorem \ref{thm:comp-res}.

\begin{remark}[Non-Hermitian Hamiltonians]
We emphasise that the results above are valid also for non-Hermitian quantum systems. This level of generality is important as we want the theory to include non-Hermitian quantum mechanics \cite{Bender, Bender3, Hatano_97, Hatano_96} and the theory of resonances \cite{Zworski2, ZworskiSjostrand}.
\end{remark}

\begin{remark}[The solutions to the computational spectral problem for Schr\"odinger operators]
The results in \eqref{eq:3ProbI}, \eqref{eq:3ProbII} and \eqref{eq:3ProbIV}   provide solutions to the problem of computing spectra of Schr\"odinger operators on $\mathrm{L}^2(\mathbb{R}^d)$ with bounded potential. In view of \eqref{eq:Prob4} the results in \eqref{eq:3ProbII} and \eqref{eq:3ProbIV} may be surprising. Indeed, despite being open since the 1950s, the sharp $\Sigma^A_1$ classification implies that the problem of computing spectra and pseudospectra of even non-Hermitian Schr\"odinger operators in \eqref{eq:3ProbII} is not harder than computing the spectrum of a diagonal infinite matrix, the simplest of the non-trivial infinite-dimensional spectral problems. Moreover, the problem of computing spectra and pseudospectra of even non-Hermitian Schr\"odinger operators in \eqref{eq:3ProbII} is actually strictly easier than computing spectra of compact operators on $l^2(\mathbb{N})$, a computational problem for which successful algorithms have been known for decades. 
Finally, since we achieve the $\Sigma^A_1$ classification in several cases, computer-assisted proofs may be a possibility.
\end{remark}

\subsection{Computational inverse problems}\label{sec:syn_inverse} 
Just as finding spectra of operators and roots of polynomials, the problem of solving linear systems of equations is at the heart of computational mathematics. For the finite-dimensional case, it is easy to find an algorithm that can perform the task, but what about the infinite-dimensional case? We consider the inverse problem 
\[
Ax=y \qquad A \in \mathcal{B}(l^2(\mathbb{N})), \quad x,y \in l^2(\mathbb{N}),
\]
where we want to compute various quantities such as $x$ from the matrix values of $A$ and vector components of $y$ when $A$ is known to be invertible. In summary, we consider the following problems.

\begin{labeling}{\,\,\, {\it Problem a}:}
\item[\,\,\,{\it Problem a}:] Compute $x$ when $A$ and $y$ are arbitrary.
\item[\,\,\,{\it Problem b}:] Compute $x$ when $A$ is self-adjoint and $y$ is arbitrary.
\item[\,\,\,{\it Problem c}:] Compute $x$ when $A$ has known off-diagonal decay and $y$ is arbitrary. 
\item[\,\,\,{\it Problem d}:] Compute $x$ when $A$ has known off-diagonal decay and $y$ has known decay.
\item[\,\,\,{\it Problem e}:] Compute the norm of the inverse $\|A^{-1}\|^{-1}$.
\item[\,\,\,{\it Problem f}:] Determine if $A$ is invertible.
\end{labeling}

When computing solutions to general inverse problems, as there is no concept of convergence from above and below, we only have the initial $\Delta^{\alpha}_k$ classes. However, when it comes to computing the norm of the inverse and the decision problem of determining whether $A$ is invertible or not, we do have the $\Sigma^{\alpha}_k$ and $\Pi^{\alpha}_k$ classes. In particular, we prove the following classifications. 
\begin{align}
&  \Delta^G_2 \not\owns \text{Problem a} \in \Delta^A_3
&& \Delta^G_2 \not\owns \text{Problem b} \in \Delta^A_3,
\label{eq:5ProbI}
\\
&  \Delta^G_1 \not\owns \text{Problem c} \in \Delta^A_2
&&  \Delta^G_0 \not\owns \text{Problem d} \in \Delta^A_1.
\label{eq:5ProbII}
\end{align}
Moreover, these are the classifications for Problem e.
\begin{align}
&\Delta_2^G \not\owns \text{Problem e (general/self-adjoint)} \in \Pi^A_2,
\label{eq:4class1}\\
&\Delta_1^G \not\owns \text{Problem e (off-diagonal decay)} \in \Pi^A_1.
\label{eq:4class2}
\end{align}
Note that Problem f is a special case of Problem 5 in \S \ref{sec:bounded}. Thus, we have that 
\begin{align*}
&\Delta_2^G \not\owns \text{Problem f (diagonal/compact/off-diagonal decay)} \in \Pi^A_2,
\\
&\Delta_3^G \not\owns \text{Problem f (general/self-adjoint)} \in \Pi^A_3.
\end{align*}
The detailed statements can be found in Theorem \ref{linear_systems_thrm} and Theorem \ref{thrm:norm_inverse}.

\begin{remark}[Finite section in inverse problems]
Note that the results in \eqref{eq:5ProbI} and \eqref{eq:5ProbII} provide a simple explanation of why the finite section method or any of its variants could never solve the general inverse problem. Indeed, such methods would imply at least a $\Delta^A_2$ results, which are impossible. However, note that we immediately get that the class of problems for which the finite section method works are in $\Delta^A_2$. This demonstrates the importance of the vast literature on the finite section method for classifications in the SCI hierarchy, see \S \ref{sec:prev_work}.
\end{remark}

\section{Computing the non-computable - The role of the SCI hierarchy in computer-assisted proofs}\label{sec:role_SCI_comp_ass}
Computer-assisted proofs using numerical approximations have become essential in mathematics. There are an increasing number of famous conjectures and theorems that have been proven using computer assisted proofs.
% where the mathematical areas include packing problems \cite{Cohn1, Cohn2, Hales_Annals, hales_Pi}, low dimensional manifolds \cite{Gabai_Annals, Gabai}, number theory \cite{290, weak_Goldbach}, group theory \cite{Kaluba}, spectral problems and mathematical physics \cite{fefferman1990, fefferman1992, fefferman1993aperiodicity,  fefferman1994, fefferman1994_2, fefferman1995, fefferman1996interval, fefferman1996, fefferman1997}, differential equations \cite{Tucker2002} etc. 
 A highly incomplete list in alphabetical order includes the Dirac-Schwinger conjecture \cite{fefferman1990, fefferman1992, fefferman1993aperiodicity,  fefferman1994, fefferman1994_2, fefferman1995, fefferman1996interval, fefferman1996, fefferman1997}, the Double-Bubble conjecture \cite{double_bubble}, Kepler's conjecture (Hilbert's 18th problem) \cite{Hales_Annals, hales_Pi}, Smale's 14th problem \cite{Tucker2002}, the 290-theorem \cite{290}, the weak Goldbach conjecture \cite{weak_Goldbach} etc. In all of these cases the proofs are based on using numerical computations with approximations. Hence, a key question will always be; given a problem that needs to be computed in order to secure a computer-assisted proof, can the computation be done with verification that is $100\%$ reliable? Or asked more broadly:
\begin{displayquote}
\normalsize
{\it Question I: Which computational problems are suited for use in computer-assisted proofs?}
\end{displayquote}
The instinct would normally be that the computational problem must be in $\Delta^A_1$, or computable in the words of Turing. This is not the case. The computer-assisted proof of Kepler's conjecture is done by computing non-computable problems, i.e. $\notin \Delta^G_1$, as explained below. There are several cases of important conjectures that have been solved by computer-assisted proof, where the computational problem is higher up in the SCI hierarchy than $\Delta^G_1$. Hence, the SCI hierarchy is instrumental in answering Question I above as follows.

\begin{labeling}{\,\,}
\item[\,\,\,\,\,(i) \textbf{Classifications in the SCI hierarchy - Which problems are safe in computer-assisted proof?}] 
In addition to problems in the class $\Delta_1^A$, problems in the classes $\Sigma_1^A$ and $\Pi_1^A$ can be used in computer assisted proofs regardless of the metric space $\mathcal{M}$ (see Remark \ref{rem:general_SCI}) that induces the different classifications. However, the use of problems in $\Sigma_1^A$ or $\Pi_1^A$ depends on the phrasing of the conjecture. For example, suppose the conjectured statement is that spectra of operators in a certain class of self-adjoint discrete Schr\"odinger operators never intersect a certain open interval $I$. Such a statement can be falsified given the new $\Sigma_1^A$ classification of the computational spectral problem concerning discrete Schr\"odinger operators. Indeed, suppose one has located a candidate Schr\"odinger operator $H$ for a counterexample, however, one does not know the spectrum of $H$. One can use one of the new algorithms realising the $\Sigma_1^A$ classification, and if $\mathrm{sp}(H) \cap I \neq \emptyset$, the algorithm will eventually demonstrate this intersection with a $100\%$ guarantee, thus falsifying the conjecture. Similarly, decision problems in $\Sigma_1^A$ and $\Pi_1^A$ can be used in computer-assisted proof, however, problems in $\Delta_2^A\setminus (\Sigma_1^A \cup \Pi_1^A)$ and higher up in the SCI hierarchy will in general be unsuitable. 

\item[\,\,\,\,\,(ii) \textbf{A computer-assisted proof typically requires an SCI hierarchy classification}.] 
A computer-assisted proof that relies on numerical computations will typically require a proof of a $\Delta_1^A$, $\Sigma_1^A$ or $\Pi_1^A$ classification in the SCI hierarchy. Indeed, a mathematician facing a computational problem in order to complete a computer-assisted proof will likely have to ask: where in the SCI hierarchy is the problem? If this is not already known one must prove it, and, as the examples below suggest, this classification is typically done implicitly in the proofs. Sometimes this is trivial, however, sometimes this may be very delicate as in the proof of Kepler's conjecture and intricate and technical as in the proof of the Dirac--Schwinger conjecture (see below). 

\item[\,\,\,\,\,(iii) \textbf{Understanding the higher end of the SCI hierarchy helps answering Question I}.] One may ask about the value of studying the higher end of the SCI hierarchy, in particular the classes $\Delta^A_j, \Sigma^A_j, \Pi^A_j$ for $j \geq 2$, as these classes may seem of only theoretical interest. This is not the case. Answering Question I above becomes an infinite classification theory. Hence, given a particular computational problems that is desirable to use in a computer-assisted proof, one may not know the answer to the question whether this problem is in an appropriate class of the SCI hierarchy. However, one may have knowledge of an upper bound, say $\Pi^A_3$. The question is whether extra features of the computational problem would allow for a classification lower in the SCI hierarchy. Existing results on the higher end of the SCI hierarchy may therefore be invaluable. In fact, the solution to the problem of computing spectra of Schr\"odinger operators evolved in this way, where initially there was a crude classification of $\Pi^A_3$. By gradually learning which extra assumptions were needed in order to achieve classifications further down in the hierarchy, we were eventually able to reach the sharp $\Sigma^A_1$ classification, yielding a classification suitable for computer-assisted proofs.  
\end{labeling}

Below follow examples of successful computer-assisted proofs with the corresponding SCI hierarchy classification of the main computational problem. 

\begin{labeling}{\,\,}
\item[\,\,\,\,\,\textit{\textbf{Kepler's Conjecture (Hilbert's 18th problem) - SCI classification: $\in \Sigma^A_1$, $\notin \Delta^G_1$ }}:] Kepler conjectured that no packing of congruent balls
in Euclidean three space has density greater than that of cubic close packing and hexagonal close packing arrangements. The Flyspeck program, led by T. Hales \cite{Hales_Annals, hales_Pi}, provides a fully computer-assisted verification, where parts of the numerical computations in the computer-assisted proof are based on deciding about 50000 linear programs with irrational inputs. The computational problem is to decide whether 
\begin{equation}\label{eq:LP_Kepler}
M \geq \max_{x} \langle x , c \rangle  \text{ subject to } Ax \leq b, 
\end{equation}
where 
$M$ is an irrational number and $A \in \mathbb{R}^{m \times n},$ $b \in \mathbb{R}^m$ can contain irrational numbers. One needs affirmative answers on all the linear programs in order to verify the conjecture. The irrational input makes deciding  \eqref{eq:LP_Kepler} quite delicate. One may consider the dual problem and ask whether there exists a $y$ such that 
\begin{equation}\label{Dual_LP}
\langle y , b \rangle \leq M \text{ subject to } A^*y = c, \quad y \geq 0.
\end{equation}
In that case, producing a candidate that satisfies \eqref{Dual_LP} would immediately imply a positive answer to \eqref{eq:LP_Kepler}, and this is the main idea behind the verification of \eqref{eq:LP_Kepler} in the proof of Kepler's conjecture. However, given the irrational input, deciding \eqref{Dual_LP} is a bit optimistic, and thus one goes for the approximate version of deciding whether there is a $y \in \mathbb{R}^m$ such that 
\begin{equation}\label{Dual_LP_extended}
\langle y , b \rangle_K \leq M \text{ subject to } A^*y = c, \quad x \geq 0,
\end{equation}
where 
$
\langle y, b \rangle_K = \lfloor  10^{K} \langle y , b \rangle \rfloor  10^{-K}
$
and $K \in \mathbb{N}$.
Informally, we could think of $\langle y , b \rangle_K$ as $\langle y , b \rangle$ computed with $K$ digits accuracy ($K = 6$ is used in the proof).
 The fact that there are irrational input numbers means that $A$ and $b$
are only known approximately, however, to any precision one wants (think of either a Turing machine or a BSS machine that can access $A \in \mathbb{R}^{m\times N}$ in form of an oracle $\mathcal{O}_A$ such that $|\mathcal{O}_A(i,j,k)-A_{i,j}| \leq 2^{-k}$). 
There are several facts about the problem \eqref{Dual_LP_extended} and its classification in the SCI hierarchy that may be surprising given that Kepler's conjecture is successfully proven. In a companion paper \cite{SCI_optimization} to our results, as a part of the extended Smale's 9th problem, the following is proven.
\vspace{1mm}
\begin{displayquote}
\normalsize
{\it  For any integer $\tilde K > 1$ there exists a class of inputs $\Omega$ such that the problem \eqref{Dual_LP_extended} with $K = \tilde K$ is $\notin \Sigma_1^G$.  However, with the same input class $\Omega$, we have that the problem \eqref{Dual_LP_extended}, with $K = \tilde K -1$, is in $\Delta_1^A$. Similarly, deciding \eqref{eq:LP_Kepler} is $\notin \Sigma_1^G$.}
\end{displayquote}
\vspace{1mm}

One may ask how the computer-assisted proof of Kepler's conjecture was at all possible, given that one needs to decide \eqref{Dual_LP_extended} for $K = 6$. Indeed, the $\notin \Sigma_1^G$ fact would suggest that no positive verification could be possible. However, if the inequality $\langle y , b \rangle_K \leq M$ in \eqref{Dual_LP_extended} is replaced by a strict inequality $\langle y , b \rangle_K < M$, then there are classes of inputs $\Omega$ such that deciding \eqref{Dual_LP_extended} is in $\Sigma_1^A$ and hence also deciding \eqref{Dual_LP} is in $\Sigma_1^A$.

In the proof of Kepler's conjecture, the process that chooses the approximations of the irrational numbers in the input $A$, $b$ is fully automated, and so is the process that makes a suggestion for a candidate for \eqref{Dual_LP_extended} and the formal verification. Thus, one may view the fully automated process for deciding \eqref{eq:LP_Kepler} in the proof of Kepler's conjecture as an algorithm, where its domain is the class $\Omega$ of inputs for which the algorithm either halts with output `yes' or runs forever if the answer to the decision problem is `no'. This algorithm thus yields a $\Sigma_1^A$ classification. 
Note that, due to the fact that the decision problem are $\notin \Delta^G_1$ one could have had the following outcome. If there had been a case where $M = \max_{x} \langle x , c \rangle  \text{ subject to } Ax \leq b$ in \eqref{eq:LP_Kepler} the decision algorithm would have run forever and the Flyspeck program would never have resolved Kepler's conjecture. This could also have been the case if the answer to the decision problem was `no' i.e a case where $M < \max_{x} \langle x , c \rangle  \text{ subject to } Ax \leq b$ in \eqref{eq:LP_Kepler}. 

\vspace{1mm}
\item[\,\,\,\,\,\textit{\textbf{Dirac--Schwinger conjecture - SCI classification: $\in \Sigma^A_1$, $\notin \Delta^G_1$}}:]
We discussed the details in \S \ref{sec:comp_ass_proofs}.

 \vspace{1mm}
\item[\,\,\,\,\,\textit{\textbf{Boolean Pythagorean triples problem - SCI classification: $\in \Pi^A_1$, $\not \in \Delta_1^G$}}:] The Boolean Pythagorean triples problem asks if it is possible to colour each of the positive integers either red or blue, so that no Pythagorean triple of integers $a, b, c$, satisfying 
$a^{2}+b^{2}=c^{2}$ are all the same colour. For example, in the Pythagorean triple $3$, $4$ and $5$ (
$3^{2}+4^{2}=5^{2}$), if $3$ and $4$ are coloured red, then $5$ must be coloured blue.
 This is true for integers up to $n = 7824$. The computer-assisted proof, performed by M. Heule, O. Kullmann, and V. Marek (2016) \cite{Bool},
is based on showing that this is not true for $n = 7825$. While it is a combinatorial task checking the problem for any finite set of integer (and hence $\in \Delta_0^A$), it is clearly not $\in \Delta_0^G$ for infinite sets of integers. Yet, the problem is clearly $\in \Pi_1^A$, which is why it was possible to verify the counterexample. 
\vspace{1mm}
 \item[\,\,\,\,\,\textit{\textbf{Group theory: $\mathrm{Aut}(\mathbb{F}_5)$ has property $(T)$ - SCI classification 
: $\in \Sigma^A_1$, $\notin \Delta^G_1$}}:] The fact that the automorphism group of the free group on five generators has Kazhdan's property $(T)$, was shown by M. Kaluba, P. Nowak and N. Ozawa \cite{Kaluba}. 
The proof relies on a decision problem involving a minimiser of a semi-definite program (actually a root of a positive definite matrix that is a minimiser). The minimiser is computed using floating point arithmetic. Hence, it is, at best (if one could do a backward error analysis), equivalent to solving the semi-definite program with inexact input. Computing minimisers to semi-definite programs with inexact, yet arbitrary small precision is $\notin \Delta^G_1$ \cite{SCI_optimization}. Showing that the final decision problem used on  \cite{Kaluba} is $\in \Sigma^A_1$ requires an argument, which we do not repeat here, however, it is of similar spirit to the argument above arguing why Kepler's conjecture could be verified. 
\end{labeling}

\section{The history of the SCI hierarchy in mathematics}\label{sec:role_SCI}
Although we formally establish the SCI hierarchy in this paper, it implicitly shows up throughout the history of mathematics. Moreover, it encompasses many key computational problems and has applications in many computational areas of the mathematical sciences. This is summarised as follows.
\begin{itemize}
\item[(i)] {\it Computer-assisted proofs:} This is discussed in detail in \S \ref{sec:role_SCI_comp_ass}.
\item[(ii)] {\it Insolvability of the quintic:} The insolvability of the quintic becomes a classification problem in the SCI hierarchy. In particular, showing that the SCI of the problem of computing the zeros of a polynomial, when one can use arithmetic operations and radicals, is greater than 0 for polynomials of degree 5 is equivalent to the insolvability of the quintic.  
\item[(iii)] {\it Smale's problem on the existence of generally convergent algorithms and McMullen's solutions:} \S \ref{sec:Smale} summarises how the results by McMullen and Doyle \& McMullen are classification results in the SCI hierarchy.  A more detail account can be found in \S \ref{roots_pols}.
\item[(iv)] {\it Optimisation (compressed sensing and the extended Smale's 9th problem, statistical estimation, machine learning):}
As discussed in \S \ref{sec:comp_ass_proofs} and proved in a companion paper \cite{SCI_optimization}, deciding feasibility of linear programs given irrational inputs is not only undecidable ($\notin \Delta^G_1$) but $\notin \Sigma^G_1$. As shown in \cite{SCI_optimization}, using the framework of the SCI hierarchy, similar phenomena extend to many key 
problems in optimisation such as finding minimisers of Basis pursuit and Lasso. These form the basis of compressed sensing, statistical estimation, areas of machine learning etc.  Moreover, there is a link to the extended Smale's 9th problem \cite{SCI_optimization}. 
\item[(v)] {\it Computing the exit flag (validating the output of an algorithm):} Often computational routines come with a certification, a so-called exit flag, that determines if the computed solution is trustworthy or not. An example is MATLABs popular routine \texttt{linprog} for solving linear programs. Paradoxically, as shown in \cite{SCI_optimization}, this exit flag is not trustworthy, and the problem of computing the exit flag is higher up in the SCI hierarchy than computing the original problem itself. This also occurs for the spectral computation problem, where the problem of deciding if spectral pollution (for finite section) occurs on an interval for self-adjoint operators is strictly harder than computing the spectrum \cite{colbrook4}.
\item[(vi)] {\it Spectral problems:} Arveson's comment (recall \S \ref{sec:intro}) on the lack of algorithms for general spectral problems can be explained by the SCI hierarchy. As many computational spectral problems are high up in the hierarchy, all attempts with standard methods would fail. Moreover, the standard methods were based on one limit approaches, and would therefore never capture the depth of the computational spectral problem. Finally, the new $\Sigma_1^A$ classifications provide new algorithms that will never produce incorrect outputs. 
\item[(vii)] {\it Inverse problems:} As established in \S \ref{sec:syn_inverse}, inverse problems have a rich classification theory in the SCI hierarchy. 
\item[(viii)] {\it Foundations of computational mathematics:} The SCI hierarchy can be viewed as a direct continuation of Smale's program on the foundations of scientific computing, however, it allows for any computational model and any computational problem. 
\item[(ix)] {\it Hierarchies in logic:} Classical hierarchies in logic such as the arithmetical hierarchy become special cases of the SCI hierarchy (see Proposition \ref{thrm:prop_SCI_00}). This is not a paper in logic and computer science, however, a short discussion on connections to logic can be found in the appendix. 
\end{itemize}

%\colorbox[rgb]{1,0,0}{WE SHOULD EXPAND THIS, mention also what we show}

\section{Connections to previous work}\label{sec:prev_work}
We split the comments into four categories: foundations of computational mathematics, spectral computation, computer-assisted proofs and inverse problems.  
\begin{labeling}{}

\item[\,\,\,{\bf  \emph{Foundations of computational mathematics}}:]   S. Smale's seminal work \cite{Smale2, Smale_Acta_Numerica} and his program on the foundations of computational mathematics and scientific computing initiated 
the pioneering work by C. McMullen \cite{McMullen1, McMullen2, Smale_McMullen} and P. Doyle \& C. McMullen \cite{Doyle_McMullen} on polynomial root-finding. These are classification results in the SCI hierarchy, and our contribution is motivated by this program and the pioneering work by L. Blum, F. Cucker, M. Shub \& S. Smale  \cite{Smale_book}.  
Other results in this program on hierarchies include the work of F. Cucker \cite{Cucker_AH_real} that becomes a part of the  SCI hierarchy (see \S \ref{ext_dec_1} and \S \ref{sec:Cucker}). 

\item[\,\,\,{\bf \emph{Spectral computations}}:] The literature on computing spectra is enormous, thus, we will only emphasise the work that has been most influential on this paper. The ideas of using computational and algorithmic approaches to obtain spectral information date back to leading physicists and mathematicians such as E. Schr\"odinger \cite{schrodinger1940method}, T. Kato \cite{kato1949upper} and J. Schwinger \cite{Schwinger}. Schwinger introduced finite-dimensional approximations to quantum systems in infinite-dimensional spaces that allow for spectral computations. An interesting observation is that Schwinger's ideas were already present in the work of H. Weyl \cite{weyl1950theory}.
The work by H. Goldstine, F. Murray and J. von Neumann \cite{Goldstine} was one of the first to establish rigorous convergence results, and their work yields $\Delta^A_1$ classification for certain self-adjoint finite-dimensional problems. The work of N. Aronzajn \cite{Aronszajn_PNAS1, Aronszajn_PNAS2} inspired a comprehensive research program continued by H. Weinberger \cite{Weinberger1974}.
In \cite{Digernes} T. Digernes, V. S. Varadarajan  and S. R. S. Varadhan proved convergence of spectra of Schwinger's finite-dimensional discretisation matrices for a specific class of Schr\"{o}dinger operators with certain types of potential, which yields $\Delta^A_2$ classification in the SCI hierarchy. 

The finite-section method, which has been intensely studied for spectral computation, and has often been viewed in connection with Toeplitz theory, is very similar to Schwinger's idea of approximating in a finite-dimensional subspace. The reader may want to consult the pioneering work by A. B{\"o}ttcher \cite{Albrecht_Fields, Bottcher_pseu} and A. B{\"o}ttcher \& B. Silberman \cite{Bottcher_book, bottcher2006analysis}, see also A. B{\"o}ttcher, H. Brunner, A. Iserles \& S. N{\o}rsett \cite{Arieh2}, M. Marletta \cite{Marletta_pollution} and M. Marletta \& R. Scheichl \cite{Marletta_Spec_gaps}. The latter papers also discuss the failure of the finite section approach for certain classes of operators, see also \cite{Hansen_PRS, Hansen_JFA}. Note that in the cases where the finite section method works, it will typically yield $\Delta_2^A$ classifications in the SCI hierarchy, and occasionally $\Delta_1^A$ classifications. E. B. Davies considered second order spectra methods
\cite{Davies00, Davies_sec_order}, and E. Shargorodsky \cite{Shargorodsky1} demonstrated how second order spectra methods \cite{Davies_sec_order} will never recover the whole spectrum. 

W. Arveson  \cite{Arveson_cnum_lin94, Arveson_noncommute93,Arveson_role_of94,Arveson_Improper93,  Arveson_discrete91} 
and N. Brown \cite{brown2007quasi, Brown_2006, Brown_Memoars} pioneered the combination of spectral computation and the $C^*$-algebra literature (which dates back to the work by A. B{\"o}ttcher \& B. Silberman \cite{Albrecht1983}), both for the general spectral computation problem as well as for Schr\"{o}dinger operators. See also the work by N. Brown, K. Dykema, and D. Shlyakhtenko \cite{brown2002}, where variants of finite section analysis are implicitly used. Arveson also considered spectral computation in terms of densities, which is related to Szeg\"{o}'s work \cite{Szego} on finite section approximations. Similar results are also obtained by A. Laptev and Y. Safarov \cite{Laptev}. 
Typically, when applied to appropriate subclasses of operators, finite section approaches yield $\Delta^A_2$ classification results. There are also other approaches based on the infinite QR algorithm in connection with Toda flows with infinitely many variables pioneered by P. Deift, L. C. Li, and C. Tomei \cite{deift}. See also the work by P. Deift, J. Demmel, C. Li, and C. Tomei \cite{deift1991bidiagonal}.

The seminal work of C. Fefferman and L. Seco \cite{fefferman1990, fefferman1992, fefferman1993aperiodicity,  fefferman1994, fefferman1994_2, fefferman1995, fefferman1996interval, fefferman1996, fefferman1997}  on proving the Dirac-Schwinger conjecture is a striking example of computations used in order to obtain complete information about the asymptotical behaviour of the ground state of a family of Schr\"{o}dinger operators. The computer-assisted proof implicitly proves $\Sigma^A_1$ classifications in the SCI hierarchy. Moreover, the paper \cite{feffermanCPAM} by C. Fefferman is based on similar approaches using numerical calculation of eigenvalues. See also the paper \cite{Fefferman1979} by C. Fefferman and D. H. Phong on numerically computing the lowest eigenvalue of pseudo-differential operators.
 We also want to highlight the recent pioneering work by M. Zworski \cite{Zworski1, Zworski2} on computing resonances that can be viewed in terms of the SCI hierarchy. In particular, the computational approach \cite{Zworski1} is based on expressing the resonances as limits of non-self-adjoint spectral problems, and hence the SCI hierarchy is inevitable, see also \cite{ZworskiSjostrand}.  
Finally, the reader may want to consult the important contributions by C. Lubich \cite{lubich2008quantum} in his monograph of computational quantum mechanics on time dependent Schr\"{o}dinger problems. 
 
 We remark in passing that our results and theory are very different from the work of J. Richards and M. Pour--El \cite{Pour-El}. Their results rely on strong assumptions that are unrealistic for actual computations and therefore miss the whole SCI hierarchy. Moreover, such unrealistic assumptions have limited the impact on computational spectral theory, as suggested by the quote from W. Arveson in \S \ref{sec:intro}.

\item[\,\,\,{\bf \emph{Computer-assisted proofs}}:] 
Some of the issues have already been discussed in \S \ref{sec:comp_ass_proofs} and \S \ref{sec:role_SCI_comp_ass}.
The number of examples of computer-assisted proofs in the literature using numerical calculations is substantial. What most of them have in common is that in order to prove that the computational proof is 100\% accurate one implicitly has to prove a classification in the SCI hierarchy. The work by Fefferman and Seco \cite{fefferman1990, fefferman1992, fefferman1993aperiodicity,  fefferman1994, fefferman1994_2, fefferman1995, fefferman1996interval, fefferman1996, fefferman1997} can both be viewed from a computational spectral theory point of view as well as a computer-assisted proofs angle, and the $\Sigma^A_1$ classification is crucial. Similarly, the computer-assisted proof of Kepler's conjecture, via Hales' Flyspeck program, is also relying on  $\Sigma^A_1$ classification. Note that these are examples of computer-assisted proofs done by non-computable problems, however, there are many examples of computer-assisted proofs based on $\Delta_1^A$ classifications as well. A great example of this is the work of D. Gabai, R. Meyerhoff, and P. Milley  \cite{Gabai} on hyperbolic three-manifolds.

\item[\,\,\,{\bf \emph{Inverse Problems}}:]

There is a vast literature on computing solutions to certain infinite-dimensional inverse problems in one limit, typically by using the finite section method. The connection to Toeplitz theory is important and the reader may consult the foundational results in the books by A. B{\"o}ttcher \& B. Silberman \cite{Bottcher_book, bottcher2006analysis} as well as the monograph by Lindner \cite{Lindner} and the references therein. Note that two-limit algorithms have been suggested by K. Gr\"ochenig, Z. Rzeszotnik, and T. Strohmer in \cite{Charly1}, see also \cite{Charly2}.
\end{labeling}

\subsection*{Acknowledgements}
The authors would like to thank Percy Deift, Charlie Fefferman, Tom Hales, Ari Laptev, Steve Smale and Maciej Zworski for helpful discussions.
ACH would like to thank Caroline Series for pointing out the connection between the results of Doyle and McMullen in \cite{Doyle_McMullen} and the work in \cite{Hansen_JAMS}. It was this connection that initiated the work leading to this paper. The authors are grateful to Roman Bogdan for producing the data for Figure \ref{fQ_fig}. MJC acknowledges support from the UK Engineering and Physical Sciences Research Council (EPSRC) grant EP/L016516/1. ACH acknowledges support from a Royal Society University Research Fellowship, the Leverhulme Prize 2017, as well as the UK Engineering and Physical Sciences Research Council (EPSRC) grant EP/L003457/1.
%Computational spectral problems have received an overwhelming amount of attention over the last decades %\cite{Arveson_cnum_lin94, Arveson_role_of94,Arveson_Improper93, Bottcher_appr, Brown, deift, PourEl, Hansen_JFA, Hansen_JAMS, Levitin, Marletta1, Trefethen_Embree, Arieh1, Arieh2, Davies_spectr_poll04, Davies_sec_order, Seidel_JFA, SeSi_Pseudospectra, Eugene3, Silbermann22, Hansen_PRS}, and we can only cite a small subset here.

%\input{New_content/complexity_index_new8_Anders.tex}

\section{The Solvability Complexity Index Hierarchy and towers of algorithms}\label{sec:background}

Throughout this paper we assume the following:
\begin{subequations}\label{setup}
\begin{equation}
\Omega\text{ is some set, called the \emph{domain}},
\end{equation}
\vspace{-20pt}
\begin{equation}\label{Lambda_Omega}
\Lambda \text{ is a set of complex valued functions on $\Omega$, called the \emph{evaluation} set},
\end{equation}
\vspace{-20pt}
\begin{equation}
\Mcal  \text{ is a metric space},
\end{equation}
\vspace{-20pt}
\begin{equation}
\text{The mapping }\Xi:\Omega\to\Mcal,  \text{called the \emph{problem} function}.
\end{equation}
\end{subequations}

The set $\Omega$ is the collection of objects that give rise to our computational problems. It can be a family of matrices (infinite or finite), a collection of polynomials, a family of Schr\"{o}dinger (or Dirac) operators with a certain potential etc. The problem function $\Xi : \Omega\to\Mcal$ is what we are interested in computing. It could be the set of eigenvalues of an $n\times n$ matrix, the spectrum of a Hilbert (or Banach) space operator, root(s) of a polynomial etc. 
Finally, the set $\Lambda$ is the collection of functions that provide us with the information we are allowed to read, say matrix elements, polynomial coefficients or pointwise values of a potential function of a Schr\"{o}dinger operator, for example. 
In most cases it is convenient to consider a metric space $\mathcal{M}$, however, in the case of polynomials it may be more useful to use a pseudo metric space (see Example \ref{Ex1} (III) ).  To explain this rather abstract set-up in (\ref{setup}) we commence with the following examples:
\begin{example}\label{Ex1}

\begin{itemize}

\item[]

\item[(I)]{\bf (Spectral problems)}
Let $\Omega=\mathcal{B}(\mathcal{H})$, the set of all bounded linear operators on a separable Hilbert space $\mathcal{H}$, and the problem function $\Xi$ be the mapping $A\mapsto\mathrm{sp}(A)$ (the spectrum of $A$). Here $(\Mcal,d)$ is the set of all non-empty compact subsets of $\Cb$ provided with the Hausdorff metric $d=d_{\mathrm{H}}$ (defined precisely in \eqref{Hausdorff}). The evaluation functions in $\Lambda$ could for example consist of the family of all functions $f_{i,j}: A\mapsto \langle Ae_j,e_i\rangle$, $i,j\in\Nb$, which provide the entries of the matrix representation of $A$ w.r.t. an orthonormal basis $\{e_i\}_{i\in\N}$. Of course, $\Omega$ could be a strict subset of $\mathcal{B}(\mathcal{H})$, for example the set of self-adjoint or normal operators, and $\Xi$ could have represented the pseudospectrum, the essential spectrum or any other interesting information about the operator.

\item[(II)]{\bf (Inverse problems)}  Let $\Omega=\mathcal{B}_{\mathrm{inv}}(\mathcal{H}) \times \mathcal{H}$, where 
$\mathcal{B}_{\mathrm{inv}}(\mathcal{H})$ denotes the set of all bounded invertible operators on $\mathcal{H}$, and let the problem function $\Xi$ be the mapping $(A,b) \mapsto A^{-1}b$, which assigns to a linear problem $Ax=b$ its solution $x$. The metric space $\mathcal{M}$ would simply be $\mathcal{H}$ and $\Lambda$ the collection of mappings $\{f_{i,j}\}_{i\in \mathbb{N}, j \in \mathbb{Z}_+}$ where $f_{i,j}: (A,b)\mapsto \langle Ae_j,e_i\rangle$ for $j \in \mathbb{N}$ and $f_{i,0} : (A,b)\mapsto \langle b,e_i\rangle$. Also here $\Omega$ could consist of operators with specific properties (off diagonal decay, self-adjointness, isometric properties). 

\item[(III)]{\bf (Polynomial root finding)} Let $\Omega = \mathbb P_s$, the set of polynomials of degree $\leq s$ over $\C$ and let the problem function $\Xi$ be the mapping $p \mapsto  \{\alpha\in\C\ | \ p(\alpha)=0\}$ (the roots of $p$). Let $(\mathcal{M},d)$ denote the collection of finite sets of points in $\mathbb{C}$ equipped with the pseudo metric 
$d : \mathcal{M} \times \mathcal{M} \rightarrow [0,\infty]$, defined by
	$
	d(x,y) = \min_{1 \leq i \leq n,1 \leq j \leq m}|x_j - y_i|,
	$
where $x = \{x_1,\hdots, x_n\}, y = \{y_1,\hdots, y_m\} \in \mathcal{M}$. The reason for the pseudo metric is that the techniques of Doyle and McMullen that we will consider are based on computing a single root of a polynomial (as for example Newton's method does). In this case $\Lambda$ is the finite set of functions $\{f_j\}_{j=1}^s$ where $f_j : p \mapsto \alpha_j$ for $p(t) = \sum_{k=1}^s \alpha_k t^k$.

\item[(IV)]{\bf (Computational quantum mechanics)} Let $\Omega = \mathrm{L}^{\infty}(\mathbb{R}^d) \cap \mathrm{C}(\mathbb{R}^d)$ and let $\Xi: V \mapsto \mathrm{sp}(-\Delta + V),$ where the domain $\mathcal{D}(-\Delta + V) = \mathrm{W}^{2,2}(\mathbb{R}^d)$ (the standard Sobolev space) and $-\Delta + V$ is the usual Schr\"{o}dinger operator. Given that the spectra are unbounded, we cannot use the Hausdorff metric anymore, but will let $(\mathcal{M},d_{\mathrm{AW}})$ denote the set of non-empty closed subsets of $\mathbb{C}$ equipped with the \emph{Attouch--Wets} metric (see (\ref{eq:attouch-wets-metric})). In this case a natural choice of $\Lambda$ would be
 the set of all evaluations $f_x: V \mapsto V(x)$, $x\in\mathbb{Q}^d$. 
 
\item[(V)]{\bf (Decision making)} Let $\Omega$ denote the set of infinite matrices with values in $\{0,1\}$ and $\Xi:\Omega\to \mathcal{M}=\{\Yes, \No\}$ where $\mathcal{M}$ is equipped with the discrete metric $d_{\mathrm{disc}}$. The evaluation functions would naturally be $f_{i,j}: A\mapsto A_{i,j}$, $i,j\in\Nb$, the $(i,j)$th matrix coordinate of $A$. A typical example of $\Xi$ could be: $\Xi(\{A_{i,j}\})$: Does $\{A_{i,j}\}$ have a column containing infinitely many non-zero entries? Naturally, $\Omega$ can be replaced with the natural numbers including zero $\mathbb{Z}_+$, and $\Xi$ could be a question about membership in a certain set, as in classical recursion theory. In this case the evaluation set would be $\Lambda=\{\lambda\}$ consisting of the function $\lambda:\Zb_+\to\Cb,\, x\mapsto x$.
 \end{itemize}
\end{example}

Given this set-up and motivation, we can now define what we mean by a computational problem.
\begin{definition}[Computational problem]\label{def:comp_prob}
Given a domain $\Omega$; an evaluation set $\Lambda$, such that for $A_1, A_2 \in \Omega$ then $A_1 = A_2$ if and only if $f(A_1) = f(A_2)$ for all $f \in \Lambda$; a metric space $\mathcal{M}$; and a problem function $\Xi:\Omega\to\Mcal$, we call the collection $\{\Xi,\Omega,\mathcal{M},\Lambda\}$ a computational problem.
\end{definition}

Our aim is to find and to study families of functions (that we will sometimes refer to as algorithms) which permit us to approximate the function $\Xi$. The main pillar of our framework is the concept of a tower of algorithms. However, before that we will define a general algorithm.
\begin{definition}[General Algorithm]\label{alg}
Given a  computational problem $\{\Xi,\Omega,\mathcal{M},\Lambda\}$, a \emph{general algorithm} is a mapping $\Gamma:\Omega\to\Mcal$ such that for each $A\in\Omega$:
\begin{itemize}
\item[(i)] there exists a finite subset of evaluations $\Lambda_\Gamma(A) \subset\Lambda$,
\item[(ii)]  the action of $\,\Gamma$ on $A$ only depends on $\{A_f\}_{f \in \Lambda_\Gamma(A)}$ where $A_f := f(A),$
\item[(iii)]  for every $B\in\Omega$ such that $B_f=A_f$ for every $f\in\Lambda_\Gamma(A)$, it holds that $\Lambda_\Gamma(B)=\Lambda_\Gamma(A)$.
\end{itemize}
We will sometimes write $\Gamma(\{A_f\}_{f \in \Lambda_\Gamma(A)})$, in order to emphasise that $\Gamma(A)$ only depends on the results $\{A_f\}_{f \in \Lambda_\Gamma(A)}$ of finitely many evaluations. 
\end{definition}

Note that for a general algorithm there are no restrictions on the operations allowed. The only restriction is that it can only take a finite amount of information, though it \emph{is} allowed to \emph{adaptively} choose the finite amount of information it reads depending on the input (which may very well be infinite, say an infinite matrix, or a function).
%evaluations of a function at different points). 
The condition (iii) just ensures that the algorithm is well defined and consistent since, put in simple words, changing the input $A$ shall not affect the algorithm's action as long as the change does not affect the output of the relevant evaluations in $\Lambda_\Gamma(A)$. 

\begin{remark}[The purpose of a general algorithm]
The purpose of a general algorithm is to have a definition that will encompass any model of computation, and that will allow lower bounds and impossibility results to become universal. Given that there are several non equivalent models of computation, lower bounds will be shown with a general definition of an algorithm. Upper bounds will always be done with more structure on the algorithms for example using a Turing machine or a Blum--Shub--Smale (BSS) machine. 
\end{remark}

The concept of a general algorithm, however, is not enough to describe the world of computational problems. For that we need the concept of \emph{towers of algorithms}. 

\begin{definition}[Tower of algorithms]\label{tower_funct}
Given a computational problem $\{\Xi,\Omega,\mathcal{M},\Lambda\}$, a \emph{tower of algorithms of height $k$
 for $\{\Xi,\Omega,\mathcal{M},\Lambda\}$} is a collection of sequences of functions 
\begin{equation*}
\begin{split}
\Gamma_{n_k}:\Omega
\rightarrow \mathcal{M}, \quad 
\Gamma_{n_k, n_{k-1}} :\Omega
\rightarrow \mathcal{M}, \, \hdots \,, 
\Gamma_{n_k, \hdots, n_1}:\Omega \rightarrow \mathcal{M}, 
\end{split}
\end{equation*}
where $n_k,\hdots,n_1 \in \mathbb{N}$ and the functions $\Gamma_{n_k, \hdots, n_1}$ at the lowest level in the tower are general algorithms in the sense of Definition \ref{alg}. Moreover, for every $A \in \Omega$,
\begin{equation}\label{conv}
\begin{split}
\Xi(A) &= \lim_{n_k \rightarrow \infty} \Gamma_{n_k}(A), \\
\Gamma_{n_k}(A) &=
  \lim_{n_{k-1} \rightarrow \infty} \Gamma_{n_k, n_{k-1}}(A),\\
& \, \, \, \, \vdots\\
\Gamma_{n_k, \hdots, n_2}(A) &=
  \lim_{n_1 \rightarrow \infty} \Gamma_{n_k, \hdots, n_1}(A),
\end{split}
\end{equation}
where $S = \lim_{n \rightarrow \infty}S_n$ means convergence $S_n\to S$ in the (pseudo) metric space $\mathcal{M}$.
%$\lim_{n \rightarrow \infty} d(S_n,S) = 0$. 
For simplicity, and with a slight abuse of notation, we will often refer to $\{\Gamma_{n_k, \hdots, n_1}\}$ as a tower of algorithms, implicitly meaning the whole collection as described above.  
\end{definition}

In this paper we will discuss several types of towers: {\it General towers}, when there is no extra structure on the functions at the lowest level in the tower; {\it Doyle--McMullen towers}, that are used for Smale's problem on polynomial root finding (see \S \ref{roots_pols}); {\it Arithmetic towers}, that restricts the algorithm to arithmetic operations and comparisons; {\it Radical towers}, that also allows the operation of $\sqrt{\cdot}$ of a real number. 
A General tower will refer to the very general definition in Definition \ref{tower_funct} specifying that there are no further restrictions as will be the case for the other towers. When we specify the type of tower, we specify requirements on the functions $\Gamma_{n_k, \hdots, n_1}$, in particular, what kind of operations may be allowed. We can now define an {\it arithmetic tower of algorithms} and a {\it radical tower of algorithms}. 

\begin{definition}[Arithmetic towers]
Given a computational problem $\{\Xi,\Omega,\mathcal{M},\Lambda\}$, where $\Lambda$ is countable, we define the following: An \emph{Arithmetic tower of algorithms} of height $k$
 for $\{\Xi,\Omega,\mathcal{M},\Lambda\}$ is a tower of algorithms where the lowest functions $\Gamma = \Gamma_{n_k, \hdots, n_1} :\Omega \rightarrow \mathcal{M}$ satisfy the following:
 For each $A\in\Omega$ the mapping $(n_k, \hdots, n_1) \mapsto \Gamma_{n_k, \hdots, n_1}(A) = \Gamma_{n_k, \hdots, n_1}(\{A_f\}_{f \in \Lambda})$ is recursive, and $\Gamma_{n_k, \hdots, n_1}(A)$ is a finite string of complex numbers that can be identified with an element in $\mathcal{M}$. For arithmetic towers we let $\alpha = A$ 
\end{definition} 

\begin{remark}[Recursiveness] By recursive we mean the following. If $f(A) \in \mathbb{Q}$ for all $f \in \Lambda$, $A \in \Omega$, and $\Lambda$ is countable, then $\Gamma_{n_k, \hdots, n_1}(\{A_f\}_{f \in \Lambda})$ can be executed by a Turing machine \cite{Turing_Machine}, that takes $(n_k, \hdots, n_1)$ as input, and that has an oracle tape consisting of $\{A_f\}_{f \in \Lambda}$. If $f(A) \in \mathbb{R}$ (or $\mathbb{C}$) for all $f \in \Lambda$, then $\Gamma_{n_k, \hdots, n_1}(\{A_f\}_{f \in \Lambda})$ can be executed by a Blum-Shub-Smale (BSS) machine \cite{Smale_book} that takes $(n_k, \hdots, n_1)$, as input, and that has an oracle that can access any $A_f$ for $f \in \Lambda$. 
\end{remark}

\begin{remark}[Radical towers and beyond - the SCI and the insolvability of the quintic]
Similarly to the definition of an arithmetic tower, one could define a radical tower, where we let $\alpha = R$, by allowing, in addition to the arithmetic operations and comparisons, the operation $\sqrt{\cdot}$ on real numbers. In that case the recursiveness requirement above would mean recursive in the sense of a BSS machine with an oracle for the operation of computing $\sqrt{\cdot}$. Note that in this case the insolvability of the quintic becomes a question of the SCI with respect to a radical tower of algorithms. Similarly, one could define other towers by allowing other operations.  
\end{remark}

Given the definition of a tower of algorithms, we can now define the main concept of this paper: the Solvability Complexity Index (SCI). The SCI was first discussed in \cite{Hansen_JAMS} for a specific spectral problem, however, this definition extends to include general problems in computations.

\begin{definition}[Solvability Complexity Index]\label{complex_ind}
Given a computational problem $\{\Xi,\Omega,\mathcal{M},\Lambda\}$, it is said to have \emph{Solvability Complexity Index $\mathrm{SCI}(\Xi,\Omega,\mathcal{M},\Lambda)_{\alpha} = k$} with respect to a tower of 
algorithms of type $\alpha$ if $k$ is the smallest integer for which there exists a tower of algorithms of type 
$\alpha$ of height $k$. If no such tower exists then $\mathrm{SCI}(\Xi,\Omega,\mathcal{M},\Lambda)_{\alpha} = \infty.$ If 
there exists a tower $\{\Gamma_n\}_{n\in\Nb}$ of type $\alpha$ and height one such that $\Xi = \Gamma_{n_1}$ for some $n_1 < \infty$, then we define $\mathrm{SCI}(\Xi,\Omega,\mathcal{M},\Lambda)_{\alpha} = 0$.
\end{definition}

With the definition of the SCI, we can define the SCI hierarchy, for which any computational problem can be classified. Without any extra structure on the metric space $\mathcal{M}$, the $\Delta^{\alpha}_k$ classes are the finest refinement we can obtain in terms of the SCI. However, as described below, when more structure is present, the hierarchy becomes much richer.

\begin{definition}[The Solvability Complexity Index hierarchy]
\label{1st_SCI}
Consider a collection $\mathcal{C}$ of computational problems and let $\mathcal{T}$ be the collection of all towers of algorithms of type $\alpha$ for the computational problems in $\mathcal{C}$.
Define 
\begin{equation*}
\begin{split}
\Delta^{\alpha}_0 &:= \{\{\Xi,\Omega\} \in \mathcal{C} \ \vert \   \mathrm{SCI}(\Xi,\Omega)_{\alpha} = 0\}\\
\Delta^{\alpha}_{m+1} &:= \{\{\Xi,\Omega\}  \in \mathcal{C} \ \vert \   \mathrm{SCI}(\Xi,\Omega)_{\alpha} \leq m\}, \qquad \quad m \in \mathbb{N},
\end{split}
\end{equation*}
as well as
\[
\Delta^{\alpha}_{1} := \{\{\Xi,\Omega\}  \in \mathcal{C}   \  \vert \ \exists \ \{\Gamma_n\} \in \mathcal{T}\text{ s.t. } \forall A\in\Omega \ d(\Gamma_n(A),\Xi(A)) \leq 2^{-n}\}. 
\]
\end{definition}

\subsection{Extending the hierarchy for totally ordered $\mathcal{M}$}
\label{ext_dec_1}

When there is extra structure on the metric space $\mathcal{M}$, say $\mathcal{M} = \mathbb{R}$ or $\mathcal{M} = \{0,1\}$ with the standard metric, one may be able to define convergence of functions from above or below. This is an extra form of structure that allows for a type of error control. As we argue below, this is important, for example, in computer-assisted proofs, and of course, crucial in scientific computing.

\begin{definition}[The SCI Hierarchy (totally ordered set)]\label{def:tot_ord}
Given the set-up in Definition \ref{1st_SCI} and suppose in addition that $\mathcal{M}$ is a totally ordered set. 
Define 
\begin{equation*}
\begin{split}
\Sigma^{\alpha}_0 &= \Pi^{\alpha}_0 = \Delta^{\alpha}_0,\\
\Sigma^{\alpha}_{1} &= \{\{\Xi,\Omega\} \in \Delta_{2}^{\alpha} \ \vert \  \exists \ \{\Gamma_{n}\} \in \mathcal{T} \text{ s.t. } \Gamma_{n}(A) \nearrow \Xi(A) \ \, \forall A \in \Omega\}, 
\\
\Pi^{\alpha}_{1} &= \{\{\Xi,\Omega\} \in \Delta_{2}^{\alpha} \ \vert \  \exists \ \{\Gamma_{n}\} \in \mathcal{T} \text{ s.t. } \Gamma_{n}(A) \searrow \Xi(A) \ \, \forall A \in \Omega\},
\end{split}
\end{equation*}
where $\nearrow$ and $\searrow$ denotes convergence from below and above respectively,
as well as, for $m \in \mathbb{N}$, 
\begin{equation*}
\begin{split}
\Sigma^{\alpha}_{m+1} &= \{\{\Xi,\Omega\} \in \Delta_{m+2}^{\alpha} \ \vert \  \exists \ \{\Gamma_{n_{m+1}, \hdots, n_1}\} \in \mathcal{T} \text{ s.t. }\Gamma_{n_{m+1}}(A) \nearrow \Xi(A) \ \, \forall A \in \Omega\}, \\
\Pi^{\alpha}_{m+1} &= \{\{\Xi,\Omega\} \in \Delta_{m+2}^{\alpha} \ \vert \  \exists \ \{\Gamma_{n_{m+1}, \hdots, n_1}\} \in \mathcal{T} \text{ s.t. }\Gamma_{n_{m+1}}(A) \searrow \Xi(A) \ \, \forall A \in \Omega\}.
\end{split}
\end{equation*}
\end{definition}

If the metric space $\mathcal{M} = \{0,1\}$, it is clearly a totally ordered set and hence, from Definition \ref{def:tot_ord}, we get the SCI hierarchy for arbitrary decision problems. 

\subsection{Extending the hierarchy for spectral problems}

In the case where $\mathcal{M}$ is the collection of non-empty closed subsets of another metric space $(\mathcal{M}^{\prime},d')$ 
it is custom to equip $\mathcal{M}$ with the Hausdorff metric (bounded case)
\begin{equation}\label{Hausdorff}
d_\mathrm{H}(X,Y) = \max\left\{\sup_{x \in X} \inf_{y \in Y} d'(x,y), \sup_{y \in Y} \inf_{x \in X} d'(x,y) \right\},
\end{equation}
 or the Attouch--Wets metric (unbounded case) 
 \begin{equation}\label{eq:attouch-wets-metric}
	d_{\mathrm{AW}}(A,B)=\sum_{m=1}^\infty2^{-m}\min\left\{1,\sup_{d'(x,x_0)<m}\left|\mathrm{dist}(x,A)-\mathrm{dist}(x,B)\right|\right\},
	\end{equation}
where $A$ and $B$ are non-empty closed subsets of $\mathcal{M}^{\prime}$, and where $\mathrm{dist}(x,A)$ denotes the distance between the point $x \in \mathcal{M}^{\prime}$ and $A \subset \mathcal{M}^{\prime}$, and where $x_0\in\mathcal{M}^{\prime}$ can be chosen arbitrarily.

\begin{definition}[The SCI Hierarchy (Attouch--Wets/Hausdorff metric)]\label{def:SCI_Haus}
Given the set-up in Definition \ref{1st_SCI}, and suppose in addition that $(\mathcal{M},d)$ has the Attouch--Wets or the Hausdorff metric induced by another metric space $(\mathcal{M}^{\prime},d')$,
define, for $m \in \mathbb{N}$,
\begin{align*}
\Sigma^{\alpha}_0 &= \Pi^{\alpha}_0 = \Delta^{\alpha}_0,\\
\Sigma^{\alpha}_{1} &= \{\{\Xi,\Omega\} \in \Delta_{2}^{\alpha} \ \vert \  \exists \ \{\Gamma_{n}\} \in \mathcal{T}, \  \{X_{n}(A)\}\subset\mathcal{M} \text{ s.t. }  \ \Gamma_{n}(A)  \mathop{\subset}_{\mathcal{M}^{\prime}} X_n(A),\\
 & \qquad \qquad \qquad \qquad \lim_{n\rightarrow\infty}\Gamma_{n}(A)=\Xi(A),\ \ d(X_{n}(A),\Xi(A))\leq 2^{-n} \ \ \forall A \in \Omega\}, \\
\Pi^{\alpha}_{1} &= \{\{\Xi,\Omega\} \in \Delta_{2}^{\alpha} \ \vert \  \exists \ \{\Gamma_{n}\} \in \mathcal{T}, \  \{X_{n}(A)\}\subset\mathcal{M} \text{ s.t. }  \ \Xi(A)  \mathop{\subset}_{\mathcal{M}^{\prime}} X_{n}(A),\\
& \qquad \qquad \qquad \qquad \lim_{n\rightarrow\infty}\Gamma_{n}(A)=\Xi(A),\ \ d(X_{n}(A),\Gamma_n(A))\leq 2^{-n} \ \ \forall A \in \Omega\},
\end{align*}
where $\mathop{\subset}_{\mathcal{M}^{\prime}}$ means inclusion in the metric space $\mathcal{M}^{\prime}$, and $\{X_{n}(A)\}$ is a sequence where $X_n(A) \in \mathcal{M}$ depends on $A$. Moreover, 
\begin{equation*}
\begin{split}
\Sigma^{\alpha}_{m+1} = \{\{\Xi,\Omega\} \in \Delta_{m+2}^{\alpha} \ &\vert \  \exists \ \{\Gamma_{n_{m+1},...,n_1}\} \in \mathcal{T}, \  \{X_{n_{m+1}}(A)\}\subset\mathcal{M} \text{ s.t. }  \ \Gamma_{n_{m+1}}(A)  \mathop{\subset}_{\mathcal{M}^{\prime}} X_{n_{m+1}}(A),\\
 & \lim_{n_{m+1}\rightarrow\infty}\Gamma_{n_{m+1}}(A)=\Xi(A),\ \ d(X_{n_{m+1}}(A),\Xi(A))\leq 2^{-n_{m+1}} \ \ \forall A \in \Omega\}, \\
\Pi^{\alpha}_{m+1} = \{\{\Xi,\Omega\} \in \Delta_{m+2}^{\alpha} \ &\vert \  \exists \ \{\Gamma_{n_{m+1},...,n_1}\} \in \mathcal{T}, \  \{X_{n_{m+1}}(A)\}\subset\mathcal{M} \text{ s.t. }  \ \Xi(A)  \mathop{\subset}_{\mathcal{M}^{\prime}} X_{n_{m+1}}(A),\\
& \lim_{n_{m+1}\rightarrow\infty}\Gamma_{n_{m+1}}(A)=\Xi(A),\ \ d(X_{n_{m+1}}(A),\Gamma_{n_{m+1}}(A))\leq 2^{-n_{m+1}} \ \ \forall A \in \Omega\},
\end{split}
\end{equation*}
where $d$ can be either $d_{\mathrm{H}}$ or $d_{\mathrm{AW}}$.
\end{definition}

\begin{remark}[Convergence from below and above]
Intuitively, Definition \ref{def:SCI_Haus} captures convergence from below or above respectively, up to a small error parameter $2^{-n}$. 
Indeed, in the case of the Hausdorff metric case it is easy to see that the above Definition \ref{def:SCI_Haus} yields
\begin{equation*}
\begin{split}
\Sigma^{\alpha}_{1} &= \{\{\Xi,\Omega\} \in \Delta_{2}^{\alpha} \ \vert \  \exists \ \{\Gamma_{n}\} \in \mathcal{T}\text{ s.t. }  \ \Gamma_{n}(A) \subset \bar{\mathcal{N}}_{2^{-n}}(\Xi(A)) \ \ \forall A \in \Omega\},\\
\Pi^{\alpha}_{1} &= \{\{\Xi,\Omega\} \in \Delta_{2}^{\alpha} \ \vert \  \exists \ \{\Gamma_{n}\}  \in \mathcal{T} \text{ s.t. } \ \bar{\mathcal{N}}_{2^{-n}}(\Gamma_{n}(A)) \supset \Xi(A) \ \ \forall A \in \Omega\},
\end{split}
\end{equation*}
where $\bar{\mathcal{N}}_{\delta}(\omega)$ denotes the closed $\delta$-neighbourhood of $\omega \subset \mathcal{M}^{\prime}$, and similar definitions for $\Sigma^{\alpha}_{m+1}$ and $\Pi^{\alpha}_{m+1}$.
\end{remark}

Note that to build a $\Sigma_1$ algorithm, it is enough, by taking subsequences, to construct $\Gamma_n(A)$ such that $\Gamma_{n}(A) \subset \bar{\mathcal{N}}_{E_n(A)}(\Xi(A))$ with some computable $E_n(A)$ that converges to zero. 

\begin{definition}
Given a totally ordered metric space $(\mathcal{M},d)$, we say that the metric is order respecting if for any $a,b,c\in\mathcal{M}$ with $a\leq b\leq c$ we have $d(a,b)\leq d(a,c)$.
\end{definition}

\begin{proposition}[Properties of the SCI hierarchy]\label{thrm:prop_SCI}
Given the set-up, let $(\mathcal{M},d)$ be either the Hausdorff or Attouch--Wets metric or a totally ordered metric space with order respecting metric. Let $k=1,2$ or $3$, then we have the following.
\begin{itemize}
\item[(i)] $\Delta^{G}_k = \Sigma^{G}_k \cap \Pi^{G}_k$. In particular, if for a problem $
\Xi:\Omega\rightarrow \mathcal{M}$ we have $\Delta_k^G\not\ni\{\Xi,\Omega\}\in X_k^{\alpha}$, where $X=\Sigma$ or $\Pi$ and $\alpha$ denotes any type of tower, then $\{\Xi,\Omega\}\not\in Y_k^{\alpha}$, where $Y=\Pi$ or $\Sigma$ respectively.
\item[(ii)] Suppose for a computational problem $\Xi:\Omega\rightarrow \mathcal{M}$ we have a corresponding convergent $\Sigma_k^A$ tower $\Gamma_{n_k,...,n_1}^1$ and a corresponding convergent $\Pi_k^A$ tower $\Gamma_{n_k,...,n_1}^2$. Suppose also that we can compute for every $A\in\Omega$ the distance $d(\Gamma_{n_k,...,n_1}^1(A),\Gamma_{n_k,...,n_1}^2(A))$ to arbitrary precision using finitely many arithmetic operations and comparisons. Then $\{\Xi,\Omega\}\in\Delta_k^{A}$.
\end{itemize}
Finally, we also have the following property:
\begin{itemize}
\item[(iii)] When $\mathcal{M} = \{0,1\}$, $\Delta^{\alpha}_k = \Sigma^{\alpha}_k \cap \Pi^{\alpha}_k$ for all $k \in \mathbb{N}$ and $\alpha = G,A$.
\end{itemize}
\end{proposition}
The proof of Proposition \ref{thrm:prop_SCI} can be found in \S \ref{KS_results}.

\begin{remark}
Part (i) of Proposition \ref{thrm:prop_SCI} shows that the classifications obtained in this paper are sharp in the SCI hierarchy.
\end{remark}

\subsubsection{Computing approximate eigenvectors}
Let $\mathcal{C}$ denote the collection of computation spectral problems $\{\Xi,\Omega,\mathcal{M},\Lambda\}$ where $\Omega$ is a collection of normal operators on some Hilbert space $\mathcal{H}$ and $\Xi(A) = \mathrm{sp}(A)$ . If we consider bounded operators, $\mathcal{M}$ is the collection of compact subsets of $\mathbb{C}$ equipped with the Hausdorff, and in the unbounded case $\mathcal{M}$ is the collection of closed subsets of $\mathbb{C}$ with the Attouch--Wets metric.  
\begin{equation*}
\begin{split}
&\Sigma^{\alpha,\mathrm{eigv}}_1 = \{\{\Xi,\Omega\} \in \Sigma^{\alpha}_{1} \ \vert \  \exists \ \{\Gamma_{n}\} \in \mathcal{T}\text{ s.t. } \Gamma_n(A) = \{(\lambda_{1,n}, \xi_{1,n}), \hdots, (\lambda_{K,n}, \xi_{K,n})\}, \\
& \qquad \qquad \quad K = K(n) \in \mathbb{N}, \,\, \lambda_{j,n} \in \bar{\mathcal{N}}_{2^{-n}}(\mathrm{sp}(A)), \,\, \|A\xi_{j,n}-\lambda_{j,n} \xi_{j,n}\| \leq 2^{-n},  \\ 
&\qquad \qquad \quad \| \xi_{j,n}\| = 1 + a_n, \, |a_n| \leq 2^{-n}\, \forall j, \,\cup_{j=1}^K \lambda_{j,n} \rightarrow \mathrm{sp}(A), n \rightarrow \infty, \, \forall A \in \Omega \}.
\end{split}
\end{equation*}
In words $\Sigma^{\alpha,\mathrm{eigv}}_1$ can be described as follows.
\vspace{1mm}
\begin{displayquote}
\normalsize
{\it $\Sigma^{\alpha,\mathrm{eigv}}_1$ is the collection of computational spectral problems concerning normal operators that are in $\Sigma^{\alpha}_{1}$, where there exists an algorithm that can also compute approximate eigenvectors. }
\end{displayquote}
\vspace{1mm}

\subsection{Inexact input}\label{sec:inexact_input} 
Suppose we are given a computational problem $\{\Xi, \Omega, \mathcal{M}, \Lambda\}$, and that $\Lambda = \{f_j\}_{j \in \beta}$, where $\beta$ is some index set that can be finite or infinite.  However, obtaining $f_j$ may be a computational task on its own, which is exactly the problem in most areas of 
computational mathematics. In particular, for $A \in \Omega$, $f_j(A)$ could be the number $e^{\frac{\pi}{j} i }$ for example. Hence, we cannot access $f_j(A)$, but rather $f_{j,n}(A)$ where $f_{j,n}(A) \rightarrow f_{j}(A)$ as $n \rightarrow \infty$. 
Or, just as for problems that are high up in the SCI hierarchy, it could be that we need several limits, in particular one may need mappings
$f_{j,n_m,\hdots, n_1}: \Omega \rightarrow \mathbb{D} + i\mathbb{D}$, where $\mathbb{D}$ denotes the dyadic rational numbers, such that 
\begin{equation}\label{Lambda_limits}
\lim_{n_m \rightarrow \infty} \hdots \lim_{n_1 \rightarrow \infty} \|\{f_{j,n_m,\hdots, n_1}(A)\}_{j\in\beta} - \{f_j(A)\}_{j\in\beta}\|_{\infty} = 0 \quad  \forall A \in \Omega.
\end{equation}

In particular, we may view the problem of obtaining $f_j(A)$ as a problem in the SCI hierarchy, where $\Delta_1$ classification would correspond to the existence of mappings $f_{j,n}: \Omega \rightarrow \mathbb{D} + i \mathbb{D}$
such that 
 \begin{equation}\label{Lambda_limits2}
 \|\{f_{j,n}(A)\}_{j\in\beta} - \{f_j(A)\}_{j\in\beta}\|_{\infty} \leq 2^{-n} \quad \forall A \in \Omega.
 \end{equation}

This idea is formalised in the following definition.

\begin{definition}[$\Delta_{m}$-information]\label{definition:Lambda_limits}
	Let $\{\Xi, \Omega, \mathcal{M}, \Lambda\}$ be a computational problem. For $m \in \mathbb{N}$ we say that $\Lambda$ has $\Delta_{m+1}$-information if each $f_j \in \Lambda$ is not available, however, there are mappings $f_{j,n_m,\hdots, n_1}: \Omega \rightarrow \mathbb{D} + i \mathbb{D}$ such that \eqref{Lambda_limits} holds. Similarly, for $m = 0$ there are mappings $f_{j,n}: \Omega \rightarrow \mathbb{D} + i \mathbb{D}$
	such that \eqref{Lambda_limits2} holds. Finally, if $k \in \mathbb{N}$ and $\hat \Lambda$ is a collection of such functions described above such that $\Lambda$ has $\Delta_k$-information, we say that $\hat \Lambda$ provides $\Delta_k$ information for $\Lambda$. Moreover, we denote the family of all such $\hat \Lambda$ by $\mathcal{L}^k(\Lambda)$. 
\end{definition}

Note that we want to have algorithms that can handle all  computational problems $\{\Xi,\Omega,\mathcal{M},\hat \Lambda\}$ when $\hat \Lambda \in  \mathcal{L}^m(\Lambda)$. In order to formalise this we define what we mean by a computational problem with $\Delta_m$ information.

\begin{definition}[Computational problem with $\Delta_m$ information]
	Given $m \in \mathbb{N}$, a computational problem where $\Lambda$ has $\Delta_m$-information is denoted by 
	$
	\{\Xi,\Omega,\mathcal{M},\Lambda\}^{\Delta_m} := \{\tilde \Xi,\tilde \Omega,\mathcal{M},\tilde \Lambda\},
	$ 
where 
\[
\tilde \Omega = \left\{ \tilde A = \{f_{j,n_m,\hdots, n_1}(A)\}_{j,n_m,\hdots, n_1 \in \beta \times \mathbb{N}^m} \, \vert \, A \in \Omega, \{f_j\}_{j \in \beta} = \Lambda, f_{j,n_m,\hdots, n_1} \text{ satisfy (*)} \right\},
\]
and (*) denotes \eqref{Lambda_limits} if $m > 1$ and (*) denotes \eqref{Lambda_limits2} if $m = 1$. Moreover, $\tilde \Xi(\tilde A) = \Xi(A)$, and we have  
$\tilde \Lambda = \{\tilde f_{j,n_m,\hdots, n_1}\}_{j,n_m,\hdots, n_1 \in \beta \times \mathbb{N}^m}$ where $\tilde f_{j,n_m,\hdots, n_1}(\tilde A) = f_{j,n_m,\hdots, n_1}(A)$. Note that $\tilde \Xi$ is well defined by Definition \ref{def:comp_prob} of a computational problem. 
\end{definition}

The SCI and the SCI hierarchy, given $\Delta_m$-information, is then defined in the standard obvious way. We will use the notation 
\[
\{\Xi,\Omega,\mathcal{M},\Lambda\}^{\Delta_m} \in \Delta_k^{\alpha}
\]
to denote that the computational problem is in $\Delta_k^{\alpha}$ given $\Delta_m$-information. When $\mathcal{M}$ and $\Lambda$ are obvious then we will write $\{\Xi,\Omega\}^{\Delta_m} \in \Delta_k^{\alpha}$ for short.

\section{Main theorem on the general computational spectral problem}\label{finding_spectra}

For $A \in \Omega$, where $\Omega$ is an appropriate domain of operators, we define the problem functions
\begin{align}
&\Xi_{\mathrm{sp}}(A) := \mathrm{sp}(A) \quad (\text{spectrum}),
&&\Xi_{\mathrm{e}\textrm{-}\mathrm{sp}}(A) := \mathrm{sp}_{\mathrm{ess}}(A) \quad (\text{essential spectrum}) \label{eq:different_spectra} 
%\label{eq:set1}
\\
&\Xi^N_{\mathrm{sp},\epsilon}(A) := \mathrm{sp}_{N,\epsilon}(A)  \quad (\text{pseudospectrum})
&&\Xi^z_{\mathrm{sp}}(A) := \Yes \text{ if } z \in \mathrm{sp}(A), \, \No \text{ otherwise}. \label{eq:different_spectra2} 
%\label{eq:set2}
\end{align}

Here $\mathrm{sp}(A)$ denotes the spectrum, $\mathrm{sp}_{\mathrm{ess}}(A)$ the essential spectrum (invariant under compact perturbations) and $\mathrm{sp}_{N,\epsilon}(A)$ denotes the $(N,\epsilon)$-pseudospectrum \cite{Trefethen_Embree, Hansen_JFA, Bottcher_book}
\begin{equation}\label{eq:pseudo}
\mathrm{sp}_{N,\epsilon}(A) := \mathrm{cl}\left(\left\{z\in\Cb: \|(A-z I)^{-{2^N}}\|^{2^{-N}} > 1/\epsilon \right\}\right), \qquad N \in \mathbb{Z}_{\geq 0}, \epsilon > 0,
\end{equation}
where we use the convention that $\|(A-z I)^{-{2^N}}\| = \infty$ when $z \in \mathrm{sp}(A)$. This set has been popular in spectral theory, analysis of pseudo differential operators and non-Hermitian quantum mechanics. For computing the spectrum/essential spectrum/$(N,\epsilon)$-pseudospectrum, we consider computational problems $\{\Xi,\Omega,\mathcal{M},\Lambda\}$ a la the ones in Example \ref{Ex1} in \S \ref{sec:background} (i.e. with respect to the Hausdorff metric). For the final problem of determining if $z\in\mathrm{sp}(A)$, the metric space becomes the discrete metric on $\{\Yes,\No\}$. To avoid trivialities for this final problem, when considering self-adjoint classes of operators we will restrict to $z\in\mathbb{R}$ and when considering compact operators we will restrict to $z\neq 0$.
The key question then becomes:
\vspace{1mm}
\begin{displayquote}
\normalsize
{\it Given a problem function $\Xi$ of the form \eqref{eq:different_spectra} or \eqref{eq:different_spectra2}  with a domain $\Omega$ and evaluation set $\Lambda$, where in the SCI hierarchy is the computational problem $\{\Xi,\Omega,\mathcal{M},\Lambda\}$?}
\end{displayquote}
\vspace{1mm}

Since one can consider different classes of operators, the above question obviously becomes an infinite classification theory, however, we will establish some of the foundations. In order to do that we consider certain key domains such as the set of bounded self-adjoint or normal operators on $l^2(\mathbb{N})$, compact operators, operators on $l^2(\mathbb{N})$ with off-diagonal decay (bounded dispersion), operators with controlled growth of the resolvent etc. To define such domains we need a couple of definitions. 
 \begin{definition}[Dispersion]\label{def:disp}
We say that the dispersion of an operator $A \in \mathcal{B}(l^2(\mathbb{N}))$ is bounded by
the function $f:\Nb\to\Nb$ if
\[
D_{f,m}(A):=\max\{\|(I-P_{f(m)})AP_m\|,\|P_mA(I-P_{f(m)})\|\}\to 0\quad\text{as }m\to\infty.
\]
\end{definition}
Note that for every operator $A$ there is always a function $f$ which is a bound
for its dispersion since $AP_m$, $P_mA$ are compact and $\{P_n\}$ converges strongly to the 
identity. But there is no function $f$ which acts as a uniform bound for 
all operators. Nevertheless, there are important (sub)classes of operators having
well known uniform bounds, which should be mentioned:
\begin{itemize}
\item[(i)] Banded operators with bandwidth less than $d$: $f(k)= k+d$. More generally, we can consider operators with sparse matrices (only finitely many non-zero entries in each row and column) where $f$ captures the sparsity pattern. For example, for discrete Schr\"odinger operators on $l^2(\mathbb{Z}^2)$, we can choose an ordering of the lattice sites so that $f(k)-k=\mathcal{O}(\sqrt{k})$.
\item[(ii)] Band-dominated and weakly band-dominated operators: $f(k) = 2k$. For definitions and 
			properties of band and band-dominated operators see \cite{Roch_Silbermann_LimitOps, Lindner, SeiSurvey}.
			Weakly band-dominated operators can be found in \cite{Masc_Santos_Seidel}.
\item[(iii)] Laurent/Toeplitz operators with piecewise continuous generating function: $f(k)=k^2$
			(cf. \cite{bottcher2006analysis} and \cite[Proposition 5.4]{JuMaSe}).
\item[(iv)] Let $\mathcal{F}$ be a family of bounded operators with a common bound $f$. Then
			$\tilde{f}$, given by $\tilde{f}(k)=f(k)+k$, is a common bound for all operators in the Banach 
			algebra which is generated by $\mathcal{F}$.
\end{itemize}
Without loss of generality, we assume that $f$ is strictly increasing and $f(n)>n$.
We are also interested in operators where the control of the growth of the resolvent is bounded.  
\begin{definition}[Controlled growth of the resolvent]\label{def:res_growth}
Let $g:[0,\infty)\to[0,\infty)$ be a continuous function, vanishing only at $x=0$ and tending to infinity as $x\to\infty$ with $g(x)\leq x$. 
We say that a closed operator $A$ with non-empty spectrum on the Hilbert space $\mathcal{H}$ has controlled growth of the resolvent by $g$ if
\begin{equation}\label{EqSpDist}
\|(A-zI)^{-1}\|^{-1} \geq g(\dist(z,\spc(A)))\quad \forall  z\in \Cb,
\end{equation}
where we use the convention $\|B^{-1}\|^{-1}:=0$ if $B$ has no bounded inverse.
\end{definition}
Notice that for every bounded operator $A$ there always exists such a $g$ (define $g(\alpha):=\min\{\|(A-zI)^{-1}\|^{-1}:z\in\Cb\text{ with }\dist(z,\spc(A))=\alpha\}$, taking continuity and compactness into account) although there is no $g$ which works for all $A$. 

\begin{remark}[Assumptions on $\Lambda$]\label{rem:Lambda_bounded}
In order to make the ``additional knowledge'' $g$ available for the algorithms we assume that $\Lambda$ also contains the constant functions $g_{i,j}: A\mapsto g(i/j)$ ($i,j\in\Nb$), which provide the values of $g$ in all positive rational numbers. When considering the case of $\Delta_1$-information and arithmetic algorithms over $\mathbb{Q}$, we assume that $g$ maps $\mathbb{Q}_{\geq 0}$ to $\mathbb{Q}_{\geq 0}$ without loss of generality (e.g. by replacing $g$ with a suitable piecewise linear function). In the case when the dispersion of the operator is known, the values $f(m)$ $(m\in \Nb)$ shall be available to the algorithms as constant evaluation functions. When computing problems with $\mathrm{SCI}=1$ for $\Omega_f$ (and $\Omega_{fg}$), our algorithms also require the knowledge of a null sequence $\{c_m\}_{m\in\mathbb{N}}\subset\mathbb{Q}$ such that $D_{f,m}(A)\leq c_m$. 
\end{remark}

We consider the following domains defined below. In the cases of bounded dispersion or controlled growth of the resolvent we assume that we are given either $f$ or $g$ as above.
\begin{align*}
&\Omega_\mathrm{B} := \text{bounded operators}
&&\Omega_{\mathrm{N}} := \text{bounded normal operators},
%\label{eq:set1}
\\
&\Omega_{\mathrm{SA}} := \text{bounded self-adjoint operators}
&&\Omega_{\mathrm{C}} := \text{compact operators},
%\label{eq:set2}
\\
&\Omega_{f} := \text{bounded oper. w/ dispersion bounded by $f$}
&&\Omega_{g} := \text{bounded oper. w/ contr. res. growth by $g$}.
\\
&\Omega_{fg} := \Omega_{f} \cap \Omega_{g}
&&\Omega_{\mathrm{D}} := \text{bounded, diagonal, self-adjoint operators}.
%\label{eq:set3}
\end{align*}
Note that to avoid trivialities, in the case of $\{\Xi^z_{\mathrm{sp}},\Omega_{\mathrm{D}}\}$ or $\{\Xi^z_{\mathrm{sp}},\Omega_{\mathrm{SA}}\}$ we take $z$ to be real, and in the case of $\{\Xi^z_{\mathrm{sp}},\Omega_{\mathrm{C}}\}$ we take $z\neq 0$. Given the different domains, we can now state the main theorem for bounded operators.

\begin{remark}[The upper bounds hold both in the Turing and BSS model]
Note that the results in Theorem \ref{spec_thm_main} hold with inexact input ($\Delta_1$ information) as well as with exact input. Hence, our results are valid in both the Turing and the BSS model. To avoid extra notation we will simply write  
$\{\Xi,\Omega\} \in  \Delta/\Pi/\Sigma$ rather than the correct notation $\{\Xi,\Omega\}^{\Delta_1} \in  \Delta/\Pi/\Sigma$. 
\end{remark}

\begin{theorem}[The bounded computational spectral problem]
\label{spec_thm_main}
Given the set-up above we have the following classification results in the SCI hierarchy.
\begin{itemize}
\item[(i)] Spectrum:
  \begin{align*}
& \Delta^G_3 \not\owns \{\Xi_{\mathrm{sp}},\Omega_{\mathrm{B}}\} \in \Pi^A_3\,\, \, (\text{all oper.}), 
&& \Delta^G_2 \not\owns \{\Xi_{\mathrm{sp}},\Omega_{\mathrm{N}}\} \in \Sigma^A_2\,\, \, (\text{normal}), \\
& \Delta^G_2 \not\owns \{\Xi_{\mathrm{sp}},\Omega_{\mathrm{SA}}\} \in \Sigma^A_2\,\, \, (\text{self-adj.}), 
&& \Sigma_1^G\cup\Pi_1^G\not\owns\{\Xi_{\mathrm{sp}}, \Omega_{\mathrm{C}} \} \in \Delta^{A}_2\,\, \, (\text{compact}),\\ 
&\Delta^G_2 \not\owns \{\Xi_{\mathrm{sp}},\Omega_f\} \in \Pi^A_2 \,\, \, (\text{disp. bound. by $f$}),
&&\Delta^G_2 \not\owns \{\Xi_{\mathrm{sp}},\Omega_{g}\} \in \Sigma^A_2\,\, \, (\text{resolvent growth bound. by $g$}),
\\
&\Delta^G_1 \not\owns \{\Xi_{\mathrm{sp}},\Omega_{fg}\} \in \Sigma^{A}_1\,\, \, %(\text{res. growth bound. by $g$}\hspace{-2mm}\text{and disp. bound. by $f$}) 
&& \Delta^G_1 \not\owns\{\Xi_{\mathrm{sp}},\Omega_{f}\cap \Omega_{\mathrm{N}}\}\in\Sigma_1^{A,\mathrm{eigv}}.
\end{align*}

\item[(ii)] Essential spectrum:
\begin{align*}
& \Delta^G_3 \not\owns \{\Xi_{\mathrm{e\textrm{-}sp}},\Omega_{\mathrm{B}}\} \in \Pi^A_3\,\, \, (\text{all oper.}), 
&& \Delta^G_3 \not\owns \{\Xi_{\mathrm{e\textrm{-}sp}},\Omega_{\mathrm{N}}\} \in \Pi^A_3\,\, \, (\text{normal}), \\
& \Delta^G_3 \not\owns \{\Xi_{\mathrm{e\textrm{-}sp}},\Omega_{\mathrm{SA}}\} \in \Pi^A_3\,\, \, (\text{self-adj.}), 
&& \Delta^G_2 \not\owns \{\Xi_{\mathrm{e\textrm{-}sp}},\Omega_{\mathrm{D}}\} \in \Pi^A_2 \,\, \, (\text{self-adj. diag.}),\\ 
&\Delta^G_2 \not\owns \{\Xi_{\mathrm{e\textrm{-}sp}},\Omega_f\} \in \Pi^A_2 \,\, \, (\text{disp. bound. by $f$}),
&&\Delta^G_3 \not\owns \{\Xi_{\mathrm{e\textrm{-}sp}},\Omega_{g}\} \in \Pi^A_3\,\, \, (\text{resolvent growth bound. by $g$}),\\
&\Delta^G_2 \not\owns \{\Xi_{\mathrm{sp}},\Omega_{fg}\} \in \Pi^A_2\,\, \, (\text{res. growth bound. by $g$}
&& \hspace{-2mm}\text{and disp. bound. by $f$}).
\end{align*}

\item[(iii)] Pseudospectrum:
\begin{align*}
& \Delta^G_2 \not\owns \{\Xi^N_{\mathrm{sp},\epsilon},\Omega_{\mathrm{B}}\} \in \Sigma^A_2\,\, \, (\text{all oper.}), 
&& \Delta^G_2 \not\owns \{\Xi^N_{\mathrm{sp},\epsilon},\Omega_{\mathrm{N}}\} \in \Sigma^A_2\,\, \, (\text{normal}), \\
& \Delta^G_2 \not\owns \{\Xi^N_{\mathrm{sp},\epsilon},\Omega_{\mathrm{SA}}\} \in \Sigma^A_2\,\, \, (\text{self-adj.}), 
&& \Sigma_1^G\cup\Pi_1^G\not\owns\{\Xi^N_{\mathrm{sp},\epsilon}, \Omega_{\mathrm{C}} \} \in \Delta^{A}_2\,\, \, (\text{compact}),\\ 
&\Delta^G_1 \not\owns \{\Xi^N_{\mathrm{sp},\epsilon},\Omega_f\} \in \Sigma^A_1 \,\, \, (\text{disp. bound. by $f$}),
&&\Delta^G_2 \not\owns \{\Xi^N_{\mathrm{sp},\epsilon},\Omega_{g}\} \in \Sigma^A_2\,\, \, (\text{resolvent growth bound. by $g$}),\\
&\Delta^G_1 \not\owns \{\Xi_{\mathrm{sp}},\Omega_{fg}\} \in \Sigma^A_1\,\, \, (\text{res. growth bound. by $g$}
&& \hspace{-2mm}\text{and disp. bound. by $f$}).
\end{align*}

\item[(iv)] Is $z$ in the spectrum?:
\begin{align*}
& \Delta^G_3 \not\owns \{\Xi^z_{\mathrm{sp}},\Omega_{\mathrm{B}}\} \in \Pi^A_3\,\, \, (\text{all oper.}), 
&& \Delta^G_3 \not\owns \{\Xi^z_{\mathrm{sp}},\Omega_{\mathrm{N}}\} \in \Pi^A_3\,\, \, (\text{normal}), \\
& \Delta^G_3 \not\owns \{\Xi^z_{\mathrm{sp}},\Omega_{\mathrm{SA}}\} \in \Pi^A_3\,\, \, (\text{self-adj.}), 
&& \Delta^G_2\not\owns\{\Xi^z_{\mathrm{sp}}, \Omega_{\mathrm{C}} \} \in \Pi^A_2\,\, \, (\text{compact}),\\ 
&\Delta^G_2 \not\owns \{\Xi^z_{\mathrm{sp}},\Omega_f\} \in \Pi^A_2 \,\, \, (\text{disp. bound. by $f$}),
&&\Delta^G_3 \not\owns \{\Xi^z_{\mathrm{sp}},\Omega_{g}\} \in \Pi^A_3\,\, \, (\text{resolvent growth bound. by $g$}),\\
&\Delta^G_2 \not\owns \{\Xi^z_{\mathrm{sp}},\Omega_\mathrm{D}\} \in \Pi^A_2 \,\, \, (\text{self-adj. diag.}),&& \Delta^G_2 \not\owns \{\Xi_{\mathrm{sp}},\Omega_{fg}\} \in \Pi^A_2\,\, \, (\text{res. growth bound. by $g$ and disp. bound. by $f$}).
\end{align*}
\end{itemize}
%Moreover, the results for the spectrum/essential spectrum/pseudospectrum hold if we further restrict each class $\Omega$ to $\{A\in\Omega : \|A\| \leq M\}$, where $M > 0$ (and adding this constant function to our evaluation set).
\end{theorem}

\begin{remark}
\label{c_ns}
In order to gain the $\Sigma_1^A$ algorithms for $\Xi^N_{\mathrm{sp},\epsilon}$ we need an upper bound for $\|A\|$ when $N>0$ (without which we gain a $\Delta_2^A$ classification). No such knowledge is needed for the other towers of algorithms.
\end{remark}

\begin{remark}
\label{compact_special}
The proofs also show that the above lower bounds for compact operators hold when considering self-adjoint compact operators.
\end{remark}

\section{Main theorems on computational quantum mechanics}\label{quantum_mech}

Here we formally state the results summarised in \S \ref{sec:schrodiner_main}. We consider the spectral and pseudospectral mappings $\Xi_{\mathrm{sp}}$, $\Xi_{\mathrm{sp},\epsilon}$ from \eqref{eq:different_spectra} for Schr\"odinger operators: 
	\begin{equation}\label{eq:schrodinger2}
	 H=-\Delta+V,\qquad V:\R^d\to\C.
	\end{equation}
We assume that the information the algorithm can read are point samples $V(x)$ for $x \in \mathbb{Q}^d$. In particular, $\Lambda$ is as in \ref{Ex1} in \S \ref{sec:background}. Moreover, $\mathcal{M}$ is the collection of non-empty closed subsets of $\mathbb{C}$ with the standard Attouch--Wets metric \eqref{eq:attouch-wets-metric}. If we fix the domain of $H$ such that it is appropriate for a class of potentials $V$, the spectrum of $H$ is uniquely determined by $V$. The basic question is therefore: 
\vspace{1mm}
\begin{displayquote}
\normalsize
{\it Given a class of Schr\"{o}dinger operators $-\Delta + V \in \Omega$, let $\Xi$ be either $\Xi_{\mathrm{sp}}$ or $\Xi_{\mathrm{sp},\epsilon}$, $\Lambda$ and $\mathcal{M}$ as above, where in the SCI hierarchy is the computational problem $\{\Xi,\Omega,\mathcal{M},\Lambda\}$?}
\end{displayquote}
\vspace{1mm}

%\begin{remark}
Though we have stuck to the Hilbert space $\mathrm{L}^2(\mathbb{R}^d)$ for simplicity, the algorithms we construct can also be adapted for other spaces commonly found in applications such as $\mathrm{L}^2(\mathbb{R}_{>0})$.
%\end{remark}

\vspace{2mm}

\noindent {\bf Bounded Potentials.}
We first consider cases with bounded potential. In particular, let $\phi:[0,\infty)\to[0,\infty)$ be some increasing function and $M > 0$, define
\begin{align*}
\Omega_{\phi} & :=  \{H : \Dcal(H)=\mathrm{W}^{2,2}(\Rbb^d), V \in \mathrm{BV}_{\phi}(\mathbb{R}^d), \|V\|_{\infty} \leq M\}, \\
\Omega_{\phi, g} & := \{H\in\Omega_{\phi}:\|(-\Delta+V-zI)^{-1}\|^{-1} \geq g(\dist(z,\spc(H)))\},
\end{align*}
where 
\begin{equation}
\label{BV_bound}
\mathrm{BV}_{\phi}(\mathbb{R}^d) = \{f : \mathrm{TV}(f_{[-a,a]^d}) \leq \phi(a)\},
\end{equation}
($f_{[-a,a]^d}$ means $f$ restricted to the box $[-a,a]^d$)
with $\mathrm{TV}$ being the total variation of a function in the sense of Hardy and Krause (see \cite{Niederreiter}). Here as in \S \ref{finding_spectra}, $g:[0,\infty)\to[0,\infty)$ is a continuous strictly increasing function with $g(x)\leq x$, vanishing only at $x=0$ and tending to infinity as $x\to\infty$.

Note that the set $\Omega_{\phi}$ requires a little bit more than $V$ just being locally of bounded variation. There is a universal upper bound across the class on the growth of the total variation of the potential function as we restrict the function to a larger set. The class $\Omega_{\phi, g}$ obviously includes self-adjoint Sch\"odinger operators in $\Omega_{\phi}$, however, it is much larger. We denote the class of self-adjoint Sch\"odinger operators in $\Omega_{\phi}$ by $\Omega_{\phi,\mathrm{SA}}$.

\begin{remark}[Assumptions on $\Lambda$]\label{Asmpt_on_Lam}
In addition to containing the point sampling functions $f_x$ such that $f_x(V) = V(x)$ for $x \in \mathbb{Q}^d$ we have the following. As done in the case of bounded Hilbert space operators discussed in Remark \ref{rem:Lambda_bounded}, the additional knowledge of $g$, describing the growth of the resolvent, is available for the algorithms by assuming that $\Lambda$ also contains the constant functions $g_{i,j}: V \mapsto g(i/j)$ ($i,j\in\Nb$), which provide the values of $g$ in all positive rational numbers (again in the case of $\Delta_1$-information and arithmetic algorithms over $\mathbb{Q}$, we assume that $g(\mathbb{Q}_{\geq 0})\subset\mathbb{Q}_{\geq 0}$ without loss of generality). Moreover, $\Lambda$ contains the constant functions $\phi_n: V \mapsto \phi(n)$ for $n \in \mathbb{N}$ and we assume without loss of generality that $\phi(n)\in\mathbb{Q}$.
\end{remark}

\begin{remark}[The upper bounds hold both in the Turing and BSS model]
Note that the results in Theorem \ref{main_self_adjoint} and Theorem \ref{thm:comp-res} hold with inexact input ($\Delta_1$ information) as well as with exact input. Hence, our results are valid in both the Turing and the BSS model. To avoid extra notation we will simply write  
$\{\Xi,\Omega\} \in  \Delta/\Pi/\Sigma$ rather than the correct notation $\{\Xi,\Omega\}^{\Delta_1} \in  \Delta/\Pi/\Sigma$. 
\end{remark}

\begin{theorem}[Bounded potential]\label{main_self_adjoint}
Given the above set-up, we have the following classification results.
\begin{align*}
&\Delta_1^G\not\ni\{\Xi_{\mathrm{sp}},\Omega_{\phi}\}\in\Pi_2^A, \hspace{-20mm}&&\Delta_1^G\not\ni\{\Xi_{\mathrm{sp},\epsilon},\Omega_{\phi}\}\in\Sigma_1^A,\\
&\Delta_1^G\not\ni\{\Xi_{\mathrm{sp}},\Omega_{\phi,g}\}\in  \Sigma^{A}_1, \hspace{-20mm}&& \Delta_1^G\not\ni\{\Xi_{\mathrm{sp},\epsilon},\Omega_{\phi,g}\}\in\Sigma_1^A,\\
& \Delta_1^G\not\ni\{\Xi_{\mathrm{sp}},\Omega_{\phi,\mathrm{SA}}\}\in  \Sigma^{A,\mathrm{eigv}}_1.
\end{align*}
\end{theorem}

\begin{remark}
When considering the problem of computing approximate eigenvectors by arithmetic algorithms, we need a suitable way of encoding the space. We choose do to so via computing coefficients of a function with respect to an orthonormal basis in $\mathrm{L}(\mathbb{R}^d)$, where each of these is a simple function consisting of trigonometric and rational functions.
\end{remark}

%\begin{remark}
As will be evident from the proof techniques, one can build towers of algorithms for operators with more general classes of potentials (for example $\mathrm{L}^1(\mathbb{R}^d) \cap \mathrm{BV}_{\mathrm{loc}}(\mathbb{R}^d)$ or $\mathrm{L}^2(\mathbb{R}^d) \cap \mathrm{BV}_{\mathrm{loc}}(\mathbb{R}^d)$), however, the height of these towers will be higher than the ones considered in this paper. The main future task is to obtain exact values of the SCI of the spectrum, given the different potential classes.
 %\end{remark}

\vspace{2mm}

\noindent {\bf Unbounded Potentials.}
We get a rather intriguing phenomenon for sectorial operators. Namely, the SCI of both the spectrum and the pseudospectrum is one, but no type of error control is possible. In particular, suppose that  we have non-negative $\theta_1,\theta_2$  such that $ \theta_1+\theta_2<\pi$.
Define
\begin{equation}\label{eq:sector}
\Omega_{\infty} = \{V \in \mathrm{C}(\mathbb{R}^d): \forall x \,\mathrm{arg}(V(x)) \in [-\theta_2,\theta_1], |V(x)| \rightarrow \infty  \text{ as }  x \rightarrow \infty \}.
\end{equation}
We define the operator $H$ via the minimal operator  $h$ as:
	$
	H=h^{**},
	$ 
	$h=-\Delta+V,$ $\Dcal(h)=\mathrm{C}^{\infty}_c(\mathbb{R}^d).
$
When $V \in \Omega_{\infty}$ it follows that $H$ has compact resolvent, a result that we also establish as a part of the proof of the following theorem.

%\begin{remark}[Assumptions on $\Lambda$]
Interestingly, no constant functions are needed in $\Lambda$ in order to obtain the results in the following theorem, as opposed to the case where we have a bounded potential. 
%\end{remark}

\begin{theorem}[Unbounded potential]\label{thm:comp-res}
Given the above set-up, we have the following classification results 
$$
\Sigma_1^G\cup\Pi_1^G\not\ni\{\Xi_{\mathrm{sp}},\Omega_\infty\}\in\Delta^A_2,\quad \Sigma_1^G\cup\Pi_1^G\not\ni\{\Xi_{\mathrm{sp},\epsilon},\Omega_\infty\}\in\Delta^A_2.
$$
\end{theorem}

Note that the key to this result is the compact resolvent of $H$. It is therefore natural that these problems have the same SCI classification as for compact operators $\Omega_\mathrm{C}$ (see Theorem \ref{spec_thm_main} in \S \ref{finding_spectra}).
The continuity assumption on $V$ in Theorem \ref{thm:comp-res} is to make sure that the discretisation used converges. However, by tweaking with the approximation, this assumption can be weakened to include potentials that have certain discontinuities.

\section{Main theorems on solving linear systems}\label{linear_systems}

Suppose that 
$b \in l^2(\mathbb{N})$, $A \in \mathcal{B}_{\mathrm{inv}}(l^2(\mathbb{N}))$ (the set of bounded invertible operators) and $\Omega \subset  \mathcal{B}_{\mathrm{inv}}(l^2(\mathbb{N})) \times l^2(\mathbb{N})$ and we define the mappings 
$
\Xi_{\mathrm{inv}}: \Omega \ni (A,b) \mapsto A^{-1}b,
$
and $
\Xi_{\mathrm{norm}}: A \mapsto \|A^{-1}\|^{-1}.
$
Depending on the problem function, $\mathcal{M}$ is either $l^2(\mathbb{N})$ or $\mathbb{R}$ with the canonical metrics. 
We ask the following basic question:
\begin{displayquote}
{\it Where in the SCI hierarchy are the computational problems $\{\Xi,\Omega,\mathcal{M},\Lambda\}$ for different domains $\Omega$ when $\Xi$ is either $\Xi_{\mathrm{inv}}$ or $\Xi_{\mathrm{norm}}$, with the appropriate choices of $\mathcal{M}$?}
\end{displayquote}

\begin{remark}[Assumptions on $\Lambda$]
Here, as in Example \ref{Ex1}, we again suppose that the set $\Lambda$ of evaluations consists of the functions which read the matrix elements $\{\langle Ae_j,e_i \rangle\}_{i,j \in \mathbb{N}}$ and the sequence entries $\{\langle b,e_k \rangle\}_{k \in \mathbb{N}}$ of  $(A,b)\in\Omega$.
Also, in the case when the dispersion of the operator is known, the values $f(m)$ $(m\in \Nb)$ shall be available to the algorithms as constant evaluation functions. However, if the dispersion is not known, then $\Lambda$ will not contain any constant functions in the theorems below.
\end{remark}

\begin{remark}[The upper bounds hold both in the Turing and BSS model]
Note that the results in Theorem \ref{linear_systems_thrm} and Theorem \ref{thrm:norm_inverse} hold with inexact input ($\Delta_1$ information) as well as with exact input. Hence, our results are valid in both the Turing and the BSS model. To avoid extra notation we will simply write  
$\{\Xi,\Omega\} \in \Delta/\Pi/\Sigma$ rather than the correct notation $\{\Xi,\Omega\}^{\Delta_1} \in \Delta/\Pi/\Sigma$. 
\end{remark}

\begin{theorem}[Solving linear systems]\label{linear_systems_thrm}
Let $\mathcal{B}_{\mathrm{inv},f}(l^2(\mathbb{N}))$ denote the class of bounded invertible operators with dispersion bounded by $f : \mathbb{N} \rightarrow \mathbb{N}$, $\mathcal{B}_{\mathrm{inv},f}^M(l^2(\mathbb{N}))$ denote the class of operators $A\in\mathcal{B}_{\mathrm{inv},f}(l^2(\mathbb{N}))$ with the $\|A^{-1}\|\leq M$, $\mathcal{B}_{\mathrm{inv},sa}(l^2(\mathbb{N}))$ denote the class of bounded invertible self-adjoint operators, and define the domains $\Omega_1 =  \mathcal{B}_{\mathrm{inv}}(l^2(\mathbb{N})) \times l^2(\mathbb{N})$,  
$\Omega_2 =  \mathcal{B}_{\mathrm{inv},sa}(l^2(\mathbb{N})) \times l^2(\mathbb{N})$ and
$\Omega_3 =  \mathcal{B}_{\mathrm{inv},f}(l^2(\mathbb{N})) \times l^2(\mathbb{N})$.
\begin{align*}%\label{SCI_lin1}
&\Delta^G_2\not\ni\{\Xi_{\mathrm{inv}},\Omega_1\}\in\Delta_3^A,\\
&\Delta^G_2\not\ni\{\Xi_{\mathrm{inv}},\Omega_2\}\in\Delta_3^A,\\
&\Delta^G_1\not\ni\{\Xi_{\mathrm{inv}},\Omega_3\}\in\Delta_2^A.
\end{align*}
Furthermore, if we define $\Omega_4 =  \mathcal{B}_{\mathrm{inv},f}^M(l^2(\mathbb{N})) \times l^2(\mathbb{N})$, and in this particular case we assume knowledge of a null sequence $\{c_m\}_{m\in\mathbb{N}}$ such that $D_{f,m}(A)\leq c_m$ and $\|b-P_mb\|\leq c_m$ then we have the error control
\begin{equation}
\Delta_0^G\not\owns\{\Xi_{\mathrm{inv}},\Omega_4\}\in\Delta_1^A.
\end{equation}
\end{theorem}

Another problem of interest when dealing with solutions of linear systems of equations is the computation of the norm of the inverse. This is obviously related to the stability of the problem. The task of computing the norm of the inverse of an operator can also be analysed in terms of the SCI, and that is the topic of the next theorem. Note that since our metric space is $\mathbb{R}$ with the usual metric, we have a notion of $\Sigma$ or $\Pi$ convergence.

\begin{theorem}[Computing norm of the inverse]\label{thrm:norm_inverse}
Let $\Omega_1=\mathcal{B}(l^2(\mathbb{N}))$, $\Omega_2$ the subset of self-adjoint operators, $\Omega_3$ the subset of operators with dispersion bounded by an $f : \mathbb{N} \rightarrow \mathbb{N}$, and let 
$
\Xi_{\mathrm{norm}}: A \mapsto \|A^{-1}\|^{-1}.
$
\footnote{As usual, $\|A^{-1}\|^{-1}:=0$ if $A$ is not invertible. We could have equally chosen to compute $\|A^{-1}\|$ with the point at infinity added to a suitable metrisation of $\mathbb{R}$. In this case we would get a $\Sigma$ rather than a $\Pi$ classification.}
Then 
\begin{equation}\label{SCI_lin2}
\Delta^G_2\not\ni\{\Xi_{\mathrm{norm}},\Omega_1\}\in\Pi_2^A,\quad \Delta^G_2\not\ni\{\Xi_{\mathrm{norm}},\Omega_2\}\in\Pi_2^A,\quad\Delta^G_1\not\ni\{\Xi_{\mathrm{norm}},\Omega_3\}\in\Pi_1^A.
\end{equation}
\end{theorem}

\begin{remark}
As in the spectral case, we require the knowledge of a null sequence $c_m$ such that $D_{f,m}(A)\leq c_m$ in order to gain $\{\Xi_{\mathrm{norm}},\Omega_3\}\in\Pi_1^A$. Without this knowledge the constructed algorithm gives a $\Delta^A_2$ classification.
\end{remark}

\section{Proof of Theorem \ref{spec_thm_main}}\label{finding_spectra_proofs}

We start the sections on the proofs of our main results with a simple but fundamental observation on the smallest singular values $\sigma_1(B)$ of finite matrices $B\in\Cb^{m\times n}$, which constitutes one of the cornerstones for most of the general algorithms we will construct in the subsequent proofs. Note that when dealing with infinite-dimensional operators, we will also use the notation $\sigma_1$ to denote the injection modulus defined, for $A \in \mathcal{B}(\mathcal{H})$ on some Hilbert space $\mathcal{H}$, as
\[
\sigma_1(A):=\inf_{\|x\|=1}\|Ax\|.
\]
\begin{proposition}\label{PCholesky}
Given a matrix $B\in\Cb^{m\times n}$ and a number $\epsilon>0$ one can test with finitely many arithmetic operations of the entries of $B$ whether the smallest singular value $\sigma_1(B)$ of $B$ is greater than $\epsilon$.
\end{proposition}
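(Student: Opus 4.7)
The plan is to reduce the question \enquote{$\sigma_1(B)>\eps$?} to deciding positive definiteness of a specific Hermitian matrix whose entries are polynomial expressions in the entries of $B$ and $\eps$, and then to observe that positive definiteness of a Hermitian matrix can be decided by finitely many arithmetic operations.

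First I would pass from the rectangular $B$ to a square Hermitian matrix. Set $n':=\min(m,n)$ and
\[
C:=\begin{cases}B^*B\in\Cb^{n\times n}&\text{if }m\geq n,\\ BB^*\in\Cb^{m\times m}&\text{if }m<n.\end{cases}
\]
In either case $C$ is a Hermitian $n'\times n'$ matrix, its entries are obtained from the entries of $B$ by finitely many additions and multiplications, and $\lambda_{\min}(C)=\sigma_1(B)^2$. Hence the test \enquote{$\sigma_1(B)>\eps$} is equivalent to the test \enquote{$M:=C-\eps^2 I_{n'}$ is (strictly) positive definite}, and forming $M$ costs only finitely many arithmetic operations on the entries of $B$ and on $\eps$.

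Next I would verify that positive definiteness of a Hermitian matrix $M$ can be checked arithmetically. The simplest route is \emph{Sylvester's criterion}: $M\succ 0$ if and only if each leading principal minor $\det(M_k)$, $k=1,\ldots,n'$, is strictly positive. Each such determinant is a polynomial in the entries of $M$ and therefore computable in finitely many arithmetic operations (say by cofactor expansion or Gaussian elimination); the subsequent comparisons $\det(M_k)>0$ are permitted within an Arithmetic tower, as noted in the remark following the definitions of Arithmetic and Radical towers. Alternatively one may run the $LDL^*$ factorisation of $M$, which uses only $+,-,\cdot,/$: the recursion either terminates with a strictly positive diagonal $D$ (so $M\succ 0$) or hits a non-positive pivot (so $M\not\succ 0$). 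Either implementation gives a finite arithmetic decision procedure, which combined with the first step proves the proposition.

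I do not expect any serious obstacle here; the whole point of the statement is that neither radicals nor limits are required, and both Sylvester and $LDL^*$ make this manifest. The only mild care is to avoid invoking Cholesky in its textbook form (which involves square roots) and to handle the non-square case $m\ne n$ by choosing the correct one of $B^*B$ and $BB^*$.
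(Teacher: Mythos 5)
Your proposal is essentially the same as the paper's: the paper also forms $C:=B^*B-\eps^2 I$ and decides positive definiteness by running Gaussian elimination without row exchange and checking that all pivots are positive, which is mathematically equivalent to your Sylvester / $LDL^*$ variants. One small caveat: the paper uses $B^*B$ unconditionally, which is consistent with its injection-modulus convention $\sigma_1(B)=\inf_{\|\xi\|=1}\|B\xi\|$ (so that $\sigma_1(B)=0$ automatically when $m<n$), whereas your switch to $BB^*$ for $m<n$ would compute the smallest \emph{nonzero-block} singular value instead; under the paper's convention the $BB^*$ branch is superfluous, and in all of the paper's applications the matrix has at least as many rows as columns, so the two conventions coincide there anyway.
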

\begin{proof}
The matrix $B^*B$ is self-adjoint and positive semi-definite, hence has its eigenvalues in $[0,\infty)$. The singular values of $B$ are the square roots of these eigenvalues of $B^*B$.
The smallest singular value is greater than $\epsilon$ if and only if the smallest eigenvalue of $B^*B$ is greater than $\epsilon^2$, which is the case if and only if $C:=B^*B-\epsilon^2I$ is positive definite.
It is well known that $C$ is positive definite if and only if the pivots left after Gaussian elimination (without row exchange) are all positive. Thus, if $C$ is positive definite, Gaussian elimination leads to pivots that are all positive, and this requires finitely many arithmetic operations. If $C$ is not positive definite, then at some point a pivot is zero or negative, at this point the algorithm aborts. An alternative is the Cholesky decomposition. Although forming the lower triangular $L \in \mathbb{C}^{n \times n}$ (if it exists) such that $C = LL^*$ requires the use of radicals, the existence of $L$ can be determined using finitely many arithmetic operations. This follows from the standard Cholesky algorithm, and we omit the details.  
\end{proof}

\begin{proposition}
\label{REC_SING}
Given a matrix $B\in\Cb^{m\times n}$ with $\Delta_1$-information for the matrix entries of $B$, and $\eta>0$, we can compute $\sigma_1(B)$ to accuracy $\eta$ using finitely many arithmetic operations and comparisons over $\mathbb{Q}$.
\end{proposition}

\begin{proof}
Without loss of generality, we can assume that $\eta\in\mathbb{Q}$. Let $\hat B$ be a rational approximation of $B$, obtained using $\Delta_1$-information, such that $\|B-\hat B\|\leq\eta/2$. Note that we can bound the operator norm by the Frobenius norm and hence can guarantee $\|B-\hat B\|\leq\eta/2$ if each matrix entry of $\hat B$ is accurate to $\eta(2\sqrt{mn})^{-1}$ (we can choose a smaller rational accuracy parameter). It then follows that
$
|\sigma_1(B)-\sigma_1(\hat B)|\leq \|B-\hat B\|\leq\frac{\eta}{2}.
$
The proposition follows if we can compute $\sigma_1(\hat{B})$ to accuracy $\eta/2$. To do this, let $M\in\mathbb{N}$ be such that $M^{-1}<\eta/2$. Using Proposition \ref{PCholesky} (note that this only requires arithmetic operations and comparisons over $\mathbb{Q}$) and applying successive tests to $\epsilon=1/M,2/M,...$, we can compute the smallest $k\in\mathbb{N}$ such that $\sigma_1(\hat B)\leq k/M$. Our approximation is then given by $k/M$.
\end{proof}

\begin{remark}[Proofs of $\{\Xi,\Omega\}^{\Delta_1} \in \Delta/\Pi/\Sigma$]
All our theorems are valid regardless of inexact input ($\Delta_1$ information), and the main reason is Proposition \ref{REC_SING}. There are only minor alterations that need to be done in the proofs in order to deal with inexact input, and there will be guidelines specifying where the changes are needed. 
Note that there are much more numerically efficient procedures than the one in the proof of Proposition \ref{REC_SING}. However, the purpose of Proposition \ref{REC_SING} is to show that the algorithms we construct in this paper can be made to work in a 100\% rigorous manner on a Turing machine with inexact $\Delta_1$-information.
\end{remark}

We will split the proof of Theorem \ref{spec_thm_main} into several parts, and a brief roadmap for the proof is as follows. We first deal with computing the spectra and pseudospectra of compact operators since the constructive parts of the proof uses a different (most likely more familiar) method, the finite section method, than the proof for the other classes of operators. Step I of this part also contains one of the arguments used to prove lower bounds throughout this paper and is written out in detail for the reader's convenience. We then move onto pseudospectra where variants on the method of uneven sections are used to approximate the relevant resolvent norms. In some cases, these towers are used directly to provide (with an additional limit) towers of algorithms for the spectra. The proof that $\{\Xi_{\mathrm{sp}},\Omega_g\}\in\Sigma_2^A$ uses a very different method to those usually found in the literature, a local estimation of the resolvent norm (using similar ideas to \S \ref{pf_pseduo}) together with the function $g$ gives rise to upper bounds on the distance of a point to the spectrum. This is then used in a local search routine to compute the spectrum. The proof that $\{\Xi_\mathrm{sp},\Omega_\mathrm{B}\}\notin\Delta_3^G$ relies on reducing a decision problem, known to require three limits, to $\{\Xi_\mathrm{sp},\Omega_\mathrm{B}\}$. Proof that the decision problem requires three limits is provided in \S \ref{dec_sec} via a Baire category argument. The constructive proofs for essential spectra build on the towers of algorithms for computing spectra but are more involved. We end with the problem $\Xi^{z}_{\mathrm{sp}}$ where the proof of lower bounds uses similar arguments for the other problem functions, and the construction of towers of algorithms uses the towers constructed in \S \ref{spec_proof} for the spectrum.

\subsection{Spectra and pseudospectra of compact operators}

\begin{proof}[Proof of Theorem \ref{spec_thm_main} for compact operators]

\textbf{Step I}: $\{\Xi_{\mathrm{sp}},\Omega_{\mathrm{C}}\}\notin\Sigma_1^G$. We argue by contradiction and suppose that there is a sequence $\{\Gamma_n\}$ of general algorithms such that, for every $A\in\Omega_{\mathrm{C}}$, $\Gamma_n(A) \rightarrow \mathrm{sp}(A)$ with $\Gamma_n(A)\subset\mathrm{sp}(A)+B_{2^{-n}}(0)$, and in particular each $\Lambda_{\Gamma_n}(A)$ is finite. Thus, we define 
$
N(A,n) := \max\{i,j \, \vert \, f_{i,j} \in \Lambda_{\Gamma_n}(A)\}.
$
 We consider an operator of the type
\begin{equation*}
A:= A_{k}\oplus\mathrm{diag}\{0,0,...\} \quad\text{with } 
A_{k}:=\begin{pmatrix}
1& & & &1\\
 &0& & & \\
 & &\ddots& & \\
 & & &0& \\
1& & & &1\\
\end{pmatrix}
\in\mathbb{C}^{k\times k},
\end{equation*}
where we will choose the specific value of $k$ later. 
Let $C=\mathrm{diag}\{1,0,0,...\}$ then $\mathrm{sp}(C)=\{0,1\}$ and clearly $A$ is compact with $\mathrm{sp}(A)=\{0,2\}$. We choose $k$ to gain a contradiction as follows. There exists $n$ such that
$
\Gamma_n(C)\cap B_{1/4}(1)\neq\emptyset.
$
Let $k>N(C,n)$. By this construction, it follows that 
$
\Gamma_{n}(C)=\Gamma_n(A).
$
Indeed, since any evaluation function $f_{i,j} \in \Lambda$ just provides the $(i,j)$-th matrix element, it follows by the choice of $k$ that for any evaluation functions $f_{i,j} \in \Lambda_{\Gamma_{n}}(A)$ we have that 
that 
$
f_{i,j}(A) = f_{i,j}(C).
$ Thus, 
by assumption (iii) in the definition of a general algorithm (Definition \ref{alg}), we get that $\Lambda_{\Gamma_{n}}(A) = \Lambda_{\Gamma_{n}}(C)$ which, by assumption (ii) in Definition \ref{alg}, yields $
\Gamma_{n}(C)=\Gamma_n(A).
$ But then $\Gamma_n(A)\cap B_{1/4}(1)\neq\emptyset$, which is impossible since 
$
\Gamma_n(A)\subset \{0,2\}+B_{2^{-n}}(0), 
$
a contradiction.

\textbf{Step II}: $\{\Xi_{\mathrm{sp}},\Omega_{\mathrm{C}}\}\notin\Pi_1^G$. This is essentially the same argument. Assume that there exists $\Gamma_n$ such that $\mathrm{sp}(A)\subset\Gamma_n(A)+B_{2^{-n}}(0)$. Let $A$ and $C$ be as before. But now we know that there exists $n$ such that
$
\Gamma_n(C)\cap B_{3/4}(2)=\emptyset.
$
We argue as before, choosing $k>N(C,n)$, to get $\Gamma_{n}(C)=\Gamma_n(A)$. But we must have $2\in\Gamma_n(A)+B_{2^{-n}}(0)$, a contradiction.

\textbf{Step III}: $\{\Xi^N_{\mathrm{sp},\epsilon},\Omega_{\mathrm{C}}\}\notin\Pi_1^G\cup\Sigma_1^G$. For sufficiently small $\epsilon$ we have the required separation such that the above argument works for $\Xi^N_{\mathrm{sp},\epsilon}$. For larger $\epsilon$ we simply rescale the operators in the argument appropriately.

\textbf{Step IV}: $\{\Xi_{\mathrm{sp}},\Omega_{\mathrm{C}}\}\in\Delta^A_2$. For $n\in\mathbb{N}$, let
$
G_n=\frac{1}{n}(\mathbb{Z}+i\mathbb{Z})\cap B_{n}(0).
$
For $A\in\Omega_{\mathrm{C}}$ let
$
\Gamma_n(A)=\{z\in G_n:\sigma_1(P_n(A-zI)P_n)\leq 1/n\},
$
where $P_n$ denotes the orthogonal projection onto the linear span of the first $n$ basis vectors. By Proposition \ref{PCholesky}, it is clear that this can be computed in finitely many arithmetical operations and comparisons.
Hence we are done if we can prove convergence, the proof of which will make clear that we can make $\Gamma_n(A)$ non-empty by replacing $\Gamma_n(A)$ with $\Gamma_{m(n)}(A)$ such that $m(n)\geq n$ is minimal with $\Gamma_{m(n)}(A)\neq\emptyset$. Let $\epsilon>0$, then choose $N>2/\epsilon$. If $n\geq N$ and $z\in\Gamma_n(A)$ then we must have $\sigma_1(P_n(A-zI){P_n})\leq \epsilon/2$. Hence there exists $x_n\in l^2(\mathbb{N})$ of norm $1$ and with $x_n=P_nx_n$ such that
$
\left\|(P_nA-zI)x_n\right\|\leq\epsilon/2.
$
$A$ is compact and hence we can choose $N$ large if necessary to ensure that $\left\|(I-P_n)A\right\|\leq\epsilon/2$. It follows that $\left\|(A-zI)x_n\right\|\leq\epsilon$ and hence $z$ is in $\mathrm{sp}_\epsilon(A)$. Note that $N$ does not depend on the point $z$ so for large $n$ we have $\Gamma_n(A)\subset\mathrm{sp}_\epsilon(A).$

Conversely, let $z\in\mathrm{sp}(A)$. The method of finite section converges for compact operators and hence there exists $z_n\in \mathrm{sp}(P_nA{P_n})$ with $z_n\rightarrow z$. Let $w_n\in G_n$ be of minimal distance to $z_n$ then for large $n$ we must have $\left|w_n-z_n\right|\leq {1}/(\sqrt{2}n)$ and hence 
$
\sigma_1(P_n(A-w_nI)P_n)\leq {1}/(\sqrt{2}n)<1/n. 
$
It follows that $w_n\in\Gamma_n(A)$. Let $\epsilon>0$, then we can choose a finite set 
$
S_{\epsilon}\subset\mathrm{sp}(A)
$
with 
$
d_{\mathrm{H}}(S_{\epsilon},\mathrm{sp}(A))<\epsilon/2.
$
Applying the above argument to all points in $S_\epsilon$ implies that for large $n$ we must have that 
$
\mathrm{sp}(A)\subset \Gamma_n(A)+B_{\epsilon}(0).
$
Hence, since $\epsilon>0$ was arbitrary, the fact that $\Gamma_n(A)\subset\mathrm{sp}_\epsilon(A)$ implies the required convergence.

\textbf{Step V}: $\{\Xi^N_{\mathrm{sp},\epsilon},\Omega_{\mathrm{C}}\}\in\Delta^A_2$. This will follow from the classification of $\{\Xi^N_{\mathrm{sp},\epsilon},\Omega_f\}$ since we can use a dispersion bounding function $f(n)=n+1$. Note that we do not necessarily know the dispersion bound (in the form of the null sequence $\{c_n\}$) and hence (see Remark \ref{c_ns}) this provides a $\Delta_2^A$ tower (however not the $\Sigma^A_1$ classification).
\end{proof}

\begin{remark}
To deal with $\Delta_1$-information in the above construction (Step IV), we can replace $\sigma_1(P_n(A-zI)P_n)$ by a rational approximation accurate to $1/n^2$ (see Proposition \ref{REC_SING}) and the proof follows through with minor changes.
\end{remark}

\subsection{Pseudospectrum}
\label{pf_pseduo}

Since $\Omega_{\mathrm{SA}}\subset\Omega_{\mathrm{N}}\subset\Omega_{g}\subset\Omega_{\mathrm{B}}$, $\Omega_{fg}\subset\Omega_f$ and we have already dealt with compact operators, we only need to show that $\{\Xi^N_{\mathrm{sp},\epsilon},\Omega_{\mathrm{B}}\}\in\Sigma_2^A$, $\{\Xi^N_{\mathrm{sp},\epsilon},\Omega_f\}\in\Sigma_1^A$, $\{\Xi^N_{\mathrm{sp},\epsilon},\Omega_{\mathrm{SA}}\}\notin\Delta_2^G$ and $\{\Xi^N_{\mathrm{sp},\epsilon},\Omega_{fg}\}\notin\Delta_1^G$.

\begin{proof}[Proof of Theorem \ref{spec_thm_main} for the pseudospectrum]
{\bf Step I:} $\{\Xi^N_{\mathrm{sp},\epsilon},\Omega_{\mathrm{B}}\}\in\Sigma_2^A$.
Let $A\in\Omega_{\mathrm{B}}$, and $\epsilon>0$. We introduce the following continuous functions $\gamma^N:\mathbb{C} \rightarrow \mathbb{R}_+$, 
$\gamma^N_m:  \mathbb{C} \rightarrow \mathbb{R}_+$ and $\gamma^N_{m,n}:  \mathbb{C} \rightarrow \mathbb{R}_+$,
\begin{equation*}
\begin{split}
\gamma^N(z)&:=\left(\min\left\{\sigma_1\left((A-zI)^{2^N}\right),\sigma_1\left((A^*-\bar{z}I)^{2^N}\right)\right\}\right)^{2^{-N}}=\left\|(A-zI)^{-2^N}\right\|^{-2^{-N}}\\
\gamma_m^N(z)&:=\left(\min\left\{\sigma_1\left((A-zI)^{2^N}P_m\right),\sigma_1\left((A^*-\bar{z}I)^{2^N}P_m\right)\right\}\right)^{2^{-N}}\\
\gamma_{m,n}^N(z)&:=\left(\min\left\{\sigma_1\left((P_n(A-zI)P_n)^{2^N}P_m\right),\sigma_1\left((P_n(A^*-\bar{z}I)P_n)^{2^N}P_m\right)\right\}\right)^{2^{-N}},
\end{split}
\end{equation*}
where $\sigma_1(B)$ denotes the injection modulus of $B$, and in the terms such as
$\sigma_1(P_nBP_m)$ the operator $P_nBP_m$ is regarded as element of $\mathcal{B}(\mathrm{Ran}(P_m),\mathrm{Ran}(P_n))$.
% \textcolor[rgb]{1,0,0}{TIDY UP ONCE KNOW ORDER} 
For the proof that $\gamma^N(z)=\|(A-zI)^{-2^N}\|^{-2^{-N}}$ see \cite{Hansen_JAMS}.
We define initial approximations $\hat\Gamma_{m,n}(A)$ for $\mathrm{sp}_{N,\epsilon}(A)$ by
$
\hat\Gamma_{m,n}(A):=\{z\in G_n:\gamma_{m,n}^N(z)\leq\epsilon\},
$
where $G_j:=(j^{-1}(\Zb+\ii\Zb))\cap B_j(0)$. Writing $\gamma^N_{m,n}(z)\leq\epsilon$ as $(\gamma^N_{m,n}(z))^{2^N}\leq\epsilon^{2^N}$ and due to Proposition \ref{PCholesky} it is clear that the computation of $\hat\Gamma_{m,n}(A)$ requires only finitely many arithmetic operations on finitely many evaluations $\{\left\langle Ae_j,e_i\right\rangle: i,j=1,\ldots,n\}$ of $A$. The problem with this tower is that it might produce the empty set.
To get round this and construct our $\Sigma_2^A$ arithmetical tower, there are several facts we will state that can be found in \cite{Hansen_JAMS}. First, $\gamma_{m,n}^N$ converges uniformly to $\gamma_{m}^N$ on compact subsets of $\mathbb{C}$ as $n\rightarrow\infty$. Second, $\gamma_{m}^N$ is non-increasing in $m$ and converges uniformly to $\gamma^N$ on compact subsets of $\mathbb{C}$ as $m\rightarrow\infty$. Finally, we have
\begin{equation}
\label{useful_closure}
\mathrm{cl}\{z\in\mathbb{C}:\gamma_m^N(z)<\epsilon\}=\{z\in\mathbb{C}:\gamma_{m}^N(A)\leq \epsilon\}
\end{equation}
for all $\epsilon>0$. Now it is straightforward to show via a Neumann series argument (see the proof that $\{\Xi_{\mathrm{sp}},\Omega_g\}\in\Sigma_2^A$ below) that there exists a compact ball $K$ such that if $z\notin K$ then $\gamma_{m,n}^N(z)>2\epsilon$ for all $m,n$. In particular, by considering the minimum of $\gamma_m^N(\cdot)$, this together with the above closure property, shows that the minimum is zero and $\{z\in\mathbb{C}:\gamma_{m}^N(A)\leq \epsilon\}\neq \emptyset$. %\textcolor[rgb]{1,0,0}{CHECK}

Now let $z_0\in\{z\in\mathbb{C}:\gamma_m^N(z)<\epsilon\}$. On the compact set $K$, and for any $m$, the functions $\gamma_{m,n}^N$ and $\gamma_m^N$ are Lipschitz continuous with a uniform Lipschitz constant. Using this and (\ref{useful_closure}), it follows that for large enough $n$, there exists $z_n\in\hat\Gamma_{m,n}(A)$ with $z_n\rightarrow z_0$. Furthermore, if $z_n\in\hat\Gamma_{m,n}(A)$ and we select a subsequence such that $z_{n_j}\rightarrow z$ as $n_j\rightarrow\infty$, we see that $\gamma_m^N(z)\leq\epsilon$. This observations together imply that
$$
\lim_{n\rightarrow\infty}\hat\Gamma_{m,n}(A)=\{z\in\mathbb{C}:\gamma_{m}^N(A)\leq\epsilon\}\subset\mathrm{sp}_{N,\epsilon}(A).
$$
Since $\gamma_m^N$ converges to $\gamma^N$ uniformly on compact sets and are uniformly Lipschitz, it is easy to show that $\lim_{m\rightarrow\infty}\{z\in\mathbb{C}:\gamma_{m}^N(A)\leq\epsilon\}=\mathrm{sp}_{N,\epsilon}(A)$. Hence in order to construct our $\Sigma_2^A$ arithmetical tower we define
$
\Gamma_{m,n}(A)=\hat\Gamma_{m,j(m,n)}(A),
$
where $j(m,n)\geq n$ is minimal such that $\hat\Gamma_{m,j(m,n)}(A)\neq\emptyset$. Such a $j(m,n)$ is guaranteed to exist and can be found by successively computing finitely many of the $\hat\Gamma_{m,k}(A)$'s.

{\bf Step II:} $\{\Xi^N_{\mathrm{sp},\epsilon},\Omega_f\}\in\Sigma_1^A$.
Let $A$ be such that $f$ is a bound for its dispersion, and $\epsilon>0$. Recall that $f(n)\geq n+1$ for every $n$. Define the composition $F^N:=f\circ\cdots\circ f$ of $2^N$ copies of $f$. Besides the already defined functions $\gamma^N$, $\gamma_m^N$ and $\gamma_{m,n}^N$ we additionally introduce $\psi_m^N:=\gamma_{m,F^N(m)}^N$, i.e.
\begin{equation*}
\begin{split}
\psi_m^N(z)&:=\left(\min\left\{\sigma_1\left((P_{F^N(m)}(A-zI)P_{F^N(m)})^{2^N}P_m\right),\sigma_1\left((P_{F^N(m)}(A^*-\bar{z}I)P_{F^N(m)})^{2^N}P_m\right)\right\}\right)^{2^{-N}},
\end{split}
\end{equation*}
and we define the desired approximations $\hat\Gamma_{m}(A)$ for $\mathrm{sp}_{N,\epsilon}(A)$ by
$
\hat\Gamma_{m}(A):=\{z\in G_m:\psi^N_m(z)\leq\epsilon\}.
$
Writing $\psi^N_m(z)\leq\epsilon$ as $(\psi^N_m(z))^{2^N}\leq\epsilon^{2^N}$ and using Proposition \ref{PCholesky}, we see that again the computation of $\hat\Gamma_{m}(A)$ requires only finitely many arithmetic operations on finitely many evaluations $\{\left\langle Ae_j,e_i\right\rangle: i,j=1,\ldots,F^N(m)\}$ of $A$. 

Again, there exists a compact ball $K\subset\Cb$ such that $\gamma^N_m(z)>2\epsilon$ and $\psi^N_m(z)>2\epsilon$ for all $z\in\Cb\setminus K$ and all $m$. Further note that $\psi_m^N$ converges to $\gamma^N_m$ uniformly on $K$. Indeed, since all $z\mapsto(P_{F^N(m)}(A-zI)P_{F^N(m)})^{2^N}P_m$ and $z\mapsto(A-zI)^{2^N}P_m$ are operator-valued polynomials of the same degree whose coefficients converge in the norm 
%\[\|(P_{g(m)}(A-zI)P_{g(m)})^{2^N}P_m- (A-zI)^{2^N}P_m\|\to 0 \quad\text{as}\quad m\to\infty,\]
due to the choice of the function $F^N$, we can take into account that $|\sigma_1(B+C)-\sigma_1(B)|\leq\|C\|$ holds for arbitrary bounded operators $B,C$, and we arrive at the conclusion that $|\gamma^N_m(z)-\psi^N_m(z)|\to 0$ as $m\to\infty$ uniformly with respect to $z\in K$. To construct a $\Sigma_1^A$ tower we bound this difference using the sequence $\{c_n\}$ and the constant $\left\|A\right\|$ (for the case $N>0$ as follows).

If $N=0$ then clearly we have
$
\|P_{f(m)}(A-zI)P_m-(A-zI)P_m\|\leq c_m
$
by definition of the $\{c_n\}$. Suppose that we have a bound
\begin{equation}
\label{horrible1}
\|(P_{F^N(m)}(A-zI)P_{F^N(m)})^{2^N}P_m-(A-zI)^{2^N}P_m\|\leq \alpha(N,m,z),
\end{equation}
for some function $\alpha(N,m,z)$.
We can write
\begin{align*}
(P_{F^{N+1}(m)}&(A-zI)P_{F^{N+1}(m)})^{2^{N+1}}P_m-(A-zI)^{2^{N+1}}P_m\\
&=\big((P_{F^{N+1}(m)}(A-zI)P_{F^{N+1}(m)})^{2^{N}}-(A-zI)^{2^{N}}\big)(P_{F^{N+1}(m)}(A-zI)P_{F^{N+1}(m)})^{2^{N}}P_m\\
&-(A-zI)^{2^{N}}\big((A-zI)^{2^{N}}-(P_{F^{N+1}(m)}(A-zI)P_{F^{N+1}(m)})^{2^{N}}\big)P_m.
\end{align*}
Using the fact that $F^{N+1}(m)=F^{N}(F^{N}(m))$ and $P_{F^{N}(m)}P_{F^{N+1}(m)}=P_{F^{N+1}(m)}$, we can bound the first of the above terms in norm by
$
\alpha(N,F^{N}(m),z)(\left\|A\right\|+\left|z\right|)^{2^N}.
$
Arguing similarly, we can bound the second term in norm by the same quantity. It follows that we can choose
\begin{align*}
\alpha(N,m,z)&=2\alpha(N-1,F^{N-1}(m),z)(\left\|A\right\|+\left|z\right|)^{2^{N-1}}
\end{align*}
and iterating this $N$ times we can take
\begin{align*}
\alpha(N,m,z)
&=2^N c_n \, (\left\|A\right\|+\left|z\right|)^{2^N-1}, \quad n = F^{\frac{N(N-1)}{2}}(m),
\end{align*}
such that (\ref{horrible1}) holds. Note that this estimate can be computed with finitely many arithmetic operations and comparisons from the given data.

In order to simplify the notation we choose a sequence $(\delta_m)$ which converges monotonically to zero such that 
\[\gamma^N_m(z)+\delta_m\geq \psi^N_m(z)\geq \gamma^N_m(z)-\delta_m\text{ for every $m$ and every $z\in K$.}\]
Moreover, we point out that each of the functions $z\mapsto\psi_m^N(z)$ is continuous on the compact set $K$, hence even uniformly continuous, and we can assume without loss of generality that, for every $m$,
\begin{equation}\label{EUnifC}
|\psi_m^N(z)-\psi_m^N(y)|<\delta_m\text{ for arbitrary $z,y\in K$, $|z-y|<1/m$}.
\end{equation}

Now let $\zeta_{\epsilon}(A):=\{z\in\Cb: \gamma^N(z)\leq\epsilon\}$,
$
\zeta_{\epsilon,m}(A):=\{z\in\Cb:\gamma^N_m(z)\leq\epsilon\},
$
and
$
\Psi_{\epsilon,m}(A):=\{z\in\Cb:\psi^N_m(z)\leq\epsilon\}.
$
 By the discussion above, we conclude for all 
$m\geq k$ that
\begin{equation}\label{nested}
\zeta_{\epsilon+\delta_k,m}(A)\supset \zeta_{\epsilon+\delta_m,m}(A)\supset
 \Psi_{\epsilon,m}(A)\supset\zeta_{\epsilon-\delta_m,m}(A)\supset\zeta_{\epsilon-\delta_k,m}(A).
 \end{equation}
Since, $P_m \leq P_{m+1}$ and $P_m \rightarrow I$ strongly, $\gamma^N_m \rightarrow \gamma^N$ monotonically from above pointwise (and hence locally uniformly by Dini's Theorem). Thus, by \cite{Hansen_JAMS}, $\zeta_{\epsilon+\delta_k,m}(A) \rightarrow \zeta_{\epsilon+\delta_k}(A) = \mathrm{sp}_{N,\epsilon + \delta_k}(A)$ and $\zeta_{\epsilon-\delta_k,m}(A) \rightarrow \zeta_{\epsilon-\delta_k}(A) = \mathrm{sp}_{N,\epsilon-\delta_k}(A)$ as $m \rightarrow \infty$. Hence, since $\spc_{N,\epsilon\pm\delta_k}(A)\to \mathrm{sp}_{N,\epsilon}(A)$ 
as $k\to\infty$,  (\ref{nested}) yields
$
\lim_{m\to\infty} \Psi_{\epsilon,m}(A) = \mathrm{sp}_{N,\epsilon}(A).
$
To finish the convergence proof we observe that it is clear that on the one hand $\Psi_{\epsilon,m}(A)\supset\hat\Gamma_{m}(A)$.
On the other hand, for sufficiently large $m$ it holds true that for every point 
$x\in\Psi_{\epsilon-\delta_m,m}(A)$ there is a point $y_x\in G_m$ with $\left|x-y_x\right|<1/m$ 
%(here $U_r(x)$ and $B_r(x)$ denote the open/closed ball with radius $r$ centered at $x$) 
and, by \eqref{EUnifC} %since the $\psi_m$s are uniformly Lipschitz with constant \RED{one ?!}, 
we get $|\psi_m^N(y_x)-\psi_m^N(x)|< \delta_m$ that is
$y_x$ even belongs to $\hat\Gamma_{m}(A)$. Thus,
$
\hat\Gamma_{m}(A) + B_{1/m}(0)\supset \Psi_{\epsilon-\delta_m,m}(A)
$ for sufficiently
large $m$. Combining this, we arrive at
\[\Psi_{\epsilon,m}(A) + B_{1/k}(0)\supset \hat\Gamma_{m}(A) + B_{1/m}(0)\supset
\Psi_{\epsilon-\delta_m,m}(A)\supset\Psi_{\epsilon-\delta_k,m}(A),\]
for $m\geq k$ large. By the above, the sets on the left converge to $\mathrm{sp}_{N,\epsilon}(A)+ B_{1/k}(0)$ as 
$m\to\infty$, and the sets on the right converge to $\mathrm{sp}_{N,\epsilon-\delta_k}(A)$ for every
$k$. Since both of these sets converge to $\mathrm{sp}_{N,\epsilon}(A)$ as $k\to\infty$ this 
provides $\lim_{m\to\infty} \hat\Gamma_{m}(A) =\mathrm{sp}_{N,\epsilon}(A).$ This shows that (upon altering as in Step I to avoid the empty set), we can gain convergence in one limit without the knowledge of $\{c_n\}$ and $\left\|A\right\|$.

Now we have that
$
|(\psi^N_m(z))^{2^N}-(\gamma_m^N(z))^{2^N}|\leq \alpha(N,m,z).
$
Hence we define
\[
\tilde\Gamma_{m}(A):=\{z\in G_m:(\psi^N_m(z))^{2^N}\leq\epsilon^{2^N}-\alpha(N,m,z), \epsilon^{2^N}-\alpha(N,m,z)>0\},
\]
which can be computed in finitely many arithmetic operations and comparisons.
Of course this may be empty but it has the property that
$
\tilde\Gamma_{m}(A)\subset \mathrm{sp}_{N,\epsilon}(A).
$
Suppose for a contradiction that we do not have convergence to $\mathrm{sp}_{N,\epsilon}(A)$. Without loss of generality, by taking a subsequence if necessary, there exists $z_m\in\mathrm{sp}_{N,\epsilon}(A)$, $z\in \mathrm{sp}_{N,\epsilon}(A)$ and $\delta>0$ such that $\gamma^N(z)<\epsilon$, $z_m\rightarrow z$ but $\mathrm{dist}(z_m,\tilde\Gamma_{m}(A))\geq\delta$. Let $\hat z_m\in G_m$ with $\hat z_m\rightarrow z$. Then for large $m$ we must have $\gamma^N(\hat z_m)<\epsilon$. But $\alpha(N,m,\hat z_m)\rightarrow 0$ and hence $\hat z_m\in\tilde\Gamma_{m}(A)$ for large $m$, the required contradiction. To finish we simply define
$
\Gamma_{m}(A)=\hat\Gamma_{j(m)}(A),
$
where $j(m)\geq m$ is minimal such that $\hat\Gamma_{j(m)}(A)\neq\emptyset$. Such a $j(m)$ must exist and we hence avoid the empty set. Finally, the fact that $\tilde\Gamma_{m}(A)\subset \mathrm{sp}_{N,\epsilon}(A)$ ensures we have $\Sigma_1^A$ convergence.

{\bf Step III:} $\{\Xi^N_{\mathrm{sp},\epsilon},\Omega_{\mathrm{SA}}\}\notin\Delta_2^G$. Assume for a contradiction that there is a sequence $\{\Gamma_k\}$ of general algorithms such that $\Gamma_k(A) \rightarrow \mathrm{sp}_{N,\epsilon}(A)$ for all $A\in\Omega_{\mathrm{SA}}$, and consider operators of the type
\begin{equation}\label{Eq_Blocks}
A:= \bigoplus_{r=1}^{\infty} A_{l_r} \qquad\text{with $\{l_r\}\subset\Nb$ and } 
A_n:=\begin{pmatrix}
1& & & &1\\
 &0& & & \\
 & &\ddots& & \\
 & & &0& \\
1& & & &1\\
\end{pmatrix}
\in\Cb^{n\times n}.
\end{equation}
Then $\spc(A_n)=\{0,2\}$, hence $A$ is bounded, self-adjoint, and $\spc(A)=\{0,2\}$ as well. For sufficiently small $\epsilon$ the $(N,\epsilon)$-pseudospectrum is a certain neighbourhood of $\{0,2\}$ disjoint from $B_{\frac{1}{2}}(1)$, independently of the choice of $\{l_r\}$. In order to find a counterexample we simply construct an appropriate sequence $\{l_r\}\subset\Nb$ by induction: For $C:=\diag\{1,0,0,0,\ldots\}$ one obviously has $\spc(C)=\{0,1\}$. Choose $k_0:=1$ and $l_1>N(C,k_0)$, where $N(C,n) = \max\{i,j \, \vert \, f_{i,j} \in \Lambda_{\Gamma_n}(C)\}$ for $n \in \mathbb{N}$.
Now, suppose that $l_1,\ldots,l_n$ are already chosen. Then we obviously have that 
$
\spc\left(A_{l_1} \oplus \cdots \oplus  A_{l_n} \oplus C\right) = \{0,1,2\},
$ 
hence there exists a $k_n$ such that 
$
\Gamma_k\left(A_{l_1} \oplus \cdots \oplus  A_{l_n} \oplus C\right)\cap B_{\frac{1}{n}}(1) \neq\emptyset
$
for every $k\geq k_n$, where $B_{\frac{1}{n}}(1)$ denotes the closed ball of radius $1/n$ and centre $1$. Now, choose 
\begin{equation}\label{the_l}
l_{n+1}>N(A_{l_1} \oplus \cdots \oplus  A_{l_n} \oplus C,k_n)-l_1-l_2-\ldots-l_n.
\end{equation}
By construction, it follows that 
\begin{equation}\label{equality_G}
\Gamma_{k_n}(\oplus_{r=1}^{\infty} A_{l_r})\cap B_{\frac{1}{n}}(1)
=\Gamma_{k_n}(A_{l_1} \oplus \ldots \oplus A_{l_n}\oplus C)\cap B_{\frac{1}{n}}(1)\neq\emptyset \quad\forall\; n\in\Nb.
\end{equation}
Indeed, since any evaluation function $f_{i,j} \in \Lambda$ just provides the $(i,j)$-th matrix element, it follows by (\ref{the_l}) that for any evaluation functions $f_{i,j} \in \Lambda_{\Gamma_{k_n}}(A_{l_1} \oplus \cdots \oplus  A_{l_n} \oplus C)$ we have that 
$
f_{i,j}(A_{l_1} \oplus \cdots \oplus  A_{l_n} \oplus C) = f_{i,j}(\oplus_{r=1}^{\infty} A_{l_r}).
$ Thus, 
by assumption (iii) in the definition of a General algorithm (Definition \ref{alg}), we get that $\Lambda_{\Gamma_{k_n}}(A_{l_1} \oplus \cdots \oplus  A_{l_n} \oplus C) = \Lambda_{\Gamma_{k_n}}(\oplus_{r=1}^{\infty} A_{l_r})$ which, by assumption (ii) in Definition \ref{alg}, yields (\ref{equality_G}).
So, from (\ref{equality_G}), we see that the point $1$ is contained in the partial limiting set of the sequence 
$\{\Gamma_{k}(\oplus_{r=1}^{\infty} A_{l_r})\}_{k=1}^\infty$ which approximates $\mathrm{sp}_{N,\epsilon}(A)$, a contradiction. For general $N$ and $\epsilon$, we apply the above argument after appropriate re-scaling.

{\bf Step IV:} $\{\Xi^N_{\mathrm{sp},\epsilon},\Omega_{fg}\}\notin\Delta_1^G$. This is clear by considering diagonal operators. The point is that given any general $\Delta_1^G$ tower, $\Gamma_n$, and any $n$, $\Gamma_n(A)$ uses only finitely many matrix evaluations $\{f_{i,j}(A):i,j\leq N_0(n,A)\}$. We can choose $m$ large such that $m>N_0(1,0)$ and set $f_{m,m}(A)=2\epsilon+2$. Then $\Gamma_1(A)=\Gamma_1(0)\subset B_{1/2+\epsilon}(0)$, a contradiction since $2\epsilon+2\in\mathrm{sp}_{N,\epsilon}(A)$.
\end{proof}

\begin{remark}
To deal with $\Delta_1$-information in Step I of the above proof, we can simply replace $(\gamma^N_{m,n}(z))^{2^N}$ by a suitable rational approximation accurate to $1/n$ (see Proposition \ref{REC_SING}). For Step II, we can replace $(\psi^N_m(z))^{2^N}$ and $\alpha(N,m,z)$ by rational approximations from above accurate to $1/m$. If $\epsilon$ is not rational, then we approximate with a rational from below accurate to $1/n^2$ in Step I and $1/m^2$ in Step II.
\end{remark}

\subsection{Spectrum}
\label{spec_proof}
Again, using the inclusions $\Omega_{\mathrm{SA}}\subset\Omega_{\mathrm{N}}\subset\Omega_{{g}}$,when considering the spectrum we only need to show that $\{\Xi_{\mathrm{sp}},\Omega_{fg}\}\in\Sigma_1^A$, $\{\Xi_{\mathrm{sp}},\Omega_f\}\in\Pi_2^A$, $\{\Xi_{\mathrm{sp}},\Omega_g\}\in\Sigma_2^A$, $\{\Xi_{\mathrm{sp}},\Omega_{\mathrm{B}}\}\in\Pi_3^A$, $\{\Xi_{\mathrm{sp}},\Omega_\mathrm{SA}\}\notin\Delta_2^G$, $\{\Xi_\mathrm{sp},\Omega_{f}\}\notin\Delta_2^G$ and $\{\Xi_\mathrm{sp},\Omega_\mathrm{B}\}\notin\Delta_3^G$ (the fact that $\{\Xi_\mathrm{sp},\Omega_{fg}\}\notin\Delta_1^G$ is clear by considering diagonal operators). We then prove that $\{\Xi_{\mathrm{sp}},\Omega_{f}\cap \Omega_{\mathrm{N}}\}\in\Sigma_1^{A,\mathrm{eigv}}$ separately since the argument easily extends to the Schr\"odinger case in \S\ref{bounded_potential}. The proof that $\{\Xi_\mathrm{sp},\Omega_\mathrm{B}\}\notin\Delta_3^G$ relies on some results from decision making problems which we shall prove in Section \ref{dec_sec}.
\begin{proof}[Proof of Theorem \ref{spec_thm_main} for the spectrum]
{\bf Step I:} We begin with the easy cases that $\{\Xi_{\mathrm{sp}},\Omega_f\}\in\Pi_2^A$ and $\{\Xi_{\mathrm{sp}},\Omega_{\mathrm{B}}\}\in\Pi_3^A$. To prove that $\{\Xi_{\mathrm{sp}},\Omega_f\}\in\Pi_2^A$, let $\epsilon>0$ and let $\Gamma_n^{\epsilon}$ denote the height one arithmetic tower to compute the (classical) pseudospectrum of operators in $\Omega_f$. Using the fact that $\spne(A)$ are continuous with respect to the parameter $\epsilon>0$, 
and converge to $\spc(A)$ as $\epsilon\to 0$ for every $A$, we simply set
$
\Gamma_{m,n}(A)=\Gamma_n^{1/m}(A).
$
This is a $\Pi_2^A$ tower since $\mathrm{sp}_{0,1/m}(A)$ contains $\mathrm{sp}(A)$. $\{\Xi_{\mathrm{sp}},\Omega_\mathrm{B}\}\in\Pi_3^A$ is similar and just requires the additional first limit.

{\bf Step II:} $\{\Xi_{\mathrm{sp}},\Omega_g\}\in\Sigma_2^A$.
Let $g:[0,\infty)\to[0,\infty)$ be as in Definition \ref{EqSpDist}, in particular, continuous, vanishing only at $x=0$ and diverging to $\infty$ as $x\to\infty$. Note that $g(x)\leq x$ for all $x$ and without loss of generality we can also assume that $g$ is strictly increasing. Then the inverse function $h(y):=g^{-1}(y):[0,\infty)\to[0,\infty)$ is well defined, continuous, strictly increasing, $h(y)\geq y$ for every $y$, and $\lim_{y\to 0} h(y)=0$.

Let $K\subset \Cb$ be a compact set and $\delta>0$.  We introduce a $\delta$-grid for $K$ by 
$G^\delta(K):=(K+B_\delta(0))\cap(\delta(\Zb+\ii\Zb)),$ 
where $B_\delta(0)$ denotes the closed ball of radius $\delta$ centred at $0$. Without loss of generality we may assume that $\delta^{-1}$ is an integer, and obviously, $G^\delta(K)$ is finite. Moreover, introduce $h_\delta(y):=\min\{k\delta: k\in\Nb, g(k\delta)> y\}$ and observe that for each $y$, evaluating $h_\delta(y)$ requires only finitely many evaluations of $g$. Also, notice that $h(y)\leq h_\delta(y)\leq h(y)+\delta$.
For a given function $\zeta:\Cb\to[0,\infty)$ we define sets $\Upsilon_K^\delta(\zeta)$ as follows:
For each $z\in G^\delta(K)$ let $I_z:=B_{h_\delta(\zeta(z))}(z)\cap(\delta(\Zb+\ii\Zb))$. Further
\begin{itemize}
\item If $\zeta(z)\leq 1$ then introduce the set $M_z$ of all $w\in I_z$ for which $\zeta(w)\leq \zeta(v)$ holds for all $v\in I_z$. 
\item Otherwise, if $\zeta(z)>1$, just set $M_z:=\emptyset$. 
\end{itemize}
Now define 
\begin{equation}\label{Upsilon}
\Upsilon_K^\delta(\zeta):=\bigcup_{z\in G^\delta(K)}M_z.
\end{equation}
Notice that for the computation of $\Upsilon_K^\delta(\zeta)$ only finitely many evaluations of $\zeta$ and $g$ are required.

{\bf Claim:} Let $K$ be a compact set containing the spectrum of $A$ and $0<\delta<\epsilon<1/2$.
Further assume that $\zeta$ is a function with $\|\zeta-\gamma\|_{\infty,\hat{K}}:=\|(\zeta-\gamma)\chi_{\hat{K}}\|_\infty < \epsilon$ on 
$\hat{K}:=(K+B_{h(\diam(K)+2\epsilon)+\epsilon}(0))$, where $\chi_{\hat{K}}$ denotes the characteristic function of $\hat K$. Finally, let 
\begin{equation}\label{u_function}
u(\xi):=\max\{h(3\xi+h(t+\xi)-h(t))+\xi:t\in[0,1]\}.
\end{equation} 
Then we have that 
$
d_\mathrm{H}(\Upsilon_K^\delta(\zeta),\spc(A))\leq u(\epsilon)
$
and 
$
\lim_{\xi\to 0}u(\xi)=0.
$
\begin{proof}[Proof of claim:] To prove the claim, let $z\in G^\delta(K)$ and notice that $I_z\subset\hat{K}$ since, for every $v\in I_z$,
\begin{equation}\label{I_z}
\begin{split}
|z-v| &\leq h_\delta(\zeta(z))\leq h_\delta(\gamma(z)+\epsilon) \leq h(\dist(z,\spc(A))+\epsilon)+\delta \\
&\leq h(\diam(K)+\delta+\epsilon)+\delta.
\end{split}
\end{equation}
Suppose that $M_z \neq \emptyset$. Note that by \eqref{EqSpDist}, the monotonicity of $h$, and the compactness of $\spc(A)$ there is a $y\in\spc(A)$ of minimal distance to $z$ with $|z-y|\leq h(\gamma(z))$. Since $\|\zeta-\gamma\|_{\infty,\hat{K}} < \epsilon$ we get $|z-y|\leq h(\zeta(z)+\epsilon)$. Hence, at least one of the $v\in I_z$, let's say $v_0$, satisfies $|v_0-y|<h(\zeta(z)+\epsilon)-h(\zeta(z))+\delta$. Noting again that $\gamma(v_0) \leq  \dist(v_0,\spc(A))$, we get $\zeta(v_0)<\gamma(v_0)+\epsilon < h(\zeta(z)+\epsilon)-h(\zeta(z))+2\epsilon$. By the definition of $M_z$ this estimate now holds for all points $w \in M_z$ and we conclude that, for all $w \in M_z$, 
\begin{equation}\label{w_estimate}
\begin{split}
\dist(w,\spc(A)) &= h(g(\dist(w,\spc(A)))) \leq h(\gamma(w)) \\
&\leq h(\zeta(w)+\epsilon) \leq h(h(\zeta(z)+\epsilon)-h(\zeta(z))+3\epsilon).
\end{split}
\end{equation}
This observation holds for every $z\in G^\delta(K)$ and all $w \in M_z$, hence all points in $\Upsilon_K^\delta(\zeta)$ are closer to $\spc(A)$ than $u(\epsilon)$.

Conversely, take any $y\in\spc(A)\subset K$. Then there is a point $z\in G^\delta(K)$ with $|z-y|<\delta<\epsilon$, hence 
$
\zeta(z)<\gamma(z)+\epsilon \leq \dist(z,\spc(A))+\epsilon < 2\epsilon<1.
$
 Thus, $M_z$ is not empty and contains a point which is closer to $y$ than $h(\zeta(z))+\epsilon \leq h(2\epsilon)+\epsilon\leq u(\epsilon)$.
Finally notice that the mapping 
$$
(t,\xi)\mapsto h(h(t+\xi)-h(t)+3\xi)+\xi
$$ is continuous on the compact set $[0,1]\times [0,1]$, hence uniformly continuous. Moreover, for every fixed $t$ it tends to $0$ as $\xi\to 0$, thus we can conclude $u(\xi)\to 0$, and we have proved the claim.\end{proof}

Define the function
$\gamma_{m,n}(z,A):=\min\{\sigma_1(P_{n}(A-zI)P_{m}),\sigma_1(P_{n}(A^*-\bar{z}I)P_{m})\},$
and note that we can compute an approximation to $\gamma_{m,n}(z,A)$  from \textit{above} to within an accuracy of $1/m$ in finitely many arithmetic operations and comparisons using Proposition {\ref{REC_SING}} (this also includes the case of $\Delta_1$-information). Call this approximation function $\zeta_{{m},{n}}(z,A)$ and we can assume that it takes values in $\frac{1}{2{m}}\mathbb{N}$. As ${n}\rightarrow\infty$, $\gamma_{{m},{n}}(\cdot,A)$ converges to
$
\gamma_{{m}}(z,A):=\min\{\sigma_1((A-zI)P_{m}),\sigma_1((A^*-\bar{z}I)P_{m})\}
$
monotonically from below. By taking successive maxima over $n$ and then minima over $m$ if necessary:
$
\min_{1\leq j\leq m}\max_{1\leq k\leq n}\zeta_{j,k}(z,A),
$
we can assume that $\zeta_{{m},{n}}(\cdot,A)$ is non-decreasing in $n$ and non-increasing in $m$. Since $\gamma_{m,n}$ obeys these monotonicity relations, this preserves the error bound of $1/m$. It follows that $\zeta_{{m},{n}}(\cdot,A)$ converges to $\zeta_{m}(\cdot,A)$ which takes values in the set $\frac{1}{2{m}}\mathbb{N}$ (i.e. $\zeta_{{m},{n}}(z,A)$ is eventually constant for a given $z$) and such that
$\gamma_{{m}}(z,A)\leq \zeta_{{m}}(z,A)\leq \gamma_{{m}}(z,A)+1/m.$

Now let
\[
\hat{\Gamma}_{m,n}(A)=\Upsilon_{B_m(0)}^{1/2^m}(\zeta_{m,n}).
\]
To show that this provides an arithmetic tower of algorithms, note that the computation of $\Upsilon_{B_m(0)}^{1/2^m}(\zeta_{m,n})$ requires only finitely many evaluations of $\zeta_{m,n}$, and the finite number of constants $g(k/m)\leq 1$, $k=1,2,\ldots$. Since $G^{1/2^m}(B_m(0))$ is finite and we restricted values of $\zeta_{m,n}$ to $\frac{1}{2{m}}\mathbb{N}$, we must have that for large $n$, $\hat{\Gamma}_{m,n}(A)$ is constant and equal to $\Upsilon_{B_m(0)}^{1/2^m}(\zeta_{m})$. Denote this eventually constant set by $\hat{\Gamma}_{m}(A)$. We must now adapt $\hat{\Gamma}_{m,n}$ such that the output is non-empty and such that we gain the desired convergence in the Hausdorff topology yielding the $\Sigma_2^A$ classification.
For any $\hat{\Gamma}_{m,n}(A)$ let
$
S(m,n,A):=\max_{z \in \hat{\Gamma}_{m,n}(A)} \zeta_{{m},{n}}(z,A),
$
where we take the maximum over the empty set to be $+\infty$. Note that $\hat{\Gamma}_{m,n}(A)$ is empty if and only if $\zeta_{m,n}(z,A)>1$ for all $z\in G^{1/2^m}(B_m(0))$ and note also that $S(m,n,A)$ can be computed using finitely many arithmetic operations and comparisons from the given data.

For given $m,n$, if $n<m$ then set $\Gamma_{m,n}(A)=\{0\}$. Otherwise, compute $S(k,n,A)$ for $m\leq k\leq n$. If there exists such a $k$ with $S(k,n,A)\leq g(2^{-m})$, then choose a minimal such $k$ and set $\Gamma_{m,n}(A)=\hat{\Gamma}_{k,n}(A)$ (which must be non-empty by the definition of $S(m,n,A)$), otherwise set $\Gamma_{m,n}(A)=\{0\}$. It follows that this defines an arithmetic algorithm mapping into the appropriate metric space (in particular it outputs a non-empty compact set).
Since $\zeta_{m,n}$ increases to $\zeta_{m}$ and $g$ is continuous, if $\hat{\Gamma}_{l}(A)\neq\emptyset$ then $S(l,n,A)$ is finite for all $n\in\mathbb{N}$. For such an $l$, we must have $S(l,n,A)$ non-decreasing in $n$, convergent to
$
S_l(A):=\max_{z \in \hat{\Gamma}_l(A)} g(\zeta_l(z,A)).
$
On the other hand if $\hat{\Gamma}_{l}(A)=\emptyset$ then $\zeta_l(z,A)>1$ for all $z\in G^{1/2^l}(B_l(0))$ and the fact that $\zeta_{m,n}$ increases to $\zeta_{m}$ shows that $S(l,n,A)=+\infty$ for large $n$. 

Define the function 
$
\gamma(z) := \min\{\sigma_1(A-zI),\sigma_1(A^*-\overline{z}I)\}=\|(A-zI)^{-1}\|^{-1}.
$
To see why $\min\{\sigma_1(A-zI),\sigma_1(A^*-\overline{z}I)\}=\|(A-zI)^{-1}\|^{-1}$ see for example \cite{Hansen_JAMS}.
Now
\[
\sigma_1(P_n(A-zI)P_m) = \inf\{\|P_n(A-zI)P_m\xi\|: \xi \in \mathrm{Ran}(P_m), \|\xi\| = 1\}
\]
and
$
\sigma_1((A-zI)P_m) = \inf\{\|(A-zI)P_m\xi\|: \xi \in \mathrm{Ran}(P_m), \|\xi\| = 1\}.
$
Thus, since $P_m \rightarrow I$ strongly and $P_{m+1} \geq P_m$, then $\gamma_m \rightarrow \gamma$ pointwise and monotonically from above, and by Dini's Theorem the convergence is 
uniform on every compact set, in particular on the ball $K:=B_{m_0}(0)$, with a fixed $m_0> 2\|A\|+4$.
Also, $\gamma_{m,n} \rightarrow \gamma_m$ pointwise monotonically from below as $n \rightarrow \infty$, hence again by Dini's Theorem it follows that the convergence is uniform on the ball $K=B_{m_0}(0)$. Outside this ball we have, for $n>m$, by a Neumann argument
\begin{align*}
\gamma_{m,n}(z)&=\min\{\sigma_1(P_n(A-zI)P_nP_m),\sigma_1(P_n(A^*-\overline{z}I)P_nP_m)\}\\
	&\geq\min\{\sigma_1(P_n(A-zI)P_n),\sigma_1(P_n(A^*-\overline{z}I)P_n)\}\\
	&=\|(P_n(A-zI)P_n)^{-1}\|^{-1} = |z|\|(P_n-z^{-1}P_nAP_n)^{-1}\|^{-1} \geq 2.
\end{align*}

For all $n>m> m_0$, the points in the finite set $G^{1/2^m}(B_m(0))\setminus K$ lead to function values of $\zeta_{m,n}$ being larger than $1$ (since $\zeta_{m,n}$ approximates $\gamma_{m,n}$ to within $1/m$), hence $\hat\Gamma_{m,n}(A)=\Upsilon_{K}^{1/2^m}(\zeta_{m,n})$. Fix $\epsilon\in (0,1/2)$. Then there is an $m_1>m_0$ with $m_1> 3/\epsilon$ such that $\|\gamma-\zeta_m\|_{\infty,\hat{K}}<\epsilon/3$ on $\hat{K}:=B_{h(\diam(K)+2\epsilon)+\epsilon}(0)$ for all $m>m_1$. Moreover, for every $m$ there is an $n_1(m)$ such that $\|\gamma_m-\gamma_{m,n}\|_{\infty,\hat{K}}<\epsilon/3$ for all $n>n_1(m)$. This yields
\begin{equation}\label{EGammaZeta}
\begin{split}
\|\gamma-\zeta_{m,n}\|_{\infty,\hat{K}}
&\leq\|\gamma-\gamma_m\|_{\infty,\hat{K}}+\|\gamma_m-\gamma_{m,n}\|_{\infty,\hat{K}}
		+\|\gamma_{m,n}-\zeta_{m,n}\|_{\infty,\hat{K}}\\
&\leq \epsilon/3+\epsilon/3+1/m< \epsilon
\end{split}
\end{equation}
whenever $m>m_1$ and $n>n_1(m)$. Hence, by the above claim, we must have that
$
d_\mathrm{H}(\hat\Gamma_{m,n}(A),\spc(A))\leq u(\epsilon)
$
whenever $m>m_1$ and $n>n_1(m)$.
Since this bound tends to zero as $\epsilon\to 0$, it is proved that
\[
\lim_{m\to\infty}\limsup_{n\to\infty} d_\mathrm{H}(\hat\Gamma_{m,n}(A),\spc(A))=0.
\]
It follows that there exists $N_0\in\mathbb{N}$ minimal such that $S_{N_0}(A)<+\infty$, equivalently such that $\hat{\Gamma}_{N_0}(A)\neq\emptyset$. Monotonicity of $\zeta_m$ in $m$ and the fact that the grid refines itself now ensures that if $m\geq N_0$ then $S_m(A)<+\infty$. Furthermore, the above claim (as well as continuity in $g$) shows that $\lim_{m\rightarrow\infty}S_m(A)=0$. Let $N_1(m)\geq m$ be minimal such that $S_{N_1(m)}\leq g(2^{-m})$. It follows that we must have
$
\lim_{n\rightarrow\infty} \Gamma_{m,n}(A)=\hat \Gamma_{N_1(m)}(A).
$
We must also have $\lim_{m\rightarrow\infty}\Gamma_{m}(A)=\mathrm{sp}(A)$. Furthermore,
\begin{equation}
\max_{z \in \Gamma_{m}(A)} g(\mathrm{dist}(z,\mathrm{sp}(A)))\leq \max_{z \in \Gamma_{m}(A)} \gamma(z,A) \leq S_{N_1(m)}(A)\leq g(2^{-m}).
\end{equation}
But $g$ is strictly increasing so that we must have $\Gamma_{m}(A)\subset\mathrm{sp}(A)+B_{2^{-m}}(0)$ and hence $\Sigma_2^A$ convergence.

{\bf Step III:} $\{\Xi_{\mathrm{sp}},\Omega_{fg}\}\in\Sigma_1^A$. This is very similar to Step II, but now we use the function $f$ to collapse the first limit. We can approximate
$$
F_n(z,A):=\min\{\sigma_1(P_{f(n)}(A-zI)P_{n}),\sigma_1(P_{f(n)}(A^*-\bar{z}I)P_{n})\}+c_n,
$$
from \textit{above} to within an accuracy $1/n$ in finitely many arithmetic operations and comparisons using Proposition {\ref{REC_SING}} (this also includes the case of $\Delta_1$-information). Call this approximation function $\tilde F_n(z,A)$ and assume that $\tilde F_n(z,A)\in\mathbb{Q}$. Note that by definition of $D_{f,n}$ and the fact that $D_{f,n}(A)\leq c_n$, we must have $\tilde F_n(z,A)\geq \gamma_n(z,A)$ and without loss of generality (take successive minima if necessary) we can assume that $\tilde F_n$ converges locally uniformly to $\gamma$ monotonically from above.
Now let
$
{\Gamma}_{n}(A)=\Upsilon_{B_n(0)}^{1/2^n}(\tilde F_n).
$
Arguing as before, we see that this provides an arithmetic tower of algorithms, is non-empty for large $n$ (so we can assume this holds for all $n$ without loss of generality) and has $\lim_{n\rightarrow\infty}{\Gamma}_{n}(A)=\mathrm{sp}(A)$. Hence we only need to argue for the $\Sigma_1^A$ error control. Define
\begin{equation}\label{eq:def_E_n}
E_n(A)=\sup_{z\in {\Gamma}_{n}(A)} h_{2^{-n}}(\tilde F_n(z,A)),
\end{equation}
then since $h_{2^{-n}}\geq h$, we must have 
$
E_n(A)\geq \sup_{z\in {\Gamma}_{n}(A)} \mathrm{dist}(z,\mathrm{sp}(A)).
$
 Moreover, $\sup_{z\in {\Gamma}_{n}(A)}\tilde F_n(z,A)$ converges to $0$ as $n\rightarrow\infty$. Since $h_{2^{-n}}\leq h+ 2^{-n}$, it follows that $E_n(A)\rightarrow 0$ and hence (by the usual argument of taking subsequences if necessary) we have $\{\Xi_{\mathrm{sp}},\Omega_{fg}\}\in\Sigma_1^A$.

{\bf Step IV:} $\{\Xi_{\mathrm{sp}},\Omega_\mathrm{SA}\}\notin\Delta_2^G$. This is almost the same argument as the pseudospectrum case. Assume that there is a sequence $\{\Gamma_k\}$ of general algorithms such that $\Gamma_k(A) \rightarrow \mathrm{sp}(A)$ for all $A\in\Omega_\mathrm{SA}$, and consider operators of the type \eqref{Eq_Blocks}. The spectrum is $\{0,2\}$ disjoint to $B_{\frac{1}{2}}(1)$, independently of the choice of $\{l_r\}$. By exactly the same procedure as before, one obtains again that $1$ belongs to the partial limiting set of $\Gamma_k(A)$ for a certain $A$, hence a contradiction.

{\bf Step V:} $\{\Xi_\mathrm{sp},\Omega_{f}\}\notin\Delta_2^G$. Recall that $\Omega_f$ denotes the set of bounded operators on $l^2(\mathbb{N})$ whose dispersion is bounded by $f$. Thus, to show the claim, it suffices to show that for any height one general tower of algorithms $\{\Gamma_n\}_{n\in\mathbb{N}}$ for $\Xi_{\mathrm{sp}}$, there exists a weighted shift $S$, with $(Su)_1=0$ for all $u\in l^\infty(\N)$ and $Se_n = \alpha_n e_{n+1}$ where $\alpha = \{\alpha_n\}_{n\in\mathbb{N}} \in l^{\infty}(\mathbb{N})$, such that $\Gamma_m(S) \nrightarrow \mathrm{sp}(S)$ when $m \rightarrow \infty$, Obviously $S \in \Omega_{f}$ (recall $f(n)\geq n+1$). To construct such an $S$ we let 
 $$
 \alpha = \{0,0, \hdots, 0,1,0,0,\hdots, 0,1,1,0,0,\hdots,0,1,1,1,0,\hdots \}, \qquad \alpha_{l_j+1}, \alpha_{l_j + 2}, \hdots, \alpha_{l_j+j} = 1,
 $$
 for some sequence $\{l_j\}_{j\in\mathbb{N}}$ where $l_{j+1} > l_{j}+2j$ that we will determine. Observe that regardless of the choice of $\{l_j\}_{j\in\mathbb{N}}$ we have that $\mathrm{sp}(S) = B_{1}(0)$ (the closed disc centred at zero with radius one). Indeed, on the one hand $\|S\|=1$, hence $\mathrm{sp}(S) \subset B_{1}(0)$. On the other hand, one can define the elementary shift operator $V:e_n\mapsto e_{n+1}$, $n\in\Nb$, and its left inverse $V^-:e_{n+1}\mapsto e_n$, $n\in\Nb$, $e_1\mapsto0$. Then the shifted copies $(V^-)^{l_j}S V^{l_j}$ converge  strongly to the limit operator $V$ whose spectrum $\mathrm{sp}(V)=B_{1}(0)$ is necessarily contained in the essential spectrum of $S$ (cf. \cite{Roch_Silbermann_LimitOps} or \cite{Lindner}).

To construct $S$ we will inductively choose $\{l_j\}_{j\in\mathbb{N}}$ with the help of another sequence $\{m_j\}_{j\in\mathbb{Z}_+}$ that will also be chosen inductively. Before we start, define, for any $A\in\Omega_f$ and $m \in \mathbb{N}$,  $N(A,m)$ to be the smallest integer so that $\Lambda_{\Gamma_m}(A)$ only includes matrix entries $A_{ij}=\langle Ae_j,e_i \rangle$ with $i,j\leq N(A,m)$. Now let $m_0 = 1$ and choose $l_1 > N(0,m_1)$. Suppose that $l_1,\ldots,l_n$ and $m_0,\ldots,m_{n-1}$ are already chosen. Note that $\mathrm{sp}(P_{r}S) = \{0\}$, since $P_{r}S = P_{r}SP_{r}$ can be regarded as a $r\times r$-triangular matrix with zero-diagonal. Thus, since by assumption $\{\Gamma_m\}_{m\in\mathbb{N}}$ is a General tower of algorithms for $\Xi_1$, there is an $m_n$ such that $\Gamma_m(P_{l_n +n+1}S) \subset B_{\frac{1}{2}}(0),$ for all  $m \geq m_n.$ Let 
\begin{equation}\label{choosing_l}
l_{n+1} > N(P_{l_n +n+1}S,m_n) \text{ such that also } l_{n+1} > l_{n}+2n.
\end{equation}
Then, it follows that  
$
\Gamma_{m_n}(S) = \Gamma_{m_n}(P_{l_{n+1}}S)=\Gamma_{m_n}(P_{l_n +n+1}S).
$ 
Indeed, since any evaluation function $f_{i,j} \in \Lambda$ just provides the $(i,j)$-th matrix element, it follows by (\ref{choosing_l}) that for any evaluation functions $f_{i,j} \in \Lambda_{\Gamma_{m_n}}(S)$ we have that 
$f_{i,j}(S) = f_{i,j}(P_{l_{n+1}}S) = f_{i,j}(P_{l_n +n+1}S)$.
Thus, 
by assumption (iii) in the definition of a General algorithm (Definition \ref{alg}), we get that 
$
\Lambda_{\Gamma_{m_n}}(S) = \Lambda_{\Gamma_{m_n}}(P_{l_{n+1}}S) =  \Lambda_{\Gamma_{m_n}}(P_{l_n +n+1}S)
$
which, by assumption (ii) in Definition \ref{alg} implies the assertion.
Thus, by the choice of the sequences $\{l_j\}_{j\in\mathbb{N}}$ and $\{m_j\}_{j\in\mathbb{Z}_+}$, it follows that $\Gamma_{m_n}(S) = \Gamma_{m_n}(P_{l_n +n+1}S) \subset B_{\frac{1}{2}}(0)$ for every $n.$ Since $\mathrm{sp(S)} = B_{1}(0)$ we observe that $\Gamma_m(S) \nrightarrow \mathrm{sp}(S)$.

{\bf Step VI:} $\{\Xi_\mathrm{sp},\Omega_\mathrm{B}\}\notin\Delta_3^G$. To prove this we shall need one of the results from \S \ref{dec_sec}. Namely, if we define $\Omega'$ to be the collection of all infinite matrices $\{a_{i,j}\}_{i,j\in\Zb}$ with entries $a_{i,j}\in\{0,1\}$ and consider
\begin{align*}
\Xi'&:\Omega'\ni\{a_{i,j}\}_{i,j \in \mathbb{Z}}\mapsto\left(\exists D \forall j 
\left(\left(\forall i \sum_{k=-i}^i a_{k,j}<D\right) \vee
\left(\forall R \exists i \sum_{k=0}^i a_{k,j}>R \wedge \sum_{k=-i}^0 a_{k,j}>R\right)\right)\right)\\
&\quad\quad\text{(``there is a bound $D$ such that every column has either less than
$D$ $1$s or is two-sided infinite'')}
\end{align*}
(where we map into the discrete space $\{\Yes,\No\}$), then $\mathrm{SCI}(\Xi',\Omega')_{\mathrm{G}}=3$.

We may identify $\Omega_\mathrm{B} = \mathcal{B}(l^2(\Nb))$ with $\Omega = \mathcal{B}(X)$, where $X = \oplus_{n=-\infty}^{\infty} X_n$ in the $l^2$-sense and where $X_n = l^2(\mathbb{Z})$. Consider sequences $a=\{a_i\}_{i\in\Zb}$ over $\Zb$ with $a_i \in \{0,1\}$, and define respective operators $B_a\in\mathcal{B}(l^2(\Zb))$ with matrix representation $B_a=\{b_{k,i}\}$ by
\[b_{k,i}:=\begin{cases}
1:& k=i \text{ and } a_k=0\\
1:& k<i \text{ and } a_k=a_i=1 \text{ and } a_j=0 \text{ for all }k<j<i\\
0:& \text{ otherwise.}
\end{cases}\]
Then $B_a$ is again a shift on a certain subset of basis elements and the identity on the
other basis elements, hence we have the following possible spectra:
\begin{itemize}
\item $\spc(B_a)\subset\{0,1\}$ if $\{a_i\}$ has finitely many $1$s.
\item $\spc(B_a)=\mathbb{T}$, the unit circle, if there are infinitely many $i>0$ 
			with $a_i=1$ and infinitely many $i<0$ with $a_i=1$ (we say $\{a_i\}$ is
			two-sided infinite).
\item $\spc(B_a)=\mathbb{D}$, the unit disc, if $\{a_i\}$ has infinitely many $1$s, 
			but only finitely many for $i<0$ or finitely many for $i>0$ (we say $\{a_i\}$ 
			is one-sided infinite in that case).
\end{itemize}
Next for a matrix $\{a_{i,j}\}_{i,j\in\Zb}$ we define the operator 
\begin{equation}\label{the_C}
C:= \bigoplus_{k=-\infty}^{\infty} B_k
\end{equation}
on $X$, where $B_k=B_{\{a_{i,k}\}_{i\in\Zb}}$ corresponds to the column 
$\{a_{i,k}\}_{i\in\Zb}$ in the above sense. Concerning its spectrum we have
$\bigcup_{k\in\Zb}\spc(B_k)\subset\spc(C) \subset \mathbb{D}$ since $\|C\|=1$. 
Clearly, if one of the columns is one-sided infinite then 
$\spc(C) = \mathbb{D}$. The same holds true if for every $k\in\Nb$ there is 
a finite column with at least $k$ $1$s.
Otherwise (that is if there is a number $D$ such that for every column it 
holds that it either has less than $D$ $1$s or is two-sided infinite) the 
spectrum $\spc(C)$ is a subset of $\{0\}\cup\mathbb{T}$.

Suppose for a contradiction that there exists a height two tower, $\Gamma_{n_2,n_1}$ solving $\{\Xi_\mathrm{sp},\Omega_\mathrm{B}\}$. Consider the intervals
$
J_1=[0,1/8],
$ and 
$
J_2=[3/8,\infty).
$
Set $\alpha_{n_2,n_1}=\mathrm{dist}(1/2,\Gamma_{n_2,n_1}(A))$. Let $k(n_2,n_1)\leq n_1$ be maximal such that $\alpha_{n_2,k}(A)\in J_1\cup J_2$. If no such $k$ exists or $\alpha_{n_2,k}(A)\in J_1$ then set $\tilde{\Gamma}_{n_2,n_1}(\{a_{i,j}\})=\No$. Otherwise set $\tilde{\Gamma}_{n_2,n_1}(\{a_{i,j}\})=\Yes$. It is clear from the construction of the matrix $C$ from $\{a_{i,k}\}_{i\in\Zb}$ that this defines a generalised algorithm. In particular, given $N$ we can evaluate $\{f_{k,l}(C):k,l\leq N\}$ using only finitely many evaluations of $\{a_{i,j}\}$, where we can use a bijection between the canonical bases to view $C$ as acting on $l^2(\mathbb{N})$. The point of the intervals $J_1,J_2$ is that we can show $\lim_{n_1\rightarrow\infty}\tilde{\Gamma}_{n_2,n_1}(\{a_{i,j}\})=\tilde{\Gamma}_{n_2}(\{a_{i,j}\})$ exists (the distance to the point $1/2$ cannot oscillate infinitely often between $J_1$ and $J_2$). If $\Xi'(\{a_{i,j}\})=\No$ then for large $n_2$ we have $\lim_{n_1\rightarrow\infty}\alpha_{n_2,n_1}(A)<1/8$ and hence $\lim_{n_2\rightarrow\infty}\tilde{\Gamma}_{n_2}(\{a_{i,j}\})=\No$. Similarly, if $\Xi'(\{a_{i,j}\})=\Yes$ then for large $n_2$ we have $\lim_{n_1\rightarrow\infty}\alpha_{n_2,n_1}(A)>3/8$ and hence $\lim_{n_2\rightarrow\infty}\tilde{\Gamma}_{n_2}(\{a_{i,j}\})=\Yes$. Hence $\tilde{\Gamma}_{n_2,n_1}$ is a height two tower of general algorithms solving $\{\Xi',\Omega'\}$, a contradiction.
\end{proof}

\begin{remark}
We note that in the case of self-adjoint bounded operators the spectrum $\mathrm{sp}(A)$ is real and the function $g$ can be chosen as $x\mapsto x$. Thus, in the definition of $\Upsilon_K^\delta(\zeta)$ it suffices to consider compact $K\subset \Rb$, the real grid $G^\delta(K):=(K+[-\delta,\delta])\cap(\delta\Zb)$, and for all $z\in G^\delta(K)$ only the two points $z_{1/2}:=z\pm\zeta(z)$ in $I_z$. Also in the case of normal operators, where $g:x\mapsto x$ does the job again, the construction simplifies. In particular, for a given function $\zeta:\Cb\to[0,\infty)$ we may define sets $\Upsilon_K^\delta(\zeta)$ as follows:
For $z\in G^\delta(K)$ consider $I_z:=\{z+\zeta(z)e^{j\delta\ii}: j=0,1,\ldots,\left\lceil 2\pi\delta^{-1}\right\rceil$\} and define $\Upsilon_K^\delta(\zeta)$ again as in $\eqref{Upsilon}$. The proof is then the same, up to some obvious adaptations. %\textcolor[rgb]{1,0,0}{Is this true? Did someone prove this?}
\end{remark}

\begin{proof}[Proof that $\{\Xi_{\mathrm{sp}},\Omega_{f}\cap \Omega_{\mathrm{N}}\}\in\Sigma_1^{A,\mathrm{eigv}}$]
Since $\Omega_{f}\cap \Omega_{\mathrm{N}}\subset \Omega_{fg}$, the only part left of the proof is the result concerning approximate eigenvectors. Let $\{\Gamma_n\}_{n\in\mathbb{N}}$ denote the sequence of arithmetic algorithms defined in Step III of the above proof. By the now standard argument of taking subsequences, it is enough to show that given $z\in\Gamma_n(A)$ and $\delta\in\mathbb{Q}_{>0}$ with $\delta<1$, we can compute in finitely many arithmetic operations and comparisons a vector $\psi_n$ such that
$
\max\left\{\|A\psi_n-z\psi_n\|,|1-\|\psi_n\||\right\}\leq E_n(A)+2c_n+2\delta,
$
where $E_n(A)$ is defined in \eqref{eq:def_E_n}. 
Without loss of generality, we can assume that $z=0$ by an appropriate shift of the operator $A$. By construction of the algorithm, we must have that
$
\sigma_1(P_{f(n)}\tilde{A}P_n)+c_n< E_n(A)+\delta,
$
where $\tilde{A}$ is the approximation of the matrix $A$ used when computing $\Gamma_n(A)$ (recall we deal with $\Delta_1$ information). We assume without loss of generality that
$
\|AP_n-P_{f(n)}\tilde{A}P_n\|\leq c_n+\delta/2.
$
Let $\epsilon=(E_n(A)+\delta)^2$ and consider the matrix 
$$
B=\left(P_{f(n)}\tilde{A}P_n\right)^*\left(P_{f(n)}\tilde{A}P_n\right)-\epsilon I\in\mathbb{C}^{n\times n}
$$
which is Hermitian but not positive definite. It follows that $B$ can be put into the form
$
PBP^T=LDL^*,
$
where $L$ is lower triangular with $1$'s along its diagonal, $D$ is block diagonal with block sizes $1$ or $2$ and $P$ is a permutation matrix. This can be computed in finitely many arithmetic operations. Without loss of generality we assume that $P=I$. Let $x$ be an eigenvector of $B$ with non-positive eigenvalue then set $y=L^*x$. Such an $x$ exists by assumption. Note that, since $\delta>0$,
\begin{equation}
\label{strtcc}
\langle y,Dy\rangle=\langle L^*x,DL^*x\rangle=\langle x,Bx\rangle<0.
\end{equation}
It follows that there exists a non-zero vector $y_n$ with $\langle y_n,Dy_n\rangle\leq0$. Since the inequality in (\ref{strtcc}) is strict, such a vector is easy to compute using arithmetic operations by considering determinants and traces of $1$ blocks or $2$ blocks in the block diagonal matrix $D$. $L^*$ is invertible and upper triangular so we can solve for $\tilde{\psi}_n=(L^*)^{-1}y_n$. We can then approximately normalise $\tilde{\psi}_n$ to $\psi_n$ using finitely many arithmetic operations (e.g. by approximating the norm of $\tilde{\psi}_n$) so that
$
1-\delta<\|\psi_n\|\leq 1.
$
Note also that
\begin{equation*}
\|P_{f(n)}\tilde{A}P_n{\psi}_n\|^2=\langle {\psi}_n,B{\psi}_n\rangle+{\|\psi_n\|^2}\epsilon=\frac{\|\psi_n\|^2}{\|\tilde{\psi}_n\|^2}\langle y_n,Dy_n\rangle+\|\psi_n\|^2\epsilon\leq\epsilon.
\end{equation*}
It follows that (using ${\psi}_n$ to also denote the zero padding of ${\psi}_n$ to form a vector in $l^2(\mathbb{N})$)
\begin{equation*}
\begin{split}
\left\|AP_n{\psi}_n\right\|&\leq E_n(A)+\delta+\|AP_n-P_{f(n)}\tilde{A}P_n\|\|\psi_n\|\\
&\leq E_n(A)+\delta+ (c_n+\delta/2)(1+\delta)\leq E_n(A) +2c_n +2\delta
\end{split}
\end{equation*}
since $\delta<1$. The result now follows.
\end{proof}

\subsection{Essential Spectrum}

In this section, we prove the results for the essential spectrum. %We will also denote by $\Omega_{\mathrm{D}}$ the class of bounded self-adjoint diagonal operators.
Since $\Omega_\mathrm{D}\subset\Omega_{fg}\subset\Omega_f$ and $\Omega_{\mathrm{SA}}\subset\Omega_{\mathrm{N}}\subset\Omega_{g}\subset\Omega_\mathrm{B}$, we only need to prove that $\{\Xi_{\mathrm{e}\textrm{-}\mathrm{sp}},\Omega_\mathrm{D}\}\notin\Delta_2^G$, $\{\Xi_{\mathrm{e}\textrm{-}\mathrm{sp}},\Omega_{\mathrm{SA}}\}\notin\Delta_3^G$, $\{\Xi_{\mathrm{e}\textrm{-}\mathrm{sp}},\Omega_\mathrm{B}\}\in\Pi_3^A$ and $\{\Xi_{\mathrm{e}\textrm{-}\mathrm{sp}},\Omega_f\}\in\Pi_2^A$. %We will also show that the constructed towers have the properties listed in Section \ref{finding_spectra}.

\begin{proof}[Proof of Theorem \ref{spec_thm_main} for the essential spectrum] 
{\bf Step I:}  $\{\Xi_{\mathrm{e}\textrm{-}\mathrm{sp}},\Omega_\mathrm{D}\}\notin\Delta_2^G$. To see this, suppose for a contradiction that a height one tower $\Gamma_n$ solves the computational problem. For the contradiction we will construct $A\in\Omega_\mathrm{D}$ with diagonal entries in $\{0,1\}$ such that $\Gamma_n(A)$ does not converge. Let $A_n=\mathrm{diag}(0,0,...,0)\in\mathbb{C}^{n\times n}$ and $B_n=\mathrm{diag}(1,1,...,1)\in\mathbb{C}^{n\times n}$ (we let $A_\infty$ and $B_\infty$ be the obvious infinite analogues). We will construct
$$
A=\bigoplus_{n\in\mathbb{N}}A_{a_n}\oplus B_{b_n},
$$
for $a_n,b_n\in\mathbb{N}$ inductively. Suppose that $a_1,b_1,a_2,b_2,...,a_m,b_m$ have been chosen. Then the operator
$$
C_m:=\big(\bigoplus_{n=1}^mA_{a_n}\oplus B_{b_n}\big)\oplus A_\infty
$$
has essential spectrum $\{0\}$. Hence there exists $n_m\geq m$ such that $\Gamma_{n_m}(C_m)\subset B_{1/4}(0)$. But by the definition of a general tower there must exist some $N(m)$ such that $\Gamma_{n_m}(C_m)$ only uses the evaluations of matrix elements $f_{i,j}(C_m)$ with $i,j\leq N(m)$. Now choose $a_{m+1}\geq \max\{N_m-(a_1+b_1+...+a_m+b_m),1\}$ then we must have $\Gamma_{n_m}(A)=\Gamma_{n_m}(C_m)$. Similarly, if $a_1,b_1,a_2,b_2,...,b_m,a_{m+1}$ have been chosen then we consider
$$
D_m:=\big(\bigoplus_{n=1}^mA_{a_n}\oplus B_{b_n}\big)\oplus A_{m+1}\oplus B_\infty
$$
and choose $b_{m+1}$ large so that $\Gamma_{\hat n_m}(A)=\Gamma_{\hat n_m}(D_m)\subset B_{1/4}(1)$ for some $\hat n_m\geq n_m$. This then gives the required contradiction, since the sequence $\Gamma_n(A)$ does not converge.

{\bf Step II:} $\{\Xi_{\mathrm{e}\textrm{-}\mathrm{sp}},\Omega_\mathrm{SA}\}\notin\Delta_3^G$. Suppose for a contradiction that $\Gamma_{n_2,n_1}$ is a height two tower solving this problem. Let $(\mathcal{M},d)$ be the discrete space $\{\Yes,\No\}$, let $\Omega'$ denote the collection of all infinite matrices $\{a_{i,j}\}_{i,j\in\mathbb{N}}$ with entries $a_{i,j}\in\{0,1\}$ and consider the problem function
\begin{equation*}
\Xi'(\{a_{i,j}\}):\text{ Does $\{a_{i,j}\}$ have (only) finitely many columns with (only) finitely many $1$s?}
\end{equation*}
In Section \ref{dec_sec} we prove that $\mathrm{SCI}(\Xi',\Omega')_{G} = 3$. We will gain a contradiction by using the supposed height two tower for $\{\Xi_{\mathrm{e}\textrm{-}\mathrm{sp}},\Omega_\mathrm{SA}\}$, $\Gamma_{n_2,n_1}$, to solve $\{\Xi',\Omega'\}$.

Without loss of generality, identify $\Omega_\mathrm{SA}$ with self adjoint operators in $\mathcal{B}(X)$ where $X=\bigoplus_{j=1}^{\infty}X_j$ in the $l^2$-sense with $X_j=l^2(\mathbb{N})$. Now let $\{a_{i,j}\}\in\Omega'$ and for the $j$th column define $B_j\in\mathcal{B}(X_j)$ with the following matrix representation:
$$
B_j=\bigoplus_{r=1}^{M_j} A_{l_r^j},\quad A_{m}:=\begin{pmatrix}
1& & & &1\\
 &0& & & \\
& &\ddots& & \\
 & & &0& \\
1& & & &1\\
\end{pmatrix}
\in\mathbb{C}^{m\times m},
$$
where if $M_j$ is finite then $l_{M_j}^j=\infty$ with $A_{\infty}=\mathrm{diag}(1,0,0,...)$. The $l_r^j$ are defined by the relation
\begin{equation}
\label{crazy_indices}
\sum_{r=1}^{\sum_{i=1}^m a_{i,j}} l_r^j=m+\sum_{i=1}^m a_{i,j},
\end{equation}
and measure the lengths ($+1$) of successive gaps between $1$'s in the $j$th column. Define the self-adjoint operator
$
A=\bigoplus_{j=1}^{\infty}B_j.
$
We then have that
$$
\mathrm{sp}_{\mathrm{ess}}(A)= \begin{cases}
        \{0,1,2\}, & \text{if } \Xi'(\{a_{i,j}\})=\No\\
        \{0,2\}, & \text{otherwise }.
        \end{cases}
$$

Consider the intervals
$
J_1=[0,1/2],
$ and 
$
J_2=[3/4,\infty).
$
Set $\alpha_{n_2,n_1}=\mathrm{dist}(1,\Gamma_{n_2,n_1}(A))$. Let $k(n_2,n_1)\leq n_1$ be maximal such that $\alpha_{n_2,k}(A)\in J_1\cup J_2$. If no such $k$ exists or $\alpha_{n_2,k}(A)\in J_1$ then set $\tilde{\Gamma}_{n_2,n_1}(\{a_{i,j}\})=\No$. Otherwise set $\tilde{\Gamma}_{n_2,n_1}(\{a_{i,j}\})=\Yes$. It is clear from (\ref{crazy_indices}) and the definition of the $A_m$ that this defines a generalised algorithm. In particular, given $N$ we can evaluate $\{A_{k,l}:k,l\leq N\}$ using only finitely many evaluations of $\{a_{i,j}\}$, where we can use a bijection between the canonical bases to view $A$ as acting on $l^2(\mathbb{N})$. Again, the point of the intervals $J_1,J_2$ is that we can show $\lim_{n_1\rightarrow\infty}\tilde{\Gamma}_{n_2,n_1}(\{a_{i,j}\})=\tilde{\Gamma}_{n_2}(\{a_{i,j}\})$ exists. If $\Xi'(\{a_{i,j}\})=\No$ then for large $n_2$ we have $\lim_{n_1\rightarrow\infty}\alpha_{n_2,k}(A)<1/2$ and hence $\lim_{n_2\rightarrow\infty}\tilde{\Gamma}_{n_2}(\{a_{i,j}\})=\No$. Similarly, if $\Xi'(\{a_{i,j}\})=\Yes$ then for large $n_2$ we have $\lim_{n_1\rightarrow\infty}\alpha_{n_2,k}(A)>3/4$ and hence $\lim_{n_2\rightarrow\infty}\tilde{\Gamma}_{n_2}(\{a_{i,j}\})=\Yes$. Hence $\tilde{\Gamma}_{n_2,n_1}$ is a height two tower of general algorithms solving $\{\Xi',\Omega'\}$, a contradiction.

{\bf Step III:} $\{\Xi_{\mathrm{e}\textrm{-}\mathrm{sp}},\Omega_\mathrm{B}\}\in\Pi_3^A$. 
We start by defining the following functions on $\mathbb{C}$, where $Q_n:=I-P_n$,
\begin{align*}
\mu_{m,n,k}&:z\mapsto \min\{\sigma_1(P_k(A-z I)Q_mP_n), \sigma_1(P_k(A-z I)^*Q_mP_n)\}\\
\mu_{m,n}&:z\mapsto \min\{\sigma_1((A-z I)Q_mP_n), \sigma_1((A-z I)^*Q_mP_n)\}\\
\mu_m&:z\mapsto \min\{\sigma_1((A-z I)Q_m), \sigma_1((A-z I)^*Q_m)\}.
\end{align*}
Here $P_k(A-z I)Q_mP_n$ is considered as operator on $\mathrm{Ran}(Q_mP_n)$, etc. as usual.
Recall from the previous proofs that, for every $n,m$, $\mu_{m,n,k}\to \mu_{m,n}$ pointwise and monotonically
from below as $k\to\infty$ and for every $m$ $\mu_{m,n}\to \mu_{m}$ pointwise and 
monotonically from above as $n\to\infty$. Furthermore, $\{\mu_m\}_{m \in \mathbb{N}}$ is pointwise increasing 
and bounded, hence converges as well. By Proposition \ref{REC_SING} we can compute with finitely many arithmetic operations and comparisons, for any given $z$, an approximation $\tilde\mu_{m,n,k}(z)\in\mathbb{Q}$ with $\left|\mu_{m,n,k}(z)-\tilde\mu_{m,n,k}(z)\right|\leq1/k$. Furthermore, we can approximate from below and assume without loss of generality (by taking successive maxima) that $\tilde\mu_{m,n,k}(z)$ converges to $\mu_{m,n}$ pointwise and monotonically from below (again, this also includes the case of $\Delta_1$-information).

Next, we define the finite grids
\[G_n:=\left\{\frac{s+\ii t}{2^n}: s,t\in\{-2^{2n},\ldots,2^{2n}\}\right\},\]
and, for $A\in \mathcal{B}(l^2(\mathbb{N}))$,  
\begin{equation*}%\label{Gamma}
\hat\Gamma_{m,n,k}(A):=\left\{z\in G_n: \tilde\mu_{m,n,k}(z)\leq\frac{1}{m}\right\}\\
\end{equation*} 
\begin{align}\label{lim1}
\hat\Gamma_{m,n}(A)&:=\bigcap_{k\in\mathbb{N}}\hat\Gamma_{m,n,k}(A) = \lim_{k\to\infty}\hat\Gamma_{m,n,k}(A),\\
\label{lim2}
\hat\Gamma_{m}(A)&:=\overline{\bigcup_{n\in\mathbb{N}}\hat\Gamma_{m,n}(A)}= \lim_{n\to\infty}\hat\Gamma_{m,n}(A),\\
\label{lim3}
\hat\Gamma(A)&:=\bigcap_{m\in\mathbb{N}}\hat\Gamma_{m}(A) = \lim_{m\to\infty}\hat\Gamma_{m}(A).
\end{align}
It easily follows again that all $\hat\Gamma_{m,n,k}$ are general algorithms in the sense of Definition \ref{alg} that require only finitely many arithmetic operations. We shall show that for large enough $n$, the above sets are non-empty and establish the limits in (\ref{lim1}), (\ref{lim2}) and (\ref{lim3}) and that $\hat\Gamma(A)$ equals $\mathrm{sp}_{\mathrm{ess}}(A)$. We will show that it is possible to choose a subsequences of $n$ such that this holds (each output and any limits must never empty since we require convergence in the Hausdorff metric) allowing us to construct a height three arithmetic tower. The final limit will be from above and hence the $\Pi_3^A$ classification. 

To do that we abbreviate $\mathcal{H}:=l^2(\mathbb{N})$ and first show that 
\begin{equation}\label{EPLowNorm}
\mu(z):=\lim_{m\to\infty}\mu_m(z)\quad\text{equals}\quad\|(A-z I+\Kc(\mathcal{H}))^{-1}\|^{-1}
\quad\text{for all}\quad z \in \mathbb{C},
\end{equation}
where $A-z I+\Kc(\mathcal{H})$ denotes the element in the Calkin algebra $\mathcal{B}(\mathcal{H})/\mathcal{K}(\mathcal{H})$ and where we use the 
convention $\|b^{-1}\|^{-1}:=0$ if the element $b$ is not invertible.
Clearly it suffices to consider $z=0$. The estimate ``$\leq$'' is trivial in 
case $\mu(0)=0$. So, let $\mu(0)>\epsilon>0$. Choose $m\in\Nb$ such that 
$\mu_m(0)\geq\mu(0)-\epsilon$. The operator
$A_0:=AQ_m:\mathrm{Ran} Q_m\to \mathrm{Ran}(AQ_m)$ is invertible, hence the kernel of $A=AQ_m+AP_m$
has finite dimension. $\sigma_1(A^*Q_m)>0$ yields that $\mathrm{Ran} A$ has finite codimension, 
hence both $A$ and $AQ_m$ are Fredholm. Let $R$ be the orthogonal projection onto 
$\mathrm{Ran} AQ_m$, $B_0$ the inverse of $A_0$ and $B:=B_0R$. Then
\begin{align*}
BA-I&=(BA-I)P_m+(BA-I)Q_m=(BA-I)P_m\quad\text{and}\quad\\
AB-I&=(AB-I)(I-R)+(AB-I)R=(AB-I)(I-R)
\end{align*} 
are compact, i.e. $B$ is a regulariser for $A$. Now
\begin{align*}
\|(A+\Kc(\mathcal{H}))^{-1}\|^{-1}&\geq\|B\|^{-1}=\|B_0R\|^{-1} \\
&\geq(\|B_0\|\|R\|)^{-1}=\|B_0\|^{-1} =\sigma_1(AQ_m) \geq \mu(0)-\epsilon
\end{align*}
gives the estimate ``$\leq$'' since $\epsilon$ is arbitrary.

Conversely, there is nothing to prove if $A$ is not Fredholm, so
let $\epsilon>0$ and $B\in(A+\Kc(\mathcal{H}))^{-1}$ be a regulariser with
$\|B\|\leq \|(A+\Kc(\mathcal{H}))^{-1}\|+\epsilon$. Since the operator
$K:=BA-I$ is compact we get for all sufficiently large $m$ that
$\|Q_mBAQ_m-Q_m\|=\|Q_mKQ_m\|$ is so small such that 
$Q_m+Q_mKQ_m$ is invertible in $\mathcal{B}(\mathrm{Ran}(Q_m))$, 
\[\underbrace{Q_m(Q_m+Q_mKQ_m)^{-1}Q_mB}_{\displaystyle \quad \quad \quad \quad \quad \quad =:B_1\in\mathcal{B}(\mathcal{H})}AQ_m=Q_m
\quad\text{and}\quad\|Q_mB-B_1\|<\epsilon.\]
We get that $\sigma_1(AQ_m)>0$, hence the compression
$AQ_m:\mathrm{Ran}(Q_m) \to \mathrm{Ran}(AQ_m)$ is invertible and the compression
$B_1|_{\mathrm{Ran}(AQ_m)}:\mathrm{Ran}(AQ_m) \to \mathrm{Ran}(Q_m)$ is its (unique) inverse.
Thus, we have $\|B_1\|\geq\|B_1|_{\mathrm{Ran}(AQ_m)}\|=\sigma_1(AQ_m)^{-1}$ and further
$\|B\|\geq \|Q_mB\|\geq \|B_1\|-\|Q_mB-B_1\|\geq \sigma_1(AQ_m)^{-1}-\epsilon.$
We conclude for sufficiently large $m$ that 
$\sigma_1(AQ_m)^{-1}\leq \|B\|+\epsilon\leq \|(A+\Kc(\mathcal{H}))^{-1}\|+2\epsilon.$
Since $\epsilon>0$ is arbitrary we arrive at 
$
\lim_{m\to\infty}\sigma_1(AQ_m)\geq\|(A+\Kc(\mathcal{H}))^{-1}\|^{-1}.
$ 
Applying this observation to $A^*$ we also find
$$
\lim_{m\to\infty}\sigma_1(A^*Q_m)\geq\|(A^*+\Kc(\mathcal{H}^*))^{-1}\|^{-1}=
\|(A+\Kc(\mathcal{H}))^{-1}\|^{-1},
$$
which finishes the proof of \eqref{EPLowNorm}.
In particular we now can apply that all of the above functions $\mu_{m,n,k}$, $\mu_{m,n}$, $\mu_{m}$, $\mu$ are continuous 
with respect to $z$, and together with the already discussed pointwise monotone convergence 
results, Dini's Theorem gives that the convergences are even locally uniform.

We can now establish the limits in (\ref{lim1}), (\ref{lim2}) and (\ref{lim3}) for large enough $n$. 
Obviously, $\{\hat\Gamma_{m,n,k}(A)\}_k$ is decreasing. If $\hat\Gamma_{m,n}(A)=\emptyset$ then there must exist some finite $k$ with $\hat\Gamma_{m,n,k}(A)=\emptyset$ since the sets are nested, closed and uniformly bounded. Furthermore, $\{\hat\Gamma_{m,n}(A)\}_n$ is increasing since, for every $k$,
$\hat\Gamma_{m,n}(A)\subset \hat\Gamma_{m,n,k}(A) \subset \hat\Gamma_{m,n+1,k}(A)$. Let $z\in\mathrm{sp}_{\mathrm{ess}}(A)$. For $m\in\mathbb{N}$, $\mu_m(z)=0$ and furthermore, there is an $n_0(m)$ and a $z_m\in G_{n_0(m)}$ with $|z-z_m|<1/m$, $\mu_m(z_m)<1/(2m)$ and $\mu_{m,n}(z_m)<1/m$ for every $n\geq n_0(m)$. Then, for every $k$, $\hat\mu_{m,n,k}(z_m)<1/m$ as well. Since the essential spectrum of a bounded linear operator is non-empty it follows that there exists a minimal $N(m)$ such that if $n\geq N(m)$ then $\hat\Gamma_{m,n}(A)\neq\emptyset$.

We now alter $\hat\Gamma_{m,n,k}$ as follows. For a given $m,n$ and $k$ we successively compute $\hat\Gamma_{m,n,k}(A),\hat\Gamma_{m,n+1,k}(A),...$ and choose $N(m,n,k)\geq n$ minimal such that $
\hat\Gamma_{m,N(m,n,k),k}(A)\neq\emptyset.$ By the above remarks, it follows that this process must terminate. We also have that
$
\Gamma_{m,n}(A):=\lim_{k\rightarrow\infty}\Gamma_{m,n,k}(A)
$
exists (in fact $\Gamma_{m,n,k}(A)$ is eventually constant as we increase $k$ since $\hat\mu_{m,n,k}$ is increasing) and also that $\Gamma_{m,n}(A)=\hat\Gamma_{m,\max\{n,N(m)\}}(A).$ Since $\Gamma_{m,n}(A)$ are increasing in $n$, it then follows that
$$
\Gamma_{m}(A):=\lim_{n\rightarrow\infty}\Gamma_{m,n}(A)=\overline{\bigcup_{n\in\mathbb{N}}\Gamma_{m,n}(A)}=\overline{\bigcup_{n\in\mathbb{N}}\hat\Gamma_{m,n}(A)}.
$$
Finally, $\{\Gamma_{m}(A)\}_m$ is decreasing. To see this, choose $z\in\Gamma_m(A)$ and a sequence
$(z_n)$ with $z_n\to z$ and $z_n\in \hat\Gamma_{m,n}(A)$ (for large $n$), respectively. The functions $\mu_{m,n}$ are non-decreasing in $m$ and hence we have
$$
\hat\Gamma_{m,n}(A)=\left\{z\in G_n: \mu_{m,n}(z)\leq\frac{1}{m}\right\}\subset\hat\Gamma_{m-1,n}(A)
$$
from which we conclude $z_n\in \hat\Gamma_{m-1,n}(A)$, hence $z\in\Gamma_{m-1}(A)$. It follows that the limit $\Gamma(A):=\lim_{m\rightarrow\infty}\Gamma_{m}(A)$ exists.

We are left with proving that $\Gamma(A)=\mathrm{sp}_{\mathrm{ess}}(A)$. Let $z\in\mathrm{sp}_{\mathrm{ess}}(A)$. Arguing as before, for $m\in\mathbb{N}$, $\mu_m(z)=0$ and furthermore, there is an $n_0(m)$ and a $z_m\in G_{n_0(m)}$ with $|z-z_m|<1/m$, $\mu_m(z_m)<1/(2m)$ and $\mu_{m,n}(z_m)<1/m$ for every $n\geq n_0(m)$. Then for every $k$ $\mu_{m,n,k}(z_m)<1/m$ as well. We 
conclude that $z_m\in\Gamma_m(A)\subset\Gamma_l(A)$, $l=1,\ldots,m$. Thus the limit 
$z$ of the sequence $\{z_m\}$ belongs to all $\Gamma_l(A)$ and hence $\mathrm{sp}_{\mathrm{ess}}(A)\subset\Gamma(A)$.
Conversely, let $z\notin\mathrm{sp}_{\mathrm{ess}}(A)$. Then $\mu(z)>\epsilon>0$ for a certain
$\epsilon>0$ and for all $z$ in a certain neighbourhood $U$ of $z$. Moreover 
there is an $m_0>3/\epsilon$ such that $\mu_m(z)>\epsilon/2$ for all $m\geq m_0$ and 
$z\in U$, hence $\mu_{m,n}(z)>\epsilon/2$ for all $m\geq m_0$, all $n$ and all $z\in U$. 
Further, for every $m>m_0$ and $n$ there is a $k_0(m,n)$ such that 
$\mu_{m,n,k}(z)>\epsilon/3>1/m_0> 1/m$ for all $k\geq k_0(m,n)$ and $z\in U$. 
Thus, the intersection of $U$ and $\Gamma(A)$ is empty, in particular 
$z\notin\Gamma(A)$.

{\bf Step IV:} $\{\Xi_{\mathrm{e}\textrm{-}\mathrm{sp}},\Omega_f\}\in\Pi_2^A$. 
Knowing a bound $f$ on the dispersion of $A$ obviously suggests to plug it into the previously defined algorithms and define 
\begin{align*}
{\kappa}_{m,n}&:z\mapsto \min\{\sigma_1(P_{f(n)}(A-z I)Q_mP_n), \sigma_1(P_{f(n)}(A-z I)^*Q_mP_n)\}\\
\tilde{\Gamma}_{m,n}(A)&:=\left\{z\in G_n: \hat{\kappa}_{m,n}(z)\leq\frac{1}{m}\right\}.
\end{align*}
Where, as usual, we will approximate $\kappa_{m,n}$ to within $1/n$ by a function $\hat\kappa_{m,n}$ taking rational values that can be computed (using Proposition \ref{REC_SING} to cope with $\Delta_1$-information if needed) at any point using finitely many arithmetic operations and comparisons. Unfortunately, all we know about the functions ${\kappa}_{m,n},\mu_m$ is that they are Lipschitz continuous with Lipschitz constant $1$ and that ${\kappa}_{m,n}$ converge pointwise to $\mu_m$, but not, whether or when this convergence is monotone. Therefore we have to make a modification in order to guarantee the existence of the desired limiting sets. The following idea is similar to the use of the intervals $J_1$ and $J_2$ in Step II and avoids possible oscillations at the boundary. 

Let $V_m$  denote the square $V_m:=\{z\in\mathbb{C}: |\Re(z)|,|\Im(z)|\leq 2^{-(m+1)}\}$ and $V_m(z):= z+V_m$ the respective shifted copies. Moreover, set $Z_m:=\{\frac{s+\ii t}{2^m}: s,t\in\Zb\}$ and
\begin{align*}
S_{m,n}(z)&:=\{i=m+1,\ldots,n:\exists z\in V_m(z)\cap G_i:\hat{\kappa}_{m,i}(z)\leq 1/m\}\\
T_{m,n}(z)&:=\{i=m+1,\ldots,n:\exists z\in V_m(z)\cap G_i:\hat{\kappa}_{m,i}(z)\leq 1/(m+1)\},
\end{align*}
as well as
\begin{align*}
E_{m,n}(z)&:=|S_{m,n}(z)|+|T_{m,n}(z)| - n\\
%Z_m&:=\{\frac{s+\ii t}{2^m}: s,t\in\Zb\}\\
I_{m,n}&:=\{z\in Z_m: E_{m,n}(z)>0\text{ and }\left|z\right|\leq n\}\\
\hat\Gamma_{m,n}(A)&:=\bigcup_{z\in I_{m,n}} V_m(z).
\end{align*}
Roughly speaking, $\hat\Gamma_{m,n}(A)$ is the union of a family of squares $V_m(z)$ with $E_{m,n}(z)$ being positive, which is the case if ``most of the $\hat{\kappa}_{m,i}$ are small on $V_m(z)$''.

To make this precise, we first notice that all $\hat{\kappa}_{m,i}(z)$, $i\geq m+1$, with $z$ outside the compact ball $K:=B_{2\|A\|+2}(0)$ are larger than one, $I_{m,n}$ are finite, and all $\hat\Gamma_{m,n}(A)$ are contained in $K$, due to a simple Neumann series argument.
Furthermore, $\hat{\kappa}_{m,n}\to\mu_m$ uniformly on $K$ due to the Lipschitz continuity (uniform in $n$) of $\hat{\kappa}_{m,n}$ and $\mu_m$.

We now show that for each $m\geq 5$ the sign of $E_{m,n}(z)$ are eventually constant with respect to $n$ for every $z\in Z_m\cap K$, if $n$ is sufficiently large. That is, for every $z$ there is an $n(z)$ such that either $E_{m,n}(z)\leq 0$ or $E_{m,n}(z)> 0$ for all $n\geq n(z)$. For fixed $z$ and $m\geq 5$ we have to consider three possible cases: The first one is $\mu_m(w)>1/m$ for all $w\in V_m(z)$. Then there exists an $n_0$ such that $\hat{\kappa}_{m,n}(w)>1/m$ for all $n\geq n_0$ and all $w\in V_m(z)$ (take into account that $V_m(z)$ is compact and $\hat{\kappa}_{m,n}\to\mu_m$ locally uniformly), hence $|S_{m,n}(z)|+|T_{m,n}(z)|$ is constant and $E_{m,n}(z)$ is monotonically decreasing. Secondly, assume that $\mu_m(w)<1/m$ for all $w\in V_m(z)$.
Then there exists an $n_0$ such that $\hat{\kappa}_{m,n}(w)<1/m$ for all $n\geq n_0$ and all $w\in V_m(z)$, hence $|S_{m,n}(z)|=n-c$ with a certain constant $c$, and $E_{m,n}(z)=|T_{m,n}(z)|-c$ is monotonically increasing.
Finally, assume that $1/m$ belongs to the interval 
$$
[\min\{\mu_m(w):w\in V_m(z)\},\max\{\mu_m(w):w\in V_m(z)\}]
$$
 and notice that the length of that interval is at most $2^{-m}$ which is less than $1/m-1/(m+1)$ for $m\geq 5$. Then there exists an $n_0$ such that $\hat{\kappa}_{m,n}(w)>1/(m+1)$ for all $n\geq n_0$ and all $w\in V_m(z)$, hence $\{|T_{m,n}(z)|\}_{n\geq n_0}$ is constant, and 
\[
E_{m,n}(z)=(|S_{m,n}(z)|-n) + |T_{m,n}(z)|
\] is monotonically decreasing. 

Taking the maximum $N$ of the finite set $\{n(z):z\in Z_m\cap K\}$ then yields that the $\hat\Gamma_{m,n}(A)$, $n\geq N$, are constant, hence converge (if this constant set is non empty) as $n\to\infty$.
If $z_0\in\mathrm{sp}_{\mathrm{ess}}(A)$ then $\mu(z_0)=0$, hence $\mu_m(z_0)=0$ for all $m$. So, for fixed $m$, we have $\hat{\kappa}_{m,n}(z)<1/(m+1)$ for all sufficiently large $n$ and all $z$ in the neighbourhood $U_{1/(2m)}(z_0)$. Choose $z\in Z_m$ such that $z_0\in V_m(z)\subset U_{1/(2m)}(z_0)$. This is possible since $m\geq 5$. Then it is immediate from the definitions that $E_{m,n}(z)=n-c$ with a constant $c$ for all sufficiently large $n$, hence $z_0\in\Gamma_{m,n}(A)$ for $n$ large. Now given $m,n$, successively compute $\hat\Gamma_{m+5,n}(A),\hat\Gamma_{m+5,n+1}(A),...$ and let $N(m,n)\geq n$ be minimal such that $\hat\Gamma_{m+5,N(m,n)}(A)\neq\emptyset$. Define
$
\Gamma_{m,n}(A)=\hat\Gamma_{m+4,N(m,n)}(A).
$
The above arguments, in particular the fact that $\mathrm{sp}_{\mathrm{ess}}(A)\neq\emptyset$, demonstrate that this sequence of computations halts and $\Gamma_{m,n}$ is an arithmetical algorithm. Note also that
$
\Gamma_m(A):=\lim_{n\rightarrow\infty}\Gamma_{m,n}(A)
$
exists and the above argument shows that it contains the essential spectrum. Note also that $\Gamma_{m,n}(A)$ is in fact equal to $\Gamma_m(A)$ for large $n$.

We claim that $\{\Gamma_m(A)\}_m$ is a decreasing nested sequence, hence converges as well. Indeed, let $z\in\Gamma_{m+1}(A)$, then $z\in\hat\Gamma_{m+5,n}(A)$ for large $n$, i.e. $z\in V_{m+5}(w)$ for a $w\in I_{m+5,n}$, i.e. $w\in Z_{m+5}$ and $E_{m+5,n}(w)>0$. Clearly, (for large enough $n$) there exists a $w_0\in Z_{m+4}$ with $V_{m+5}(w)\subset V_{m+4}(w_0)$, and further (since we can assume without loss of generality by computing maxima over successive $m$ that $\hat{\kappa}_{m+4,i}(z)\leq\hat{\kappa}_{m+5,i}(z)$ holds whenever $n>m+5$)
\begin{align*}
S_{m+5,n}(w) & = \{i=m+6,\ldots,n:\exists z\in V_{m+5}(w)\cap G_i:\hat{\kappa}_{m+5,i}(z)\leq 1/(m+5)\} \\
&\subset \{i=m+5,\ldots,n:\exists z\in V_{m+4}(w_0)\cap G_i:\hat{\kappa}_{m+4,i}(z)\leq 1/(m+4)\} = S_{m+4,n}(w_0)
\end{align*}
and analogously $T_{m+5,n}(w)\subset T_{m+4,n}(w_0)$. Therefore $E_{m+5,n}(w)\leq E_{m+4,n}(w_0)$, which shows that $w_0\in I_{m+4,n}$ and thus $z\in\Gamma_{m}(A)$.

It remains to prove that the final limiting set $\lim_{m\rightarrow\infty}\Gamma_{m}(A)$ coincides with the essential spectrum. We have already proven that it must contain the essential spectrum.
Conversely, let $z_0\notin\mathrm{sp}_{\mathrm{ess}}(A)$, i.e. $\mu(z_0)>0$. Then, for large $m_0$, there exists an $\epsilon>3/m_0$ such that $\mu_m(z_0)>\epsilon$ and $\hat{\kappa}_{m,n}(z_0)>\epsilon/2$ for $m\geq m_0$ and large $n$, and then also $\hat{\kappa}_{m,n}(z)>\epsilon/3>1/m_0$ for all $z$ in a certain neighbourhood $U$ of $z_0$. For all sufficiently large $m\geq m_0$ all $V_m(z)$ which contain $z_0$ are subsets of $U$, hence $E_{m,n}(z)=d-n$ with a constant $d$ for large $n$, that is $\lim_{n\rightarrow\infty}\hat\Gamma_{m,n}(A)$ and $\{z_0\}$ are separated. But since the $\{\Gamma_m(A)\}_m$ are nested, it follows $z_0$ is not in the limiting set $\lim_{m\rightarrow\infty}\Gamma_{m}(A)$. This finishes the proof.
\end{proof}

\subsection{Determining if a point $z$ lies in $\mathrm{sp}(A)$}

Recall that for this problem, we restrict to $z\in\mathbb{R}$ when considering $\Omega_{\mathrm{D}}$ or $\Omega_{\mathrm{SA}}$. We also restrict to $z\neq 0$ when considering $\Omega_{\mathrm{C}}$. Since $\Omega_\mathrm{D}\subset\Omega_{fg}\subset\Omega_{f}$, $\Omega_\mathrm{C}\subset\Omega_{f}$ and $\Omega_{\mathrm{SA}}\subset\Omega_{\mathrm{N}}\subset\Omega_g\subset\Omega_B$ it is enough to prove that $\{\Xi^z_{\mathrm{sp}},\Omega_{\mathrm{SA}}\}\not\in\Delta_3^G$, $\{\Xi^z_{\mathrm{sp}},\Omega_B\}\in\Pi_3^A$, $\{\Xi^z_{\mathrm{sp}},\Omega_f\}\in\Pi_2^A$, $\{\Xi^z_{\mathrm{sp}},\Omega_{\mathrm{D}}\}\not\in\Delta_2^G$ and $\{\Xi^z_{\mathrm{sp}},\Omega_{\mathrm{C}}\}\not\in\Delta_2^G$.

\begin{proof}[Proof of Theorem \ref{spec_thm_main} for determining if a point lies in the spectrum]
{\bf Step I:} 
$\{\Xi^z_{\mathrm{sp}},\Omega_{\mathrm{SA}}\}\not\in\Delta_3^G$. By considering the shift $A-zI$, we can without loss of generality assume that $z=0$. Suppose for a contradiction that $\Gamma_{n_2,n_1}$ is a height two tower solving $\{\Xi^0_{\mathrm{sp}},\Omega_{\mathrm{SA}}\}$. Let $(\mathcal{M},d)$ be the discrete space $\{\Yes,\No\}$, let $\Omega'$ denote the collection of all infinite matrices $\{a_{i,j}\}_{i,j\in\mathbb{N}}$ with entries $a_{i,j}\in\{0,1\}$ and consider the problem function
\begin{equation*}
\Xi'(\{a_{i,j}\}):\text{ Does $\{a_{i,j}\}$ have (only) finitely many columns with (only) finitely many $1$s?}
\end{equation*}
In Section \ref{dec_sec} we prove that $\mathrm{SCI}(\Xi',\Omega')_{G} = 3$. Our strategy will be the same as the proof that $\{\Xi_\mathrm{sp},\Omega_\mathrm{B}\}\notin\Delta_3^G$ - we will gain a contradiction by using the supposed height two tower $\Gamma_{n_2,n_1}$ to solve $\{\Xi',\Omega'\}$.

First we need a certain periodic semi-infinite Jacobi matrix which gives rise to spectral pollution when applying the finite section method. Define
$$
A_{\infty}:=\begin{pmatrix}
0&3 & & & &\\
3 &0&1 & & &\\
& 1&0& 3& &\\
& & 3&0&1 &\\
& & & 1&0& \ddots\\
& & & & \ddots & \ddots
\end{pmatrix}
$$
It is well known that $\mathrm{sp}(A_\infty)=[-4,-2]\cup[2,4]$ (see for instance \cite{denisov2003zeros}). However, an easy check shows that $0$ is an eigenvalue of the finite truncated matrix $P_nA_\infty P_n$ whenever $n$ is odd. With an abuse of notation we also define
$
A_n:=P_nA_\infty P_n\oplus C_\infty \in\mathcal{B}(l^2(\mathbb{N})),
$
where $C_n$ denotes the $n\times n$ diagonal matrix with diagonal entries equal to $-4$.

Without loss of generality, we identify $\Omega_\mathrm{SA}$ with self adjoint operators in $\mathcal{B}(X)$ where $X=\bigoplus_{j=1}^{\infty}X_j$ in the $l^2$-sense with $X_j=l^2(\mathbb{N})$. Now let $\{a_{i,j}\}\in\Omega'$ and for the $j$th column define $B_j\in\mathcal{B}(X_j)$ as follows. Let $I_j=\{i\in\mathbb{N}:a_{i,j}=1\}$ and $J_j=\{i\in\mathbb{N}:a_{i,j}=0\}$. We partition $\mathbb{N}$ into two sets:
$$
N_1(j)=\{1\}\cup\{2k,2k+1:k\in I_j\},\quad N_2(j)=\{2k,2k+1:k\in J_j\}.
$$
On $\mathrm{span}\{e_k:k\in N_1(j)\}$ we let $B_j$ act as $A_{\left|N_1(j)\right|}$, whereas on $\mathrm{span}\{e_k:k\in N_2(j)\}$ we let $B_j$ act as $C_{\left|N_2(j)\right|}$ (both with respect to the natural bases and ordering). It is clear that $B_j$ is unitarily equivalent to $
A_{\left|N_1(j)\right|}\oplus C_{\left|N_2(j)\right|}$. Hence $\mathrm{sp}(B_j)$ is equal to $[-4,-2]\cup[2,4]\cup K_j$, where $K=\{0\}$ if $\sum_{i}a_{i,j}<\infty$ and $K_j=\emptyset$ otherwise.

Next we define the operator 
$$
C:= \bigoplus_{j=1}^{\infty} \left(B_j+\frac{1}{2j}I\right)
$$
on $X$. Concerning its spectrum, we note that any non-zero point of $\mathrm{sp}(C)$ inside the interval $[-1,1]$ is equal to $1/(2j)$ corresponding to precisely when the column $\{a_{i,j}\}_{i\in\mathbb{N}}$ has finitely many $1$'s. It is also clear that $0\in\mathrm{sp}(C)$ precisely when this happens infinitely many times ($0$ is a limit point of a descending sequence in the spectrum). Hence $\Xi^0_{\mathrm{sp}}(C)=\Yes$ if and only if $\Xi'(\{a_{i,j}\})=\No$.

We then define $\tilde{\Gamma}_{n_2,n_1}(\{a_{i,j}\})=\Yes$ if $\Gamma_{n_2,n_1}(C)=\No$ and $\tilde{\Gamma}_{n_2,n_1}(\{a_{i,j}\})=\No$ if $\Gamma_{n_2,n_1}(C)=\Yes$. Given $N$, we can evaluate $\{f_{k,l}(C):k,l\leq N\}$ using only finitely many evaluations of $\{a_{i,j}\}$, where we can use a bijection between the canonical bases to view $C$ as acting on $l^2(\mathbb{N})$. This follows since given any finite $i$, we can compute the sets $\{1,...,i\}\cap N_1(j)$ and $\{1,...,i\}\cap N_2(j)$. Hence $\tilde{\Gamma}_{n_2,n_1}$ defines a generalised algorithm and provides a height two tower of general algorithms solving $\{\Xi',\Omega'\}$, a contradiction.

{\bf Step II:} $\{\Xi^z_{\mathrm{sp}},\Omega_B\}\in\Pi_3^A$.
By considering the shift $A-zI$, we can without loss of generality assume that $z=0$ (note also that only having $\Delta_1$-information regarding $z$ is captured by only having $\Delta_1$-information on matrix entries after this shift). Define the numbers
\begin{align*}
\gamma := \min\{\sigma_1(A),\sigma_1(A^*)\},\quad 
\gamma_m := \min\{\sigma_1(AP_m),\sigma_1(A^*P_m)\},
\end{align*}
\begin{align*}
\gamma_{m,n} &:= \min\{\sigma_1(P_nAP_m),\sigma_1(P_nA^*P_m)\}\\
\delta_{m,n} &:= \min\{2^{-m}k:k\in\Nb, 2^{-m}k\geq\sigma_1(P_nAP_m)\text{ or }2^{-m}k\geq\sigma_1(P_nA^*P_m)\}.
\end{align*}
As pointed out before, $A$ is invertible if and only if $\gamma>0$.
Furthermore, note that $\gamma_m\downarrow_m\gamma,$ and that
$\gamma_{m,n}\uparrow_n\gamma_m$ for every fixed $m$. 
The sequences $\{\delta_{m,n}\}_n$ are bounded and monotonically non-decreasing, and %it always holds 
$\gamma_{m,n}\leq\delta_{m,n}\leq\gamma_{m,n}+2^{-m}\leq\gamma_{m}+2^{-m}$. Thus, for $\epsilon>0$ there is an $m_0$, and for every $m\geq m_0$ there is an $n_0=n_0(m)$ such that
\begin{equation}\label{Egammadelta}
|\gamma-\delta_{m,n}|\leq |\gamma-\gamma_m|+|\gamma_m-\gamma_{m,n}|+|\gamma_{m,n}-\delta_{m,n}|
\leq \epsilon/3+\epsilon/3+2^{-m}\leq \epsilon
\end{equation}
whenever $m\geq m_0$ and $n\geq n_0(m)$. So we see that the numbers $\delta_{m,n}$ converge monotonically from below for every $m$ as $n\to\infty$, and the respective limits form a non-increasing sequence with respect to $m$, tending to $\gamma$. Moreover, each $\delta_{m,n}$ can be computed with finitely many arithmetic operations by Proposition \ref{PCholesky}. Thus, if we define 
$\Gamma_{k,m,n}(A):=\text{(}\delta_{m,n}<k^{-1}\text{)}$, the monotonicity ensure that
$
\Gamma_{k}(A):=\lim_{m\to\infty} \lim_{n\to\infty} \Gamma_{k,m,n}(A)
$
exists. Moreover, if $\gamma<k^{-1}$ then $\Gamma_k(A)=\Yes$. If $\Gamma_k(A)=\No$ then we must have that $\gamma\geq k^{-1}$ and hence $\Xi^0_{\mathrm{sp}}(A)=\No$. Finally, if $\gamma>k^{-1}$ then $\Gamma_k(A)=\No$. Hence $\Gamma_{k,m,n}$ provides a $\Pi_3^A$ tower.

{\bf Step III:} $\{\Xi^z_{\mathrm{sp}},\Omega_f\}\in\Pi_2^A$.
Again, by considering the shift $A-zI$, we can without loss of generality assume that $z=0$. If one considers operators for which a bound $f$ on their dispersion is known, then choosing $n=f(m)$ turns \eqref{Egammadelta} into
\begin{equation}\label{Egammadelta2}
|\gamma-\delta_{m,f(m)}|\leq |\gamma-\gamma_m|+|\gamma_m-\gamma_{m,f(m)}|+|\gamma_{m,f(m)}-\delta_{m,f(m)}|
\leq \epsilon/3+\epsilon/3+2^{-m}\leq \epsilon
\end{equation}
for large $m$ taking $|\sigma_1(BP_m)-\sigma_1(P_{f(m)}BP_m)|\leq\|(I-P_{f(m)})BP_m\|$ into account. Therefore, a natural first guess for our general algorithms could be $\tilde{\Gamma}_{k,m}(A):=\text{(}\delta_{m,f(m)}<k^{-1}\text{)}$. Unfortunately, although $\delta_{m,f(m)}$ converges to $\gamma$ as $m\to\infty$ by \eqref{Egammadelta2}, this is not monotone in general. Hence, it might be the case that $\gamma=k^{-1}$, but $\delta_{m,f(m)}$ oscillates around $k^{-1}$ such that $\{\tilde{\Gamma}_{k,m}(A)\}_m$ may not converge. To overcome this drawback, we can use the same interval trick as before. Define $J_k^1=[0,k^{-1}]$ and $J_k^2=[2k^{-1},\infty)$. For any given $m$, let $j(m)\leq m$ be maximal such that $\delta_{j,f(j)}\in J^1_{k}\cup J^2_{k}$. If no such $j$ exists or $\delta_{j,f(j)}\in J^2_{k}$ then set $\Gamma_{k,m}(A)=\No$, otherwise set $\Gamma_{k,m}(A)=\Yes$. By our now standard argument, this converges as $m\to\infty$. If $\gamma>0$, then for large enough $k$ (such that $2k^{-1}<\gamma$), $\Gamma_{k,m}(A)=\No$ for large $m$. Conversely, if $\gamma=0$ then for any $k$, $\delta_{m,f(m)}\in J^1_k$ for large $m$ and hence $\Gamma_{k,m}(A)=\Yes$ for large $m$. This gives $\Pi_2^A$ convergence.
%consider the following algorithms
%\[\Gamma_{k,m}(A):=(|\{i=1,\ldots,m:\text{(}\delta_{i,f(i)}<k^{-1}\text{)}\}|+|\{i=1,\ldots,m:\text{(}\d%elta_{i,f(i)}<(k+1)^{-1}\text{)}\}| > m).\]
%This converges in any case as $m\to\infty$. Indeed:
%If $\gamma>k^{-1}$ then $\Gamma_{k,m}(A)=\No$ for sufficiently large $m$.
%If $\gamma<(k+1)^{-1}$ then $\Gamma_{k,m}(A)=\Yes$ for sufficiently large $m$.
%If $\gamma=k^{-1}$ then $|\{i=1,\ldots,m:\text{(}\delta_{i,f(i)}<(k+1)^{-1}\text{)}\}|$ are the same for all sufficiently large $m$ and it follows that $\Gamma_{k,m}(A)$ converges. Analogously for $\gamma=(k+1)^{-1}$.
%Finally, if $(k+1)^{-1}<\gamma<k^{-1}$ then again one of these families is uniformly bounded and the other only misses a finite number of points, and we again get convergence. Now, it is clear that $\mathrm{SCI}(\Xi,\Omega_2)_{\mathrm{A}} \leq 2.$ Furthermore, if $\spc^0(A)=\Yes$ then $\gamma=0$ so that $\Gamma_{k,m}(A)=\Yes$ for sufficiently large $m$. This gives $\Pi_2^A$ convergence.

%\textcolor[rgb]{1,0,0}{MAYBE we should tidy up - same argument as with the intervals.}

{\bf Step IV:} $\{\Xi^z_{\mathrm{sp}},\Omega_{\mathrm{D}}\}\not\in\Delta_2^G$.
Again, by considering the shift $A-zI$, we can without loss of generality assume that $z=0$. If we assume that there is a general height-one-tower of algorithms $\{\Gamma_n\}$ over $\Omega_{\mathrm{D}}$ then we can again construct counterexamples very easily: For a decreasing sequence $\{a_i\}$ of positive numbers we consider the diagonal operator $A:=\diag\{a_i\}$. Clearly, $0$ belongs to the spectrum of $A$ if and only if the $a_i$s tend to zero.
As a start, set $\{a_i^1\}:=\{1,1,\ldots\}$, choose $n_1$ such that $\Gamma_{n}(\diag\{a_i^1\})=\No$
for all $n\geq n_1$, and $i_1$ such that $\max\{i,j \, \vert \, f_{i,j} \in \Lambda_{\Gamma_{n_1}}(\diag\{a_i^1\})\}<i_1.$ This is possible by (iii) in Definition \ref{alg} and our now standard argument. Then set $\{a_i^2\}:=\{1,1,\ldots,1,1/2,1/2,\ldots\}$ with 
$1/2$s starting at the $i_1$th position. If $n_1,\ldots,n_{k-1}$ and $i_1,\ldots,i_{k-1}$ are already chosen then pick $n_k$ such that $\Gamma_{n}(\diag\{a_i^k\})=\No$ for all $n\geq n_k$, and $i_k$ such that $\max\{i,j \, \vert \, f_{i,j} \in \Lambda_{\Gamma_{n_k}}(\diag\{a_i^k\})\}<i_k$, and modify $\{a_i^{k}\}$ to $\{a_i^{k+1}\}:=\{1,\ldots,2^{-k},2^{-k},\ldots\}$ with $2^{-k}$s starting at the $i_k$th position. Now, the contradiction is as in the previous proofs and we see that $\{\Xi^0,\Omega_{\mathrm{D}}\}\not\in\Delta_2^G$.
%
%Finally, since self-adjoint diagonal operators $A$ have real diagonal entries $a_{i,i}$, we mention that in order to decide whether a self-adjoint diagonal operator has $0$ in its spectrum the algorithms 
%\[\Gamma_{m,n}(A):=(\exists i\in \{1,\ldots,n\} : -1/m<a_{i,i}<1/m)\] 
%provide an Arithmetic tower of height $2$.

{\bf Step V:} $\{\Xi^z_{\mathrm{sp}},\Omega_{\mathrm{C}}\}\not\in\Delta_2^G$.
Recall in this case that $z\neq0$. By scaling any $A\in\Omega_{\mathrm{C}}$ by the factor $3/(2z)$, we can assume without loss of generality that $z=3/2$. Suppose for a contradiction that a general height-one-tower of algorithms $\{\Gamma_n\}$ solves $\{\Xi^{\frac{3}{2}}_{\mathrm{sp}},\Omega_{\mathrm{C}}\}$. Consider the arrowhead matrix:
$$
A_{n}(\epsilon):=\begin{pmatrix}
1&\epsilon &\epsilon^2 &\cdots &\epsilon^n\\
\epsilon &0& & & \\
\epsilon^2& &\ddots& & \\
 \vdots& & &0& \\
\epsilon^n& & & &0\\
\end{pmatrix},
$$
where $\epsilon\in(0,1)$. A simple calculation yields that the eigenvalues of $A_{n}(\epsilon)$ are $\{0,1/2\pm\sqrt{1+4a_n(\epsilon)}/2\}$, where
$
a_n(\epsilon)=\frac{\epsilon^2(1-\epsilon^{2n})}{1-\epsilon^2}.
$
In particular, we choose $\epsilon=\sqrt{3/7}$ for which the only positive eigenvalue is
$
b_n:=\frac{1+\sqrt{1+3(1-\frac{3^n}{7^n})}}{2}.
$
We now choose an increasing sequence of integers (greater than $1$) $r_1,r_2,...$ inductively, and define $A\in\Omega_{\mathrm{C}}$ such that when projected onto the span of the basis vectors $\{e_1,e_{r_1},...,e_{r_n}\}$ (with the natural order), with projection denoted by $Q_n$, $Q_nAQ_n$ has matrix $A_{n}(\sqrt{3/7})$. We also enforce that if $j\notin\{r_n\}_{n\in\mathbb{N}}\cup\{1\}$, then the $j$the column and row of $A$ are zero. In other words, $A_{1,r_n}=A_{r_n,1}=(\sqrt{3/7})^n$, $A_{1,1}=1$ and all other entries are $0$. It follows that $\mathrm{sp}(A)=\{0,1/2\pm1\}$ and hence $\Xi^{\frac{3}{2}}_{\mathrm{sp}}(A)=\Yes$. However, we choose $\{r_n\}$ such that there is an increasing sequence $\{c_n\}$ with $\Gamma_{c_n}(A)=\No$, yielding the contradiction.

Suppose that $r_1,...,r_n$ have been chosen. Then let $B_n$ be the infinite matrix with $Q_nB_nQ_n$ having matrix $A_{n}(\sqrt{3/7})$ and zeros elsewhere. Clearly the only positive eigenvalue of $B_n$ is $b_n<3/2$ and hence $\Xi^{\frac{3}{2}}_{\mathrm{sp}}(B_n)=\No$. So there exists $c_n>r_n$ with $\Gamma_{c_n}(B_n)=\No$. But by our now standard argument using the Definition \ref{alg} of a general algorithm, we can choose $r_{n+1}>r_n$ large such that $\Gamma_{c_n}(A)=\Gamma_{c_n}(B_n)$.
\end{proof}

\begin{remark}
To deal with $\Delta_1$-information in Step II of the above proof, we can approximate $\delta_{m,n}$ from below to accuracy $1/n$ (taking rational values) and take successive maxima to preserve monotonicity as $n\rightarrow\infty$. In Step III, we simply approximate $\delta_{m,f(m)}$ to accuracy $1/m$ (taking rational values). In both cases we use Proposition \ref{REC_SING}.
\end{remark}

\subsection{Techniques for proving lower bounds}\label{dec_sec}

Here we collect two results concerning decision making problems which are used to show lower bounds for two of our spectral problems. Within this section we exclusively deal with problems (functions) 
\[\Xi:\Omega\to \mathcal{M}:=\{\Yes, \No\},\]
where $\mathcal{M}$ is equipped with the discrete metric. This means that for such problems 
we search for General algorithms $\Gamma_{n_k,\ldots,n_1}:\Omega \to \mathcal{M}$ which, for a given
input $\omega\in\Omega$, answer $\Yes$ or $\No$. We will refer to such problems as decision making problems. Clearly, a sequence 
$\{m_i\}\subset \mathcal{M}$ of such ``answers'' converges to $m \in \mathcal{M}$ if and only if 
finitely many $m_i$ are different from $m$. Let $\Omega_1$ denote the collection of all infinite matrices $\{a_{i,j}\}_{i,j\in\Nb}$  with entries $a_{i,j}\in\{0,1\}$ and let $\Omega_2$ denote the collection of all infinite matrices $\{a_{i,j}\}_{i,j\in\Zb}$ with entries $a_{i,j}\in\{0,1\}$. Consider the following two problems:
\begin{align*}
\Xi_1&:\Omega_{1}\ni\{a_{i,j}\}_{i,j \in \mathbb{N}}\mapsto\text{ Does $\{a_{i,j}\}$ have (only) finitely many columns with (only) finitely many $1$s?}\\
\Xi_2&:\Omega_{2}\ni\{a_{i,j}\}_{i,j \in \mathbb{Z}}\mapsto\left(\exists D \forall j 
\left(\left(\forall i \sum_{k=-i}^i a_{k,j}<D\right) \vee
\left(\forall R \exists i \sum_{k=0}^i a_{k,j}>R \wedge \sum_{k=-i}^0 a_{k,j}>R\right)\right)\right)\\
&\quad\quad\text{(``there is a bound $D$ such that every column has either less than
$D$ $1$s or is two-sided infinite'')}
\end{align*}

\begin{theorem}[Decision making problems]\label{decision1}
Given the set-up above we have
\begin{equation*}
\begin{split}
&\mathrm{SCI}(\Xi_1,\Omega_1)_{\mathrm{G}}=\mathrm{SCI}(\Xi_1,\Omega_1)_{\mathrm{A}}= 3,\\
&\mathrm{SCI}(\Xi_2,\Omega_2)_{\mathrm{G}}=\mathrm{SCI}(\Xi_2,\Omega_2)_{\mathrm{A}}= 3.
\end{split}
\end{equation*}
\end{theorem}

\begin{remark}
Note that the SCI of the decision problems above are considered with respect to general and arithmetic towers. These towers do not assume any computability model, but only a model on the mathematical tools allowed (arithmetic operations in the case of an arithmetic tower) and the way the algorithm can read the available information (only finite amount of input). However, the SCI framework with towers of algorithms fit naturally into the classical theory of computability and the Arithmetical Hierarchy.
\end{remark}

To prove Theorem \ref{decision1}, we need to introduce some helpful background.
Equip the set of all sequences $\{x_i\}_{i\in\Nb}\subset\{0,1\}$ with the
following metric:
\begin{equation}\label{metric_Baire}
d_\mathrm{B}(\{x_i\},\{y_i\}):=\sum_{n\in\Nb}3^{-n}|x_n-y_n|.
\end{equation}
The resulting metric space is known as the Cantor space. By the usual enumeration
of the elements of $\Nb^2$ this metric translates to a metric on the set $\Omega_1$ 
of all matrices  $A=\{a_{i,j}\}_{i,j\in\Nb}$ with entries in $\{0,1\}$. Similarly, we do this for the set 
$\Omega_2$ of all matrices  $A=\{a_{i,j}\}_{i,j\in\Zb}$ with entries in $\{0,1\}$. In each case this gives a
complete metric space, hence a so called Baire space, i.e. it is of second category 
(in itself). To make this precise we recall the following definitions:

\begin{definition}[Meager set]
A set $S\subset\Omega$ in a metric space $\Omega$ is nowhere dense if every open set $U\subset\Omega$ has an 
open subset $V\subset U$ such that $V\cap S=\emptyset$, i.e. if the interior of the
closure of $S$ is empty.
A set $S\subset\Omega$ is meager (or of first category) if it is an at most countable
union of nowhere dense sets. Otherwise, $S$ is non-meager (or of second category). 
\end{definition}
Notice that every subset of a meager set is meager, as is every countable union of meager 
sets. By the Baire category theorem, every (non-empty) complete metric space is non-meager.

\begin{definition}[Initial segment]
We call a finite matrix $\sigma \in \mathbb{C}^{n \times m}$ an initial segment for an infinite matrix 
$A\in\Omega_1$ and say that $A$ is an extension of $\sigma$ if $\sigma$ is in the 
upper left corner of $A$. In particular, $\sigma = P_nAP_m$ for some $n,m \in \mathbb{N}$, where we, with slight abuse 
of notation, consider $P_nAP_m \in \mathbb{C}^{n \times m}$. $P_n$ is as usual the projection onto $\mathrm{span}\{e_j\}_{j=1}^n$, 
where $\{e_j\}_{j \in \mathbb{N}}$ is the canonical basis for $l^2(\mathbb{N})$.  

Similarly, a finite matrix $\sigma \in \mathbb{C}^{(2n+1) \times (2m +1)}$ is an initial segment for an infinite matrix 
$B \in\Omega_2$ if $\sigma$ is in the centre of $B$ i.e. $\sigma = \tilde{P}_nB\tilde{P}_m$ where $\tilde{P}_n$ is the projection onto $\mathrm{span}\{e_j\}_{j=-n}^n$, 
where $\{e_j\}_{j \in \mathbb{Z}}$ is the canonical basis for $l^2(\mathbb{Z})$.
We denote that $A$ is an extension of $\sigma$ by $\sigma \subset A$, and the set of all extensions of $\sigma$ by $E(\sigma)$. The notion of extension extends in an obvious way to finite matrices.
\end{definition}

Notice that the set $E(\sigma)$ of all extensions of 
$\sigma$ is a non-empty open and closed neighbourhood for every extension of $\sigma$.

\begin{lemma}\label{initial}
Let $\{\Gamma_n\}_{n\in\Nb}$ be a sequence of General algorithms mapping $\Omega_1\to\mathcal{M}$, 
$T\subset\Omega_1$ be a non-empty closed set, and $S\subset T$ be a non-meager set (in $T$) 
such that $\xi=\lim_{n\to\infty}\Gamma_n(A)$ exists and is the same for all $A\in S$.
Then there exists an initial segment $\sigma$ and a number $n_0$ such that 
$E^T(\sigma):=T\cap E(\sigma)$ is not empty, and such that $\Gamma_n(A)=\xi$ 
for all $A\in E^T(\sigma)$ and all $n\geq n_0$. The same statement is true if we consider $\Omega_2$ instead of $\Omega_1$.
\end{lemma}

\begin{proof}
We are in a complete metric space $T$. Since $S=\bigcup_{k\in\Nb}S_k$ with 
$S_k:=\{A\in S:\Gamma_n(A)=\xi \;\forall n\geq k\}$ and $S$ is non-meager, not all of 
the $S_k$ can be meager, hence there is a non-meager $S_k$, and we set $n_0:=k$. 
Now, let $A$ be in the closure $\overline{S_{n_0}}$, i.e. there is a sequence 
$\{A_j\}\subset S_{n_0}$ converging to $A$. Note that by assumption (i) in Definition \ref{alg} and the fact that $\Gamma_n$ are General algorithms, we have that, for every fixed $n\geq n_0$, $|\Lambda_{\Gamma_n}(A)| < \infty$. Thus, by (ii) in Definition \ref{alg}, the General algorithm 
$\Gamma_n$ only depends on a finite part of $A$, in particular $\{A_f\}_{f \in \Lambda_{\Gamma_n}(A)}$ where $A_f = f(A).$
Since each $f \in \Lambda_{\Gamma_n}(A)$ represents a coordinate evaluation of $A$ and by the definition of the metric $d_B$ in (\ref{metric_Baire}), it follows that for all sufficiently large $j$, $f(A) = f(A_j)$ for all $f \in \Lambda_{\Gamma_n}(A)$. By assumption (iii) in Definition  \ref{alg}, it then follows that $\Lambda_{\Gamma_n}(A_j) =  \Lambda_{\Gamma_n}(A)$ for all sufficiently large $j$. Hence, by assumption (ii) in Definition  \ref{alg}, we have that  
 $\Gamma_n(A)=\Gamma_n(A_j)=\xi$ for all sufficiently large $j$.
Thus, $\Gamma_n(A)=\xi$ for all $n\geq n_0$ and all $A\in \overline{S_{n_0}}$.
Since $S_{n_0}$ is not nowhere dense, we can choose a point $\tilde A$ in the interior of 
$\overline{S_{n_0}}$ and fix a sufficiently large initial segment $\sigma$ of $\tilde A$ such that $E^T(\sigma)$ is a subset of $\overline{S_{n_0}}$. The assertion of the lemma now follows.
The extension of the proof to $\Omega_2$ is clear.
\end{proof}

Roughly speaking, this shows that there is a nice open and closed non-meager subspace 
of $T$ for which $\lim_{n\to\infty}\Gamma_n(A)$ exists even in a uniform manner. 
Note that this result particularly applies to the case $T=\Omega$.

 \begin{proof}[Proof of Theorem \ref{decision1}]
{\bf Step I:} $\mathrm{SCI}(\Xi_1,\Omega_2)_{\mathrm{G}} \geq 3$. We argue by contradiction and assume that there is a height two tower $\{\Gamma_r\}$, $\{\Gamma_{r,s}\}$ for $\Xi_1$, where 
$\Gamma_r$ denote, as usual, the pointwise limits $\lim_{s\to\infty} \Gamma_{r,s}$. We will inductively construct initial segments $\{\sigma_n\}$ with $\sigma_{n+1} \supset \sigma_n$ yielding an infinite matrix $A \supset \sigma_n$ for all $n \in \mathbb{N},$ such that $\lim_{r\rightarrow \infty} \Gamma_r(A)$ does not exist. We construct $\{\sigma_n\}$ with the help of two sequences of subsets $\{T_n\}$ and $\{S_n\}$ of $\Omega$, with the properties that $T_{n+1} \subset T_n$, each $T_n$ is closed, and either $T_n = \Omega_1$ or there is an initial segment $\sigma \in \mathbb{C}^{m\times m}$ where $m \geq n$ such that $T_n$ is the set of all extensions of $\sigma$ with all the remaining entries in the first $n$ columns being zero. 

Suppose that we have chosen $T_{n}$. Note that the subset of all matrices in $T_n$ with one particular entry being fixed is closed in $T_n$, hence the set of all matrices with one particular column being fixed is closed (as an intersection of closed sets). The latter set has no interior points in $T_n$, hence is nowhere dense in $T_n$. This provides that the set of all matrices in $T_n$ for which a particular column has only finitely many $1$s is a countable union of nowhere dense sets in $T_n$, hence 
is meager in $T_n$. Hence the set of all matrices in $A\in T_n$ with $\Xi_1(A)=\No$ (i.e. matrices with infinitely many ``finite columns'') is meager in $T_n$ as well. Let $R$ be its complement in $T_n$, i.e. the non-meager set of all matrices $A \in T_n$ with $\Xi_1(A)=\Yes$.

Clearly, $R=\bigcup_{r\in\Nb}R_r$ with $R_r:=\{A\in R:\Gamma_k(A)=\Yes \;\forall k\geq r\}$, 
and there is an $r_n$ such that $S_n:=R_{r_n}$ is non-meager in $T_n$.
Note that $\Gamma_{r_n,s}$ are General algorithms and $\Gamma_{r_n}(A) =\lim_{s \rightarrow \infty} \Gamma_{r_n,s}(A) = \Yes$ for all $A \in S_n$. 
Thus, Lemma \ref{initial} applies and yields an initial segment $\sigma_n$, such that
\begin{equation}\label{uniform}
E^{T_n}(\sigma_n) \neq \emptyset \quad\text{and } \Gamma_{r_n}(A)=\Yes \text{ for all } A\in E^{T_n}(\sigma_n).
\end{equation}
Now, let $T_{n+1} \subset T_n$ be the (closed) set of all matrices in $E^{T_n}(\sigma_n)$ 
with all remaining \footnote{I.e. outside the initial segment $\sigma_n$.} entries in the first $n+1$ columns being zero. Letting $T_0 = \Omega_1$ we have completed 
the construction. 

The nested initial segments $\sigma_{n+1}\supset\sigma_n$ obviously yield a matrix $A \in \cap_{n=0}^{\infty} T_n$ and this $A$ has only finitely many $1$s in each of its columns. However, by the construction of $\{T_n\}$, we have that $A \in E^{T_n}(\sigma_n)$ for all $n \in \mathbb{N}$. Thus, $\Xi_1(A) = \No$, but by (\ref{uniform}), $\Gamma_k(A) = \Yes$ for infinitely many $k$. 

{\bf Step II:} $\mathrm{SCI}(\Xi_2,\Omega_2)_{\mathrm{G}} \geq 3$.
The proof is very similar to the proof of Step I. In particular, we argue by contradiction and assume that there is a height two tower $\{\Gamma_r\}$, $\{\Gamma_{r,s}\}$ for $\Xi_2$. As above, we inductively construct initial segments $\{\sigma_n\}$ with $\sigma_{n+1} \supset \sigma_n$ yielding an infinite matrix $A \supset \sigma_n$ for all $n \in \mathbb{N},$ such that $\lim_{r\rightarrow \infty} \Gamma_r(A)$ does not exist. We construct $\{\sigma_n\}$ with the help of two sequences of subsets $\{T_n\}$ and $\{S_n\}$ of $\Omega_2$, with the properties that $T_{n+1} \subset T_n$, each $T_n$ is closed, and either $T_n = \Omega_2$ or there is an initial segment $\sigma \in \mathbb{C}^{(2m+1)\times (2m+1)}$ where $m \geq n$ such that $T_n$ is the set of all extensions of $\sigma$ with all $\pm n$th semi-columns being filled by $n$ additional $1$s and infinitely many $0$s, and and all the other $k$th columns, $|k|\leq n-1$, are being filled with zeros. In particular, if $\{a_{i,j}\}_{i,j \in \mathbb{Z}} \in T_n$ then 
\begin{equation}\label{extension}
\begin{split}
\{a_{i,\pm n}\}_{i \in \mathbb{Z}} &= \{\hdots, 0,\underbrace{1, \hdots, 1}_{\displaystyle \quad n \, \text{times}}, \sigma_{-m,\pm n}, \hdots, \sigma_{m,\pm n}, \underbrace{1, \hdots, 1}_{\displaystyle \quad n \, \text{times}}, 0, \hdots\}^T, \\
\{a_{i, k}\}_{i \in \mathbb{Z}} &= \{\hdots, 0, \sigma_{-m, k}, \hdots, \sigma_{m, k}, 0, \hdots\}^T, \quad k \in \mathbb{Z}_+, \, |k| \leq n-1.
\end{split}
\end{equation}

Suppose that we have chosen $T_{n}$. We argue as in Step I and deduce that for $k\in\mathbb{Z}$
the set of all matrices in $T_n$ with one of the two $k$th semi-columns being 
fixed is nowhere dense in $T_n$, hence the set of all matrices in $T_n$ with (one of the two) 
$k$th semi-columns having finitely many $1$s is meager in $T_n$. 
We conclude that
the set of all matrices in $T_n$ with one semi-column having finitely many $1$s is meager,
thus its complement in $T_n$, the set of all matrices with all semi-columns having 
infinitely many $1$s, is non-meager. Therefore the same holds for the superset
$\{A \in T_n:\Xi_2(A)=\Yes\}$. Denoting this set by $R$ we obviously have 
$R=\bigcup_{r\in\Nb}R_r$ with $R_r:=\{A\in R:\Gamma_k(A)=\Yes \;\forall k\geq r\}$, 
and there is an $r_n$ such that $S_n:=R_{r_n}$ is non-meager in $T_n$.
Note that $\Gamma_{r_n,s}$ are General algorithms and $\Gamma_{r_n}(A) =\lim_{s \rightarrow \infty} \Gamma_{r_n,s}(A) = \Yes$ for all $A \in S_n$. 
Thus, Lemma \ref{initial} applies and yields an initial segment $\sigma_n$, such that
\begin{equation}\label{uniform2}
E^{T_n}(\sigma_n) \neq \emptyset \quad\text{and } \Gamma_{r_n}(A)=\Yes \text{ for all } A\in E^{T_n}(\sigma_n).
\end{equation}
Now, let $T_{n+1} \subset T_n$ be the (closed) set of all matrices $\{a_{i,j}\}_{i,j \in \mathbb{N}}$ in $E^{T_n}(\sigma_n)$ with the property that \eqref{extension} holds with $\sigma = \sigma_n$. Letting $T_0 = \Omega_2$ concludes the construction. The nested sequence $\{\sigma_n\}$again defines a matrix $A \in \cap_{n=0}^{\infty} T_n$ with the property that $A$ has finitely many but at least $k$ non-zero
entries in the each of its $k$th semi-column which gives $\Xi_2(A) = \No$, but, by (\ref{uniform2}), $\Gamma_k(A) = \Yes$ for infinitely many $k$, a contradiction.  

{\bf Step III:} $\mathrm{SCI}(\Xi_1,\Omega_1)_{\mathrm{A}} \leq 3$ and $\mathrm{SCI}(\Xi_2,\Omega_2)_{\mathrm{A}} \leq 3$. This can again be proved by defining an appropriate tower of height $3$ directly. For $\Xi_1$ we define
\[\Gamma_{k,m,n} (\{a_{i,j}\}_{i,j \in \Nb})=\Yes \quad\Leftrightarrow\quad |\{j=1,\ldots,m:\sum_{i=1}^n a_{i,j}<m\}|<k.\]
For $\Xi_2$ we define
\[\Gamma_{k,m,n} (\{a_{i,j}\}_{i,j \in \mathbb{Z}})=\Yes \quad\Leftrightarrow\quad |\{j=-m,\ldots,m:k<\sum_{i=1}^n a_{i,j}<m\text{ or }k<\sum_{i=-n}^{-1} a_{i,j}<m\}|=0.\]
It is straightforward to show these provide height three arithmetical towers.
\end{proof}

The lower bounds of the SCI of the decision problems $\Xi_1$ and $\Xi_2$ allow us to obtain the lower bounds of the SCI of spectra and essential spectra of operators.

\section{Proofs of Theorem \ref{main_self_adjoint} and Theorem \ref{thm:comp-res}}

\begin{remark}[Fourier Transform]
In this section we require the Fourier transform on $\mathrm{L}^2(\R^d)$, which will be denoted by $\Fcal:\mathrm{L}^2(\R^d)\to\mathrm{L}^2(\R^d)$. Our definition of $\Fcal$ is as follows:
	\begin{equation*}
	[\Fcal\psi](\xi)
	=
         \int_{\R^d}\psi(x)e^{-2\pi ix\cdot\xi}\,dx.
	\end{equation*}
For brevity we may write $\hat\psi$ instead of $\Fcal\psi$. 
With this definition $\Fcal$ is unitary on $\mathrm{L}^2(\R^d)$.
\end{remark}

\begin{remark}[The Attouch--Wets Topology]\label{rem:topology}
In \eqref{eq:attouch-wets-metric} we introduced the Attouch--Wets metric $d_{\mathrm{AW}}$ on the space $\mathcal{M}$ of non-empty closed subsets of $\C$. Since it is not convenient to work with $d_{\mathrm{AW}}$ directly, we make note of the following simple characterisation of convergence w.r.t. $d_{\mathrm{AW}}$. Let $A\subset\C$ and $A_n\subset\C$ be a sequence of closed and non-empty sets. Then:
	\begin{equation}
	d_{\mathrm{AW}}(A_n,A)\to0\quad\text{if and only if}\quad d_\mathcal{K}(A_n,A)\to0\text{ for any compact }\mathcal{K}\subset\C,
	\end{equation}
where
	\begin{equation}\label{compact_conv}
	d_\mathcal{K}(S,T)=\max\left\{\sup_{s\in S\cap \mathcal{K}}d(s,T),\sup_{t\in T\cap \mathcal{K}}d(t,S)\right\},
	\end{equation}
	where we use the convention that $\sup_{s\in S\cap \mathcal{K}}d(s,T) = 0$ if $ S\cap \mathcal{K} = \emptyset$.
We refer to \cite[Chapter 3]{Beer} for details and further discussion. Equivalently, we observe that
	\begin{equation} \label{EqSetConv}
	\begin{split}
	&d_{\mathrm{AW}}(A_n,A)\to0\\&\qquad\text{if and only if}\\ &\forall\delta>0,\ \mathcal{K}\subset\C\text{ compact, }\exists N\text{ s.t. }\forall n>N,\ A_n\cap\mathcal K\subset \mathcal{N}_{\delta}(A) \text{ and }A\cap\mathcal K\subset \mathcal{N}_{\delta}(A_n)
	\end{split}
	\end{equation}
where $\mathcal{N}_{\delta}(X)$ is the usual open $\delta$-neighbourhood of the set $X$. In this section we will simply use the notation $A_n \rightarrow A$ to denote this convergence, since there is no room for confusion.
\end{remark}

\subsection{The case of bounded potential $V$: The proof of Theorem \ref{main_self_adjoint}}\label{bounded_potential}

We will split the proof of Theorem \ref{main_self_adjoint} into two sections:
\begin{itemize}
\item[a.]
\emph{$\SCI(\Xi_{\mathrm{sp}},\Omega_{\phi,g})_{\mathrm{A}} = 1$:} Whilst the proof of this is somewhat long and technical (extra care has been taken to deal with $\Delta_1$-information and arithmetic algorithms over $\mathbb{Q}$), it is done via similar steps to the proof of Theorem \ref{spec_thm_main} in \S \ref{spec_proof}, namely through approximations of the resolvent norm. However, some work is needed to convert point samples of $V$ into approximations of the relevant matrices with respect to a Gabor basis. Lemmas \ref{lem:loc-unif} and \ref{bounds_on_h} are technical lemmas needed to achieve this, whereas Lemma \ref{lemma_gamma} concerns the approximations obtained via discretisations of the relevant inner products (and is need to gain the $\Sigma_1^A$ classification).
\item[b.]
\emph{Error control and rest of proof:} Lemma \ref{lemma_gamma} is used to prove $\{\Xi_{\mathrm{sp}},\Omega_{\phi,g}\}\in\Sigma_1^A$ and we extend the argument in \S \ref{spec_proof} to prove $\{\Xi_{\mathrm{sp}},\Omega_{\phi,\mathrm{SA}}\}\in  \Sigma^{A,\mathrm{eigv}}_1$. To prove the rest of the theorem, we argue that it is enough to prove $\{\Xi_{\mathrm{sp,\epsilon}},\Omega_{\phi}\}\in\Sigma_1^A$. This is done via Lemma \ref{pseudospec_unbounded} which uses the approximations of $\gamma(z)$ constructed in part (a).
\end{itemize}

Before we embark on the proof, the reader unfamiliar with the concept of Halton sequences may want to review this material. An excellent reference is \cite{Niederreiter} (see p. 29 for definition). We will also be needing the following definition and theorem in order to prove Theorem \ref{main_self_adjoint}. 

\begin{definition}
Let $\{t_1,\hdots t_N\}$ be a sequence in $[0,1]^d$. Then we define the \emph{star discrepancy} of 
$\{t_1,\hdots t_N\}$  to be 
$$
D^*_N(\{t_1,\hdots t_N\}) = \sup_{K \in \mathcal{K}}\left|\frac{1}{N}\sum_{k=1}^N\chi_{K}(t_k) - \nu(K)\right|,
$$  
where 
$\mathcal{K}$ denotes the family of all subsets of $[0,1]^d$ of the form $\prod_{k=1}^d[0,b_k),$ $\chi_K$ denotes the characteristic function on $K$, $b_k \in (0,1]$ and $\nu$ denotes the Lebesgue measure. 
\end{definition}

\begin{theorem}[\cite{Niederreiter}]\label{disc_bound}
If $\{t_k\}_{k\in\mathbb{N}}$ is the Halton sequence in $[0,1]^d$ in the pairwise relatively prime 
bases $b_1,\hdots , b_d$, then 
\begin{equation}\label{star}
D^*_N(\{t_1,\hdots t_N\}) \leq \frac{d}{N} + \frac{1}{N}\prod_{k=1}^d\left(\frac{b_k-1}{2\log(b_k)}\log(N) + \frac{b_k+1}{2}\right) \qquad N \in \mathbb{N}.
\end{equation}
\end{theorem}

For a proof of this theorem see \cite{Niederreiter}, p. 29. Note that as the right-hand side of (\ref{star}) is somewhat cumbersome to work with, it is convenient to define the following constant.

\begin{definition}\label{C*}
Define $C^*(b_1,\hdots,b_d)$ to be the smallest integer such that for all $N \in \mathbb{N}$
$$
\frac{d}{N} + \frac{1}{N}\prod_{k=1}^d\left(\frac{b_k-1}{2\log(b_k)}\log(N) + \frac{b_k+1}{2}\right) \leq C^*(b_1,\hdots,b_d)\frac{\log(N)^d}{N}
$$
where $b_1,\dots,b_d$ are as in Theorem \ref{disc_bound}.
\end{definition}

Further to these definitions, we shall require a Gabor basis which is the core in the discretisation carried out to produce the tower of algorithms. In particular, 
let
\begin{equation}\label{def_psi}
\psi_{k,l}(x) = e^{2\pi i k x}\chi_{[0,1]}(x-l), \qquad k,l \in \mathbb{Z}.
\end{equation}
It is well-known that $\psi_{k,l}$ form an orthonormal basis for $\mathrm{L}^2(\mathbb{R})$. Thus, by applying the Fourier transform,
\begin{equation}\label{the_SSS}
\{\hat \psi_{k_1,l_1} \otimes \hat \psi_{k_2,l_2} \otimes \cdots \otimes \hat \psi_{k_d,l_d}: k_1,l_1, \hdots, k_d, l_d \in \mathbb{Z}\}
\end{equation}
forms an orthonormal basis for $\mathrm{L}^2(\mathbb{R}^d)$ since the Fourier transform $\mathcal{F}$ is unitary. 
Let $\{\varphi_j\}_{j\in\mathbb{N}}$ be an enumeration of the collection of functions above, define 
\begin{equation}\label{the_SS}
\mathcal{S} = \mathrm{span}\{\varphi_j\}_{j\in\mathbb{N}}
\end{equation}
 and let 
\begin{equation}\label{enumeration}
\theta:\mathbb{N} \ni j \mapsto (k_1,l_1) \times \hdots \times (k_d,l_d) \in  \mathbb{Z}^{2d}
\end{equation}
be the bijection used in this enumeration.
Define
\begin{equation}\label{k_and_l}
\begin{split}
\tilde k(m,d) &:= \max\{|k_p|:(k_p, l_p) =  \theta(j)_p, p \in \{1,\hdots, d\}, j \in \{1,\hdots, m\} \},\\
\tilde l(m,d) &:= \max\{|l_p|:(k_p, l_p) =  \theta(j)_p, p \in \{1,\hdots, d\}, j \in \{1,\hdots, m\} \},
\end{split}
\end{equation}
and let
\begin{equation}\label{C_1}
C_1(m,d,a): = d^2\left(4\frac{(\max\{\tilde l(m,d)^2 + \tilde l(m,d)  +1/3,1\})^2}{|a-\tilde k(m,d)| + 1}\right)^d, \qquad m,d,a \in \mathbb{N},
\end{equation}
\begin{equation}\label{C_2}
C_2(m,d) :=  2^d \left(2((\tilde l(m,d)+1)^4 + \tilde l(m,d)^4)^2(2(\tilde k(m,d)+1) + 2)\right)^d, \qquad m,d \in \mathbb{N}.
\end{equation}
The quantities $C_1(m,d,a)$ and $C_2(m,d)$ may seem to come out of the blue. They stem from 
Lemma \ref{lem:loc-unif} and Lemma \ref{bounds_on_h} that are technical lemmas needed in order to construct the tower of algorithms. However, $C_1(m,d,a)$ and $C_2(m,d)$ occur in the main proof and thus it is advantageous to introduce them here to prepare the reader.

\subsubsection{\bf Proof that $\SCI(\Xi_{\mathrm{sp}},\Omega_{\phi,g})_{\mathrm{A}} = 1$}
%\noindent{\bf \emph{Proof that $\SCI(\Xi_{\mathrm{sp}},\Omega_{\phi,g})_{\mathrm{A}} = 1$.}}

The proof will make clear that we do not need to worry about the algorithm outputting the empty set - given $m$, simply compute $\Gamma_{j(m)}(V)$ with $j(m)\geq m$ minimal such that $\Gamma_{j(m)}(V)\neq \emptyset$. 

\begin{proof}[Proof of $\SCI(\Xi_{\mathrm{sp}},\Omega_{\phi,g})_{\mathrm{A}} = 1$.]
\textbf{Step I: Defining $
\Gamma_{m}(\{V_{\rho}\}_{\rho \in \Lambda_{\Gamma_{m}}(V)})
$ and $\Lambda_{\Gamma_{m}}(V)$}. To do so recall 
$\mathcal{S}$ from (\ref{the_SS}).
Note that since $\mathcal{D}(H) = \mathrm{W}^{2,2}(\mathbb{R}^d)$, it is easy to show that $\mathcal{S}$ is a core for $H$. 
Let $P_{m}$, $m \in \mathbb{N}$, be the projection onto $\mathrm{span}\{\varphi_j\}_{j=1}^{m}$, and let $z \in \mathbb{C}$.  Define
\[
S_{m}(V,z) := (-\Delta + V - zI)P_{m}
\quad
\text{and}
\quad
\tilde S_{m}(V,z) := (-\Delta + \overline{V} - \overline{z}I)P_m.
\]
Let 
$$
\sigma_1(S_{m}(V,z)) := \min\{(\langle S_{m}(V,z) f, S_{m}(V,z) f\rangle)^{\frac{1}{2}}: f \in \mathrm{Ran}(P_{m}), \|f\| =1 \}
$$
and 
$
\sigma_1(\tilde S_{m}(V,z)) := \min\{( \langle \tilde S_{m}(V,z) f, \tilde S_{m}(V,z) f\rangle)^{\frac{1}{2}}: f \in \mathrm{Ran}(P_{m}), \|f\| =1\},
$
and define
\begin{equation}\label{the_gammas11}
\gamma_m(z):= \min\{ \sigma_1(S_{m}(V,z)),
\sigma_1(\tilde S_{m}(V,z))\}\}.
\end{equation}
Note that if we could evaluate $\gamma_m$ at any point $z$ using only finitely many arithmetic operations of elements of the form $V(x)$, $x \in \mathbb{R}^d$, we could have defined a general algorithm as desired by using 
$\Upsilon_{B_{m}(0)}^{1/m}(\gamma_{m})$ where $\Upsilon_{B_{m}(0)}^{1/m}$ is defined in (\ref{Upsilon}).
Unfortunately, such evaluation is not possible ($\gamma_m$ may depend on infinitely many samples of $V$),  and we will now focus on finding an approximation to $\gamma_m$.

Let $S =\{t_k\}_{k\in\mathbb{N}}$, where $t_k \in [0,1]^d$ is a Halton sequence (see \cite{Niederreiter} p. 29 for definition) in the pairwise relatively prime 
bases $b_1,\hdots, b_d$ (note that the particular choice of the $b_j$s is not important). Define, for $a>0$ and $N \in \mathbb{N}$, the discrete inner product
\begin{equation}\label{innerp}
\langle f,u\rangle_{a,N} = \frac{(2a)^d}{N}\sum_{k=1}^Nf^a(t_k)\overline{u^a(t_k)}, \qquad f,u \in \mathrm{L}^2(\mathbb{R}^d) \cap \mathrm{BV}_{\mathrm{loc}}(\mathbb{R}^d)),
\end{equation}
where we have defined the rescaling function on $[0,1]^d$ by 
\begin{equation}\label{the_a}
f^a = f(a(2\cdot-1),\hdots,a(2\cdot-1))\vert_{[0,1]^d},
\end{equation}
(we will throughout the proof use the superscript $a$ on a function to indicate (\ref{the_a})), 
where $\mathrm{BV}_{\mathrm{loc}}(\mathbb{R}^d)) = \{f : \mathrm{TV}(f\vert_{[-b,b]^d}) < \infty, \, \forall b > 0\}$ and $\mathrm{TV}(f\vert_{[-b,b]^d})$ denotes the total variation, in the sense of Hardy and Krause (see \cite{Niederreiter}), of $f$ restricted to $[-b,b]^d$. Note that since $V \in \mathrm{L}^{\infty}(\mathbb{R}^d) \cap  \mathrm{BV}_{\mathrm{loc}}(\mathbb{R}^d)$ and any $f \in  \mathrm{Ran}(P_{m})$ is smooth we have that $S_{m}(V,z) f \in \mathrm{L}^2(\mathbb{R}^d) \cap \mathrm{BV}_{\mathrm{loc}}(\mathbb{R}^d))$. Hence, we can define 
for $n,m \in \mathbb{N}$
\begin{equation}\label{a_and_N}
\begin{split}
\sigma_{1,n}(S_{m}(V,z)) &:= \min \{(\langle S_{m}(V,z) f, S_{m}(V,z) f\rangle_{n,N(n)})^{\frac{1}{2}} : f \in \mathrm{Ran}(P_{m}),  \|f\|=1\}\\
\sigma_{1,n}(\tilde S_{m}(V,z)) &:= \min \{(\langle \tilde S_{m}(V,z) f, \tilde S_{m}(V,z) f\rangle_{n,N(n)})^{\frac{1}{2}} : f \in \mathrm{Ran}(P_{m}),  \|f\|=1\},
\end{split}
\end{equation}
where $N(n) := \lceil n\phi(n)^4\rceil$ and where $\phi$ comes from the definition of $\Omega_{\phi}$. We also set
\begin{equation}\label{the_Zs}
\begin{split}
Z_{m}(z)_{ij} &= \langle S_{m}(V,z) \varphi_j, S_{m}(V,z) \varphi_i\rangle_{n,N(n)}, \quad i,j \leq m, \\
\tilde Z_{m}(z)_{ij} &= \langle \tilde S_{m}(V,z) \varphi_j, \widetilde S_{m}(V,z) \varphi_i\rangle_{n,N(n)}, \quad i,j \leq m.
\end{split}
\end{equation}
We have the following expansion
\begin{equation}\label{tedious2}
\begin{split}
\langle  S_{m}(V,z)\varphi_j,  S_{m}(V,z)\varphi_i\rangle_{n,N} =& 
\langle  \Delta \varphi_j,  \Delta \varphi_i\rangle_{n,N} - \langle
V\varphi_j,  \Delta \varphi_i\rangle_{n,N} - \langle  \Delta \varphi_j,
V\varphi_i\rangle_{n,N}\\  &+ \langle  V\varphi_j,  V\varphi_i\rangle_{n,N} - 2\Re(z)\langle \Delta\varphi_j,
\varphi_i\rangle_{n,N} \\&+ \langle 2\Re(z\overline{V})\varphi_j,
\varphi_i\rangle_{n,N}  +
|z|^2\langle  \varphi_j,  \varphi_i\rangle_{n,N},
\end{split}
\end{equation}
with a similar expansion holding for the matrix entries of $\tilde Z_{m}(z)$. Recall that the $\varphi_j$s are an enumeration of the Fourier transforms of the basis
$
\psi_{k,l}(x) = e^{2\pi i k x}\chi_{[0,1]}(x-l),
$
$k,l \in \mathbb{Z}.$
It is easy to derive closed form expressions for $\hat \psi_{k,l}$ and $\frac{\partial^2 \hat \psi_{k,l}}{\partial \xi^{2}}$, and these expressions are variations of products of exponential functions and functions of the form $x \mapsto 1/x^p$ for $p = 1,2,3$. It follows that the matrix entries of $Z_{m}(z)$ and $\tilde Z_{m}(z)$ also have closed form expressions in terms of point evaluations of the potential $V$ (at the Halton nodes - see (\ref{innerp})). Note that the Halton nodes are rational. Using (\ref{tedious2}), it follows that given $\Delta_1$-information for $\Lambda$, we can compute in finitely many arithmetic operations and comparisons, approximations to $Z_{m}(z)$ to any required accuracy. The same holds true for $\tilde Z_{m}(z)$. From Proposition {\ref{REC_SING}}, it follows that $\sigma_{1,n}(S_{m}(V,z))$ and $\sigma_{1,n}(\tilde S_{m}(V,z))$ can be computed to any given accuracy using finitely many arithmetic operations and comparisons.

Consider the quantity
\begin{equation}\label{getting_N}
\begin{split}
\tilde \beta(m, n) &:= (m + 1)m C_1(m,d,n) \\ 
& \qquad +  (2n)^dd^2\left(m^2 + \sigma^2\phi^2(n) + 2(\sigma m +1)(\phi(n) + 1)\right) \\
& \qquad \times  \left(1+ \sigma^2 +  2\sigma \right) C_2(m,d) C^*(b_1,\hdots,b_d)\frac{\log(N(n))^d}{N(n)},
\end{split}
\end{equation}
where $\sigma = 3^d-2^{d+1} +2$, $C_1(m,d,n)$ is defined in (\ref{C_1}), $C_2(m,d)$ is defined in (\ref{C_2}) and $C^*(b_1,\hdots,b_d)$ is defined in Definition \ref{C*}. The function $\tilde \beta$ may seem to come somewhat out of the blue, however, it stems from certain bounds in (\ref{intbounds202}) (see also (\ref{tau2})) on errors of discrete integrals related to (\ref{a_and_N}). For any $m,n$, we can compute an upper bound in $\mathbb{Q}$ for $\tilde \beta(m, n)$ accurate to $1/m^4$ using finitely many arithmetic operations over $\mathbb{Q}$. Denote such an approximation by $\tilde \tau(m, n)$ and set
\begin{equation}\label{the_n}
n(m) := \min\left\{n: \tilde \tau(m,n) \leq \frac{1}{m^3}\right\}.
\end{equation}
First, note that the choice of $N(n)$ in (\ref{getting_N}) implies that $\tilde \beta(m,n) \rightarrow 0$ as $n \rightarrow \infty$. Thus, $n(m)$ is well defined since $\tilde \tau(m, n)<{m^{-3}}$ for large $n$. Second, note that it is clear that $\tilde \tau$, and hence also $n(m)$, can be evaluated by using finitely many arithmetic operations and comparisons.

We now let $\zeta_{m}(z)$ be a non-negative real valued function with 
\begin{equation}\label{totalrecall}
0 \leq \zeta_{m}(z) - \min\{ \sigma_{1,n(m)}(S_{m}(V,z)),
\sigma_{1,n(m)}(\tilde S_{m}(V,z))\} \leq \frac{1}{m}.
\end{equation}
Combining the above remarks shows that, given $\Delta_1$-information for $\Lambda$, we can compute such an approximation $\zeta_{m}(z)$ for any $z \in \mathbb{C}$ in finitely many arithmetic operations and comparisons over $\mathbb{Q}$. We can now define 
$$
\Gamma_{m}(V) :=  \Upsilon_{B_{m}(0)}^{1/m}(\zeta_{m}), 
$$
where $\Upsilon_{B_m(0)}^{1/m}(\zeta_{m})$ is defined in (\ref{Upsilon}). We conclude this step by noting that $\Gamma_{m}$ are arithmetic towers of algorithms using $\Delta_1$-information for $\Lambda$.

\textbf{Step II:}
We show that 
$
	\Gamma_{m}(V) \rightarrow 
	\Xi_{\mathrm{sp}}(V),
$
as $m \to \infty.$
Note that, by the properties of the Attouch--Wets topology, and as discussed in Remark \ref{rem:topology}, it suffices to show that for any compact set $\mathcal{K} \subset \mathbb{C}$
\begin{equation}\label{suff_convergence}
\begin{split}
&d_{\mathcal{K}}(\Gamma_{m}(V), \Xi_{\mathrm{sp}}(V)) \longrightarrow 0, \quad m \rightarrow \infty,
\end{split}
\end{equation}	
where $d_{\mathcal{K}}$ is defined in (\ref{compact_conv}). To show (\ref{suff_convergence}) we start by defining 
\begin{equation}\label{the_gamma}
\begin{split}
\gamma(z) &:= \min\big\{\inf\{\|(-\Delta + V - zI)\psi\| : \psi \in \mathrm{W}^{2,2}(\mathbb{R}^d), \|\psi\| = 1\},\\
&\quad \inf\{\|(-\Delta + \overline{V} - \overline{z}I)\psi\| : \psi \in \mathrm{W}^{2,2}(\mathbb{R}^d), \|\psi\| = 1\}\big\} = \|(-\Delta + V - zI)^{-1}\|^{-1},
\end{split}
\end{equation}
where we use the convention that $\|(-\Delta + V - zI)^{-1}\|^{-1} = 0$ when $z \in \spc(-\Delta + V)$ and proceed similarly to the proof of Theorem \ref{spec_thm_main} with the following claim. Before we state the claim recall $h$ from the definition of $\Upsilon_{K}^\delta(\zeta)$ in Step II of the proof of Theorem \ref{spec_thm_main} in \S \ref{spec_proof}.

{\bf Claim:} Let $\mathcal{K} \subset \mathbb{C}$ be any compact set, and let $K$ be a compact set containing $\mathcal{K}$ such that $\spc(-\Delta + V) \cap K \neq \emptyset$ and $0<\delta<\epsilon<1/2$.
Suppose that $\zeta$ is a function with $\|\zeta-\gamma\|_{\infty,\hat{K}}:=\|(\zeta-\gamma)\chi_{\hat{K}}\|_\infty < \epsilon$ on 
$\hat{K}:=(K+B_{h(\diam(K)+2\epsilon)+\epsilon}(0))$, where $\chi_{\hat{K}}$ denotes the characteristic function of $\hat K$ and $h$ is the inverse of $g$. Finally, let 
$u$ be defined as in (\ref{u_function}).
 Then $\lim_{\xi\to 0}u(\xi)=0$ and
\begin{equation}\label{claim_unbound}
d_{\mathcal{K}}(\Upsilon_{K}^\delta(\zeta),\spc(-\Delta + V)) \leq u(\epsilon). 
\end{equation}
To prove the claim, we first show that 
\begin{equation}\label{one_way}
\sup_{s\in \Upsilon_{K}^\delta(\zeta) \cap \mathcal{K}}\dist(s,\spc(-\Delta + V)) \leq u(\epsilon).
\end{equation}
If $\Upsilon_{K}^\delta(\zeta) \cap \mathcal{K}= \emptyset$ then there is nothing to prove, thus we assume that $\Upsilon_{K}^\delta(\zeta) \cap \mathcal{K} \neq \emptyset$.
Let $z\in G^\delta(K)$ and recall $G^\delta(K)$, $h_{\delta}$ and $I_z =B_{h_\delta(\zeta(z))}(z)\cap(\delta(\Zb+\ii\Zb))$ from the definition of $\Upsilon_{K}^\delta(\zeta)$ in Step II of the proof of Theorem \ref{spec_thm_main} in \S \ref{spec_proof}. Notice that we may argue exactly as in (\ref{I_z}) and deduce that $I_z\subset\hat{K}$.
Suppose that $M_z \neq \emptyset$. Note that from
$$
\|(-\Delta+V-zI)^{-1}\|^{-1} \geq g(\dist(z,\spc(H))),
$$ 
the monotonicity of $h$, and the compactness of $\spc(-\Delta + V) \cap K \neq \emptyset$ there is a $y\in \spc(-\Delta + V)$ of minimal distance to $z$ with $|z-y|\leq h(\gamma(z))$.
Since $\|\zeta-\gamma\|_{\infty,\hat{K}} < \epsilon$, and by using the monotonicity of $h$, we get $|z-y|\leq h(\zeta(z)+\epsilon)$.
Hence, at least one of the $v\in I_z$, say $v_0$, satisfies $|v_0-y|<h(\zeta(z)+\epsilon)-h(\zeta(z))+2\delta$.
Thus, by noting that $\gamma(v_0) \leq  \dist(v_0,\mathrm{sp}(-\Delta + V))$, and by the assumption that $\delta < \epsilon$, we get 
$
\zeta(v_0)<\gamma(v_0)+\epsilon < h(\zeta(z)+\epsilon)-h(\zeta(z))+3\epsilon.
$
 By the definition of $M_z$, this estimate now holds for all points $w \in M_z$. Thus, we may argue exactly as in (\ref{w_estimate}) and deduce that 
$$
\dist(w,\spc(-\Delta + V)) \leq h(h(\zeta(z)+\epsilon)-h(\zeta(z))+3\epsilon),
$$
which yields (\ref{one_way}).
To see that 
\begin{equation}\label{the_other_way}
 \sup_{t\in \spc(-\Delta + V)\cap \mathcal{K}}\dist(\Upsilon_{K}^\delta(\zeta) ,t) \leq u(\epsilon),
 \end{equation}
(where we assume that $\spc(-\Delta + V)\cap \mathcal{K} \neq \emptyset$) take any $y \in \spc(-\Delta + V)\cap \mathcal{K} \subset K$. Then there is a point $z\in G^\delta(K)$ with $|z-y|<\delta<\epsilon$, hence 
$$
\zeta(z)<\gamma(z)+\epsilon \leq \dist(z,\spc(-\Delta + V))+\epsilon < 2\epsilon<1.
$$ Thus, $M_z$ is not empty and contains a point which is closer to $y$ than $h(\zeta(z))+\epsilon \leq h(2\epsilon)+\epsilon\leq u(\epsilon)$, and this yields (\ref{the_other_way}). The fact that $\lim_{\xi\to 0}u(\xi)=0$ is shown in Step II of the proof of Theorem \ref{spec_thm_main} in \S \ref{spec_proof}, and we have proved the claim.

Armed with this claim we continue on the path to prove (\ref{suff_convergence}). We define 
\begin{equation}\label{the_gammas22}
\gamma_{m,n}(z):= \min\{ \sigma_{1,n}(S_{m}(V,z)),
\sigma_{1,n}(\tilde S_{m}(V,z))\}.
\end{equation}
Then $\left\|\zeta_m - \gamma_{m,n(m)}\right\|_{\infty}\leq 1/m$ where 
$n(m)$ is defined as in (\ref{the_n}).
By Lemma \ref{lemma_gamma} (below), $\zeta_{m} \rightarrow \gamma$ locally uniformly, when $m \rightarrow \infty$. 
Let $m_0$ be large enough so that for all $m \geq m_0$, 
$
\Gamma_{m}(V) \cap \mathcal{K}  = \Upsilon_{B_{m_0}(0)}^{1/m} (\zeta_{m}) \cap \mathcal{K}.
$
Choose $K = B_{m_0}(0)$ and $\epsilon\in (0,1/2)$ as in the claim. Then, by the claim, there is an $m_1>m_0$ such that for every $m>m_1$, by \eqref{claim_unbound}, 
$
d_{\mathcal{K}}(\Gamma_{m}(V),\Xi_{\mathrm{sp}}(V)) \leq u(\epsilon).
$ 
Since $\lim_{\xi\to 0}u(\xi)=0$ then (\ref{suff_convergence}) follows.
\end{proof}

To finish this step of the proof, we need to establish the convergence of the functions $\gamma_m,$ $\zeta_{m}$ and $\gamma_{m,n}$. 
\begin{lemma}\label{lem:loc-unif}
Consider the functions $\gamma_{m,n}$ and $\gamma_{m}$ defined in (\ref{the_gammas22}) and (\ref{the_gammas11}) respectively. 
Then 
$
\gamma_{m,n} \rightarrow \gamma_{m},
$
locally uniformly as $n \rightarrow \infty.$
\end{lemma}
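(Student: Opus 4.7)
The strategy is to recast the claim as entrywise convergence of two finite-dimensional Hermitian matrices, use the Koksma–Hlawka inequality to control the quadrature error, and a tail estimate to control the truncation. First I would express both quantities as smallest eigenvalues of matrices. Writing $Z_m^{\infty}(z)$ and $\tilde Z_m^{\infty}(z)$ for the true Gram matrices with entries $\langle S_m(V,z)\varphi_j,S_m(V,z)\varphi_i\rangle$ and $\langle \tilde S_m(V,z)\varphi_j,\tilde S_m(V,z)\varphi_i\rangle$, we have $\sigma_1(S_m(V,z))^2 = \lambda_{\min}(Z_m^{\infty}(z))$ and, analogously, $\sigma_{1,n}(S_m(V,z))^2 = \lambda_{\min}(Z_m(z))$, where $Z_m(z)$ is the matrix from \eqref{the_Zs}. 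Since the smallest eigenvalue of a Hermitian matrix is continuous in its entries, and since taking square roots and $\min$ preserves local uniform convergence on the (uniformly bounded) sets in question, it suffices to show that $Z_m(z)\to Z_m^{\infty}(z)$ and $\tilde Z_m(z)\to \tilde Z_m^{\infty}(z)$ entrywise, locally uniformly in $z\in\Cb$, as $n\to\infty$.

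Next I would expand each entry as in \eqref{the_S}, so that each difference $Z_m(z)_{ij}-Z_m^{\infty}(z)_{ij}$ becomes a finite $\Cb$-linear combination (with coefficients that are polynomials in $z,\bar z$) of terms of the form
\[
E_n(i,j,s,t,g) := \langle g\,\Delta^{s}\varphi_j,\Delta^{t}\varphi_i\rangle_{n,N(n)} - \langle g\,\Delta^{s}\varphi_j,\Delta^{t}\varphi_i\rangle,
\]
with $s,t\in\{0,1\}$ and $g\in\{1,V,\overline V,|V|^2\}$. I would then decompose $E_n(i,j,s,t,g) = T_n + Q_n$, where $T_n$ is the truncation error arising from replacing the integral over $\Rb^d$ by the integral over the box $[-n,n]^d$, and $Q_n$ is the quadrature error on $[-n,n]^d$ between the true integral and the Halton sum.

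For $T_n$, the key point is that, by the explicit formulas for $\hat\psi_{k,l}$ in \eqref{def_psi}, each $\varphi_i$ and $\Delta\varphi_i$ lies in $\mathrm{L}^2\cap\mathrm{L}^\infty(\Rb^d)$ with polynomial decay, while $g$ is bounded by a constant depending only on $\|V\|_\infty$. Hence $T_n$ is controlled by $\|V\|_\infty^{\mathrm{order}}\cdot \int_{\Rb^d\setminus[-n,n]^d}|\varphi_i\Delta^{\max}\varphi_j|\,dx$, which tends to $0$ as $n\to\infty$. For $Q_n$, I would apply the Koksma–Hlawka inequality together with Theorem \ref{disc_bound}: the quadrature error is bounded by $(2n)^d \mathrm{TV}_{[-n,n]^d}(g\,\Delta^{s}\varphi_j\,\overline{\Delta^{t}\varphi_i})\cdot D^*_{N(n)}$. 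The Hardy–Krause total variation of the smooth factor $\Delta^{s}\varphi_j\,\overline{\Delta^{t}\varphi_i}$ restricted to $[-n,n]^d$ grows at most polynomially in $n$ (this is the origin of the constants $C_1(m,d,n)$ and $C_2(m,d)$ from \eqref{C_1}, \eqref{C_2}), while $\mathrm{TV}_{[-n,n]^d}(V)\leq\phi(n)$ by definition of $\Omega_1$, so the total variation of the product is bounded by a polynomial in $\phi(n)$ and $n$. Combined with $D^*_{N(n)}\leq C^*\log(N(n))^d/N(n)$ and the choice $N(n)=\lceil n\phi(n)^4\rceil$ (which is precisely calibrated to beat these growth factors, as is reflected in $\tilde\tau(m,n)\to 0$ in \eqref{getting_N}), we obtain $Q_n\to 0$.

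Finally, I would observe that all of the above bounds are independent of $z$, while the coefficients linking the $E_n$ to the matrix differences depend polynomially on $z,\bar z$ and hence are uniformly bounded on each compact $\mathcal K\subset\Cb$. Therefore $Z_m(z)\to Z_m^\infty(z)$ uniformly on $\mathcal K$, the same for the tilde versions, and local uniform convergence $\gamma_{m,n}\to\gamma_m$ follows. The main obstacle will be the second step above: carefully bookkeeping the total variation of the product $g\,\Delta^{s}\varphi_j\,\overline{\Delta^{t}\varphi_i}$ on the growing box $[-n,n]^d$ to make the polynomial growth explicit enough that the factor $N(n)=\lceil n\phi(n)^4\rceil$ indeed swamps it; this is exactly the content encoded in the auxiliary Lemma \ref{bounds_on_h}.
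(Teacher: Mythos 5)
Your proposal is correct and follows essentially the same route as the paper's proof: both split the error into a tail term over $(\mathbb{R}^d\setminus[-n,n]^d)$ and a quadrature term on $[-n,n]^d$, control the latter via the Koksma--Hlawka inequality together with the Halton-sequence discrepancy bound (Theorem \ref{disc_bound}) and the Hardy--Krause total variation estimates of Lemma \ref{bounds_on_h}, and rely on the calibration $N(n)=\lceil n\phi(n)^4\rceil$ to make the combined bound $\tau(\|V\|_\infty,m,n)\to 0$. You make explicit two points the paper leaves implicit — the passage from entrywise convergence of the Gram matrices to convergence of $\sigma_1$ via Lipschitz continuity of the smallest eigenvalue, and the inclusion of the case $g=1$ and of the $z$-dependent polynomial coefficients in (\ref{the_S}) when passing to local uniformity in $z$ — but these are bookkeeping refinements, not a different argument.
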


\begin{proof}%[Proof of Lemma \ref{lem:loc-unif}]

Note that we will be using the notation $\mathrm{TV}_{[-a,a]^d}(f) =  \mathrm{TV}(f|_{[-a,a]^d})$.
Let, for $s,t \in \{0,1\}$, $i,j \leq m$ and $u \in \{V,\overline{V}, |V|^2\}$
$$
I(u,\Delta^s\varphi_j,\Delta^t\varphi_i) = \int_{\mathbb{R}^d}u(x)\sum_{p\in\Phi(s), q\in \Phi(t)}h_{i,j,p,q}(x) \, dx,
$$
where
\begin{equation}\label{h}
\begin{split}
h_{i,j,p,q}(x) &:= \left(\hat \psi_{\theta(j)_1}(x_1) \cdots \frac{\partial^{\tilde s}\hat \psi_{\theta(j)_p}}{\partial x_p^{\tilde s}} (x_p) \cdots \hat \psi_{\theta(j)_d}(x_d)\right)\\
& \qquad \qquad \qquad \qquad \times \left( \overline{ \hat \psi_{\theta(i)_1}(x_1) \cdots \frac{\partial^{\tilde t}\hat \psi_{\theta(i)_q}}{\partial x_q^{\tilde t}}(x_q) \cdots \hat \psi_{\theta(i)_d}(x_d)}\right),
\end{split}
\end{equation}
and
\begin{equation}\label{Omega}
\Phi(t) =  
\begin{cases}
\{1,\hdots,d\}, & t = 1 \\
\{1\},         & t = 0.
\end{cases}
\end{equation}
Observe that by the definition of  $\gamma_{m,n}$ and  $\gamma_{m}$  in (\ref{the_gammas22}) and (\ref{the_gammas11}) the lemma follows if we can show that 
\begin{equation}\label{key_eq}
I(u,\Delta^s\varphi_j,\Delta^t\varphi_i) -  \frac{(2n)^d}{N}\sum_{k=1}^N u^n(t_k) \sum_{p\in\Phi(s), q\in \Phi(t)} h^n_{i,j,p,q}(t_k)) \longrightarrow 0, \qquad n \rightarrow \infty,
\end{equation}
where  $N = N(n)$ is from (\ref{the_Zs}), $i,j \leq m$, $s,t \in \{0,1\}$ and $u$ is either $V,\overline{V}, |V|^2$ (recall the notation $V^a$ from (\ref{the_a})).
Note that, by the multi-dimensional Koksma--Hlawka inequality (Theorem 2.11 in \cite{Niederreiter}) it follows that 
\begin{equation}\label{intbounds44}
\begin{split}
&\left|I(u,\Delta^s\varphi_j,\Delta^t\varphi_i) -  \frac{(2n)^d}{N}\sum_{k=1}^Nu^n(t_k)  \sum_{p\in\Phi(s), q\in \Phi(t)} h^n_{i,j,p,q}(t_k)\right| \\
&\leq \left\|u  \sum_{p\in\Phi(s), q\in \Phi(t)}h_{i,j,p,q} \chi_{R(n)}\right\|_{\mathrm{L}^1} + (2n)^d\cdot \mathrm{TV}_{[-n,n]^d}\left(u  \sum_{p\in\Phi(s), q\in \Phi(t)} h_{i,j,p,q}\right) D^*_N(t_1,\hdots,t_N),
\end{split}
\end{equation}
where $R(n) = ([-n,n]^d)^c$.
To bound the first part of the right-hand side of (\ref{intbounds44}) note that 
\begin{equation}\label{g_bound}
\left\|u  \sum_{p\in\Phi(s), q\in \Phi(t)}h_{i,j,p,q}\chi_{R(n)}\right \|_{\mathrm{L}^1} \leq \|u\|_{\infty} K_{i,j}(n),
\end{equation}
where
$$
K_{i,j}(n) := \sum_{p\in\Phi(s), q\in \Phi(t)} 
\left \langle \left|\chi_{([-n,n]^d)^c}\hat \psi_{\theta(j)_1} \cdots \frac{\partial^{\tilde s}\hat \psi_{\theta(j)_p}}{\partial x_p^{\tilde s}}  \cdots \hat \psi_{\theta(j)_d}\right|, \left|\hat \psi_{\theta(i)_1} \cdots \frac{\partial^{\tilde t}\hat \psi_{\theta(i)_q}}{\partial x_q^{\tilde t}} \cdots \hat \psi_{\theta(i)_d}\right|
\right \rangle, 
$$
(recall $\theta$ from (\ref{enumeration})) 
where $\chi_{([-n,n]^d)^c}$ denotes the characteristic function on $([-n,n]^d)^c$. To bound $K_{i,j}(n)$, note that it follows by the definition of $\psi_{k,l}$ with $k,l \in \mathbb{Z}$ in (\ref{def_psi}) and some straightforward integration that for $1 \leq p \leq d$ and $(k_p, l_p) =  \theta(j)_p$ we have   
\begin{equation}\label{int_bounds11}
\left|\hat \psi_{k_p,l_p}(x_p)\right| \leq 
 \begin{cases}
 1 & \text{when} \, \,  k_p -1 \leq x_p \leq  k_p + 1,\\
 \frac{1}{|x_p - k_p| + 1} & \text {otherwise},
\end{cases}
\end{equation}
\begin{equation}\label{int_bounds12}
\left|\frac{\partial^2\hat \psi_{k_p,l_p}}{\partial x_p^2}(x_p)\right| \leq 
 \begin{cases}
 l_p^2 + l_p + \frac{1}{3} & \text{when} \, \,  k_p -1 \leq x_p \leq  k_p + 1,\\
 \frac{ l_p^2 + l_p + \frac{1}{3}}{|x_p - k_p| + 1} & \text {otherwise}.
\end{cases}
\end{equation}
Hence, if 
\begin{equation*}%\label{k_and_l}
\begin{split}
\tilde k = \tilde k(m,d) &:= \max\{|k_p|:(k_p, l_p) =  \theta(j)_p, p \in \{1,\hdots, d\}, j \in \{1,\hdots, m\} \},\\
\tilde l = \tilde l(m,d) &:= \max\{|l_p|:(k_p, l_p) =  \theta(j)_p, p \in \{1,\hdots, d\}, j \in \{1,\hdots, m\} \},
\end{split}
\end{equation*}
 and $n > \tilde k$, then it follows that
\begin{equation}\label{bound_K}
\begin{split}
K_{i,j}(n)&\leq  d^2 \max_{\substack{p\in\Phi(s)\\q\in \Phi(t)\\s,t \in \{0,1\}}} \left\{ \left\langle \left|\chi_{([-n,n]^d)^c}\hat \psi_{\theta(j)_1} \cdots \frac{\partial^{2 s}\hat \psi_{\theta(j)_p}}{\partial x_p^{2 s}}  \cdots \hat \psi_{\theta(j)_d}\right|,\left|\hat \psi_{\theta(i)_1} \cdots \frac{\partial^{2 t}\hat \psi_{\theta(i)_q}}{\partial x_q^{2t}} \cdots \hat \psi_{\theta(i)_d}\right| \right \rangle  \right\} \\
& \leq d^2\left(4\frac{(\max\{\tilde l^2 + \tilde l  +1/3,1\})^2}{|n-\tilde k| +1}\right)^d=:C_1(m,d,n).
\end{split}
\end{equation}
To bound the second part of the right-hand side of (\ref{intbounds44}) observe that, by Lemma \ref{bounds_on_h} we have 
\begin{equation}\label{tv_est}
\begin{split}
&(2n)^d\cdot\mathrm{TV}_{[-n,n]^d}\left(u  \sum_{p \in\Phi(s), q \in \Phi(t)} h_{i,j,p,q}\right) \\
& \quad \leq  (2n)^dd^2\max_{p \in\Phi(s), q \in \Phi(t)}\big(\|u\|_{\infty}\|h_{i,j,p,q}\|_{\infty}  + \sigma^2\mathrm{TV}_{[-n,n]^d}(u)\mathrm{TV}_{[-n,n]^d}(h_{i,j,p,q})  \\
& \qquad +  \sigma\left(\mathrm{TV}_{[-n,n]^d}(u)\|h_{i,j,p,q}\|_{\infty} + \mathrm{TV}_{[-n,n]^d}(h_{i,j,p,q})\|u\|_{\infty} \right)\big)\\
& \quad \leq (2n)^dd^2\max\left\{\|V\|_{\infty},\|V^2\|_{\infty},\mathrm{TV}_{[-n,n]^d}(V), \mathrm{TV}_{[-n,n]^d}(|V|^2)\right\}\left(1+ \sigma^2 +  2\sigma \right) C_2(m,d),
\end{split}
\end{equation}
where $\sigma = 3^d-2^{d+1} +2$ and $C_2(m,d)$ is defined in (\ref{C_2}).
Thus, by (\ref{intbounds44}), (\ref{g_bound}), (\ref{bound_K}), (\ref{tv_est}), Lemma \ref{bounds_on_h} and Theorem \ref{disc_bound} (recall that $\{t_k\}_{k\in \mathbb{N}}$ is a Halton sequence) we get 
\begin{equation}\label{intbounds202}
\begin{split}
&\left|I(u,\Delta^s\varphi_j,\Delta^t\varphi_i) -  \frac{(2n)^d}{N}\sum_{k=1}^N V^n(t_k)  \sum_{p\in\Phi(s), q \in \Phi(t)} h^n_{i,j,p,q}(t_k)\right| \\
&\leq \max\{\|V\|_{\infty}, \|V\|_{\infty}^2 \} C_1(m,d,n) + (2n)^dd^2\max\left\{\|V\|_{\infty},\|V^2\|_{\infty},\mathrm{TV}_{[-n,n]^d}(V), \mathrm{TV}_{[-n,n]^d}(|V|^2)\right\}  \\
& \qquad \qquad \times   \left(1+ \sigma^2 +  2\sigma \right) C_2(m,d)\left(\frac{d}{N} + \frac{1}{N}\prod_{k=1}^d\left(\frac{b_k-1}{2\log(b_k)}\log(N) + \frac{b_k+1}{2}\right)\right)\\
& \leq \beta(\|V\|_{\infty}, m, n),
\end{split}
\end{equation}
where the last inequality uses the bound on the total variation of $V$ from (\ref{BV_bound}) and
\begin{equation}\label{tau2}
\begin{split}
\beta(\|V\|_{\infty}, m, n) 
&:= (\|V\|_{\infty} + 1)\|V\|_{\infty} C_1(m,d,n) \\ 
& \qquad +  (2n)^dd^2\left(\|V\|^2_{\infty} + \sigma^2\phi^2(n) + 2(\sigma \|V\|_{\infty} +1)(\phi(n) + 1)\right) \\
& \qquad \times  \left(1+ \sigma^2 +  2\sigma \right) C_2(m,d) C^*(b_1,\hdots,b_d)\frac{\log(N)^d}{N}, \quad N(n)=\lceil n\phi(n)^4\rceil
\end{split}
\end{equation}
  (recall (\ref{a_and_N})) where $C^*(b_1,\hdots,b_d)$ is defined in Definition \ref{C*}.
  Finally, note that, by the definition of $C_1(m,d,n)$ and the fact that we have chosen $N(n)$ according to (\ref{tau2}), it follows that $ \beta(\|V\|_{\infty}, m, n) \rightarrow 0$
as $n \rightarrow \infty.$ Hence, (\ref{key_eq}) follows via (\ref{tau2}), and the proof is finished.
\end{proof}

\begin{lemma}\label{bounds_on_h}

%\begin{equation}\label{h}
%\begin{split}
% & h_{i,j,m,n}(x) \\
%& \qquad = \hat \psi_{\theta(j)_1}(x_1) \cdots \frac{\partial^{s_2}\hat \psi_{\theta(j)_m}}{\partial x_m^{s_2}} (x_m) \cdots \hat \psi_{\theta(j)_d}(x_d) \times \overline{ \hat \psi_{\theta(i)_1}(x_1) \cdots \frac{\partial^{t_2}\hat \psi_{\theta(i)_n}}{\partial x_n^{t_2}}(x_n) \cdots \hat \psi_{\theta(i)_d}(x_d)},
%\end{split}
%\end{equation}

For all $a > 0$, $i,j \leq n_2$ and $m,n \leq d$:
\begin{itemize}
\item[(i)]
$
\mathrm{TV}(h^a_{i,j,m,n})= \mathrm{TV}_{[-a,a]^d}(h_{i,j,m,n}) \leq C_2(m,d),
$
\item[(ii)]
$
\|h^a_{i,j,m,n}\|_{\infty} \leq C_2(m,d),
$
\item[(iii)] for $u \in \mathrm{BV}_{\mathrm{loc}}(\mathbb{R}^d)$ and $\sigma = 3^d-2^{d+1} +2$ we have that 
\begin{equation*}
\begin{split}
\mathrm{TV}(u^a h^a_{i,j,p,q})= \mathrm{TV}_{[-a,a]^d}(u h_{i,j,p,q}) &\leq  \|u\|_{\infty}\|h_{i,j,p,q}\|_{\infty}  + \sigma^2\mathrm{TV}_{[-a,a]^d}(u)\mathrm{TV}_{[-a,a]^d}(h_{i,j,p,q})  \\
& \quad+  \sigma\left(\mathrm{TV}_{[-a,a]^d}(u)\|h_{i,j,p,q}\|_{\infty} + \mathrm{TV}_{[-a,a]^d}(h_{i,j,p,q})\|u\|_{\infty} \right),
\end{split}
\end{equation*}
\item[(iv)] $\mathrm{TV}_{[-a,a]^d}(|g|^2) \leq \|g\|^2_{\infty} + \sigma^2\mathrm{TV}^2_{[-a,a]^d}(g) + 2\sigma \|g\|_{\infty}  \mathrm{TV}_{[-a,a]^d}(g)$
\end{itemize}
where
\begin{equation*}%\label{C_2}
\begin{split}
C_2(m,d) &:=  2^d \left(2((\tilde l+1)^4 + \tilde l^4)^2(2(\tilde k+1) + 2)\right)^d,\\
\end{split}
\end{equation*}
and $\tilde k, \tilde l$ are defined in (\ref{k_and_l}).
\end{lemma}

\begin{proof}
To prove both (i) and (ii) we will use the easy facts that 
$\mathrm{TV}(h^a_{i,j,p,q}) = \mathrm{TV}_{[-a,a]^d}(h_{i,j,p,q})$ and $\mathrm{TV}(g^a h^a_{i,j,p,q}) = \mathrm{TV}_{[-a,a]^d}(g h_{i,j,p,q})$.
To prove (i) of the claim let us first recall (see for example \cite{Niederreiter}, p. 19) that when $\psi \in C^1([-a,a]^d)$ then
\begin{equation}\label{VHK1}
\mathrm{TV}_{[-a,a]^d}(\psi) = \sum_{k=1}^d \sum_{1\leq i_1< \hdots < i_k \leq d}V^{(k)}(\psi; i_1,\hdots,i_k), 
\end{equation}
where $V^{(k)}(\psi; i_1,\hdots,i_k) = V^{(k)}(\psi_{i_1,\hdots,i_k})$
and
$$
\psi_{i_1,\hdots,i_k}: (y_1, \hdots, y_k) \mapsto \psi(\tilde y_1, \hdots , \tilde y_d), \qquad \tilde y_j = a, \, j \neq i_1,\hdots,i_k, \quad \tilde y_{i_j} = y_j, 
$$
$$
V^{(k)}(\varphi) = \int_{-a}^a \cdots \int_{-a}^a \left |\frac{\partial^k \varphi}{\partial x_1\cdots \partial x_k}\right |\, dx_1 \hdots dx_k, \qquad \varphi \in C^1([-a,a]^k).
$$
Note that from (\ref{h}) and (\ref{def_psi}) it follows that $h^a_{i,j,p,q} \in  C^{\infty}([0,1]^d)$, so by the definition of $h$ in (\ref{h}) we have that, for $k \in \{1,\hdots,d\}$ and $1\leq i_1< \hdots < i_k \leq d,$ 
\begin{equation}\label{VHK2}
\begin{split}
&V^{(k)}(h^a_{i,j,p,q}; i_1,\hdots,i_k) \\&
\qquad \leq \prod_{\mu=1}^d \max\Bigg[\max_{s,t = 0,2}\int_{-a}^a
\left|\frac{\partial}{\partial x_{\mu}}\left(\frac{\partial^s \hat \psi_{\theta(j)_{\mu}}}{\partial x_{\mu}^s}(x_{\mu})  \overline{\frac{\partial^{t}\hat \psi_{\theta(i)_{\mu}}}{\partial x_{\mu}^{t}}(x_{\mu})}  \right)\right|\, dx_{\mu},\\
& \qquad \qquad \qquad \qquad \max_{\stackrel{s,t = 0,2}{x_{\mu} \in [-a,a]}}
\left |
\frac{\partial^s \hat \psi_{\theta(j)_{\mu}}}{\partial x_{\mu}^s}(x_{\mu})  \overline{\frac{\partial^{t}\hat \psi_{\theta(i)_{\mu}}}{\partial x_{\mu}^{t}}(x_{\mu})}\right|
\Bigg], \quad \forall k,p, q \leq d.
\end{split}
\end{equation}
We will now focus on bounding the right-hand side of (\ref{VHK2}). Note that by using the definition of $\psi_{k,l}$ with $k,l \in \mathbb{Z}$ in (\ref{def_psi}) and some straightforward integration it follows that for $1 \leq p \leq d$ and $(k_p, l_p) =  \theta(j)_p$ we have 
\begin{equation}\label{int_bounds21}
\left|\frac{\partial \hat \psi_{k_p,l_p}}{\partial x_p}(x_p)\right| \leq 
 \begin{cases}
 l_p + \frac{1}{2} & \text{when} \, \,  k_p -1 \leq x_p \leq  k_p + 1,\\
 \frac{l + \frac{1}{2} }{|x_p - k_p| + 1} & \text {otherwise},
\end{cases}
\end{equation}
\begin{equation}\label{int_bounds22}
\left|\frac{\partial^3\hat \psi_{k_p,l_p}}{\partial x_p^3}(x_p)\right| \leq 
 \begin{cases}
 \frac{(l_p+1)^4 - l^4_p}{4} & \text{when} \, \,  k_p -1 \leq x_p \leq  k_p + 1,\\
 \frac{(l_p+1)^4 - l^4_p}{4(|x_p - k_p| + 1)} & \text {otherwise}.
\end{cases}
\end{equation}
Thus, by using (\ref{int_bounds11}), (\ref{int_bounds12}), (\ref{int_bounds21}) and (\ref{int_bounds22}) it follows that 
\begin{equation}\label{VHK3}
\begin{split}
&\max_{s,t = 0,2}\int_{-a}^a
\left|\frac{\partial}{\partial x_{\mu}}\left(\frac{\partial^s \hat \psi_{\theta(j)_{\mu}}}{\partial x_{\mu}^s}(x_{\mu})  \overline{\frac{\partial^{t}\hat \psi_{\theta(i)_{\mu}}}{\partial x_{\mu}^{t}}(x_{\mu})}  \right)\right|\, dx_{\mu} \\
& \qquad \qquad \qquad \qquad \leq 
 2\max_{s,t = 0,1,2,3}\int_{-\infty}^{\infty}
\left|\frac{\partial^s \hat \psi_{\theta(j)_{\mu}}}{\partial x_{\mu}^s}(x_{\mu})  \overline{\frac{\partial^{t}\hat \psi_{\theta(i)_{\mu}}}{\partial x_{\mu}^{t}}(x_{\mu})} \right|\, dx_{\mu}\\ 
& \qquad \qquad \qquad \qquad \leq 2((\tilde l+1)^4 + \tilde l^4)^2\left (2(\tilde k+1) + \int_{[-\infty,-1] \cup [1,\infty]} \frac{1}{y^2} \, dy   \right)\\
& \qquad \qquad \qquad \qquad =  2((\tilde l+1)^4 + \tilde l^4)^2\left (2(\tilde k+1) + 2\right),
\end{split}
\end{equation}
where $\tilde k := \max\{|k_p|:(k_p, l_p) =  \theta(j)_p, p \in \{1,\hdots, d\}, j \in \{1,\hdots, n\} \}$, 
$\tilde l := \max\{|l_p|:(k_p, l_p) =  \theta(j)_p, p \in \{1,\hdots, d\}, j \in \{1,\hdots, n\} \}$.
Moreover, by (\ref{int_bounds11}) and (\ref{int_bounds12})
\begin{equation}\label{VHK4}
 \max_{\stackrel{s,t = 0,2}{x_{\mu} \in [-a,a]}}
\left |
\frac{\partial^s \hat \psi_{\theta(j)_{\mu}}}{\partial x_{\mu}^s}(x_{\mu})  \overline{\frac{\partial^{t}\hat \psi_{\theta(i)_{\mu}}}{\partial x_{\mu}^{t}}(x_{\mu})}\right| \leq \max\{\tilde l^2 + \tilde l  +1/3,1\}, \qquad  i,j \leq m, \quad 1 \leq \mu \leq d.
\end{equation}
Hence, from (\ref{VHK2}), (\ref{VHK3}) and (\ref{VHK4}) it follows that for $k \in \{1,\hdots,d\}$ and $1\leq i_1< \hdots < i_k \leq d,$ 
$$
V^{(k)}(h^a_{i,j,p,q}; i_1,\hdots,i_k) \leq \left(2((\tilde l+1)^4 + \tilde l^4)^2(2(\tilde k+1) + 2)\right)^d
$$
and thus, by (\ref{VHK1}) we get that 
\begin{equation*}
\begin{split}
\mathrm{TV}_{[-a,a]^d}(h_{i,j,p,q}) &\leq  \left(2((\tilde l+1)^4 + \tilde l^4)^2(2(\tilde k+1) + 2)\right)^d \sum_{k=1}^d \binom{d}{k}\\
& \leq  2^d \left(2((\tilde l+1)^4 + \tilde l^4)^2(2(\tilde k+1) + 2)\right)^d,
\end{split}
\end{equation*}
and thus we have proved (i) in the claim.

To prove (ii) in the claim, we observe that by (\ref{def_psi}), (\ref{h})  and (\ref{VHK4}) it follows that  
$$
\|h^a_{i,j,p,q}\|_{\infty} \leq \prod_{\mu=1}^d  \max_{\stackrel{s,t = 0,2}{x_{\mu} \in [-\infty,\infty]}}
\left |
\frac{\partial^s \hat \psi_{\theta(j)_{\mu}}}{\partial x_{\mu}^s}(x_{\mu})  \overline{\frac{\partial^{t}\hat \psi_{\theta(i)_{\mu}}}{\partial x_{\mu}^{t}}(x_{\mu})}\right| \leq  \left(\max\{\tilde l^2 + \tilde l  +1/3,1\}\right)^d, 
$$
for $i,j \leq m$ and $p,q \leq d$. Obviously, the last part of the above inequality is bounded by $C_2(m,d),$
which yields the assertion.

To prove (iii) and (iv) we will use the fact (see \cite{Blumlinger}) that 
$$
\mathcal{A} = \{f \in \mathcal{M}([-a,a]^d): \|f\|_{\infty} + \mathrm{TV}_{[-a,a]^d}(f) < \infty\},
$$
where $\mathcal{M}([-a,a]^d)$ denotes the set of measurable functions on $[-a,a]^d$,
is a Banach algebra when $\mathcal{A}$ is equipped with the norm
$\|f\|_{\mathcal{A}} =  \|f\|_{\infty} + \sigma \mathrm{TV}_{[-a,a]^d}(f),$ where $\sigma > 3^d-2^{d+1} +1.$
We will let $\sigma = 3^d-2^{d+1} +2$. Hence, we get, by the Banach algebra property of the norm and (i) and (ii) that we already have proved, that 
\begin{equation*}
\begin{split}
\mathrm{TV}_{[-a,a]^d}(u h_{i,j,p,q}) &\leq \|u\|_{\infty}\|h_{i,j,p,q}\|_{\infty}  + \sigma^2\mathrm{TV}_{[-a,a]^d}(u)\mathrm{TV}_{[-a,a]^d}(h_{i,j,p,q})\\ 
& \quad +  \sigma\left(\mathrm{TV}_{[-a,a]^d}(u)\|h_{i,j,p,q}\|_{\infty} + \mathrm{TV}_{[-a,a]^d}(h_{i,j,p,q})\|u\|_{\infty} \right), \quad u \in \mathcal{A},
\end{split}
\end{equation*}
 finally proving (iii). The proof of (iv) is almost identical.
 \end{proof}

\begin{lemma}\label{lemma_gamma}
Recall $\zeta_m$ defined in (\ref{totalrecall}). Then,
$\zeta_m  \rightarrow \gamma$ locally uniformly, where $\gamma$ is defined in (\ref{the_gamma}).
Furthermore, if $m\geq \left\|V\right\|_{\infty}$ then we have
$$
\zeta_m(z)\geq \gamma_m(z)-\frac{2+\left|z\right|}{m},
$$
where  $\gamma_{m}$ is defined in (\ref{the_gammas11}).
\end{lemma}

\begin{proof}
Observe that $\gamma_{m} \rightarrow \gamma$ locally uniformly as $m \rightarrow \infty$. Indeed, let 
$
\mathcal{T} = \{\|(-\Delta + V + zI)\psi\| : \psi \in \mathrm{W}^{2,2}(\mathbb{R}^d), \|\psi\| = 1\}.
$
Then, since $\mathcal{S}$ is a core for $H$ (recall $\mathcal{S}$ from Step I of the proof of $\SCI(\Xi_{\mathrm{sp}},\Omega_{\phi,g})_{\mathrm{A}} = 1$) then every element in $\mathcal{T}$ can be approximated arbitrarily well by $\|(-\Delta + V + zI)\tilde \varphi\|$ for some $\tilde \varphi \in \mathcal{S}$, thus it follows from (\ref{the_gamma}) that we have convergence. Note that the convergence must be monotonically from above by the definition of $P_{m}$, and thus Dini's Theorem assures the locally uniform convergence.
Thus, it suffices to show that $|\zeta_m - \gamma_m| \rightarrow 0$ locally uniformly as $m \rightarrow \infty$.  

Note that if we define, for $z \in \mathbb{C}$, the operator matrices
\begin{equation}\label{the_ZsBBB}
\begin{split}
Z_{m}(z)_{ij} &= \langle S_{m}(V,z) \varphi_j, S_{m}(V,z) \varphi_i\rangle_{n,N}, \quad i,j \leq m, \\
\tilde Z_{m}(z)_{ij} &= \langle \tilde S_{m}(V,z) \varphi_j, \widetilde S_{m}(V,z) \varphi_i\rangle_{n,N}, \quad i,j \leq m, 
\quad N = \lceil n\phi(n)^4\rceil,
\end{split}
\end{equation}
where $n = n(m)$ is defined in (\ref{the_n})
and 
$$
W_{m}(z)_{ij} = \langle S_{m}(V,z) \varphi_j, S_{m}(V,z) \varphi_i\rangle, \quad i,j \leq m, 
$$ 
$$
\tilde W_{m}(z)_{ij} = \langle \tilde S_{m}(V,z) \varphi_j, \widetilde S_{m}(V,z) \varphi_i\rangle, \quad i,j \leq m,
$$ 
the desired convergence follows if we can show that $\|Z_{m}(z) - W_{m}(z)\|$ and $\|\tilde Z_{m}(z) - \tilde W_{m}(z)\|$ tend to zero as $m$ tends to infinity for all $z$ in some compact set. However, this follows by the choice of 
$n(m) = \min\{n: \tilde \tau(m,n) \leq \frac{1}{m^3}\}$ in (\ref{the_n}). In particular,  $\beta(m,m,n) = \tilde\beta(m,n)\leq\tilde\tau(m,n) $ and clearly $\beta(\|V\|_{\infty}, m, n) \leq \beta(m,m,n)$ for $\|V\|_{\infty} \leq m$ (recall $\beta$ from (\ref{tau2})). We also have
\begin{equation}\label{the_S}
\begin{split}
\langle  S_{m}(V,z)\varphi_j,  S_{m}(V,z)\varphi_i\rangle_{n,N} =& 
\langle  \Delta \varphi_j,  \Delta \varphi_i\rangle_{n,N} - \langle
V\varphi_j,  \Delta \varphi_i\rangle_{n,N} - \langle  \Delta \varphi_j,
V\varphi_i\rangle_{n,N}\\  &+ \langle  V\varphi_j,  V\varphi_i\rangle_{n,N} - 2\Re(z)\langle \Delta\varphi_j,
\varphi_i\rangle_{n,N} \\&+ \langle 2\Re(z\overline{V})\varphi_j,
\varphi_i\rangle_{n,N}  +
|z|^2\langle  \varphi_j,  \varphi_i\rangle_{n,N}.
\end{split}
\end{equation}
Thus it follows immediately by (\ref{intbounds202}) that
\begin{align*}
\max\big\{\left|Z_{m}(z)_{ij}-W_{m}(z)_{ij}\right|,|\tilde Z_{m}(z)_{ij}-\tilde W_{m}(z)_{ij}|\big\}&\leq (4(\left|z\right|+1)+\left|z\right|^2)\beta(\|V\|_{\infty}, m, n)\\
&\leq \frac{4(\left|z\right|+1)+\left|z\right|^2}{m^3}.
\end{align*}
Using the fact that the operator norm of a matrix is bounded by its Frobenius norm $\|\cdot\|_F$, it follows that for $z \in K \subset \mathbb{C}$, where $K$ is compact,  $\|Z_{m}(z) - W_{m}(z)\|_F = \mathcal{O}(\frac{1}{m^2})$ and $\|\tilde Z_{m}(z) - \tilde W_{m}(z)\|_F = \mathcal{O}(\frac{1}{m^2})$ for sufficiently large $m$. To see the explicit bound, note that the above shows for $\|V\|_{\infty} \leq m$ that
$$
\gamma_m(z)^2\leq \frac{4(\left|z\right|+1)+\left|z\right|^2}{m^2} + \zeta_m(z)^2\leq \left(\zeta_m(z)+ \frac{\sqrt{4(\left|z\right|+1)+\left|z\right|^2}}{m} \right)^2
$$
Taking square roots and re-arranging gives the result.
\end{proof}

\subsubsection{\bf Proof of the $\in \Sigma_1^A$ and $\in \Sigma_1^{A,\mathrm{eigv}}$ classifications in Theorem \ref{main_self_adjoint}}

%\noindent{\bf \emph{Error control and rest of proof of Theorem \ref{main_self_adjoint}}.}

In order to show the $\Sigma_1^A$ classification for $\{\Xi_{\mathrm{sp}},\Omega_{\phi,g}\}$, consider $\hat\Gamma_m(A)=\Gamma_{m+\left\lceil \left\|V\right\|_{\infty}\right\rceil }(V)$ where we now use the fact that an upper bound on $\left\|V\right\|_{\infty}$ is included in the evaluation functions. From Lemma \ref{lemma_gamma}, if $z\in\hat\Gamma_m(A)$ then
$$
\mathrm{dist}(z,\mathrm{sp}(-\Delta+V))\leq g^{-1}\left(\zeta_{m+\left\lceil \left\|V\right\|_{\infty}\right\rceil}(z)+\frac{2+\left|z\right|}{m}\right).
$$
This can be approximated from above to within an error that converges to zero as $m\rightarrow\infty$ using finitely many evaluations of the function $g$ at rational points. Taking the maximum over all $z\in \hat\Gamma_m(A)$ gives us an error bound which converges to $0$ uniformly on compact subsets of $\mathbb{C}$ as $m\rightarrow\infty$. The following shows this is enough for the $\Sigma_1^A$ error control.

\begin{lemma}
\label{sigma1_unbounded}
Let $\Xi:\Omega\rightarrow(\mathcal{C}(\mathbb{C}),d_{\mathrm{AW}})$ be a problem function and suppose that there is an arithmetic tower of algorithms $\{\Gamma_m\}$ for $\Xi$. Suppose also that there exists a function $E_m:\Gamma_m(A)\mapsto \mathbb{R}_{\geq 0}$ (which may depend on $A$) computed along with each $\Gamma_m$ (using finitely many arithmetic operations and comparisons) and converging uniformly to zero on compact subsets, such that
$$
\mathrm{dist}(z,\Xi(A))\leq E_m(z),\quad\forall z\in\Gamma_m(A).
$$ 
Suppose also that $\Gamma_m(A)$ is finite for each $m$ and $A$. Then we can compute in finitely many arithmetic operations and comparisons a sequence of non-negative numbers $b_m\rightarrow 0$ such that
$
\Gamma_m(A)\subset A_m
$ for some $A_m\in\mathcal{C}(\mathbb{C})$ with $d_{\mathrm{AW}}(A_m,\Xi(A))\leq b_m$. Hence, by taking subsequences if necessary, we can build an arithmetic $\Sigma_1^A$ tower for $\{\Xi,\Omega\}$.
\end{lemma}

\begin{proof}
Let $a^n_m=\sup\{E_m(z):z\in\Gamma_m(A)\cap B_n(0)\}$. Define
$$
A_m^n=\big((\Xi(A)+B_{a^n_m}(0))\cap B_n(0)\big)\cup(\Gamma_m(A)\cap\{z:\left|z\right|\geq n\}).
$$
It is clear that $\Gamma_m(A)\subset A_m^n$ and given $\{\Gamma_m(A),E_m(A)\}$ (we assume $\Gamma_m(A)\neq\emptyset$), we can easily compute a lower bound $n_1$ such that $\Xi(A)\cap B_{n_1}(0)\neq\emptyset$. Compute this from $\Gamma_1(A)$ and then fix it. Suppose that $n\geq 4n_1$, and suppose that $\left|z\right|<\left\lfloor n/4\right\rfloor$. Then the points in $A_m^n$ and $\Xi(A)$ nearest to $z$ must lie in $B_n(0)$ and hence
$
\mathrm{dist}(z,A_m^n)\leq \mathrm{dist}(z,\Xi(A))
$
and 
$
 \mathrm{dist}(z,\Xi(A))\leq \mathrm{dist}(z,A_m^n)+a^n_m.
$
It follows that $$d_{\mathrm{AW}}(A_m^n,\Xi(A))\leq a^n_m + 2^{-\left\lfloor n/4\right\rfloor}.$$
We now choose a sequence $n(m)$ such that setting $A_m=A_m^{n(m)}$ and $b_m=a^{n(m)}_n + 2^{-\left\lfloor n(m)/4\right\rfloor}$ proves the result. Clearly it is enough to ensure that $b_m$ is null. If $m<4n_1$ then set $n(m)=4n_1$, otherwise consider $4n_1\leq k\leq m$. If such a $k$ exists with $a_m^k\leq 2^{-k}$ then let $n(m)$ be the maximal such $k$ and finally if no such $k$ exists then set $n(m)=4n_1$. For a fixed $n$, $a^n_m\rightarrow 0$ as $m\rightarrow\infty$. It follows that for large $m$, we must have $a_m^{n(m)}\leq 2^{-n(m)}$ and that $n(m)\rightarrow\infty$.
\end{proof}

Finally, we extend the argument of \S \ref{spec_proof} for the approximate eigenvectors.

\begin{proof}[Proof that $\{\Xi_{\mathrm{sp}},\Omega_{\phi,\mathrm{SA}}\}\in  \Sigma^{A,\mathrm{eigv}}_1$]
We need only argue for the approximate eigenvectors and we sketch the proof, since it is a simple adaptation of the discrete case considered in \S \ref{spec_proof}. Consider a Schr\"odinger operator in $\Omega_{\phi,\mathrm{SA}}$ with potential $V$ and $z\in\hat\Gamma_m(V)$, where $\hat\Gamma_m$ is the constructed $\Sigma_1^A$ tower for $\Omega_{\phi,g}$. By taking subsequences if necessary, it suffices to show that we can compute a vector $\psi_m\in\mathbb{C}^m$ such that, for a given $\delta\in\mathbb{Q}_{>0}$ with $\delta<1$,
\begin{equation}
\label{approx_e_SCHRODINGER}
\langle Z_m(z)\psi_m,\psi_m\rangle \leq \sqrt{\sigma_1(Z_m(z))}+\delta,\quad 1-\delta<\|\psi_m\|<1,
\end{equation}
where $Z_m(z)$ is the Hermitian positive (semi-)definite matrix defined via (\ref{the_ZsBBB}). The vector $\psi_m$ will then correspond to the first $m$ coefficients with respect to the Gabor basis. To see why this is sufficient, note that if $T$ denotes the infinite matrix corresponding to $-\Delta+V-zI$ (with respect to the Gabor basis) and $P_m$ denotes the projection onto the span of the first $m$ basis functions, then (\ref{approx_e_SCHRODINGER}) implies that
$$
\|TP_m\psi_m\|^2=\langle T^*T\psi_m,\psi_m\rangle=\langle Z_m(z)\psi_m,\psi_m\rangle
$$
and that $\sqrt{\sigma_1(Z_m(z))}$ is bounded above by a computable null sequence since $z\in\hat\Gamma_m(V)$. We can then adapt the proof of $\{\Xi_{\mathrm{sp}},\Omega_{f}\cap \Omega_{\mathrm{N}}\}\in\Sigma_1^{A,\mathrm{eigv}}$, in \S \ref{spec_proof} with suitable approximations of $Z_m(z)$ (which can be computed with error control using $\Delta_1$ information by the above arguments) replacing the matrix $(P_{f(n)}\tilde{A}P_n)^*(P_{f(n)}\tilde{A}P_n)$.
\end{proof} 

\subsubsection{\bf Proof of the $\in \Pi_2^A$ classification in Theorem \ref{main_self_adjoint}}

Note that its is clear that none of the problems lie in $\Delta_1^G$. Hence to finish the proof of Theorem \ref{main_self_adjoint}, we must show that $\{\Xi_{\mathrm{sp},\epsilon},\Omega_{\phi}\}\in\Sigma_1^A$ since by taking $\epsilon\downarrow 0$ this will show $\{\Xi_{\mathrm{sp}},\Omega_{\phi}\}\in\Pi_2^A$ since we have $\Omega_{\phi,g}\subset\Omega_{\phi}$. Note that through the use of $\zeta_m$ and Lemma \ref{lemma_gamma} we can compute, using finitely many arithmetic operations and comparisons for any $z$, a function $\hat \gamma_m(z)$ that converges uniformly to $\gamma(z)$ from \eqref{the_gamma} on any compact subset of $\mathbb{C}$ with $\hat\gamma_{m}(z)\geq\gamma(z)$. The next Lemma then says that this is enough.

\begin{lemma}
\label{pseudospec_unbounded}
Suppose that $\hat\gamma_m(z)\geq\gamma(z)$ converge uniformly to $\left\|(-\Delta+V-zI)^{-1}\right\|^{-1}$ as $m \rightarrow \infty$ on compact subsets of $\mathbb{C}$. Set
$$
{\Gamma}_m(V)=(B_m(0)\cap\frac{1}{m}(\mathbb{Z}+i\mathbb{Z}))\cap\{z:\hat\gamma_m(z)<\epsilon\}.
$$
For large $m$, $\Gamma_m(V)\neq\emptyset$ so we can assume this without loss of generality. Also, $d_{\mathrm{AW}}({\Gamma}_m(V),\mathrm{sp}_{\epsilon}(-\Delta+V))\rightarrow 0$ as $m\rightarrow\infty$ and clearly $\Gamma_m(V)\subset\mathrm{sp}_{\epsilon}(-\Delta+V)$.
\end{lemma}

\begin{proof}
Since the pseudospectrum is non-empty, for large $m$, $\Gamma_m(V)\neq\emptyset$ so we may assume that this holds for all $m$ without loss of generality. We use the characterisation of the Attouch--Wets topology where it is enough to consider closed balls. Suppose that $n$ is large such that $B_n(0)\cap\mathrm{sp}_{\epsilon}(-\Delta+V)\neq\emptyset$. Since $\Gamma_m(V)\subset\mathrm{sp}_{\epsilon}(-\Delta+V)$, we must show that given $\delta>0$, there exists $N_1$ such that if $m>N_1$ then $\mathrm{sp}_{\epsilon}(-\Delta+V)\cap B_n(0)\subset{\Gamma_m(V)+B_{\delta}(0)}$. Suppose for a contradiction that this were false. Then there exists $z_j\in\mathrm{sp}_{\epsilon}(-\Delta+V)\cap B_n(0)$, $\delta>0$ and $m_j\rightarrow\infty$ such that $\mathrm{dist}(z_j,\Gamma_{m_j}(V))\geq \delta$. Without loss of generality, we can assume that $z_j\rightarrow z\in\mathrm{sp}_{\epsilon}(-\Delta+V)$. There exists some $w$ with $\left\|(-\Delta+V-wI)^{-1}\right\|^{-1}<\epsilon$ and $\left|z-w\right|\leq \delta/2$. Assuming $m_j>n+\delta$, there exists $y_{m_j}\in (B_{m_j}(0)\cap\frac{1}{{m_j}}(\mathbb{Z}+i\mathbb{Z}))$ with $\left|y_{m_j}-w\right|\leq 1/{m_j}$. It follows that
$$
\hat\gamma_{m_j}(y_{m_j})\leq \left|\hat\gamma_{m_j}(y_{m_j})-\gamma(y_{m_j})\right|+\left|\gamma(w)-\gamma(y_{m_j})\right|+\left\|(-\Delta+V-wI)^{-1}\right\|^{-1}.
$$
But $\gamma$ is continuous and $\hat\gamma_{m_j}$ converges uniformly to $\gamma$ on compact subsets. Hence for large $m_j$, $\hat\gamma_{m_j}(y_{m_j})<\epsilon$ so that $y_{m_j}\in\Gamma_{m_j}(V)$. But
$
\left|y_{m_j}-z\right|\leq \left|z-w\right|+\left|y_{m_j}-w\right|\leq \delta/2 + 1/{m_j}
$
which is smaller than $\delta$ for large $m_j$. This gives the required contradiction.
\end{proof}

\subsection{The case of unbounded potential $V$: The proof of Theorem \ref{thm:comp-res}}

In this section we prove Theorem \ref{thm:comp-res} on the SCI of spectra and pseudospectra of Schr\"odinger operators with unbounded potentials. First of all, we will build the $\Delta_2^A$ algorithms. Let us outline the steps of the proof first:
\begin{itemize}
\item[a.]
\emph{Compactness of the resolvent:} The assumptions on the potential imply that the operator $H$ has a compact resolvent $R(H,z)$ (see Proposition \ref{tarkan kompaktsuus}). Therefore the spectrum is countable, consisting of eigenvalues with finite-dimensional invariant subspaces.
\item[b.]
\emph{Finite-dimensional approximations:} The main part of the proof centres around showing that it is possible to construct, with finitely many evaluations of $V$, square matrices $\widetilde H_n$ whose resolvents (when suitably embedded into the large space) converge to $R(H,z_0)$ in norm  at a suitable point $z_0$  (see Theorem \ref{strong}). Note that this technique is very different from the techniques used so far in the paper and is only possible due to compactness.
\item[c.]
\emph{Convergence of the spectrum and pseudospectrum:} We use the convergence at $z_0$ to show convergence at other points $z$ in the resolvent set.
\end{itemize}
Once this is done, we prove that neither problem lies in $\Sigma_1^G\cup\Pi_1^G$.

As the argument is otherwise independent of the particular set-up, we start with a general discussion.  In the end, we demonstrate the construction of the matrices $\widetilde H_n$ and the convergence of the resolvents.  
We assume the following:

{\bf (i)  Assumptions on the operator $A$:
} Given a closed densely defined operator $A$ in a Hilbert space $\mathcal H$  such that at $z_0\in \mathbb C$ the resolvent operator
$R(z_0)=(A-z_0)^{-1}$ is compact  $R(z_0) \in \mathcal K (\mathcal H)$.  Thus  sp$(A)=\{ \lambda_j\}$,  the spectrum of $A$,  is at most countable  with no finite accumulation points.

 {\bf  (ii) Assumptions on the approximations $A_n$:}
 Suppose $A_n$ is a finite rank approximation to $A$  such that  if $E_n$ is the  orthogonal projection onto the range of $A_n$, then $A_n =A_nE_n$.   We put further $\mathcal H_n= E_n\mathcal H$ and  denote by $\widetilde A_n$ the matrix representing $A_n$ when restricted to the invariant subspace $\mathcal H_n$ w.r.t. some orthonormal basis.
Now, take the resolvent $(A_nE_n-zE_n)^{-1}$ of this restriction, extend it to $\mathcal H_n^\perp$ by zero, and denote this extension by $R_n(z)$.
Then $R_n(z)=R_n(z)E_n$, and $R_n(z)=(A_n-z)^{-1} +(I-E_n)z^{-1}$ for all $z\neq 0$ for which the inverse exists.
Finally we assume that $R_n(z_0)$ exist and
\begin{equation}\label{resolventtiapproksimaatio}
\lim_{n\rightarrow\infty}\| R_n(z_0) -R(z_0)\| =0.
\end{equation}

\subsubsection{\bf{Convergence of the spectrum and pseudospectrum.}}
The first step is to conclude that if the finite rank approximations to the resolvent converge in operator norm at one point $z_0$, then they also converge locally uniformly away from the spectrum of $A$.
To that end   denote by $U_r(\mu)$ the open disc at centre $\mu$ and radius $r$.
%To that end   denote by $B(\mu,r)$ the open disc at center $\mu$ and radius $r$.

\begin{proposition}\label{resolventtienkovnvergenssi}
Suppose  $R(z)$ and $R_n(z)$ are as above and satisfy  \eqref{resolventtiapproksimaatio}.
Let $\mathcal K \subset \mathbb C$ be compact, $r>0$ and  define
$
\mathcal K_r = \mathcal K  \setminus  \bigcup_j  U_r(\lambda_j).
$
Then for large enough $n$, $R_n(z)$ exists for all $z \in \mathcal K_r$ and 
$
\sup_{z\in \mathcal K_r}\| R_n(z) -R(z)\| \rightarrow 0
$ 
as $n \rightarrow \infty$.
\end{proposition}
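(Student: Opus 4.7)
The plan is to use the second resolvent identity to reduce uniform convergence on $\mathcal{K}_r$ to the given norm convergence at the single point $z_0$. I would proceed in three steps.

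First, I would establish representation formulas. From the factorization $(A-z) = (A-z_0)\bigl(I - (z-z_0)R(z_0)\bigr)$ and the second resolvent identity one gets
$$R(z) = \bigl(I - (z-z_0)R(z_0)\bigr)^{-1} R(z_0)$$
whenever the right-hand inverse exists. The analogous identity holds for the finite-rank resolvent on $\mathcal{H}_n$ and, after extending by zero to $\mathcal{H}_n^\perp$ (using $R_n(\zeta) = R_n(\zeta)E_n$), one obtains
$$R_n(z) = \bigl(I - (z-z_0)R_n(z_0)\bigr)^{-1} R_n(z_0).$$

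Second, I would establish uniform invertibility. Compactness of $R(z_0)$ gives $\operatorname{sp}(R(z_0)) = \{0\} \cup \{(\lambda_j - z_0)^{-1}\}_j$, so $I - (z-z_0)R(z_0)$ is invertible precisely when $z \in \rho(A)$. Since $\mathcal{K}_r \subset \rho(A)$ is compact and inversion is continuous on the open set of invertible operators, the quantity
$$M := \sup_{z \in \mathcal{K}_r} \bigl\|\bigl(I - (z-z_0)R(z_0)\bigr)^{-1}\bigr\|$$
is finite. With $C := \sup_{z \in \mathcal{K}_r}|z - z_0|$ one has the uniform estimate
$$\bigl\|\bigl(I - (z-z_0)R_n(z_0)\bigr) - \bigl(I - (z-z_0)R(z_0)\bigr)\bigr\| \leq C\,\|R_n(z_0) - R(z_0)\|,$$
so that for $n$ large enough to force $C\|R_n(z_0) - R(z_0)\| < 1/(2M)$, a Neumann-series argument yields invertibility of $I - (z-z_0)R_n(z_0)$ for every $z \in \mathcal{K}_r$, with inverse bounded by $2M$. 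In particular, $R_n(z)$ exists throughout $\mathcal{K}_r$.

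Third, I would deduce convergence by combining the above with the general identity $A^{-1} - B^{-1} = A^{-1}(B - A)B^{-1}$. Writing $A(z) := I - (z-z_0)R(z_0)$ and $B_n(z) := I - (z-z_0)R_n(z_0)$, this gives
\begin{align*}
R(z) - R_n(z) &= A(z)^{-1}\bigl(R(z_0) - R_n(z_0)\bigr) \\
&\quad + (z-z_0)\,A(z)^{-1}\bigl(R(z_0) - R_n(z_0)\bigr) B_n(z)^{-1} R_n(z_0).
\end{align*}
Inserting the uniform bounds $\|A(z)^{-1}\| \leq M$, $\|B_n(z)^{-1}\| \leq 2M$, together with $\|R_n(z_0)\| \leq \|R(z_0)\| + o(1)$, yields
$$\sup_{z \in \mathcal{K}_r}\|R(z) - R_n(z)\| \leq \bigl(M + 2CM^2\|R(z_0)\| + o(1)\bigr)\|R_n(z_0) - R(z_0)\| \longrightarrow 0.$$

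The argument is essentially routine; there is no real obstacle beyond assembling the perturbation estimate carefully. The one conceptual point is that compactness of $R(z_0)$ enters in Step 2 to guarantee that $\operatorname{sp}(A)$ is purely discrete, so that removing open discs around the eigenvalues actually leaves $\mathcal{K}_r$ inside the resolvent set and hence permits the Neumann-type perturbation to be carried out uniformly.
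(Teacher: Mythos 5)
Your proof is correct and follows essentially the same route as the paper's: factor $R(z)$ and $R_n(z)$ through $z_0$ via the second resolvent identity, obtain a uniform bound on $\|(I-(z-z_0)R(z_0))^{-1}\|$ over the compact $\mathcal K_r$, and close with a Neumann-series perturbation estimate driven by \eqref{resolventtiapproksimaatio}. The only cosmetic differences are that you place the inverse factor on the left of $R(z_0)$ rather than the right, and you justify the uniform bound by compactness and continuity of inversion where the paper invokes analyticity of the meromorphic family on $\mathcal K_r$; both yield the same estimate.
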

\begin{proof}
Clearly
$
R(z)=R(z_0) (I-(z-z_0)R(z_0))^{-1}
$
and
$
R_n(z)=R_n(z_0) (I-(z-z_0)R_n(z_0))^{-1}
$
for all $z$ in which $R(z)$, resp. $R_n(z)$, exist.
By (\ref{resolventtiapproksimaatio})  it suffices to prove the existence of $R_n(z)$ and
$$
\sup_{z\in \mathcal K_r}\| (I-(z-z_0)R_n(z_0))^{-1} -(I-(z-z_0)R(z_0))^{-1}\| \rightarrow 0.
$$
However, we know that $(I-(z-z_0)R(z_0))^{-1}$ is meromorphic  in the whole plane and hence analytic in the compact set $\mathcal K_r$  and in particular uniformly bounded.   But this means that it is sufficient to show that the inverses converge, which in turn is immediate from (\ref{resolventtiapproksimaatio}) since
$$
\sup_{z\in \mathcal K_r}\| (I-(z-z_0)R_n(z_0)) -(I-(z-z_0)R(z_0))\| \leq \| R_n(z_0) -R(z_0)\|+\sup_{z\in \mathcal K_r} |z-z_0| \ \ \|R_n(z_0)-R(z_0)\|.
$$
To see that this suffices,  write $T_n(z)= (I-(z-z_0)R_n(z_0))$, $T(z)=(I-(z-z_0)R(z_0))$ and
$
T_n(z)= T(z)[I+ T(z)^{-1}(T_n(z)-T(z))].
$
Then for large enough $n$ and $z\in \mathcal K_r$ by a Neumann series argument
$$
\|T_n(z)^{-1} -T(z)^{-1}\| \le \|T(z)^{-1}\| \  [(1-\|T(z)^{-1}\| \|T_n(z) - T(z)\|)^{-1} -1].
$$
\end{proof}

\begin{proposition}\label{spektrinkonvergenssi}
Let $\mathcal K \subset \mathbb C$ be compact and $\delta>0$.  Then, for  all large enough $n$,
$
{\rm sp}(A)\cap \mathcal K  \subset \mathcal N_\delta({\rm sp}(A_n))
$ 
and
$
{\rm sp}(A_n)\cap \mathcal K  \subset \mathcal N_\delta({\rm sp}(A)).
$
\end{proposition}

\begin{proof} Since the eigenvalues are precisely the  poles of the resolvents, the claim follows immediately from the previous proposition.
\end{proof}

The last proposition gives the convergence of the spectra.  
The discussion on pseudospectra is somewhat more involved.   We need to know that the norm of the resolvent is not constant in any open set. The following is a theorem due to J. Globevnik, E.B. Davies and E. Shargorodsky which we formulate here as a lemma:

\begin{lemma}[\cite{Globevnik} and \cite{Davies_Sharg}]\label{eivakio}
Suppose   $A$ is a closed  densely defined operator in $\mathcal H$ such that the resolvent $R(z)=(A-z)^{-1}$ is compact.   Let $\Omega \subset \mathbb C$ be open and connected,  and assume that,  for all $z\in \Omega$,
$
\|R(z)\| \le M .
$
Then, for all $z\in \Omega$,
$
\|R(z)\| <M.
$
This is particularly true if $\mathcal H$ is finite-dimensional.
\end{lemma} 

The theorem  in \cite{Davies_Sharg} is  formulated for Banach spaces $X$ with the extra assumption that   $X$ or its dual are complex strictly convex, a condition which holds for Hilbert spaces. The case $\mathcal H$ being of finite dimension is already settled by \cite{Globevnik}.
We put
$
\gamma(z)=1/\|R(z)\|
$
and
$
\gamma_n(z)=1/\|R_n(z)\|
$
and summarise the properties of $\gamma$ and $\gamma_n$ as follows:

\begin{lemma}\label{ominaisuudet}
If (i) and (ii)  hold, then  
$
\gamma_n(z) \rightarrow \gamma(z)
$
uniformly on compact sets. 
Neither $\gamma$, nor $\gamma_n$ is constant in any open set and they have local minima only where they vanish. Additionally, 
$
\gamma(z) \le {\rm dist}(z,{\rm sp}(A)).
$
Consequently, 
\[{\rm sp}_\epsilon(A)= \{z \ : \ \gamma(z) \le \epsilon\}= \mathrm{cl}\{z \ : \ \gamma(z) < \epsilon\},\quad
{\rm sp}_\epsilon(A_n)= \{z \ : \ \gamma_n(z) \le \epsilon\}= \mathrm{cl}\{z \ : \ \gamma_n(z) < \epsilon\}.\] 
\end{lemma}

\begin{proof}
Note that $\gamma(z) \le {\rm dist}(z,{\rm sp}(A))$  follows from a reformulation of a general property of resolvents. Next, notice that $\|R_n(z)\|=\|R(A_n,z)\|$ and that the norms of resolvents are subharmonic away from spectra and therefore $\gamma$ and $\gamma_n$ cannot have proper local minima,  except when they vanish. Furthermore, they cannot be constant in an open set by Lemma \ref{eivakio}. 
%\RED{[Caution: This Lemma in its present form applies to $\gamma$, but not to $\gamma_n$]} \GREEN{[well,  you want  to add here something like "as the norm of $R_n$ is that of the resolvent of the restricted operator",,,,]}
%\RED{[No, since $A_n$ isn't an unbounded operator with a compact resolvent, I doubt that Davies/Shargorodsky apply to $A_n$. (in the introduction they write "Here, we consider another important class of unbounded operators, namely
%operators with compact resolvents. In Theorem 2.2 we show that the resolvent
%norm of such an operator cannot be constant on an open set if the
%underlying Banach space is complex strictly convex. So, the situation here
%is slightly different from what one has for bounded operators." However, Globevniks results already work ... I complemented this in the above Lemma]}

To conclude the  local uniform convergence, let $M$ be such that along the curve $\{|z|=M\}$ there are no eigenvalues of $A$ and choose $\mathcal K$ as the set $\{|z| \le M\}$. Choose any $\epsilon$, small enough so that the discs $\{|z-\lambda_j|\}\le \epsilon/3$  separate the eigenvalues inside $\mathcal K$. By Proposition \ref{resolventtienkovnvergenssi} we may assume that $n$ is large enough so that
 for $z\in \mathcal K_{\epsilon/3}$ (recall $\mathcal{K}_r$ from Proposition \ref{resolventtienkovnvergenssi})
  we have $|\gamma_n(z) -\gamma(z)| \le \epsilon/3$.
 On the other hand, if  $|z -\lambda_j|  \le \epsilon/3$  then $\gamma(z) \le \epsilon/3$ and, since $\gamma_n$ has to vanish also somewhere in that disc (again for large enough $n$) and $\gamma_n(z) \le {\rm dist}(z,{\rm sp}(A_n|_{\mathcal{H}_n}))$, we have  $\gamma_n(z) \le 2 \epsilon/3$ in that disc, hence
 $
 |\gamma_n(z) - \gamma(z) | \le \gamma_n(z) + \gamma(z) \le \epsilon.
 $
Thus we have $
 |\gamma_n(z) - \gamma(z) | \le \epsilon
 $ for all $z\in \mathcal K$.

Finally, to justify the equivalence of the characterisations of pseudospectra just notice that the level sets $\{z \ : \ \gamma(z) = \epsilon\}$ and $\{z \ : \ \gamma_n(z) = \epsilon\}$ cannot contain open subsets or isolated points.
 \end{proof}

\begin{lemma}\label{joukkojensuppeneminen}  
Assume $\varphi_n$ and $\varphi$ are continuous non-negative functions in $\mathbb C$  which have local minima only when they vanish,  are not constant in any open set and $\varphi_n$ converges to $\varphi$ uniformly in compact sets.  Set $\mathcal S:=\{ z \ : \ \varphi(z)\le 1\}$ and  $\mathcal S_n:=\{ z \ : \ \varphi_n(z)\le 1\}$.  Let $\mathcal K $ be compact and $\delta>0$. Then the following hold for all large enough $n$:
$
\mathcal S \cap \mathcal K \subset \mathcal N_\delta(\mathcal S_n), 
$
$
\mathcal S_n \cap \mathcal K \subset \mathcal N_\delta(\mathcal S),
$
where $\mathcal N_\delta(\cdot)$ denotes the open $\delta$ neighbourhood. 
\end{lemma} 
%\RED{[This is a very nice and univeral result. Can we apply it also at further positions in the paper ... ]}
\begin{proof}
Consider
$
\mathcal S \cap \mathcal K \subset \mathcal N_\delta(\mathcal S_n), 
$
 and assume that the left hand side is not empty.  Due to compactness of $\mathcal S \cap \mathcal K$ there are points $z_i \in \mathcal S \cap \mathcal K$  for $i=1,\dots,m$  such that
$
\mathcal S \cap\mathcal K \subset \bigcup_{i=1}^m U_{\delta/2}(z_i).
$
Notice that $\varphi(z_i)\le 1$. If $\varphi(z_i)<1$,  set $y_i=z_i$. Otherwise,  $\varphi(z_i)=1$, in which case $z_i$ cannot be a local minimum, but since $\varphi$ is not constant in any  open set,     there exists a point $y_i \in U_{\delta/2}(z_i)$ such that $\varphi(y_i) <1$.
But since $\varphi_n$ converges uniformly in compact sets to $\varphi$ we conclude that for all large enough $n$ and all $i$ we have 
$\varphi_n(y_i) < 1$.  Hence $z_i \in \mathcal N_{\delta/2}(\mathcal S_n)$ and so
$
\mathcal S \cap\mathcal K \subset \bigcup_{i=1}^m U_{\delta/2}(z_i) \subset \mathcal N_\delta(\mathcal S_n).
$

Consider now 
$
\mathcal S_n \cap \mathcal K \subset \mathcal N_\delta(\mathcal S),
$.  If it would not hold, there would exist a sequence 
$\{n_j\}$ and points $z_{n_j} \in \mathcal S_{n_j} \cap \mathcal K$ such that $z_{n_j} \notin \mathcal N_\delta(\mathcal S)$.  Suppose $z_{n_{j_k}} \rightarrow \hat z$.  Then dist$(\hat z, \mathcal S)\ge \delta$ as well.  However, writing
$
\varphi(\hat z) \le |\varphi(\hat z)-\varphi(z_{n_{j_k}})|
+|\varphi(z_{n_{j_k}})-\varphi_{n_{j_k}}(z_{n_{j_k}})|+\varphi_{n_{j_k}}(z_{n_{j_k}})
$
we obtain $\varphi(\hat z)\le1$ as the  first term on the right tends to zero because  $\varphi$ is continuous,  the second term  converge to zero as $\varphi_n$ approximate $\varphi$  uniformly in compact sets,  and $\varphi_{n_{j_k}}(z_{n_{j_k}})\le 1$.  Hence $\hat z\in \mathcal S \cap \mathcal K$ which is a contradiction.
\end{proof}
Note that the same argument for Lemma \ref{joukkojensuppeneminen} holds when replacing $\leq 1$ by $\leq \epsilon$ in the definitions of $\mathcal S$ and $\mathcal S_n$. Combining the results of this section, we can state the following result. 

\begin{proposition}\label{pseudokonvergenssi}
Let $\mathcal K \subset \mathbb C$ be compact and $\delta>0$.  Then,  for  all large enough $n$,
$$
{\rm sp}_\epsilon(A)\cap \mathcal K  \subset \mathcal N_\delta({\rm sp}_\epsilon(A_n)), \qquad 
{\rm sp}_\epsilon(A_n)\cap \mathcal K  \subset \mathcal N_\delta({\rm sp}_\epsilon(A)).
$$
\end{proposition}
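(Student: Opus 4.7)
The plan is to reduce the statement to a direct application of Lemma \ref{joukkojensuppeneminen}, applied to the rescaled functions $\varphi := \gamma/\epsilon$ and $\varphi_n := \gamma_n/\epsilon$. The two sublevel sets $\{z : \varphi(z)\le 1\}$ and $\{z : \varphi_n(z)\le 1\}$ coincide with $\mathrm{sp}_\epsilon(A)$ and $\mathrm{sp}_\epsilon(A_n)$ respectively, thanks to the characterizations $\mathrm{sp}_\epsilon(A)=\{z:\gamma(z)\le\epsilon\}$ and $\mathrm{sp}_\epsilon(A_n)=\{z:\gamma_n(z)\le\epsilon\}$ recorded at the end of Lemma \ref{ominaisuudet}.

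Next I would verify that $\varphi$ and $\varphi_n$ satisfy the hypotheses of Lemma \ref{joukkojensuppeneminen}. All of these are inherited from $\gamma$ and $\gamma_n$ by multiplication by the positive constant $1/\epsilon$: continuity and nonnegativity are immediate; the uniform convergence $\varphi_n\to\varphi$ on every compact subset of $\mathbb C$ follows from the uniform convergence $\gamma_n\to\gamma$ on compact sets stated in Lemma \ref{ominaisuudet}; the condition that local minima occur only where the function vanishes, and non-constancy on every open set, are both recorded for $\gamma$ and $\gamma_n$ in Lemma \ref{ominaisuudet} and are obviously preserved under scaling by a positive constant.

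With the hypotheses of Lemma \ref{joukkojensuppeneminen} in place, its conclusion applied to $\mathcal S=\mathrm{sp}_\epsilon(A)$ and $\mathcal S_n=\mathrm{sp}_\epsilon(A_n)$ yields, for all sufficiently large $n$, the two inclusions
\[
\mathrm{sp}_\epsilon(A)\cap\mathcal K\subset\mathcal N_\delta(\mathrm{sp}_\epsilon(A_n)),\qquad \mathrm{sp}_\epsilon(A_n)\cap\mathcal K\subset\mathcal N_\delta(\mathrm{sp}_\epsilon(A)),
\]
as required. There is no genuine obstacle here: all the hard work has already been absorbed into Lemma \ref{ominaisuudet} (which itself uses the Globevnik--Davies--Shargorodsky result Lemma \ref{eivakio} to rule out constancy on open sets and spurious local minima) and into the abstract set-convergence Lemma \ref{joukkojensuppeneminen}. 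The only point to watch is that both inclusions in Lemma \ref{joukkojensuppeneminen} really require the non-constancy hypothesis: without it, one could have $\varphi\equiv 1$ on an open neighbourhood of a boundary point of $\mathcal S$ that is not approximated by points of $\mathcal S_n$, which is precisely the reason Lemma \ref{eivakio} was invoked earlier.
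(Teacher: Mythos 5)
Your proposal is correct and takes essentially the same route as the paper: the paper presents Proposition~\ref{pseudokonvergenssi} immediately after Lemma~\ref{ominaisuudet} and Lemma~\ref{joukkojensuppeneminen} with the remark ``Combining these we can state the following result,'' and your argument simply makes that combination explicit (including the rescaling by $1/\epsilon$ needed to match the level-$1$ normalization in Lemma~\ref{joukkojensuppeneminen}, which the paper leaves implicit).
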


\subsubsection{\bf{The general algorithms.}}
Here $A$, $A_n$  are operators in $\mathcal H$ as in (i), (ii)  above, while $\widetilde A_n$ is the matrix representing $A_n$ when restricted to the finite-dimensional invariant subspace $\mathcal H_n=E_n \mathcal H$. In particular
$
\|R_n(z) \| = \|(\widetilde A_n -z)^{-1}\|.
$ 
Denoting by $\sigma_1$ the smallest singular value of a square matrix we have
$
\gamma_n(z) = 1/\|R_n(z)\| = \sigma_1(\widetilde A_n-zI).
$
Let $r>0$ and define $
G_r := B_r(0)\cap(\frac{1}{2r}{(\Zb+i\Zb)})$. Suppose that the matrices $\widetilde A_n$ are available with $\Delta_1$-information. From Proposition \ref{REC_SING} is follows that we can compute, in finitely many arithmetic operations and comparisons over $\mathbb{Q}$, an approximation to $\gamma_n(z)$ from above, accurate to $1/n^2$, and taking values in $\mathbb{Q}_{\geq 0}$. Call this approximation $\hat \gamma_n$ and let $\epsilon_n\in\mathbb{Q}$ be an approximation of $\epsilon$ from below accurate to $1/n^2$ and define $\Gamma^1_n$ and $\Gamma^2_n$ by 
\begin{equation}
\Gamma^1_n(A) =  \left\{z \in G_{n} : \hat \gamma_n(z)\leq \frac{1}{n}\right\},\qquad 
\Gamma^2_n(A) =  \left\{z \in G_{n} : \hat \gamma_n(z)\leq \epsilon_n\right\},
\end{equation}
which we shall prove to be the towers of algorithms for $\Xi_{\mathrm{sp}}$ and $\Xi_{\mathrm{sp},\epsilon}$ (as defined in Theorem \ref{thm:comp-res}), respectively.  
Observe that  $\Gamma_n^1(A)$ and $\Gamma_n^2(A)$ can be executed in a finite number of arithmetic operations over $\mathbb{Q}$ using $\Delta_1$-information. Also note that our proof will show that $\Gamma^i_n(A)\neq\emptyset$ for large $n$. Hence by our usual trick of searching for minimal $n(m)\geq m$ such that this is so, we can assume without loss of generality this holds for all $n$.

\begin{proposition}\label{yleinentapaus}
The algorithms satisfy the following:
\begin{equation}\label{H_n_conv}
\Gamma^1_n(A) \longrightarrow \mathrm{sp}(A), \qquad 
\Gamma^2_n(A) \longrightarrow \mathrm{sp}_{\epsilon}(A), \quad n \rightarrow \infty.
\end{equation}
\end{proposition}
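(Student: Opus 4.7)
The plan is to deduce the convergence of the discrete sets $\Gamma^1_n(A)$ and $\Gamma^2_n(A)$ from the already established convergences $\mathrm{sp}(A_n)\to\mathrm{sp}(A)$ (Proposition \ref{spektrinkonvergenssi}) and $\mathrm{sp}_\epsilon(A_n)\to\mathrm{sp}_\epsilon(A)$ (Proposition \ref{pseudokonvergenssi}), by showing that the two grid-restricted sets closely track $\mathrm{sp}_{1/n}(A_n)\cap B_n(0)$ and $\mathrm{sp}_\epsilon(A_n)\cap B_n(0)$ respectively. The core technical ingredient is that $z\mapsto \sigma_1(\widetilde A_n-zI)=\gamma_n(z)$ is $1$-Lipschitz on $\mathbb{C}$ (since $|\sigma_1(B)-\sigma_1(C)|\leq\|B-C\|$), combined with the fact that the grid $G_n$ has spacing $\tfrac{1}{2n}$ and covers $B_n(0)$, so every point of $B_n(0)$ is within $\tfrac{1}{2n}$ of some $w\in G_n$. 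Because convergence in $(\mathcal{M},d_{\mathrm{AW}})$ is equivalent to one-sided $\delta$-inclusions on every compact $\mathcal{K}\subset\mathbb{C}$ (see \eqref{EqSetConv}), I will fix arbitrary $\mathcal{K}$ and $\delta>0$ throughout.

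For $\Gamma^2_n(A)$ and $\mathrm{sp}_\epsilon(A)$: On the one hand, any $z\in\Gamma^2_n(A)$ has $\gamma_n(z)\leq\epsilon$, so $z\in\mathrm{sp}_\epsilon(A_n)$; Proposition \ref{pseudokonvergenssi} then yields $\Gamma^2_n(A)\cap\mathcal{K}\subset\mathrm{sp}_\epsilon(A_n)\cap\mathcal{K}\subset\mathcal{N}_\delta(\mathrm{sp}_\epsilon(A))$ for large $n$. Conversely, given $z\in\mathrm{sp}_\epsilon(A)\cap\mathcal{K}$, the second characterization in Lemma \ref{ominaisuudet} provides $z_1$ with $|z-z_1|<\delta/2$ and $\gamma(z_1)<\epsilon$; pick $\eta>0$ with $\gamma(z_1)<\epsilon-\eta$. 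Uniform convergence of $\gamma_n\to\gamma$ on a fixed compact neighbourhood of $\mathcal{K}$ (Lemma \ref{ominaisuudet}) gives $\gamma_n(z_1)<\epsilon-\eta/2$ for all large $n$. For $n$ large enough that $z_1\in B_n(0)$ and $\tfrac{1}{2n}<\min\{\delta/2,\eta/2\}$, choose $w\in G_n$ with $|w-z_1|<\tfrac{1}{2n}$; then the $1$-Lipschitz property gives $\gamma_n(w)<\epsilon$, so $w\in\Gamma^2_n(A)$ with $|z-w|<\delta$. Hence $\mathrm{sp}_\epsilon(A)\cap\mathcal{K}\subset\mathcal{N}_\delta(\Gamma^2_n(A))$ for all large $n$, which proves the second limit in \eqref{H_n_conv}.

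For $\Gamma^1_n(A)$ and $\mathrm{sp}(A)$: The one-sided inclusion $\mathrm{sp}(A)\cap\mathcal{K}\subset\mathcal{N}_\delta(\Gamma^1_n(A))$ for large $n$ follows from Proposition \ref{spektrinkonvergenssi}: given $\lambda\in\mathrm{sp}(A)\cap\mathcal{K}$, for large $n$ there is an eigenvalue $\mu_n\in\mathrm{sp}(A_n)$ with $|\lambda-\mu_n|<\delta/2$ and $\gamma_n(\mu_n)=0$; selecting $w\in G_n$ with $|w-\mu_n|<\tfrac{1}{2n}$ and using $1$-Lipschitzness yields $\gamma_n(w)\leq\tfrac{1}{2n}<\tfrac{1}{n}$, so $w\in\Gamma^1_n(A)$ and $|\lambda-w|<\delta$ once $\tfrac{1}{2n}<\delta/2$. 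For the reverse inclusion $\Gamma^1_n(A)\cap\mathcal{K}\subset\mathcal{N}_\delta(\mathrm{sp}(A))$, I argue by contradiction: if it failed, one could extract a subsequence $z_{n_k}\in\Gamma^1_{n_k}(A)\cap\mathcal{K}$ with $\dist(z_{n_k},\mathrm{sp}(A))\geq\delta$, converging to some $\hat z\in\mathcal{K}$ with $\dist(\hat z,\mathrm{sp}(A))\geq\delta$; but from $\gamma_{n_k}(z_{n_k})\leq 1/n_k\to 0$ together with $|\gamma(\hat z)-\gamma_{n_k}(z_{n_k})|\leq|\gamma(\hat z)-\gamma(z_{n_k})|+|\gamma(z_{n_k})-\gamma_{n_k}(z_{n_k})|\to 0$ (using continuity of $\gamma$ and locally uniform convergence $\gamma_n\to\gamma$), we would obtain $\gamma(\hat z)=0$, hence $\hat z\in\mathrm{sp}(A)$, a contradiction.

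The main (mild) obstacle is bookkeeping the interplay between three small parameters — the grid spacing $\tfrac{1}{2n}$, the threshold $\tfrac{1}{n}$ or $\epsilon$, and the Attouch--Wets tolerance $\delta$ — to guarantee that shifting from an element of $\mathrm{sp}_\epsilon(A_n)$ (resp.\ $\mathrm{sp}(A_n)$) to the nearest grid point does not push $\gamma_n$ above the threshold. The $1$-Lipschitz property of $\gamma_n$ and the openness provided by Lemma \ref{ominaisuudet} (non-constancy and absence of nontrivial local minima of $\gamma$) are precisely what make this bookkeeping go through. No finer structure of $A$ is needed.
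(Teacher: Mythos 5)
Your proof is correct in its essential ideas and in places takes a cleaner route than the paper. You reduce both limits to the already established convergences $\gamma_n\to\gamma$ (Lemma \ref{ominaisuudet}), $\mathrm{sp}(A_n)\to\mathrm{sp}(A)$, $\mathrm{sp}_\epsilon(A_n)\to\mathrm{sp}_\epsilon(A)$ (Propositions \ref{spektrinkonvergenssi}, \ref{pseudokonvergenssi}), and the key new technical observation -- that $z\mapsto\sigma_1(\widetilde A_n-zI)$ is $1$-Lipschitz, so a grid of mesh $\tfrac{1}{2n}$ lets you transfer small values of $\gamma_n$ to nearby points of $G_n$ -- is precisely what the paper's construction of $\Gamma^1_n,\Gamma^2_n$ is designed to exploit. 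Your treatment of the inclusion $\Gamma^1_n(A)\cap\mathcal{K}\subset\mathcal{N}_\delta(\mathrm{sp}(A))$ is actually tighter than the paper's: you extract a convergent subsequence and conclude $\gamma(\hat z)=0$ directly, whereas the paper inserts an auxiliary level $\epsilon_1>0$ and passes through $\mathrm{sp}_{\epsilon_1}(\widetilde A_n)$ and $\mathrm{sp}_{\epsilon_1}(A)$; both work, but your path avoids the extra parameter.

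There is, however, one genuine (if fillable) gap, and it concerns the direction $\mathrm{sp}_\epsilon(A)\cap\mathcal{K}\subset\mathcal{N}_\delta(\Gamma^2_n(A))$. As written, your $z_1$, the margin $\eta=\epsilon-\gamma(z_1)>0$, and consequently the threshold ``$n$ large enough'' all depend on the particular point $z$. So what you have actually shown is: for each $z\in\mathrm{sp}_\epsilon(A)\cap\mathcal{K}$ there is an $n_0(z)$ such that $z\in\mathcal{N}_\delta(\Gamma^2_n(A))$ for $n\geq n_0(z)$. To obtain the set inclusion you must produce a single $n_0$ working uniformly over all $z$; otherwise the conclusion ``for all large $n$'' does not follow. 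The fix is standard -- take a finite $\delta/4$-net $z^{(1)},\ldots,z^{(k)}$ of the compact set $\mathrm{sp}_\epsilon(A)\cap\mathcal{K}$, run your argument at each $z^{(i)}$ with tolerance $\delta/4$ instead of $\delta/2$, set $\eta:=\min_i(\epsilon-\gamma(z_1^{(i)}))>0$, and take $n_0:=\max_i n_0(z^{(i)})$ -- but it needs to be said. This is exactly the point that the paper's proof handles automatically by arguing by contradiction with a convergent subsequence $z_n\to z_0\in\mathrm{sp}_\epsilon(A)\cap\mathcal{K}$ and then working inside a single small ball $U_{\delta/3}(z_0)$; your direct construction trades that contradiction for explicit bookkeeping, but the compactness step that makes the bookkeeping uniform must still appear. (The analogous direction for $\Gamma^1_n$ does not suffer from this, because Proposition \ref{spektrinkonvergenssi} already gives a uniform $n_0$ and the grid threshold $\tfrac{1}{2n}<\delta/2$ is independent of $\lambda$.)
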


\begin{proof}
\emph{We begin with the second part of \eqref{H_n_conv}.} It suffices to show that given $\delta$ and a compact  ball $\mathcal{K}$, for large $n$:
$$
\textrm{(i)} \,\, \Gamma^2_n(A) \cap \mathcal{K} \subset 
\mathcal{N}_{\delta}( \mathrm{sp}_{\epsilon}(A)), \qquad  \textrm{(ii)} \,\, \mathrm{sp}_{\epsilon}(A) \cap \mathcal{K} \subset \mathcal{N}_{\delta}( \Gamma^2_n(A)).
$$
Note that
$
\Gamma^2_n(A)\subset \mathrm{sp}_{\epsilon}(\widetilde{A}_n)\cap G_n
$
and hence the first inclusion follows immediately from Proposition \ref{pseudokonvergenssi}. To see (ii), we argue by contradiction and suppose not. Then by possibly passing to an increasing subsequence $\{k_n\}_{n\in\N}\subset\N$ there is a sequence $z_n \in (\mathrm{sp}_{\epsilon}(A) \cap \mathcal{K})\setminus\mathcal{N}_{\delta}(\Gamma^2_n(A))$ for all $n$. Since $\mathrm{sp}_{\epsilon}(A) \cap \mathcal{K}$ is a compact set, by possibly extracting a subsequence, we have that $z_n \rightarrow z_0 \in \mathrm{sp}_{\epsilon}(A) \cap \mathcal{K}$.
Consider the open ball $U_{\delta/3}(z_0)$ which must contain all $z_n$ for $n$ sufficiently large.
Since $\gamma(z)$ is continuous, positive, not constant in any open set and without nontrivial local minima, it follows that ${\rm sp}_\epsilon(A)$ equals the closure of its  interior points. In particular  
 $\mathrm{int}(\mathrm{sp}_{\epsilon}(A)) \cap U_{\delta/3}(z_0) \neq \emptyset$.  
Suppose then $r>0$ and $y_0$ are such that the closure of the open ball $U_r(y_0)$ is inside  this open set: $ B_r(y_0) \subset \mathrm{int}(\mathrm{sp}_{\epsilon}(A)) \cap U_{\delta/3}(z_0)$.  
 We claim that $\{z:\hat\gamma_n(z)\leq\epsilon\} \cap U_r(y_0) = U_r(y_0)$ for all large enough $n$. Indeed, since $U_r(y_0)$ bounded away from the boundary of the pseudospectrum of $A$,  we have $\gamma(z) \le \epsilon -s$ for some $s>0$ and for all $z\in U_r(y_0)$.  Now the claim follows from the locally uniform convergence of $\gamma_n$ and hence of $\hat\gamma_n $.
   By the definition of $G_n$ we have that  $U_r(y_0) \subset  \mathcal{N}_{\delta/3}(U_r(y_0) \cap G_n)$ for large $n$, so, by the claim, $U_r(y_0) \subset \mathcal{N}_{\delta/3}(\{z:\hat\gamma_n(z)\leq\epsilon\}\cap G_n).$ Hence, since $U_r(y_0) \subset U_{\delta/3}(z_0)$, it follows that  
$$
z_n \in U_{\delta/3}(z_0) \subset \mathcal{N}_{2\delta/3}(U_r(y_0)) \subset \mathcal{N}_{\delta}(\{z:\hat\gamma_n(z)\leq\epsilon\}\cap G_n),
$$
for large $n$, 
contradicting $z_n \notin \mathcal{N}_{\delta}(\Gamma^2_n(A))$.
To prove the first part of \eqref{H_n_conv} we argue as follows. Given $\delta>0$ and compact $\mathcal{K}$, we need to show that for large $n$:
$$\textrm{(iii)} \, \, \mathrm{sp}(A) \cap \mathcal{K} \subset \mathcal{N}_{\delta}(\{z:\hat\gamma_n(z)\leq1/n\} \cap G_n) \qquad 
\textrm{(iv)} \,\,  \{z:\hat\gamma_n(z)\leq1/n\} \cap G_n \cap \mathcal{K}\subset  \mathcal{N}_{\delta}(\mathrm{sp}(A)).
$$
For notational convenience, we let $a_n=1/n-1/n^2$.

To show (iii), we start by defining $\widetilde G_n := \frac{1}{2n}{(\Zb+i\Zb)}$ and note that for $\lambda_j \in \mathrm{sp}(\widetilde{A}_n)$ we have that $\mathcal{N}_{a_n}(\{\lambda_j\}) \cap \widetilde G_n \neq \emptyset$ for large $n$. Hence, $\mathrm{sp}(\widetilde{A}_n) \subset  \mathcal{N}_{1/n}\left(\mathcal{N}_{a_n}\left(\mathrm{sp}(\widetilde{A}_n)\right) \cap \widetilde G_n\right)$. Since $\mathcal{N}_{a_n}(\mathrm{sp}(\widetilde{A}_n)) \subset \mathrm{sp}_{a_n}(\widetilde{A}_n)$, it follows that $\mathrm{sp}(\widetilde{A}_n) \subset  \mathcal{N}_{1/n}\left(\mathrm{sp}_{a_n}(\widetilde{A}_n) \cap \widetilde G_n\right)$. Now by the first part of Proposition \ref{spektrinkonvergenssi} we have that $\mathrm{sp}(A) \cap \mathcal{K} \subset \mathcal{N}_{\delta/2}(\mathrm{sp}(\widetilde{A}_n))$ for large $n$. Thus, combining the previous observations, we have for large $n$ that
$$
\mathrm{sp}(A) \cap \mathcal{K} \subset  \mathcal{N}_{\delta/2+1/n}\left(\mathrm{sp}_{a_n}(\widetilde{A}_n) \cap \widetilde G_n\right)\subset  \mathcal{N}_{\delta/2+1/n}\left(\{z:\hat\gamma_n(z)\leq1/n\} \cap \widetilde G_n\right).
$$
However, since $\mathcal{K}$ is bounded we have that there exists an $r > 0$ such that if $\lambda \in \widetilde G_n \cap U_r(0)^c$ then $\mathcal{N}_{\delta}(\{\lambda\}) \cap \mathrm{sp}(A) \cap \mathcal{K} = \emptyset$ for all $n$. Hence,  
$\mathrm{sp}(A) \cap \mathcal{K} \subset \mathcal{N}_{\delta}\left(\{z:\hat\gamma_n(z)\leq1/n\} \cap G_n\right)$
as desired.

To see (iv), let $r > 0$ be so large that $\mathcal{N}_{\delta}(U_r(0)^c) \cap \mathcal{K} = \emptyset$. Note that $\mathrm{sp}_{\epsilon}(A) \rightarrow \mathrm{sp}(A)$ as $\epsilon \rightarrow 0$. Thus, $\mathrm{sp}_{\epsilon_1}(A) \cap B_r(0) \subset  \mathcal{N}_{\delta/2}(\mathrm{sp}(A))$ for a sufficiently small $\epsilon_1$. Also, by the second part of  Proposition \ref{pseudokonvergenssi} it follows that 
$\mathrm{sp}_{\epsilon_1}(\widetilde{A}_n) \cap  \mathcal{K} \subset \mathcal{N}_{\delta/2}(\mathrm{sp}_{\epsilon_1}(A))$ for large $n$. However, by the choice of $r$ we have that $\mathrm{sp}_{\epsilon_1}(\widetilde{A}_n) \cap  \mathcal{K} \subset \mathcal{N}_{\delta/2}(\mathrm{sp}_{\epsilon_1}(A) \cap B_r(0))$. Clearly, $\mathrm{sp}_{1/n}(\widetilde{A}_n) \cap  \mathcal{K} \subset \mathrm{sp}_{\epsilon_1}(\widetilde{A}_n) \cap  \mathcal{K} $ for large $n$. Thus, by patching the above inclusions together we get that 
$$
\{z:\hat\gamma_n(z)\leq1/n\} \cap G_n \cap \mathcal{K}\subset\mathrm{sp}_{1/n}(\widetilde{A}_n) \cap  \mathcal{K} \subset \mathrm{sp}_{\epsilon_1}(\widetilde{A}_n) \cap  \mathcal{K} \subset  \mathcal{N}_{\delta/2}(\mathrm{sp}_{\epsilon_1}(A) \cap B_r(0)) \subset  \mathcal{N}_{\delta}(\mathrm{sp}(A)),
$$
for large $n$, as desired. This finishes the proof of Proposition \ref{yleinentapaus}.
\end{proof}

%\bnote{this can probably be removed, but I'm keeping it for now as it is a remnant from the old versions......}
%\begin{proof}[Proof of Theorem \ref{thm:comp-res}]
%We first remark that since $V\in\Omega_3$ we know by Theorem \ref{Comp} that $A$ has compact resolvent. To show that $\SCI(\Xi_i)_{\rat}=1$ for $i=1,2$ we define two sequences of functions $\Gamma_n^1$ and $\Gamma_n^2$ with domains $\{V_f\}_{f \in \Lambda_{n}}$ (we will use the notation $\Lambda_{n}$ instead of $\Lambda_{\Gamma_n}$ for brevity) and show that as $n\to\infty$ they converge to the spectrum and $\varepsilon$-pseudospectrum, respectively. For this construction the tower need not be adaptive: $\Lambda_n$ will only depend on $n$ and not $V$ (see Remark \ref{adaptive_non_adaptive}).
%\end{proof}
%\bnote{........}

Next, we pass from these general considerations to the Schr\"odinger case.

\subsubsection{\bf{Compactness of the resolvent.}}
We first show that the resolvent of the Schr\"odinger operator $H\in\Omega_\infty$ is compact. To prove this we recall some well known lemmas and definitions from \cite{Kato_pert95}.
\begin{definition}\label{m-sectorial}
An operator $A$ on the Hilbert space $\mathcal H$ is {\it accretive} if the ${\rm Re} \langle Ax,x\rangle \ge 0$ for $x \in \mathcal D(A)$.  It is called m-accretive if there exists no proper accretive extension.  If $A$ (possibly after shifting with a scalar) is m-accretive and additionally there exists $\beta <\pi/2$ such that $| \arg \langle  Ax,x \rangle | \le \beta$ for all $x \in\mathcal D(A)$, then $A$ is m-sectorial.
\end{definition}
\begin{lemma}[{\cite[VI-Theorem 3.3]{Kato_pert95}}]\label{Katon kirjasta}
Let $A$ be m-sectorial with $B={\rm Re} \ A$. $A$ has compact resolvent if and only if $B$ has.
\end{lemma}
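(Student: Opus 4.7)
The plan is to reduce both implications to a single compactness criterion involving the closed sesquilinear form associated with $A$. After shifting $A$ by a real scalar (which affects neither the property of being m-sectorial nor compactness of the resolvent), we may assume $\mathrm{Re}\langle Au,u\rangle \geq \|u\|^2$ on $\mathcal{D}(A)$. Kato's first representation theorem then furnishes a densely defined, closed, sectorial sesquilinear form $\mathfrak{t}$ on $\mathcal{H}$ with $\mathfrak{t}(u,v)=\langle Au,v\rangle$ for $u\in\mathcal{D}(A)$, $v\in\mathcal{D}(\mathfrak{t})$. Decomposing $\mathfrak{t}=\mathfrak{t}_R+i\,\mathfrak{t}_I$ into symmetric and skew-symmetric parts, the form $\mathfrak{t}_R$ is closed and satisfies $\mathfrak{t}_R(u,u)\geq\|u\|^2$; by the second representation theorem, $\mathfrak{t}_R$ is the form of the unique self-adjoint operator $B=\mathrm{Re}\,A$, and one obtains the crucial identification $\mathcal{D}(\mathfrak{t})=\mathcal{D}(\mathfrak{t}_R)=\mathcal{D}(B^{1/2})$ with the two natural norms $\|u\|_{\mathfrak{t}}^2 := \mathfrak{t}_R(u,u)$ and $\|u\|_{B^{1/2}}^2 := \|B^{1/2}u\|^2$ coinciding.

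Next I would invoke the classical criterion that a self-adjoint operator $B\geq I$ has compact resolvent if and only if the embedding $\mathcal{V}:=(\mathcal{D}(B^{1/2}),\|\cdot\|_{B^{1/2}})\hookrightarrow\mathcal{H}$ is compact; this follows immediately by applying the spectral theorem to the bounded self-adjoint operator $B^{-1}$, together with the observation that $B^{-1}$ factors as $\mathcal{H}\to\mathcal{V}\hookrightarrow\mathcal{H}$ with the first arrow an isometry. Hence it is enough to show that $A$ has compact resolvent if and only if $\mathcal{V}\hookrightarrow\mathcal{H}$ is compact.

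For this last step, the sectorial estimate $|\mathfrak{t}_I(u,u)|\leq\tan(\beta)\,\mathfrak{t}_R(u,u)$ makes $\mathfrak{t}$ a continuous, coercive sesquilinear form on $\mathcal{V}\times\mathcal{V}$; Lax--Milgram then produces a bounded inverse $\mathcal{H}\to\mathcal{V}$ which, followed by the embedding $\mathcal{V}\hookrightarrow\mathcal{H}$, agrees with $A^{-1}$. Thus $A^{-1}$ is compact precisely when $\mathcal{V}\hookrightarrow\mathcal{H}$ is compact, and by the previous paragraph this happens precisely when $B$ has compact resolvent.

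The main technical obstacle is the identification $\mathcal{D}(\mathfrak{t})=\mathcal{D}(B^{1/2})$ with equivalent Hilbert space norms, which is precisely the content of the first and second representation theorems of Kato and requires care regarding form closure and the sector angle; everything downstream is routine form calculus. Once this common form domain $\mathcal{V}$ is extracted, both directions of the equivalence collapse to the single statement that $\mathcal{V}\hookrightarrow\mathcal{H}$ is compact, and the lemma follows symmetrically.
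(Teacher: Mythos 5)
The paper itself does not prove this lemma --- it is quoted from Kato's book (VI-Theorem 3.3) and used as a black box --- so there is no internal argument to compare against. Judged on its own, your reduction of ``$B$ has compact resolvent'' to ``the form-domain embedding $\iota\colon\mathcal{V}\hookrightarrow\mathcal{H}$ is compact'' is sound, but the key sentence ``Thus $A^{-1}$ is compact precisely when $\mathcal{V}\hookrightarrow\mathcal{H}$ is compact'' conceals a genuine gap. Only one direction is justified: if $\iota$ is compact then $A^{-1}=\iota\circ G$ is compact, because the Lax--Milgram map $G\colon\mathcal{H}\to\mathcal{V}$ is bounded. The converse is not a formal consequence of that factorization, since $G$ is not bounded below --- its range is $\mathcal{D}(A)$, dense but not closed in $\mathcal{V}$ --- so compactness of $\iota G$ by itself tells you nothing about $\iota$. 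Writing the form as $\mathfrak{t}(u,v)=((I+iS)u,v)_{\mathcal{V}}$ with $S=S^*$ bounded on $\mathcal{V}$ (this is your sectorial estimate recast), one finds $A^{-1}=\iota(I+iS)^{-1}\iota^*$; for a general bounded invertible middle factor $L$ the implication ``$\iota L\iota^*$ compact $\Rightarrow\iota$ compact'' is false (e.g.\ take $\mathcal{V}=\mathcal{H}=\ell^2\oplus\ell^2$, $\iota(u,v)=(u,Kv)$ with $K$ compact injective of dense range, and $L$ the coordinate swap: then $\iota L\iota^*$ is compact while $\iota$ is not), so some structure of $L$ has to enter.

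What rescues the implication is precisely the positivity of $\mathrm{Re}\,L=(I+S^2)^{-1}$, which your argument never invokes. If $A^{-1}=\iota(I+iS)^{-1}\iota^*$ is compact, so is its adjoint $\iota(I-iS)^{-1}\iota^*$, hence so is the sum $2\iota(I+S^2)^{-1}\iota^*=2\bigl(\iota(I+S^2)^{-1/2}\bigr)\bigl(\iota(I+S^2)^{-1/2}\bigr)^*$. Compactness of $TT^*$ forces compactness of $T$, so $\iota(I+S^2)^{-1/2}$ is compact, and since $(I+S^2)^{-1/2}$ is a boundedly invertible positive operator on $\mathcal{V}$, $\iota$ itself is compact. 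This symmetrization step is exactly the content of Kato's factorization $A=B^{1/2}(I+iC)B^{1/2}$ (his VI-Theorem 3.2, which the paper already cites elsewhere in the proof of Lemma \ref{modifikaatioEn}); without it, your proof does not establish the direction ``$A$ has compact resolvent $\Rightarrow B$ has compact resolvent''.
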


\begin{lemma}[{\cite[V-Theorem 3.2]{Kato_pert95}}]\label{myoskirjasta}
If $T$ is closed and  the complement of $\mathrm{Num}(T)$ is connected, then for every $\zeta$ in the complement of the closure of $\mathrm{Num}(T)$ the following hold: the kernel of $T-\zeta$ is trivial and the range of 
$T-\zeta$  is closed  with constant codimension.
\end{lemma}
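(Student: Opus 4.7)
The lemma is quoted from Kato's book, but let me sketch the approach I would take to prove it from scratch.

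The plan is to first establish a uniform lower bound $\|(T-\zeta)x\| \geq d\|x\|$ for every $x \in \mathcal{D}(T)$, where $d := \mathrm{dist}(\zeta,\overline{\mathrm{Num}(T)})>0$. For normalized $x$, since $\langle Tx,x\rangle \in \mathrm{Num}(T)$, the Cauchy--Schwarz inequality gives
\[
\|(T-\zeta)x\|\;\geq\;|\langle (T-\zeta)x,x\rangle|\;=\;|\langle Tx,x\rangle-\zeta|\;\geq\; d.
\]
Homogenizing in $x$ yields the claimed bound. In particular $T-\zeta$ is injective, so $\ker(T-\zeta)=\{0\}$.

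Next I would use closedness of $T$ (hence of $T-\zeta$) together with the lower bound to show that $\mathrm{Ran}(T-\zeta)$ is closed: if $(T-\zeta)x_n \to y$, then the lower bound forces $\{x_n\}$ to be Cauchy, so $x_n \to x$, and closedness gives $x\in\mathcal{D}(T)$ with $(T-\zeta)x=y$. Equivalently, $T-\zeta$ is semi-Fredholm with trivial kernel.

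For the constancy of the codimension, my plan is to argue that the map $\zeta \mapsto \dim\big(\mathrm{Ran}(T-\zeta)\big)^{\perp} = \dim\ker(T^*-\bar\zeta)$ is locally constant on the open set $\mathbb{C}\setminus\overline{\mathrm{Num}(T)}$. The key step is a perturbation argument: having the uniform bound $\|(T-\zeta_0)x\|\geq d_0\|x\|$ at some $\zeta_0$, the operator $T-\zeta$ for $|\zeta-\zeta_0|<d_0/2$ is a bounded perturbation (in the graph-norm sense) of $T-\zeta_0$ by $(\zeta_0-\zeta)I$, a bounded operator of norm strictly smaller than the injection modulus of $T-\zeta_0$. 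Standard stability of the semi-Fredholm index under such small perturbations (which can be proved directly by constructing an approximate right inverse from the one for $T-\zeta_0$) shows that $\dim(\mathrm{Ran}(T-\zeta))^\perp$ is locally constant. Combined with the assumed connectedness of $\mathbb{C}\setminus\mathrm{Num}(T)$ --- which implies connectedness of the open subset $\mathbb{C}\setminus\overline{\mathrm{Num}(T)}$, since the boundary of $\mathrm{Num}(T)$ lies in $\mathrm{Num}(T)$'s closure but does not separate the component --- the codimension is constant throughout the whole resolvent-like set.

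The main obstacle is making the perturbation step rigorous for unbounded closed $T$: one must either invoke general semi-Fredholm stability theorems for closed operators or argue directly with a Neumann-series type construction in the Hilbert quotient $\mathcal{H}/\mathrm{Ran}(T-\zeta_0)$. The rest (lower bound, injectivity, closed range) is routine Cauchy--Schwarz plus the closed-graph property of $T$.
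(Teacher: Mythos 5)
The paper does not prove this lemma; it is imported verbatim from Kato~\cite[V-Theorem 3.2]{Kato_pert95}, so there is no in-paper argument to compare against. Your reconstruction follows the standard route and the first three stages are correct: the Cauchy--Schwarz lower bound $\|(T-\zeta)x\|\geq\mathrm{dist}(\zeta,\overline{\mathrm{Num}(T)})\|x\|$, injectivity, closedness of the range from closedness of $T$ plus the lower bound, and local constancy of the deficiency $\dim\ker(T^*-\bar\zeta)$ from semi-Fredholm stability under the bounded perturbation $(\zeta_0-\zeta)I$. You correctly flag the semi-Fredholm stability step as the technical heavy lifting, and that is indeed where one must cite (or redo) Kato's Chapter IV perturbation theory for closed operators.

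There is one genuine gap: your justification that connectedness of $\Cbb\setminus\mathrm{Num}(T)$ forces connectedness of $\Cbb\setminus\overline{\mathrm{Num}(T)}$ ("the boundary of $\mathrm{Num}(T)$ lies in $\mathrm{Num}(T)$'s closure but does not separate the component") is not a proof and is false for general sets. For an arbitrary $S\subset\Cbb$, connectedness of $\Cbb\setminus S$ does not imply connectedness of $\Cbb\setminus\overline{S}$; a slit annulus gives a counterexample. What rescues the implication here is the Toeplitz--Hausdorff theorem: $\mathrm{Num}(T)$ is always convex. If $\Cbb\setminus\overline{\mathrm{Num}(T)}$ were disconnected, the closed convex set $\overline{\mathrm{Num}(T)}$ would have to be a strip or a line. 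In the strip case its interior is nonempty and, since for a convex set $C$ in $\Rbb^2$ with $\interior(\overline{C})\neq\emptyset$ one has $\interior(C)=\interior(\overline{C})$, the open strip already lies inside $\mathrm{Num}(T)$; then $\Cbb\setminus\mathrm{Num}(T)$ sits inside the two disjoint closed half-planes bounding the strip, hence is disconnected. In the degenerate line case, a convex subset of a line whose closure is the whole line is the whole line, so again $\Cbb\setminus\mathrm{Num}(T)$ splits into two open half-planes. Either way the hypothesis is violated, giving the needed implication --- but only because of convexity. You should state and use that fact explicitly; without it the step does not go through.
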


\begin{proposition}\label{tarkan kompaktsuus}
Suppose  $V$ is continuous $\mathbb R^d \rightarrow \mathbb C$ satisfying the following:
$
V(x) =|V(x)| e^{i\varphi(x)}
$
such that 
$
|V(x)| \rightarrow \infty  \text{ as }  x \rightarrow \infty,
$
and there exist non-negative $\theta_1,\theta_2$  such that $ \theta_1+\theta_2<\pi$ and
$
-\theta_2 \le \varphi(x) \le \theta_1.$
Denote  by $h$ the operator $h=-\Delta+V$ with domain $\mathcal D(h)= C_c^\infty (\mathbb R^d)$ and put in $\mathrm{L}^2(\mathbb R^d)$ $H=h^{**}$.
Then  $H=-\Delta + V$ is a densely defined operator with compact resolvent, whose spectrum lies in the sector $\{z:\mathrm{arg}(z) \in [-\theta_2,\theta_1]\}$.
%Then $H=-\Delta+ V$ is a densely defined  m-sectorial \RED{[Really?! .. or a rotated copy of an m-sectorial op ?!]} operator with half-angle $\beta= (\theta_1+\theta_2)/2$   and has a compact resolvent.
\end{proposition}

\begin{proof}
%\bnote{Nice proof! I will do some basic editing (typos etc...)}
The proof goes as follows:  Notice first that the numerical range of $H$  lies in a sector with opening $2 \beta<\pi$. Then we turn the sector into the symmetric position around the positive real axis to get the operator $a(\alpha)$.  It is clearly enough to show that $A(\alpha)=a(\alpha)^{**}$  is an m-sectorial operator with half-angle $\beta=(\theta_1+\theta_2)/2$ which has a compact resolvent.
%we first turn the sector into the symmetric position around the positive real axis to get the operator $a(\alpha)$.  It is enough to show that $A(\alpha)=a(\alpha)^{**}$ satisfies the properties asked as those properties  clearly are independent of the rotation of the plane \RED{[Sectoriality is not rotation invariant.]}. 
Next, since the numerical range of $a(\alpha)$ is not the whole plane, the operator is closable. Then we conclude that every point away from the numerical range belongs to the resolvent set. This is done based on the fact that the adjoint shares the same key properties as $A(\alpha)$. Then the compactness of the resolvent follows by considering the resolvent of the real part of $A(\alpha)$.

Here is the notation.  Put $\alpha=(\theta_1-\theta_2)/2$ so that $|\alpha|<\pi/2$. 
Then with 
\begin{equation}\label{vartheta}
\vartheta(x)= \varphi(x)- \alpha
\end{equation}
 we have
$
a(\alpha):= e^{-i \alpha}h =-e^{-i \alpha} \Delta + |V(x)| e^{i\vartheta(x)}
$
and after extending $A(\alpha)=a(\alpha)^{**}$ , in particular
$
H(\alpha):=   {\rm Re} A(\alpha)= - {\cos \alpha} \ \Delta+  \cos\vartheta(x)|V(x)|.
$

We claim that the operator
$
A(\alpha):=e^{-i \alpha}H
$
is  m-sectorial with half-angle $\beta = (\theta_1+\theta_2)/2.$ Indeed, it is immediate that the numerical range satisfies the following
$
\mathrm{Num}(a(\alpha)) \subset \{z=re^{i\theta}  : \ |\theta| \le \beta, \ r\ge 0 \  \},
$
which is not the whole complex plane,
and we can therefore (by \cite[V-Theorem 3.4 on p. 268]{Kato_pert95}) consider the extended closed operator $A(\alpha)$ instead.  The next thing is to conclude that points away from this closed sector are  in the resolvent set of $A(\alpha)$. Take any point $\zeta=r e^{i\varphi}$ with $\beta < |\varphi| \le \pi$, $r>0$.  We need to conclude that $\zeta \notin {\rm sp}(A(\alpha))$. Since the complement of $\mathrm{Num}(A(\alpha))$ is connected, the following holds (by Lemma \ref{myoskirjasta}):  the operator  
$A(\alpha)-\zeta$ has closed range with constant codimension.  Thus, we need that the  range  is the whole space. Put for that purpose $T=A(\alpha) - \zeta.$
Suppose there is $g\not=0$ such that $ g \in {\rm Ran }(T)^{\perp}$.  Then for all $f\in \mathcal D(T)$  we have
$
\langle Tf,g \rangle =0
$
which means, as $\mathcal D(T)$ is dense, that $T^* g =0$.  But that is not the case as $A(\alpha)^* - \overline {\zeta} $  is also closed whose complement of the numerical range is connected and hence does not have a non-trivial kernel.  

The proof of Proposition \ref{tarkan kompaktsuus} can now be completed by invoking Lemma 
\ref{Katon kirjasta} since  it is well known (\cite{ReedSimonIV}, Theorem XIII.67) that (since $\alpha < \pi/2$) the self-adjoint operator $H(\alpha)$ has compact resolvent when the potential $|V(x)|$ tends to  infinity  with $x$.  
\end{proof}

We shall next consider the discretisation of  $H$ and of $A(\alpha)$.  It shall be clear that the discrete versions have their numerical ranges inside the same sectors, where the numerical range of an operator $T$ is denoted by $\mathrm{Num}(T)$. Thus all resolvents can be estimated using the fact that if $(T-\zeta)^{-1}$ is regular outside the closure of $\mathrm{Num}(T)$, then
$
\|(T-\zeta)^{-1}\| \le {1}/{{\rm dist}(\zeta, \mathrm{Num}(T))}.
$
  
\subsubsection{\bf{Discretizing the  Schr\"odinger operator.}}
We shall show how to assemble the matrices $\widetilde H_n$ mentioned above. The underlying Hilbert space is again $\mathrm{L}^2(\mathbb R^d)$ and 
we start with approximating the Laplacian.
Let $1 \leq j \leq d,\ t \in \mathbb{R}$ and define $U_{j,t}$ to be the one-parameter unitary group of translations
$$
U_{j,t}\psi(x_1,\hdots, x_d) = \psi(x_1,\hdots, x_j - t, \hdots, x_d)
$$ 
and let $P_j$ be the infinitesimal generator of $U_{j,t}$ so that 
$
U_{j,t} = e^{\ii tP_j}
$
and $P_j = \lim_{t \rightarrow 0}\frac{1}{\ii t}(U_{j,t} - I).$
Thus,  defining 
	$
	\Phi_n(x) = \frac{n}{\ii}(e^{\ii\frac{1}{n} x} -1)
	$
	with $n \in \mathbb{N}$ and $x \in \mathbb{R},$
it follows that
	\be\label{eq:laplacian-approx}
	|\Phi_n|^2(P_j)\psi(x) = n^2(-\psi(x_1,\hdots,x_j + 1/n,\hdots x_d) - \psi(x_1,\hdots,x_j -  1/n,\hdots x_d) + 2\psi(x))
	\en
is the discretised Laplacian in the $j$ direction. The full discretised Laplacian is therefore $\sum_{j=1}^d|\Phi_n|^2(P_j)$. Now we replace $V$ by an appropriate approximation. Consider the lattice $(\frac{1}{n}\mathbb{Z})^d$ as a subset of $\mathbb{R}^d$ and for $y \in (\frac{1}{n}\mathbb{Z})^d$ define the box
	\be\label{eq:qn}
	Q_n(y) = \left\{x = (x_1,\hdots, x_d):  x_j \in\left[ y_j - \frac{1}{2n}, y_j + \frac{1}{2n}\right), \, 1 \leq j \leq d\right\}.
	\en
	Let
	$
	S_n
	=
	\left[-\lfloor\sqrt{n}\rfloor, \lfloor\sqrt{n}\rfloor\right]^d
	\subset
	\R^d
	$
and define $E_n$ to be the orthogonal projection onto the subspace
	\begin{equation}\label{ESn}
	\begin{split}
	\left\{\psi \in  \mathrm{L}^2(\mathbb{R}^d) :  \psi = \sum_{y \in (\frac{1}{n}\mathbb{Z})^d\cap S_n} \alpha_y \chi_{Q_n(y)}, \, \alpha_y \in \mathbb{C}\right\},
	\end{split}
	\end{equation}
where $\chi_{Q_n(y)}$ denotes the characteristic function on $Q_n(y)$.
Define the approximate potential as
	$$
	V_n(x)
	=
	\begin{cases}
	V(y)&x\in Q_n(y)\cap S_n\text{ for some }y \in (\frac{1}{n}\mathbb{Z})^d,\\
	0&\text{otherwise}.
	\end{cases}
	$$
Note that $V_n=E_nV_nE_n$, but that, in general, $V_n\neq E_nVE_n$. Finally, we define the approximate Schr\"odinger operator $H_n: \mathrm{L}^2(\R^d)\to \mathrm{L}^2(\R^d)$ defined as
	\be\label{eq:approx-op-comp-resolv}
	H_n
	=
	E_n\sum_{j=1}^d|\Phi_n|^2(P_j)E_n+V_n.
	\en

%\begin{remark}\label{rek:matrix-rep}
%Note that the restriction $H_n\vert_{\mathrm{Ran}(E_n)}$ of $H_n$ to the image of $E_n$ has a matrix representation $\widetilde{H}_n\in\C^{m\times m}$ (where $m = \mathrm{dim}(\mathrm{Ran}(E_n))$) defined as follows. First, for $y_1,y_2\in(\frac{1}{n}\mathbb{Z})^d\cap S_n$,
%\[\langle |\Phi_n|^2(P_j)E_n \chi_{Q_n(y_1)}, \chi_{Q_n(y_2)}\rangle =
%\begin{cases}
%2n^{2-d} & y_1=y_2\\
%-n^{2-d} &  y_1-y_2=\pm 1/n e_j\\
%0 &  \text{otherwise}
%\end{cases}
%\]
%and $\langle V_n  \chi_{Q_n(y_1)},  \chi_{Q_n(y_2)} \rangle = n^{-d}V(y_1)$ when $y_1 = y_2$ and zero otherwise. Thus, we can form the matrix representation of $H_n\vert_{\mathrm{Ran}(E_n)}$ with respect to the basis $\{\chi_{Q_n(y)}\}_{y \in (\frac{1}{n}\mathbb{Z})^d \cap S_n}$. It is important to note that calculating the matrix elements of $\widetilde{H}_n$ requires knowledge only of $\{V_f\}_{f \in \Lambda_n}$ where we have
%$\Lambda_n := \left\{f_y: y \in ({n}^{-1}\mathbb{Z})^d \cap S_n\right\}$ and $V_{f_y}=f_y(V) = V(y).$
%\RED{[Hmm, this basis is not an ONB. But in the above general considerations it is required to have an ONB (since the resolvents of $H_n$ and its matrix representations shall have the same norms ...) Thus, a (re)normalization must be added ... I guess the basis $\{n^{d/2}\chi_{Q_n(y)}\}_{y \in (\frac{1}{n}\mathbb{Z})^d \cap S_n}$ should be the right choice, and then, as a nice side effect, the $d$s in the above formulas will disappear ...]}
%\end{remark}

\begin{remark}\label{rek:matrix-rep}
Note that the restriction $H_n\vert_{\mathrm{Ran}(E_n)}$ of $H_n$ to the image of $E_n$ has a matrix representation $\widetilde{H}_n\in\C^{m\times m}$ (where $m = \mathrm{dim}(\mathrm{Ran}(E_n))$) defined as follows. First, for $y_1,y_2\in(\frac{1}{n}\mathbb{Z})^d\cap S_n$,
\[\langle |\Phi_n|^2(P_j)E_n n^{d/2}\chi_{Q_n(y_1)}, n^{d/2}\chi_{Q_n(y_2)}\rangle =
\begin{cases}
2n^{2} & y_1=y_2\\
-n^{2} &  y_1-y_2=\pm 1/n e_j\\
0 &  \text{otherwise}
\end{cases}
\]
and $\langle V_n  n^{d/2}\chi_{Q_n(y_1)},  n^{d/2}\chi_{Q_n(y_2)} \rangle = V(y_1)$ when $y_1 = y_2$ and zero otherwise. Thus, we can form the matrix representation of $H_n\vert_{\mathrm{Ran}(E_n)}$ with respect to the orthonormal basis $\{n^{d/2}\chi_{Q_n(y)}\}_{y \in (\frac{1}{n}\mathbb{Z})^d \cap S_n}$. It is important to note that calculating the matrix elements of $\widetilde{H}_n$ requires knowledge only of $\{V_f\}_{f \in \Lambda_n}$ where we have
$\Lambda_n := \left\{f_y: y \in ({n}^{-1}\mathbb{Z})^d \cap S_n\right\}$ and $V_{f_y}=f_y(V) = V(y).$
\end{remark}

\subsubsection{\bf{Proof that $\{\Xi_{\mathrm{sp}},\Omega_\infty\}\in\Delta^A_2,\{\Xi_{\mathrm{sp},\epsilon},\Omega_\infty\}\in\Delta^A_2$}.}
 
We have so far shown that the Assumption (i)  holds, and we are left to show that the discretisation we have chosen satisfies Assumption (ii).  In particular, we need to demonstrate that our discretisation satisfies \eqref{resolventtiapproksimaatio}. That is the topic of the following theorem.

\begin{theorem}\label{strong}
Let $V\in C(\R^d)$ be sectorial as defined in \eqref{eq:sector} satisfying $|V(x)|\to\infty$ as $|x|\to\infty$, and let $h=-\Delta+V$ with $\mathcal D(h)=C_c^\infty(\R^d)$ and let $H=h^{**}$. Let $H_n$ be as in \eqref{eq:approx-op-comp-resolv}. Then  there exists $z_0$ such that 
$
\| (H-z_0)^{-1} -  (H_n -z_0)^{-1} E_n\| \rightarrow 0,
$
as $n \rightarrow \infty$.
\end{theorem}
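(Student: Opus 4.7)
The plan is to prove norm convergence of the resolvents at a single conveniently chosen $z_0$ by combining a strong convergence argument with a collective compactness argument that exploits both the compactness of $(H-z_0)^{-1}$ from Proposition \ref{tarkan kompaktsuus} and the confining nature of $V$. After the rotation $e^{-i\alpha}$ used in Proposition \ref{tarkan kompaktsuus}, the numerical range of $H$ lies in a sector of half-angle $\beta=(\theta_1+\theta_2)/2<\pi/2$ opening toward the positive real axis. The matrix $\tilde H_n$ has its numerical range in the same sector: the discrete Laplacian $E_n\sum_j|\Phi_n|^2(P_j)E_n$ is non-negative on $\operatorname{Ran}(E_n)$ because each $|\Phi_n|^2(P_j)$ is a non-negative Fourier multiplier, and $V_n$ takes values in the same sector as $V$. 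Picking $z_0=-M$ for large $M$ places $z_0$ at distance $\geq M$ from this common sector, so that $\|(H-z_0)^{-1}\|$ and $\|(H_n-z_0)^{-1}E_n\|$ are uniformly bounded by $1/M$.

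\textbf{Strong convergence.} I will first verify $(H_n-z_0)^{-1}E_n\varphi\to (H-z_0)^{-1}\varphi$ on the dense set $\{(H-z_0)\eta:\eta\in C_c^\infty(\R^d)\}$; the uniform resolvent bound will then extend this to all $\varphi\in L^2$. For $\eta\in C_c^\infty$, the formula \eqref{eq:laplacian-approx} shows that $E_n(-\Delta_n)E_n\eta$ is piecewise constant and equals, on each box $Q_n(y)$, the centered second-order finite difference of $\eta$ at the grid point $y$; smoothness of $\eta$ then yields $E_n(-\Delta_n)E_n\eta\to-\Delta\eta$ in $L^2$. Likewise $V_nE_n\eta\to V\eta$ in $L^2$ by continuity of $V$ and compactness of $\operatorname{supp}\eta$. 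Thus $H_nE_n\eta\to H\eta$ in $L^2$, and the elementary identity
\[
(H_n-z_0)^{-1}E_n(H-z_0)\eta-\eta=(H_n-z_0)^{-1}\bigl(E_nH\eta-H_nE_n\eta\bigr)-(I-E_n)\eta,
\]
combined with the uniform resolvent bound, produces the strong convergence.

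\textbf{Upgrading to norm convergence.} Decompose
\[
(H-z_0)^{-1}-(H_n-z_0)^{-1}E_n=(H-z_0)^{-1}(I-E_n)+\bigl[(H-z_0)^{-1}-(H_n-z_0)^{-1}\bigr]E_n.
\]
The first summand tends to zero in operator norm because its adjoint $(I-E_n)(H^*-\bar z_0)^{-1}$ is a compact operator composed with $(I-E_n)\to 0$ strongly, and $(I-E_n)$ converges uniformly on the relatively compact set $(H^*-\bar z_0)^{-1}(B_1)$. For the second summand the plan is to show that the family $\{(H_n-z_0)^{-1}E_n\}_n$ is collectively compact, so that the strong convergence established above upgrades to norm convergence via the standard weakly-convergent-subsequence argument. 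Given a bounded sequence $\{\phi_n\}\subset B_1$ and $u_n=(H_n-z_0)^{-1}E_n\phi_n$, sectoriality distributes the bound $\|(H_n-z_0)u_n\|\leq 1$ between $\|(-\Delta_n)u_n\|$ and $\|V_nu_n\|$. The potential bound combined with $|V(y)|\to\infty$ forces tightness of $\{u_n\}$: the $L^2$-mass outside any ball of radius $L$ is controlled by $1/\inf_{|y|\geq L}|V(y)|$. The discrete Laplacian bound provides a uniform $H^2$-type control which, via a discrete Rellich argument, yields $L^2$-precompactness on any bounded region. The main obstacle is precisely this discrete Rellich step: since $\operatorname{Ran}(E_n)$ does not sit inside $H^1(\R^d)$, the classical Rellich--Kondrachov embedding is unavailable, and one must instead establish a discrete Sobolev-type estimate controlling the $L^2$-modulus of continuity of the piecewise constant $u_n$ by $\|(-\Delta_n)u_n\|$ uniformly in $n$, and then invoke Fr\'echet--Kolmogorov.
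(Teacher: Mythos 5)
Your high-level plan --- strong convergence plus collective compactness, with confinement from $|V|\to\infty$ and frequency control from the discrete Laplacian --- matches the paper's, but the step ``upgrade to norm convergence via the standard weakly-convergent-subsequence argument'' fails as stated. Strong convergence $(H_n-z_0)^{-1}E_n\to(H-z_0)^{-1}$ together with collective compactness of $\{(H_n-z_0)^{-1}E_n\}$ does \emph{not} imply operator-norm convergence: the operators $K_n=e_1\otimes e_n$ (so $K_nx=\langle x,e_n\rangle e_1$) are collectively compact, tend to zero strongly, yet $\|K_n\|=1$ for all $n$. The weakly-convergent-subsequence argument needs strong convergence of the \emph{adjoints}: pairing $K_n\phi_n$ against a test vector $\psi$ yields $\langle\phi_n,K_n^*\psi\rangle$, and one must argue that $K_n^*\psi\to0$, not that $K_n\psi\to0$. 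This is precisely what the paper's Lemma \ref{LemCollComp} isolates, and it is why the paper proves $R(H_n,z_0)^*E_n\to R(H,z_0)^*$ strongly (via $H_n^*\psi\to H^*\psi$ on $C_c^\infty$ together with Kato's theorem), rather than the direct strong convergence you establish. The gap is repairable by symmetry ($H_n^*$, $H^*$ have the same structure with $V$ replaced by $\overline V$, which satisfies the same hypotheses), but you never say this, and as written the deduction is incorrect.

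The collective compactness step is also not carried out. You rightly identify the obstruction --- $\operatorname{Ran}(E_n)$ consists of piecewise constants, so classical Rellich--Kondrachov does not apply --- and propose a discrete Sobolev estimate plus Fr\'echet--Kolmogorov, but leave it as ``the main obstacle''. This is indeed where the real work lies: after Plancherel, the discrete Laplacian bound reads $\int\sum_j|\omega_j\Theta_n(\omega_j)|^2|\hat u_n|^2\,d\omega\leq C$ with $\Theta_n(\omega)=\sin(\omega/2n)/(\omega/2n)$, and since this weight vanishes along $\omega_j\in 2\pi n\Zb$ it does not give a uniform-in-$n$ tail estimate $\int_{|\omega|>R}|\hat u_n|^2\leq\epsilon$, which is what Fr\'echet--Kolmogorov requires. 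The paper's Lemma \ref{collect} resolves this by mollification: $\widehat{\Psi_n * u_n}=\Theta_n(\omega_1)\cdots\Theta_n(\omega_d)\hat u_n$, so the mollified sequence satisfies a genuine bound $\int|\omega|^2|\widehat{\Psi_n*u_n}|^2\,d\omega\leq C$, and one proves separately that $\|\Psi_n*u_n-u_n\|\to0$; this missing device is the heart of the argument. Finally, a smaller imprecision: sectoriality gives, as in the paper's \eqref{someestimate}, the quadratic-form bounds $\langle(-\Delta_n)u_n,u_n\rangle\leq C$ and $\langle|V_n|u_n,u_n\rangle\leq C$, not the operator-norm bounds $\|(-\Delta_n)u_n\|$, $\|V_n u_n\|\leq C$ that you state; the former are what your tightness estimate actually uses, but the latter do not follow from sectoriality when $V$ is complex.
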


Note that we immediately have 
$$
\text{ Theorem \ref{strong} } + \text{ Proposition \ref{yleinentapaus} } \Rightarrow \{\Xi_{\mathrm{sp}},\Omega_\infty\}\in\Delta^A_2,\{\Xi_{\mathrm{sp},\epsilon},\Omega_\infty\}\in\Delta^A_2.
$$
Thus, the rest of the section is devoted to prove Theorem \ref{strong}.

We shall treat the discretisations in a similar way as the continuous case, namely by ``rotating'' the operator into symmetric position with respect to the real axis and then, by taking the real part, we are dealing with a  sequence of self-adjoint invertible operators.
Before we prove this theorem, we will need a couple of lemmas. We recall the following definition.

\begin{definition}[Collectively compact]
A set $\mathcal T \subset B(\mathcal H)$ is called  {\it collectively compact} if  the set $\{ Tx \ :  \ T \in \mathcal T, \|x\| \le 1 \}$ has compact closure. 
\end{definition}
%We will make use of the following (see Thm 3.4, in \cite{AnselonePalmer} \RED{No, there is stated something different ...}).
%\begin{lemma}\label{Anselone ja Palmer} Let $\{T_n\}$ and $T$ be bounded linear operators in $\mathcal H$ such that
%$\{T_n\}$ and $\{T_n^*\}$ converge strongly to $T$ and to $T^*$ respectively and such that
%$\{T_n-T\}$ is collectively compact. Then $T_n$ converges to $T$ in the operator norm:
%$\|T_n-T\| \rightarrow 0$.
%\end{lemma}
\begin{lemma}\label{LemCollComp}
Let $\{K_n\}$ be a collectively compact operator sequence and $K_n^*\to 0$ strongly. Then $\|K_n\|\to 0$.
\end{lemma}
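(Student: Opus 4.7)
The plan is to argue by contradiction, exploiting the two hypotheses in complementary ways: collective compactness gives us a strong-convergence feature along some subsequence, while $K_n^* \to 0$ strongly kills any weak limit point. So the strategy is to produce a candidate nonzero limit from collective compactness and then use the strong convergence of the adjoints to show that it must in fact be zero.

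First, suppose for contradiction that $\|K_n\|\not\to 0$. Then there exists $\varepsilon>0$ and a subsequence, which I will again denote $\{K_n\}$, together with unit vectors $x_n\in\mathcal{H}$ with $\|x_n\|=1$ and $\|K_nx_n\|\geq\varepsilon$. Since the set $\{K_nx:n\in\Nb,\,\|x\|\leq 1\}$ is relatively compact by collective compactness, the sequence $\{K_nx_n\}$ lies in a compact set and so admits a norm-convergent subsequence, say $K_{n_k}x_{n_k}\to y$ in $\mathcal{H}$. Continuity of the norm forces $\|y\|\geq\varepsilon$, so in particular $y\neq 0$.

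Second, I would test $y$ against an arbitrary $z\in\mathcal{H}$. Using the adjoint identity $\langle K_{n_k}x_{n_k},z\rangle=\langle x_{n_k},K_{n_k}^*z\rangle$, the strong convergence $K_{n_k}^*z\to 0$ together with boundedness of $\{x_{n_k}\}$ (all unit vectors) gives $|\langle x_{n_k},K_{n_k}^*z\rangle|\leq\|K_{n_k}^*z\|\to 0$. On the other hand, by the norm convergence of $K_{n_k}x_{n_k}\to y$, the left-hand side tends to $\langle y,z\rangle$. Hence $\langle y,z\rangle=0$ for every $z\in\mathcal{H}$, which forces $y=0$, contradicting $\|y\|\geq\varepsilon$.

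There is no real obstacle here; the only subtlety is to make sure that both ingredients are used in the right order. Collective compactness must be invoked \emph{after} the unit vectors $x_n$ are chosen so that the compactness is applied to the image sequence $\{K_nx_n\}$ rather than to $\{K_n\}$ itself (which is only an operator-norm bounded set, not compact in general). Once the strongly convergent subsequence $K_{n_k}x_{n_k}\to y$ is in hand, the adjoint hypothesis delivers the contradiction in one line via the inner-product identity, completing the proof.
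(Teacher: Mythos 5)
Your proof is correct, and it takes a genuinely different route from the paper's. The paper argues directly: it invokes the standard fact that strong convergence $K_n^*\to 0$ becomes uniform on any fixed compact set, applies this to the compact set $\mathcal{B}=\clos\{K_nx:\|x\|\leq 1,\,n\in\Nb\}$ furnished by collective compactness, and then bounds
\[
\|K_n\|^2=\|K_n^*K_n\|=\sup_{\|x\|\leq 1}\|K_n^*K_nx\|\leq\sup_{y\in\mathcal{B}}\|K_n^*y\|\to 0.
\]
You instead argue by contradiction: extract unit vectors $x_n$ with $\|K_nx_n\|$ bounded away from zero, use collective compactness to get a norm-convergent subsequence $K_{n_k}x_{n_k}\to y\neq 0$, and then use $\langle K_{n_k}x_{n_k},z\rangle=\langle x_{n_k},K_{n_k}^*z\rangle\to 0$ to force $y\perp z$ for all $z$, i.e. $y=0$, a contradiction. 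Your approach is more elementary (it avoids the \enquote{uniform on compacts} lemma and uses only sequential compactness plus Cauchy--Schwarz), at the modest cost of an indirect argument and two levels of subsequences; the paper's approach is shorter, gives an explicit quantitative bound on $\|K_n\|^2$, and generalizes more cleanly to situations where one wants uniform estimates rather than mere convergence. One small wrinkle: from $\|K_n\|\geq\varepsilon$ you cannot in general pick unit $x_n$ with $\|K_nx_n\|\geq\varepsilon$ exactly (the supremum need not be attained); choose $x_n$ with $\|K_nx_n\|\geq\varepsilon/2$, say, which changes nothing downstream.
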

\begin{proof}
It is well known that on any compact set $\mathcal{B}$ the strong convergence $K_n^*\to 0$ turns into norm convergence:
$\sup\{\|K_n^*x\|:x\in \mathcal{B}\}\to_n 0.$ Since $\mathcal{B}:=\clos\{K_nx:\|x\|\leq 1, n\in\Nb\}$ is compact, we get
\[\|K_n\|^2=\|K_n^*K_n\|=\sup\{\|K_n^*K_nx\|:\|x\|\leq 1\}\leq\sup\{\|K_n^*y\|:y\in\mathcal{B}\}\to 0\quad\text{as}\quad n\to\infty.\]
\end{proof}
We also need  a modification of Lemma  \ref{Katon kirjasta}.
\begin{lemma}\label{modifikaatioEn}
Let $\{A_n\}$ be m-sectorial  with common semi-angle $\beta<\pi/2$ and denote $B_n= {\rm Re} \ A_n$.  Assume that $\{E_n\}$ is a sequence of orthogonal projections, converging strongly to identity and such that $A_n E_n = E_nA_nE_n$ and $B_n E_n = E_nB_nE_n$.    
Assume further that $\{B_n^{-1}\}$ is uniformly bounded. 
If $\{B_n^{-1} E_n\}$ is collectively compact, then so is $\{A_n^{-1}E_n \}$.
\end{lemma}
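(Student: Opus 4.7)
The plan is to reduce the claim to a square-root estimate via the first representation theorem. I will begin by writing, for each $n$, $A_n = B_n^{1/2}(I+iC_n)B_n^{1/2}$ where $C_n$ is bounded self-adjoint with $\|C_n\|\leq\tan\beta$ (uniform in $n$ since the semi-angle $\beta$ is common), so that $\|(I+iC_n)^{-1}\|\leq 1$ for all $n$. The invariance $B_n E_n = E_n B_n E_n$ forces both $\mathrm{Ran}(E_n)$ and $\mathrm{Ran}(E_n)^{\perp}$ to reduce $B_n$, hence functional calculus gives $[E_n,B_n^{\pm 1/2}]=0$; comparing $A_n E_n = E_n A_n E_n$ with the representation and cancelling the outer (injective) $B_n^{1/2}$ factors then yields $[E_n,C_n]=0$. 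Inverting the representation and exploiting these commutations produces the key factorization
\[
A_n^{-1}E_n \;=\; (B_n^{-1/2}E_n)\,(I+iC_n)^{-1}\,(B_n^{-1/2}E_n).
\]

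The heart of the proof is to transfer collective compactness from $\{B_n^{-1}E_n\}$ to $\{B_n^{-1/2}E_n\}$. I will use the characterization that a uniformly bounded family $\{K_n\}$ is collectively compact iff for every $\varepsilon>0$ there is a finite-dimensional orthogonal projection $P$ with $\|(I-P)K_n\|<\varepsilon$ for all $n$, which is just total boundedness of the image of the unit ball. Applying this to the positive self-adjoint family $\{B_n^{-1}E_n\}$, and using $\|T\|^{2}=\|TT^{*}\|$ together with $E_n B_n^{-1/2}=B_n^{-1/2}E_n$, I compute
\[
\|(I-P)B_n^{-1/2}E_n\|^{2} \;=\; \|(I-P)B_n^{-1}E_n(I-P)\| \;\leq\; \|(I-P)B_n^{-1}E_n\| \;<\; \varepsilon,
\]
so the same projections (with $\varepsilon$ replaced by $\sqrt{\varepsilon}$) witness collective compactness of $\{B_n^{-1/2}E_n\}$.

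To conclude, set $K_n := B_n^{-1/2}E_n$ and $T_n := (I+iC_n)^{-1}$, so that $A_n^{-1}E_n = K_n T_n K_n$ with $\{K_n\}$ collectively compact and uniformly bounded by some $L$ (note $\|K_n\|^{2}\leq\|B_n^{-1}E_n\|\leq\|B_n^{-1}\|$), and $\|T_n\|\leq 1$. For any $x$ with $\|x\|\leq 1$, the element $y_n := T_n K_n x$ satisfies $\|y_n\|\leq L$, hence $A_n^{-1}E_n x = K_n y_n$ lies in $L\cdot\overline{\{K_n y : \|y\|\leq 1,\,n\in\mathbb{N}\}}$, a compact set by collective compactness of $\{K_n\}$. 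This gives collective compactness of $\{A_n^{-1}E_n\}$. The main technical obstacle, demanding most care, is the commutation argument in the first step: verifying $[E_n,C_n]=0$ from $A_nE_n=E_nA_nE_n$ requires exploiting the uniqueness clause in the first representation theorem applied to the restrictions of $A_n$ and $B_n$ to the invariant subspace $\mathrm{Ran}(E_n)$, together with the fact that the bounded inverse $B_n^{-1/2}$ allows one to legitimately cancel the outer square-root factors.
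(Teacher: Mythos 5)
Your proof is correct, and while it shares the overall skeleton with the paper's proof (both begin from Kato's representation $A_n = B_n^{1/2}(I+iC_n)B_n^{1/2}$ with $\|C_n\|\leq\tan\beta$, both reduce the problem to collective compactness of $\{B_n^{-1/2}E_n\}$, and both finish by observing $A_n^{-1}E_n$ is that collectively compact family composed with a uniformly bounded one), the two key technical bridges are genuinely different. For the invariance relations, the paper writes $A_n^{-1}=\int_0^\infty e^{-tA_n}\,dt$ and immediately reads off $E_nA_n^{-1}E_n=A_n^{-1}E_n$; you instead derive $[E_n,C_n]=0$ by cancelling the injective $B_n^{1/2}$ factors in $A_nE_n=E_nA_nE_n$, a perfectly sound route once one is careful with the unbounded-operator domains, which you correctly flag as the delicate point. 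More substantively, for the square-root step the paper invokes the integral $B_n^{-1/2}E_n=\tfrac{1}{\pi}\int_0^\infty t^{-1/2}(B_n+t)^{-1}E_n\,dt$ together with collective compactness of $\{(B_n+t)^{-1}E_n\}$, which tacitly requires that collective compactness passes through this operator-valued integral; your route, via the finite-rank-projection characterization of collective compactness and the $C^*$-identity $\|(I-P)B_n^{-1/2}E_n\|^2=\|(I-P)B_n^{-1}E_n(I-P)\|\leq\|(I-P)B_n^{-1}E_n\|$, is more elementary and avoids that integral limiting argument entirely. In short, same architecture, but your bridge from $B_n^{-1}E_n$ to $B_n^{-1/2}E_n$ is cleaner and self-contained, while the paper's is shorter to state but leans on an unproved stability of collective compactness under Bochner-type integrals.
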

\begin{proof}
Denote by $B_n^{1/2}$ the unique self-adjoint non-negative square root of $B_n$. By \cite[VI-Theorem 3.2 on p.337]{Kato_pert95} for each $A_n$ there exists a bounded symmetric operator $C_n$ satisfying $\|C_n\| \le \tan (\beta)$ and  such that
$
A_n= B_n^{1/2} (1+\ii \ C_n)B_n^{1/2}.
$
Writing 
%\bnote{is this expression a version of the Dunford-Taylor integral? Either way, it would be good to give a reference}
$$A_n^{-1}= \int_0^\infty e^{-tA_n} dt$$ we conclude that $E_nA_n^{-1}E_n= A_n^{-1}E_n$ and likewise for $B_n^{-1}$. Assume now that $\{B_n^{-1}E_n\}$ is collectively compact.
But then so is  $\{(B_n+t)^{-1}E_n\} \ = \ \{B_n^{-1}E_n (I+t B_n^{-1})^{-1}E_n\}$  and  writing, compare \cite[V (3.43) on p.282]{Kato_pert95},
%\bnote{Again, Dunford-Taylor?}
$$
B_n^{-1/2}E_n= \frac{1}{\pi} \int_0^\infty t^{-1/2}(B_n + t)^{-1}E_n dt
$$
we see that $\{B_n^{-1/2}E_n \}$ is also collectively compact and $B_n^{-1/2}E_n=E_nB_n^{-1/2}E_n$.  Finally $\{A_n^{-1}E_n\}$ is then collectively compact as well since  $A_n^{-1}E_n$ is of the form $  B_n^{-1/2}E_n T_n$ with $T_n$ uniformly bounded. 
\end{proof}

\begin{proof}[Proof of Theorem \ref{strong}]
Note that it is clear from the definition of $H_n$ and the assumption on $V$ that 
$
\mathrm{Num}(H_n) \subset \{r e^{i\rho} : -\theta_2 \leq \rho \leq \theta_1, r\geq 0\}
$ for all $n$. Thus, since $H_n$ is bounded and by Proposition \ref{tarkan kompaktsuus} we can choose any 
point  $z_0 \in \mathbb{C}$ such that $z_0$ has a positive distance $d$ to the closed sector
$\{r e^{i\rho} : -\theta_2 \leq \rho \leq \theta_1, r\geq 0\}$, and both $R(H,z_0) = (H-z_0)^{-1}$ and $R(H_n,z_0) = (H_n -z_0)^{-1}$ for every $n$ will exist. Moreover, $R(H_n,z_0)$ are uniformly bounded for all $n$, since for every $x$, $\|x\|=1$,
\[\|(H_n-z_0)x\|\geq|\langle(H_n-z_0)x,x\rangle|\geq|\langle H_nx,x\rangle-z_0|\geq d.\]
%\RED{Where do we get the \textit{uniform} boundedness of this operator sequence from? I guess it is because we are outside the numerical range, but I would be happy if we could justify this with a reference ...} 
%(rem: For cite[VIII-Theorem 1.5 on p.429]{Kato_pert95} we need that the sequence $\{\|R(H_n,z_0)\|\}_n$ is bounded (cf. $\Delta_b$))
 Note that by Lemma \ref{LemCollComp} it suffices to show that (i) $R(H_n,z_0)^*E_n \rightarrow R(H,z_0)^*$ strongly, and
%Note that by Lemma \ref{Anselone ja Palmer} it suffices to show that (i) $R(H_n,z_0)E_n \rightarrow R(H,z_0)$ and $R(H_n,z_0)^*E_n \rightarrow R(H,z_0)^*$ strongly, and 
(ii) $\{R(H_n,z_0)E_n - R(H,z_0)\}$ is collectively compact, which follows if we can show that $\{R(H_n,z_0)E_n\}$ is collectively compact.

To see (i) observe that $C^{\infty}_c(\mathbb{R}^d)$ is a common core for $H $ and for $ H_n$. Hence  by \cite[VIII-Theorem 1.5 on p.429]{Kato_pert95}, the strong resolvent convergence $R(H_n,z_0)^*\to R(H,z_0)^*$ will follow if we show that 
$ H_n^*\psi \rightarrow H^*\psi$ as $n\to\infty$
for any $\psi \in C^{\infty}_c(\mathbb{R}^d).$ 
Then the strong convergence $R(H_n,z_0)^*E_n\to R(H,z_0)^*$ follows as well. 
Note that 
\begin{equation}\label{mmm}
\| H_n^*\psi - H^*\psi\| \leq \left\|\sum_{j=1}^d|\Phi_n|^2(P_j)E_n\psi -
\sum_{j=1}^d P^2_j\psi \right\| 
 + \|(\overline{V}_n - \overline{V})\psi\|.  
\end{equation}
Also, 
$
|\Phi_n|^2(P_j) = n(\tau_{-1/ne_j}-I)n(\tau_{1/ne_j}-I),$ 
where $\tau_{z}\psi(x) = \psi(x-z)$
 and $\{e_j\}$ is the canonical basis for $\mathbb{R}^d$. Moreover, for $\psi \in C^{\infty}_c(\mathbb{R}^d),$ 
$$
E_n\psi =  \sum_{y \in (\frac{1}{n}\mathbb{Z})^d\cap S_n} (\Psi_n *  \psi)(y)\chi_{Q_n(y)},  \quad \Psi_n = \rho_n \otimes \hdots \otimes \rho_n, \quad \rho_n = n \chi_{[-\frac{1}{2n},\frac{1}{2n})},
$$
where $S_n$ was defined in \eqref{ESn}. 
Thus, it follows from easy calculus manipulations and basic properties of convolution that $|\Phi_n|^2(P_j)E_n\psi = \sum_{y \in (\frac{1}{n}\mathbb{Z})^d} (\Psi_n * \tilde \rho_1 *_j \tilde \rho_2 *_j \psi'')(y)\chi_{Q_n(y)}$, where $\tilde \rho_1 = n\chi_{[-1/n,0]}$, $\tilde \rho_2 = n\chi_{[0,1/n]}$ and $*_j$ denotes the convolution operation in the $j$th variable. By standard properties of the convolution we have that $\Psi_n * \tilde \rho_1 *_j \tilde \rho_2 *_j \psi'' \rightarrow \psi''$ uniformly as $n \rightarrow \infty$. 
Thus, since $\psi \in C^{\infty}_c(\mathbb{R}^d),$ the first part of the right-hand side of (\ref{mmm}) tends to zero as $n \rightarrow \infty$.
Due to the continuity of $V$ and the bounded support of $\psi$ it also follows easily that $\|(\overline{V}_n-\overline{V})\psi\| \to 0$ as $n\to\infty$.

To see (ii) we use the same trick as in the proof of Proposition \ref{tarkan kompaktsuus}. In particular, first set $z_0 = -e^{i\alpha}$ (which is clearly in the resolvent set of $H_n$ for $\alpha=(\theta_1-\theta_2)/2$)
then let 
$
A_n( \alpha) = e^{-i \alpha}(H_n - z_0)
$ 
and further
$H_n(\alpha)={\rm Re} \ A_n(\alpha)$.
Note that, by Lemma \ref{modifikaatioEn}, we would be done if we could show that $\{H_n(\alpha)^{-1}\}$ is uniformly bounded and $\{H_n(\alpha)^{-1}E_n\}$ is collectively compact as that would yield collective compactness of $\{A_n( \alpha)^{-1}E_n\}$ and hence of 
$\{R(H_n,z_0)E_n\}$. 
To establish the uniform bound, note that 
\begin{equation}\label{hooenna}
H_n(\alpha) = \cos \alpha \ E_n \sum_{j=1}^d|\Phi_n|^2 (P_j) E_n + 
 \cos \vartheta(x) |V_n(x)| +1,
 \end{equation}
where $\vartheta$ is defined in (\ref{vartheta}). Thus $ \|H_n(\alpha)^{-1}\| \le 1$ and by applying Lemma \ref{collect} we are now done.
\end{proof}

%tasta alkaen olen muuttanut

\begin{lemma}\label{collect}
Let $H_n(\alpha)$ be given by (\ref{hooenna}). Then the set $\{H_n(\alpha)^{-1}E_n\}$ is collectively compact.
\end{lemma}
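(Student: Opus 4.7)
The plan is to verify collective compactness of the family $\mathcal{F}:=\{y=H_n(\alpha)^{-1}E_n x: n\in\mathbb{N},\,\|x\|\le 1\}\subset L^2(\mathbb{R}^d)$ via the Riesz-Fr\'echet-Kolmogorov total-boundedness criterion: uniform $L^2$-boundedness, uniform tightness at infinity, and uniform equicontinuity of translations. The common engine is an energy estimate obtained by testing $H_n(\alpha)y=E_n x$ against $y$. Since $H_n(\alpha)\ge I$, one immediately gets $\|y\|\le 1$ and
$$\cos\alpha\,\Big\langle\sum_{j=1}^d|\Phi_n|^2(P_j)E_n y,E_n y\Big\rangle+\int_{\mathbb{R}^d}\bigl(1+\cos\vartheta(x)|V_n(x)|\bigr)|y(x)|^2\,dx\le 1.$$
An easy bookkeeping step, projecting the equation onto $\mathrm{Ran}(E_n)^\perp$ and exploiting that $1+\cos\vartheta|V_n|$ is piecewise constant on the boxes $Q_n(z)$ and equals $1$ outside $\tilde S_n$, forces $y\in\mathrm{Ran}(E_n)$, so $y=\sum_z c_z\chi_{Q_n(z)}$ is a step function supported in $\tilde S_n$; this structural reduction will be used repeatedly.

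Tightness is the easiest of the three. Given $\varepsilon>0$, choose $R$ so large that $\cos\vartheta|V|\ge 2/\varepsilon$ on $\{|x|>R\}$ (possible by $|V|\to\infty$ and $|\vartheta|\le\beta<\pi/2$); since $V_n$ samples $V$ on a $1/n$-grid, the same inequality passes to $V_n$ on $\{|x|>R+1\}\cap\tilde S_n$ for all sufficiently large $n$, and $y$ vanishes outside $\tilde S_n$. The energy estimate then yields $\int_{|x|>R+1}|y|^2\le\varepsilon$ uniformly in such $n$. The finitely many excluded small-$n$ operators $H_n(\alpha)^{-1}E_n$ are finite-rank and hence individually compact, so their orbits are trivially tight.

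Equicontinuity is the main obstacle: naive translation estimates for step functions of spacing $1/n$ blow up as $n\to\infty$, and we must use the discrete $H^1$-type quantity supplied by the energy estimate. A summation by parts in the step-function representation gives
$$\big\langle|\Phi_n|^2(P_j)y,y\big\rangle=n^{2-d}\sum_z|c_{z+e_j/n}-c_z|^2,$$
so the energy estimate bounds the total discrete gradient by $n^{d-2}/\cos\alpha$. A telescoping Cauchy-Schwarz yields $\sum_z|c_{z+ke_j/n}-c_z|^2\le k^2\,n^{d-2}/\cos\alpha$, and computing $\|\tau_h y-y\|_{L^2}^2$ directly (decomposing $h$ into an integer-box part $k/n$ and a sub-box remainder $r<1/n$) gives the uniform estimate
$$\|\tau_h y-y\|_{L^2}^2\le C\,\bigl(|h|+n^{-1}\bigr)^2/\cos\alpha$$
for all $y\in\mathcal{F}$. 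Splitting once more into $n\le N$ (finitely many compacts, equicontinuous individually) and $n>N$ (using $(|h|+1/N)^2\le 4/N^2$ whenever $|h|\le 1/N$), we obtain the required uniform equicontinuity by first choosing $N$ large and then $|h|$ small. Applying Riesz-Fr\'echet-Kolmogorov then concludes that $\bigcup_n H_n(\alpha)^{-1}E_n B_1$ has compact closure in $L^2(\mathbb{R}^d)$, which is exactly the definition of collective compactness.
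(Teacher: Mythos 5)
Your proof is correct but follows a genuinely different route from the paper's. The paper invokes Rellich's criterion (boundedness, a spatial growth weight $F(x)\sim|V(x)|$, and a Fourier-side weight $G(\omega)=|\omega|^2$). Since the step functions $\varphi_n\in\mathrm{Ran}(E_n)$ are \emph{not} in $H^1$ (the box indicators have $\int|\omega|^2|\hat\chi_Q|^2=\infty$), the paper must first replace $\varphi_n$ by the mollification $\Psi_n*\varphi_n$, show $\|\Psi_n*\varphi_n-\varphi_n\|\to0$ via a discrete commutator/translation argument, and verify the Fourier-weight bound for the mollified sequence using $\hat\Psi_n(\omega)=\prod_j\Theta_n(\omega_j)$ and the pointwise bound $|\Theta_n|\le1$. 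You instead invoke Riesz--Fr\'echet--Kolmogorov directly, replacing the Fourier-side weight by its spatial-side dual, uniform equicontinuity of translations. The discrete-gradient identity $\langle|\Phi_n|^2(P_j)y,y\rangle=n^{2-d}\sum_z|c_{z+e_j/n}-c_z|^2$ you derive is exactly the quantity the paper's mollification step exploits implicitly, but you use it to telescope a translation bound $\|\tau_hy-y\|^2\lesssim(|h|+n^{-1})^2/\cos\alpha$, which handles equicontinuity without ever passing to the Fourier side or introducing the auxiliary mollified sequence; the small-$n$ tail (where $n^{-1}$ is not small) is correctly disposed of because finitely many finite-rank operators are individually collectively compact. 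The tightness part is essentially the same in both proofs, coming from $\cos\vartheta\ge\cos\beta>0$, $|V|\to\infty$, and $\varphi_n$ being supported in $S_n$. Net effect: your approach is more elementary and self-contained (no Fourier transform, no mollification, no ``$\Psi_n*\varphi_n\to\varphi_n$'' lemma), at the cost of slightly more by-hand bookkeeping in the telescoping translation estimate; the paper's Rellich route is a cleaner plug-in to a standard theorem once the mollification trick is set up. Both are valid and establish the same conclusion.
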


\begin{proof}
We shall show that if we choose an arbitrary sequence $\{\psi_n\} \subset \mathrm{L}^2(\mathbb R^d)$  satisfying $\|\psi_n\| \le 1$, then the sequence $\{\varphi_n\}$ where $\varphi_n = H_n(\alpha)^{-1}E_n\psi_n$, is relatively compact in $\mathrm{L}^2(\mathbb R^d)$.  
The compactness argument is based on the Rellich's criterion.

\begin{lemma}[Rellich's criterion (\cite{ReedSimonIV} Theorem XIII.65)]
Let $F(x)$ and $G(\omega)$ be two  measurable non-negative functions becoming larger than any constant for all large enough $\
|x|$ and $|\omega |$.  Then
$$
S=\{ \varphi \ : \ \int |\varphi(x)|^2 dx \le 1, \ \int F(x) |\varphi(x)|^2 dx  \le 1, \ \int G(\omega) |\mathcal F \varphi(\omega)|^2 d\omega  \le 1\}
$$
is a compact subset of $\mathrm{L}^2(\mathbb R^d)$.
\end{lemma}
To prove Lemma \ref{collect} we proceed as follows.  First we conclude that $\{\varphi_n\}$ is a bounded sequence itself.   Then, in order to be able to define suitable functions $F,G$ we need to approximate the sequence by another one of  the form $\Psi_n*\varphi_n$.
This approximation shall satisfy $\lim_{n\to\infty}\|\Psi_n * \varphi_n - \varphi_n\| = 0$ and this  is very similar to the standard result on local uniform convergence of mollifications of continuous functions.  Then the Rellich's criterion holds for $\Psi_n*\varphi_n$ with $F(x)$ essentially given by $|V(x)|$ and $G(\omega)$ by $|\omega|^2$.  We then conclude that the sequence $\{\Psi_n*\varphi_n\}$ is relatively compact.  But since $\lim_{n\to\infty}\|\Psi_n * \varphi_n - \varphi_n\| = 0$,  the sequence $\{\varphi_n\}$ is relatively compact as well, completing the argument.

More precisely, since $|\vartheta(x)| \le \alpha < \pi/2$ we have from (\ref{hooenna})
%here I  modified the approach,  essntially only by changing the order of presentation 
\begin{equation}\label{someestimate}
|\langle H_n(\alpha) \varphi_n, \varphi_n\rangle| \ge \cos \alpha \left(
\left\langle \sum_{ j =1}^{ d} |\Phi_n|^2(P_j)\varphi_n, \varphi_n\right\rangle + 
\left \langle |V_{{n}}|\varphi_n, \varphi_n \right\rangle\right) + \|\varphi_n\|^2.
\end{equation}
But $|\langle H_n(\alpha) \varphi_n, \varphi_n\rangle|$ is bounded not only from below but also from above.  Indeed, $|\langle H_n(\alpha) \varphi_n, \varphi_n\rangle| = |\langle E_n \psi_n, \varphi_n\rangle|\leq\|H_n(\alpha)^{-1}E_n\|\|\psi_n\|^2.$ Thus,  we conclude first from (\ref{someestimate}) that  the sequence $\{\varphi_n\}$ is  bounded. Next, in view of
\eqref{someestimate}, there exist constants $C_1,C_2 >0$   such that for all  $n \in \mathbb{N}$
	\begin{equation}\label{eq:bound-approx-laplacian1}
	\left\langle \sum_{j=1}^{  d} |\Phi_n|^2(P_j)\varphi_n, \varphi_n\right\rangle
	\leq
	C_1,
	\qquad 
	\langle |V_n|\varphi_n, \varphi_n \rangle
	\leq
	C_2.
	\end{equation}
First we use the bound in the first part of \eqref{eq:bound-approx-laplacian1}. Letting $\mathcal{F}$ denote the Fourier transform, we have that 
	$
	(\mathcal{F}\Phi_n(P_j) \varphi_n)(\omega) = 
	\Phi_n(\omega_j)(\mathcal{F}\varphi_n)(\omega),$ for a.e. $\omega$ and for $1\leq j\leq d.
	$
Letting $\Theta_n(\omega)=\frac{\sin(\omega/2n)}{\omega/2n}$, an application of the Fourier transform to \eqref{eq:bound-approx-laplacian1} along with Plancherel's theorem yield
	$$
	\int_{\mathbb{R}^d} |(\mathcal{F}\varphi_n)(\omega)|^2
	\sum_{1 \leq j \leq d}|\omega_j\Theta_n(\omega_j)|^2 \, d\omega \leq C_1.
	$$
Moreover, since $|\Theta_n(\omega)| \leq 1$ for all $\omega,$ we get 
\begin{equation}\label{aur}
\int_{\mathbb{R}^d}|\omega|^2|\Theta_n(\omega_1) \cdots \Theta_n(\omega_d)|^2 
|(\mathcal{F}\varphi_n)(\omega)|^2 \, d\omega \leq C_1.
\end{equation}

We now define the approximation $\Psi_n*\varphi_n$.  Let $\Psi_1(z) = \chi_{[-1/2,1/2]^d}(z)$ and further $\Psi_n(z) = n^d\Psi_1(nz)$,
where $\chi_A(z)$ is the usual characteristic function for the set $A$. We  shall prove below that $\lim_{n\to\infty}\|\Psi_n * \varphi_n -\varphi_n\| = 0$, which in particular shows that the sequence $\{\Psi_n*\varphi_n\}$ is bounded.
Observe then that $(\mathcal{F}\Psi_n)(\omega) = \Theta_n(\omega_1) \cdots \Theta_n(\omega_d)$. Therefore we obtain from (\ref{aur})
 $
 \int_{\mathbb R^d} |\omega|^2 |\mathcal F(\Psi_n * \varphi_n)(\omega)|^2 d\omega \le C_1,
 $
 which  shows that we can choose $G(\omega)$ to be (a constant times) $|\omega|^2$.  
 
 We still need to establish the growth function $F(x)$ for  $\Psi_n*\varphi_n$.
 Consider $\varphi_n$.  It is of the form $\varphi_n= (E_n+E_n B_nE_n)^{-1}E_n\psi_n$ and hence $E_n\varphi_n= \varphi_n$.  Therefore  $\varphi_n$ vanishes outside $S_n$ and we can  essentially replace $V_n$ by $V$ in the inequality in the last part of \eqref{eq:bound-approx-laplacian1}. To that end, put
 $
 F(x) = \min_{|y| \ge |x| } |V(y)|.
 $
Then  with some constant $C_3$
\begin{equation}\label{uusi}
\int_{\mathbb R^d} F(x) |(\Psi_n *\varphi_n)(x)| ^2 dx \le C_3.
\end{equation}
 
In view of the bounds \eqref{aur}, (\ref{uusi}) and since the sequence $\{\Psi_n*\varphi_n\}_{n\in\N}$ is bounded in $\mathrm{L}^2$, Rellich's criterion   implies that $\{\Psi_n*\varphi_n\}_{n\in\N}$ is a {relatively compact} sequence and it therefore follows 
that $\{\varphi_n\}_{n\in\N}$ is {relatively compact}, thus finishing the proof.  Hence, our only remaining obligation is to show that $\lim_{n\to\infty}\|\Psi_n * \varphi_n - \varphi_n\| = 0$.  This result is very similar to the standard result on local uniform convergence of mollifications of continuous functions.

Let $z \in \mathbb{R}^d$ and define the shift operator $\tau_{z}$ 
on $\mathrm{L}^2(\mathbb{R}^d)$ by 
$\tau_{z}f(x) = f(x-z)$. Now 
observe that 
by Minkowski's inequality for integrals it follows that 
\begin{equation}\label{integral}
\begin{split}
\|\Psi_n * \varphi_n - \varphi_n\| \leq \int_{\mathbb{R}^d} \|\tau_{\frac{1}{n}z}\varphi_n -
\varphi_n\||\Psi_1(z)|\, dz
= \int_{[-1/2,1/2]^d} \|e^{i\frac{z_d}{n}P_d} \hdots e^{i\frac{z_1}{n}P_1}\varphi_n - \varphi_n\|
\, dz.
\end{split}
\end{equation}
%\bnote{I don't think the above inequality is correct. For instance, if $\varphi_n$ is $1/n$-periodic the RHS is 0, but I don't see why the LHS should be 0 (it will be 0 only if $\varphi_n$ is constant)}
The claim follows from an $\epsilon/d$ argument and \eqref{integral} combined with the dominated convergence theorem (recall that $\{\varphi_n\}$ is bounded): we need to show that for fixed $z \in [-1/2,1/2]^d$ and for any $ 1< j \leq d$,
\begin{equation}\label{trott}
\begin{split}
\lim_{n\to\infty}\left\|e^{i\frac{z_j}{n}P_j} \hdots e^{i\frac{z_1}{n}P_1}\varphi_n -
e^{i\frac{z_{j-1}}{n}P_{j-1}} \hdots e^{i\frac{z_1}{n}P_1}\varphi_n\right\| = 0,
\quad \lim_{n\to\infty} \left\|e^{i\frac{z_1}{n}P_1}\varphi_n - \varphi_n\right\| = 0.
\end{split}
\end{equation}
Since 
$
e^{i\frac{z_j}{n}P_j}e^{i\frac{z_k}{n}P_k} = e^{i\frac{z_k}{n}P_k}e^{i\frac{z_j}{n}P_j}
$
and
$\|e^{i\frac{z_j}{n}P_j} \cdots e^{i\frac{z_1}{n}P_1}\| \leq 1
$
for $1 \leq j,k \leq d,$
\eqref{trott} will follow if we can show that 
$\|(e^{i\frac{z_j}{n}P_j} -I)\varphi_n\| \rightarrow 0$ as $n \rightarrow \infty.$
Note that, by the choice of the projections $E_n$, it follows that for $1 \leq j \leq d,$
$
|((e^{i\frac{z_j}{n}P_j} -I)\varphi_n)(x)| \leq |((e^{i\frac{1}{n}P_j}
-I)\varphi_n)(x)|, 
$ for $0 \leq z_j \leq 1/2$ and $x \in \mathbb{R}^d
$. Also,  
\[
|((e^{i\frac{z_j}{n}P_j} -I)\varphi_n)(x)| \leq |((e^{-i\frac{1}{n}P_j}
-I)\varphi_n)(x)|, \qquad-1/2 
\leq z_j <0.
\]  
However the bound  $\sum_{1 \leq j \leq d}
\|\Phi_n(P_j)\varphi_n\|^2 \leq C_1$ 
implies that 
$
\lim_{n\to\infty}\|(e^{\pm i\frac{1}{n}P_j} -I)\varphi_n)\| = 0,
$
which proves the claim.
\end{proof}

\subsubsection{\bf{Proof that neither problem lies in $\Sigma_1^G\cup\Pi_1^G$.}}
Finally, we shall complete the proof of Theorem \ref{thm:comp-res} by showing that $\{\Xi_{\mathrm{sp}},\Omega_\infty\}\not\in\Sigma_1^G\cup\Pi_1^G$ and $\{\Xi_{\mathrm{sp},\epsilon},\Omega_\infty\}\not\in\Sigma_1^G\cup\Pi_1^G$.

\begin{proof}
\textbf{Step I: $\{\Xi_{\mathrm{sp}},\Omega_\infty\}\notin \Sigma_1^G$}.
Suppose for a contradiction that there exists a $\Sigma_1^G$ tower $\Gamma_n$ which solves the computational problem $\{\Xi_{\mathrm{sp}},\Omega_\infty\}$. Now let $V$ be any (real-valued) positive potential in the class $\Omega_\infty$ such that the corresponding Schr{\"o}dinger operator is self-adjoint and has a unique ground state (the operator must be bounded below). Call the associated operator $H_0$. For instance, in one dimension this could be the quantum harmonic oscillator $V(x)=x^2$, and examples in arbitrary dimension (the harmonic oscillator in $d>1$ dimensions does not have a unique ground state) are well known in the physics literature. In this case, let $\phi_0$ be the normalised ground state and $E$ be the orthogonal complement of the span of this function intersected with the domain of $H_0$. Assume that $H_0\phi_0=c\phi_0$. Denoting the standard $\mathrm{L}^2(\mathbb{R}^d)$ inner product by $\langle \cdot,\cdot\rangle$, it follows that there exists some $\eta>0$ such that
$$
\langle H_0 \phi,\phi\rangle \geq (c+\eta)\left\|\phi\right\|^2,\quad \forall\phi\in E.
$$

There exists $n$ such that there is a point $z_n\in\Gamma_n(V)$ with $\left|z_n-c\right|\leq \eta/20$ and such that $\Gamma_n(V)$ guarantees there is a point in the spectrum $\Xi_{\mathrm{sp}}(V)$ of distance at most $\eta/20$ to $z_n$. Hence $\Gamma_n(V)$ guarantees there is a point in the spectrum $\Xi_{\mathrm{sp}}(V)$ of of distance at most $\eta/10$ from $c$. There also exists a finite set $S=\{x^1,...,x^{M(n)}\}$ such that the computation of $\Gamma_n(V)$ only depends on the potential $V$ evaluated at points in $S$. Let $V_m$ be a sequence of real-valued continuous potentials such that $0\leq V_m(x)\leq 1$, $V_m(x^j)=0$ $\forall x^j\in S$ and such that $V_m$ converges pointwise almost everywhere to $1$ as $m\rightarrow\infty$. By construction and the definition of a general algorithm (Definition \ref{alg}) we must have for all $a\in\mathbb{R}_{+}$ that $\Gamma_n(V+aV_m)=\Gamma_n(V)$. In particular, this includes the guarantee of a point in the spectrum $\Xi_{\mathrm{sp}}(V+aV_m)$ of distance at most $\eta/10$ from $c$. We will show that this gives rise to a contradiction for a choice of $a\in\mathbb{R}_{+}$ and $m$.

Indeed, choose $m$ large such that
$
\langle V_m \phi_0,\phi_0\rangle\geq \frac{10}{11},
$
and set $a=\eta/2$. It is well known that the minimum of the spectrum $\Xi_{\mathrm{sp}}(V+aV_m)$ is given by
$$
\inf_{\phi\in \mathcal{D}(H_0):\left\|\phi\right\|=1}\langle (H_0+aV_m) \phi,\phi\rangle.
$$
In particular, $H_0+aV_m$ and $H_0$ have the same domain as $V_m$ is bounded. Now let $\phi\in\mathcal{D}(H_0)$ of norm $1$. Without loss of generality by a change of phase, we can write 
$
\phi=\delta \phi_0+\sqrt{1-\delta^2}\phi_1,
$
with $\phi_1\in E$ of unit norm and $\delta\in[0,1]$. Using the fact that $H_0\phi_0=c\phi_0$ and $H_0$ is self-adjoint and $\langle \phi_0,\phi_1\rangle=0$, we have that
\begin{align*}
\langle (H_0+aV_m) \phi,\phi\rangle&=\delta^2c+(1-\delta^2)\langle H_0 \phi_1,\phi_1\rangle+\delta^2a\langle V_m \phi_0,\phi_0\rangle\\
&\quad +a(1-\delta^2)\langle V_m \phi_1,\phi_1\rangle+2\mathrm{Re}(a\delta\sqrt{1-\delta^2}\langle V_m \phi_0,\phi_1\rangle)\\
&\geq c+(1-\delta^2)\eta+\frac{10}{11}\delta^2 a-2a\delta\sqrt{1-\delta^2},
\end{align*}
where we have used that $V_m$ is positive to throw away the $\langle V_m \phi_1,\phi_1\rangle$ term. It follows that the minimum of the spectrum of $H_0+aV_m$ is at least
\[
c+\inf_{\delta\in[0,1]}\eta(1-(1-5/11)\delta^2-\delta\sqrt{1-\delta^2})> c+\frac{\eta}{10},
\]
yielding the required contradiction.

\textbf{Step II: $\{\Xi_{\mathrm{sp}},\Omega_\infty\}\notin \Pi_1^G$}.
We argue as in Step I but now the proof is less involved. Suppose for a contradiction that there exists a $\Pi_1^G$ tower $\Gamma_n$ which solves the computational problem $\{\Xi_{\mathrm{sp}},\Omega_\infty\}$. We let $H_0$, $V$, $\phi_0$ and $E$ be as in Step I, where we also assume as before that $H_0\phi_0=c\phi_0$. We also assume that $c\geq 0$ and $V(x)\geq 1$.

Arguing as before, there exists some $n$ such that $\Gamma_n(V)$ guarantees that the spectrum is disjoint from the interval $[c-3/2,c-1/2]$. Again, there exists a finite set $S=\{x^1,...,x^{M(n)}\}$ such that the computation of $\Gamma_n(V)$ only depends on the potential $V$ evaluated at points in $S$. Let $V_m$ be a sequence of real-valued continuous potentials such that $-1\leq V_m(x)\leq 0$, $V_m(x^j)=0$ $\forall x^j\in S$ but now such that $V_m$ converges pointwise almost everywhere to $-1$ as $m\rightarrow\infty$. Note that we must have $V+V_m\in\Omega_\infty$ since we assume the pointwise inequality $V(x)\geq 1$. By construction and the definition of a general algorithm (Definition \ref{alg}) we must have that $\Gamma_n(V+V_m)=\Gamma_n(V)$. In particular, this includes the guarantee that the spectrum of $H_0+V_m$ is disjoint from the interval $[c-3/2,c-1/2]$. But we have that
$$
\langle (H_0+V_m-(c-1))\phi_0,\phi_0\rangle=\langle V_m\phi_0,\phi_0\rangle +1\rightarrow 0,
$$
as $m\rightarrow\infty$. It follows for some large $m$ that $\left\|R(c-1,H_0+V_m)\right\|^{-1}\leq 1/4$ and hence that the spectrum of $H_0+V_m$ intersects the interval $[c-3/2,c-1/2]$, since the operator is self-adjoint. But this contradicts the $\Pi_1^G$ guarantee.

\textbf{Step III: $\{\Xi_{\mathrm{sp},\epsilon},\Omega_\infty\}\notin \Pi_1^G\cup\Sigma_1^G$}.
The arguments are the same as in Steps I and II. We note that the pseudospectrum is simply the $\epsilon$ ball neighbourhood of the spectrum in these self-adjoint cases. The arguments work once we scale the operators by $N/\epsilon$ for some large $N$ in order to gain the relevant separations.
\end{proof}

\section{Proofs of Theorem \ref{linear_systems_thrm} and Theorem \ref{thrm:norm_inverse}}\label{linear_systems_proofs}

\begin{proof}[Proof of Theorem \ref{linear_systems_thrm}]
We have that $\mathrm{SCI}(\Xi_{\mathrm{inv}},\Omega_1)_{\mathrm{A}} \geq\mathrm{SCI}(\Xi_{\mathrm{inv}},\Omega_1)_{\mathrm{G}} \geq \mathrm{SCI}(\Xi_{\mathrm{inv}},\Omega_2)_{\mathrm{G}}$ and $\mathrm{SCI}(\Xi_{\mathrm{inv}},\Omega_3)_{\mathrm{G}} \geq 1$. It is also clear that $\{\Xi_{\mathrm{inv}},\Omega_4\}\not\in\Delta_0^G$. Hence it is enough to prove that $\mathrm{SCI}(\Xi_{\mathrm{inv}},\Omega_2)_{\mathrm{G}} \geq 2$, $\mathrm{SCI}(\Xi_{\mathrm{inv}},\Omega_1)_{\mathrm{A}} \leq 2$, $\mathrm{SCI}(\Xi_{\mathrm{inv}},\Omega_3)_{\mathrm{A}} \leq 1$ and $\{\Xi_{\mathrm{inv}},\Omega_4\}\in\Delta_1^A$.

{\bf Step I:} We start by showing that $\mathrm{SCI}(\Xi_{\mathrm{inv}},\Omega_2)_{\mathrm{G}} \geq 2$.
For $n,m\in\Nb\setminus\{1\}$ let
\[B_{n,m}:=\begin{pmatrix}
1/m& & & &1\\
 &1& & & \\
 & &\ddots& & \\
 & & &1& \\
1& & & &1/m\\
\end{pmatrix}
\in\Cb^{n\times n}\]
and for a sequence $\{l_n\}_{n \in \mathbb{N}} \subset\Nb\setminus\{1\}$ set 
$$
A:= \bigoplus_{n=1}^{\infty} B_{l_n,n+1}.
$$ 
Clearly, $A$ defines an invertible operator on $l^2(\mathbb{N})$ with bounded inverse.
Furthermore, we define $b=\{b_j\} \in l^2(\mathbb{N})$ 
such that 
$$
b_j = 
\begin{cases}
\displaystyle\frac{1}{n+2} & \displaystyle j = 1+\sum_{i=1}^{n}l_i, \quad n\in\Nb_0\\
0 & \text{otherwise}.
\end{cases}
$$
Let  also $C_m:=\diag\{1/m,1,1,\ldots\}$ and note that its inverse is given by 
$\diag\{m,1,1,\ldots\}$. We argue by contradiction and suppose that there is a General tower of algorithms $\Gamma_n$ of height one such that $\Gamma_n(A,b) \rightarrow \Xi_{\mathrm{inv}}(A,b)$ as $n \rightarrow \infty$ for $(A,b)\in\Omega_2$. For such $A$, $b$ and $k \in \mathbb{N}$ let $N(A,b,k)$ denote the smallest integer such that the evaluations from $\Lambda_{\Gamma_k}(A,b)$ only include matrix entries $A_{ij}=\langle Ae_j,e_i \rangle$ with $i,j\leq N(A,b,k)$ and the entries
$b_i$ with $i\leq N(A,b,k)$. To obtain a particular counterexample $(A,b)$ we construct sequences $\{l_n\}_{n \in \mathbb{N}}$ and $\{k_n\}_{n \in \mathbb{Z}_+}$ inductively such that $A$ and $b$ are given by $\{l_n\}$ as above but $\Gamma_{k_n}(A,b) \nrightarrow \Xi_{\mathrm{inv}}(A,b)$. As a start, set $k_0=l_0:=1$. 
The sequence $\{x^{(1)}_j\}_{j\in\Nb} := (C_{2})^{-1}P_{1}b$ has a $1$ as its first entry and since, by assumption, $\Gamma_k \rightarrow \Xi_{\mathrm{inv}}$, there is a $k_1$ such that, for all $k \geq k_1$, the first entry of $\Gamma_{k}(C_2,P_1b)$ is closer to $1$ than $1/2$. Then, choose $l_1>N(C_2,P_1b,k_1)-l_0$.
Now, for $n>1$, suppose that $l_0,\ldots,l_{n-1}$ and $k_0, \hdots, k_{n-1}$ are already chosen. Set $s_{n}:=\sum_{i=0}^{n-1} l_i$. Then also $P_{s_n}b$ is already determined and
$$
x^{(n)}_{s_n+1} = 1, \quad\text{where}\quad \{x^{(n)}_j\}_{j\in\Nb} := (B_{l_1,2} \oplus B_{l_2,3} \oplus \ldots \oplus B_{l_{n-1},n} \oplus C_{n+1})^{-1}P_{s_n}b.
$$ 
Since, by assumption, $\Gamma_k \rightarrow \Xi_{\mathrm{inv}}$, there is a $k_n$ such that for all $k \geq k_n$
\begin{equation*}%\label{oneovern}
|x^{(n,k)}_{s_n+1} -1| \leq 1/2, \quad\text{where}\quad \{x^{(n,k)}_j\}_{j\in\Nb} := \Gamma_{k}(B_{l_1,2} \oplus B_{l_2,3} \oplus \ldots \oplus B_{l_{n-1},n} \oplus C_{n+1},P_{s_n}b).
\end{equation*}
Now, choose $l_{n}>N(B_{l_1,2} \oplus B_{l_2,3} \oplus \ldots \oplus B_{l_{n-1},n} \oplus C_{n+1},P_{s_n}b,k_n)-l_0-l_1-\ldots-l_{n-1}$.
By this construction we get for the resulting $A$ and $b$ that for every $n$
$$
\Gamma_{k_n}(A,b) = \Gamma_{k_n}(B_{l_1,2} \oplus B_{l_2,3} \oplus \ldots \oplus B_{l_{n-1},n} \oplus C_{n+1},P_{s_n}b).
$$
In particular $\lim_{k\rightarrow \infty}\Gamma_{k}(A,b)$ does not exist in $l^2(\mathbb{N})$, a contradiction. 
%Thus,  $\SCI(\Xi)_{\rad} \geq 2$.

{\bf Step II:} $\{\Xi_{\mathrm{inv}},\Omega_1\}\in\Delta_3^A$. Let $A$ be invertible and $Ax=b$ with the unknown $x$. Since $P_m$ are compact projections converging strongly to the identity, we get that the ranks $\rk P_m = \rk (AP_m) = \rk(P_nAP_m)$ for every $m$ and all $n\geq n_0$ with an $n_0$ depending on $m$ and $A$.

We let $\zeta_{m,n}$ be an approximation of $\sigma_1(P_mA^*P_nAP_m)$ from above, accurate to $n^{-1}$ and computed using finitely many arithmetic operations and comparisons (over $\mathbb{Q}$) with $\Delta_1$-information (Proposition \ref{REC_SING}). We also let $B_{m,n}$ denote the corresponding rational approximation of $P_nAP_m$ used in the computation, which without loss of generality, we assume to be accurate to at least $(2n)^{-1}$ in the operator norm. Let $b_n$ denote a rational approximation of $P_nb$ correct to $n^{-1}$. We can define
\[\Gamma_{m,n}(A,b):=
\begin{cases}
\quad\quad\quad \{0\}_{j\in\Nb} & \text{ if } \zeta_{m,n}\leq\frac{1}{m}\\
(B_{m,n}^*B_{m,n})^{-1}B_{m,n}^*b_n& \text{ otherwise.}
\end{cases}\]
Note that for every $A$, $b$, $m$, $n$ in view of Proposition \ref{REC_SING} and any standard algorithm for finite dimensional linear problems, these approximate solutions can be computed by finitely many arithmetic operations on finitely many entries of $A$ and $b$ (with $\Delta_1$-information), hence $\Gamma_{m,n}$ are general algorithms in the sense of Definition \ref{alg} and require only a finite number of arithmetic operations.

Moreover, they converge to $y_m:=(P_mA^*AP_m)^{-1}P_mA^*b$ as $n\to\infty$. It is well known that $y_m$ is also a (least squares) solution of the optimisation problem $\|AP_my-b\|\to\min$, that is \[\|AP_my_m-b\|\leq \|AP_mx-b\|\leq\|A\|\|P_mx-A^{-1}b\|=\|A\|\|P_mx-x\|\to 0\] as $m\to\infty$. Therefore $\|y_m-x\|=\|P_my_m-x\|$ is not greater than \[\|A^{-1}\|\|A(P_my_m-x)\|=\|A^{-1}\|\|AP_my_m-b\|\leq\|A^{-1}\|\|A\|\|P_mx-x\|\to 0,\] which yields the convergence $y_m\to x$ and finishes the proof of Step II.

{\bf Step III:} $\{\Xi_{\mathrm{inv}},\Omega_3\}\in\Delta_2^A$. Let $f$ be a bound on the dispersion of $A$. The smallest singular values of the operators $AP_m$ are uniformly bounded below by $\|A^{-1}\|^{-1}$ which, together with $\|P_{f(m)}AP_m-AP_m\|\to0$, yields that the limit inferior of the smallest singular values of $P_{f(m)}AP_m$ is positive, hence the inverses of the operators $C_m:=P_mA^*AP_m$ and $D_m:=P_mA^*P_{f(m)}AP_m$ on the range of $P_m$ exist for sufficiently large $m$ and have uniformly bounded norm. Moreover, $\|C_m^{-1}-D_m^{-1}\|\leq \|C_m^{-1}\|\|D_m-C_m\|\|D_m^{-1}\|$ tend to zero as $m\to\infty$.

This particularly implies that the norms $\|y_m-(P_mA^*P_{f(m)}AP_m)^{-1}P_mA^*b\|$ with $y_m$ as above tend to zero as $m\to\infty$, and we easily conclude that the norms $\|y_m-\Gamma_{m,f(m)}(A,b)\|$ tend to zero as well. With the convergence $\|y_m-x\|\to 0$ from the previous proof, now also 
$\|x-\Gamma_{m,f(m)}(A,b)\|\to 0$ holds as $m\to\infty$, which is the assertion $\mathrm{SCI}(\Xi_{\mathrm{inv}},\Omega_3)_{\mathrm{A}} \leq 1$.

{\bf Step IV:} We prove that $\{\Xi_{\mathrm{inv}},\Omega_4\}\in\Delta_1^A$. To do this we take the algorithm constructed in Step III, and note that by increasing $m$ if necessary, we can assume that $\zeta_{m,f(m)}\geq\sigma_1(P_mA^*P_{f(m)}AP_m)> 1/m$. Hence we only need to bound the error of the approximation. We have that
\begin{align*}
\|A\Gamma_{m,f(m)}(A,b)-b\|&\leq\|P_{f(m)}AP_m\Gamma_{m,f(m)}(A,b)-P_mb\|\\
&\quad\quad\quad\quad+\|P_mb-b\|+\|(I-P_{f(m)})AP_m\|\|\Gamma_{m,f(m)}(A,b)\|\\
&\leq\|P_{f(m)}AP_m\Gamma_{m,f(m)}(A,b)-P_mb\|+c_m(1+\|\Gamma_{m,f(m)}(A,b)\|)
\end{align*}
and hence the bound
$$
\|\Gamma_{m,f(m)}(A,b)-A^{-1}b\|\leq M\big[\|P_{f(m)}AP_m\Gamma_{m,f(m)}(A,b)-P_mb\|+c_m(1+\|\Gamma_{m,f(m)}(A,b)\|)\big].
$$
Note that this final bound converges to zero and it is also clear that we can approximate it to arbitrary accuracy using finitely many arithmetic operations and comparisons.
\end{proof}

\begin{remark}
The technique used with uneven sections to obtain the bound $\mathrm{SCI}(\Xi_{\mathrm{inv}},\Omega_1)_{\mathrm{A}} \leq 2$ is also referred to as asymptotic Moore--Penrose inversion as well as modified (or non-symmetric) finite section method  in the literature, although written in a different form, and is widely used (see e.g. \cite{Heinig_Hellinger,Silbermann_mod,Silbermann22,Charly1,SeSi_mod}). Also the idea to exploit bounds on the off diagonal decay is considered e.g. in \cite{Charly2} or in the theory of band-dominated operators and operators of Wiener type (cf. \cite{Roch_Silbermann_LimitOps,Lindner,SeiSurvey}).
\end{remark}

\begin{proof}[Proof of Theorem \ref{thrm:norm_inverse}]
Clearly $\{\Xi_{\mathrm{norm}},\Omega_3\}\not\in\Delta_1^G$ and $\Omega_2\subset\Omega_1$, so it is enough to prove $\{\Xi_{\mathrm{norm}},\Omega_1\}\in\Pi_2^A$, $\{\Xi_{\mathrm{norm}},\Omega_3\}\in\Pi_1^A$ and $\{\Xi_{\mathrm{norm}},\Omega_2\}\not\in\Delta_2^G$. We start with the latter. Let $\{l_n\}_{n\in\mathbb{N}}$ be some sequence of integers $l_n\geq2$.
Define
 \[
A:=\bigoplus_{n=1}^{\infty} B_{l_n} - I, \qquad B_n:=\begin{pmatrix}
1& & & &1\\
 &0& & & \\
 & &\ddots& & \\
 & & &0& \\
1& & & &1\\
\end{pmatrix}
\in\Cb^{n\times n}.\]
Clearly, such $A$ are invertible and their inverses have norm one. Suppose that $\{\Gamma_k\}$ is a height-one General tower of algorithms and that $\Lambda_{\Gamma_k}(A)$ only includes matrix entries $A_{ij}=\langle Ae_j,e_i \rangle$ with $i,j\leq N(A,k)$. In order to find a counterexample we again construct an appropriate sequence $\{l_n\}\subset\Nb\setminus\{1\}$ by induction: For $C:=\diag\{1,0,0,0,\ldots\}$ one obviously has $\|(C-I)^{-1}\|^{-1}=0$. As a start, choose $k_0:=1$ and $l_1>N(C-I,k_0)$.
Now, suppose that $l_1,\ldots,l_n$ are already chosen. Then the operator given by the matrix $B_{l_1} \oplus \ldots \oplus B_{l_n} \oplus C-I$ is not invertible, hence there exists a $k_n$ such that, for every $k\geq k_n$,
$$
\Gamma_k(B_{l_1} \oplus \ldots \oplus B_{l_n} \oplus C-I)<\frac{1}{2}.
$$
Now finish the construction by choosing 
$
l_{n+1}>N(B_{l_1} \oplus \ldots \oplus B_{l_n} \oplus C-I,k_n)-l_1-l_2-\ldots-l_n.
$

So, we see that
$$
\Gamma_{k_n}(A)=\Gamma_{k_n}(B_{l_1} \oplus \ldots \oplus B_{l_n} \oplus C-I) \nrightarrow \|A^{-1}\|^{-1}=1, \quad n \rightarrow \infty,
$$ 
a contradiction. Thus $\{\Xi_{\mathrm{norm}},\Omega_2\}\not\in\Delta_2^G$.

In order to prove $\{\Xi_{\mathrm{norm}},\Omega_1\}\in\Pi_2^A$ we introduce the numbers
\begin{align*}
\gamma &:= \|A^{-1}\|^{-1}=\min\{\sigma_1(A),\sigma_1(A^*)\}\\
\gamma_m &:= \min\{\sigma_1(AP_m),\sigma_1(A^*P_m)\}\\
\gamma_{m,n} &:= \min\{\sigma_1(P_nAP_m),\sigma_1(P_nA^*P_m)\}
\end{align*}
and note that $\gamma_m\downarrow_m\gamma,$ and $\gamma_{m,n}\uparrow_n\gamma_m$ for every fixed $m$. We let $\delta_{m,n}$ be an approximation to $\gamma_{m,n}$ from above to accuracy $n^{-1}$, computed using finitely many arithmetic operations and comparisons (over $\mathbb{Q}$) with $\Delta_1$-information (Proposition \ref{REC_SING}). Since $\gamma_{m,n}\leq\delta_{m,n}\leq\gamma_{m,n}+1/n$, it follows that $\{\delta_{m,n}\}_n$ converges to $\gamma_m$ for every $m$, and for $\epsilon>0$ there is an $m_0$, and for every $m\geq m_0$ there is an $n_0=n_0(m)$ such that
\begin{equation}\label{ENorm}
|\gamma-\delta_{m,n}|\leq |\gamma-\gamma_m|+|\gamma_m-\gamma_{m,n}|+|\gamma_{m,n}-\delta_{m,n}|
\leq \epsilon/3+\epsilon/3+1/n\leq \epsilon
\end{equation}
whenever $m\geq m_0$ and $n\geq n_0(m)$. Since $\delta_{m,n}$ and hence 
$\Gamma_{m,n}(A):=\delta_{m,n}$ can again be computed with finitely many arithmetic operations by Proposition \ref{REC_SING}, this provides an arithmetic tower of algorithms of height two, with the final convergence from above and hence easily completes the proof that $\{\Xi_{\mathrm{norm}},\Omega_1\}\in\Pi_2^A$.
On $\Omega_3$ we apply \eqref{ENorm} with $n=f(m)$ and straightforwardly check that $\Gamma_{m}(A):=\delta_{m,f(m)}$ provides a height $1$ tower. If we wish to have $\Pi_1^A$ convergence (i.e. convergence from above) then we need to use the sequence $\{c_n\}$ that bounds the dispersion to bound the difference between $\delta_{m,f(m)}$ and $\gamma_m$ and choose $\Gamma_{m}(A):=\delta_{m,f(m)}+c_m$.
\end{proof}

\section{Smale's problem on roots of polynomials and Doyle-McMullen towers}\label{roots_pols}

In this section, we recall the definition of a tower of algorithms from \cite{Doyle_McMullen}. We will name this type of tower a Doyle--McMullen tower and demonstrate how the results in \cite{McMullen1} and \cite{Doyle_McMullen} can be put into the framework of the SCI. In particular, we will demonstrate how the construction of the Doyle--McMullen tower in \cite{Doyle_McMullen} can be viewed as a tower of algorithms defined in Definition \ref{tower_funct}. 
Note that one can compute zeros of a polynomial if one allows arithmetic operations and radicals and can pass to a limit. However, what if one cannot use radicals, but rather iterations of a rational map? A natural choice of such a rational map would be Newton's method. The only problem is that the iteration may not converge, and that motivated the question by Smale quoted in the introduction.

As we now know from \cite{McMullen1}, the answer is no. However, the results in \cite{Doyle_McMullen} show that the quartic and the quintic can be solved with several rational maps and limits, while this is not the case for higher degree polynomials. Below we first quote their results and then specify a particular tower of height three in the form that it can be viewed as a tower of algorithms in the sense of this paper.

\subsection{Doyle--McMullen towers}
A \emph{purely iterative algorithm} \cite{smale_question} is a rational map \footnote{I.e. it's a rational map of the coefficients of $p$.} 
	$$
	T:\mathbb P_d\to\rat_m,\; p\mapsto T_p
	$$
which sends any polynomial $p$ of degree $\leq d$ to a rational function $T_p$ of a certain degree $m$.
%For any $p\in\mathbb P_d$ define a rational map $T_p$ of degree $m$:
%	$$
%	T:\mathbb P_d\to\rat_m.
%	$$
%Such a rational map is called a \emph{purely iterative algorithm} \cite{Smale2}. 
An important example of a purely iterative algorithm is \emph{Newton's method}. Furthermore, Doyle and McMullen call a purely iterative algorithm \emph{generally convergent} if
	$$
	\lim_{n\to\infty}T_p^n(z)\text{ exists for }(p,z)\text{ in an open dense subset of }\mathbb P_d\times\hat\C.
	$$
Here $T_p^n(z)$ denotes the $n$th iterate $T_p^n(z)=T_p(T_p^{n-1}(z))$ of $T_p$.
For instance, Newton's method is generally convergent \emph{only} when $d=2$. However, given a cubic polynomial $p\in\mathbb P_3$ one can define an appropriate rational function $q\in\rat_3$ whose roots coincide with the roots of $p$, and for which Newton's method \emph{is} generally convergent (see \cite{McMullen1}, Proposition 1.2). In \cite{Doyle_McMullen} the authors provide a definition of a tower of algorithms, which we quote verbatim:
	\begin{definition}[Doyle--McMullen tower]\label{DMcMullen}
	A tower of algorithms is a finite sequence of generally convergent algorithms, linked together serially, so the output of one or more can be used to compute the input to the next. The final output of the tower is a single number, computed rationally from the original input and the outputs of the intermediate generally convergent algorithms.
	\end{definition}
\begin{theorem}[McMullen \cite{McMullen1}; Doyle and McMullen \cite{Doyle_McMullen}]\label{SCI_McMullen}
For $\mathbb P_d$ there exists a generally convergent algorithm only for $d\le 3$. Towers of algorithms exist additionally for $d=4$ and $d=5$ but not for $d\geq6$.

\end{theorem}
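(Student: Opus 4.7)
The statement bundles four assertions, and my plan is to deduce each from the main results of \cite{McMullen1} and \cite{Doyle_McMullen}, with essentially no new mathematics but a careful translation into the SCI vocabulary. The key preliminary step is to verify that a purely iterative generally convergent algorithm $T:\mathbb{P}_d \to \rat_m$ realises, on an open dense subset, a height-one tower of algorithms in the sense of Definition \ref{tower_funct}: the output $\lim_{n\to\infty} T_p^n(z)$ is a single root of $p$, which is a sensible limit in the pseudo-metric $d$ of Example \ref{Ex1}(III). Serial linking of $k$ such algorithms in the Doyle--McMullen sense then corresponds to nesting $k$ limits, yielding a tower of height $\leq k$ whose bottom-level functions are rational in the polynomial coefficients and in the outputs of the lower generally convergent algorithms.

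For the two positive statements I would exhibit the algorithms explicitly. For $d=2$, Newton's method itself is generally convergent on an open dense subset of $\mathbb{P}_2\times\hat{\mathbb{C}}$. For $d=3$, one invokes McMullen's construction (Proposition 1.2 of \cite{McMullen1}), which replaces the cubic by a rationally-constructed auxiliary rational map whose attracting cycles are guaranteed to point at a root of $p$. For $d=4,5$ one exploits the exceptional solvable structure of the Galois groups $S_4$ and $S_5$: the quartic reduces through its resolvent cubic, while the quintic reduces to an $A_5$-equivariant problem on $\hat{\mathbb{C}}$ for which the icosahedral rational maps of \cite{Doyle_McMullen} provide a generally convergent top layer. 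Chaining these into a serial cascade produces a Doyle--McMullen tower of bounded height, which by the dictionary above is a tower in the sense of Definition \ref{tower_funct}.

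For the non-existence statements I would invoke the obstructions developed in \cite{McMullen1, Doyle_McMullen}. A hypothetical generally convergent algorithm for $\mathbb{P}_d$ would select a root via an attracting cycle of $T_p$, and the monodromy of the root covering over $\mathbb{P}_d$ is the full symmetric group $S_d$; McMullen's Theorem 1.1 shows that this monodromy cannot be realised by a single rational map of any bounded degree when $d\geq 4$. For towers, the analogous obstruction is that the finite subgroups of $\mathrm{PSL}_2(\mathbb{C})$ are cyclic, dihedral, tetrahedral, octahedral, or icosahedral, so no equivariant rational map can realise the simple group $A_d$ for $d\geq 6$; hence no serial cascade of generally convergent algorithms can unravel the monodromy of $\mathbb{P}_d$ in that range, which is the content of the main theorem of \cite{Doyle_McMullen}.

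The principal obstacle lies not in the mathematical content, which is cited wholesale from \cite{McMullen1, Doyle_McMullen}, but in the bookkeeping of the translation: one must check that ``rationally combining outputs of previous algorithms'' in Definition \ref{DMcMullen} precisely corresponds to composing the lowest-level general algorithm of a tower with the outputs of the higher-level limits, and that convergence $T_p^n(z)\to \alpha$ in $\hat{\mathbb{C}}$ on an open dense subset of initial data agrees with convergence in the pseudo-metric of Example \ref{Ex1}(III) after the trivial initialisation $z_0$ is fixed. With this dictionary in place, Theorem \ref{SCI_McMullen} is immediate from the cited results and no further argument is needed.
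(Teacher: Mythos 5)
The paper does not supply a proof of Theorem~\ref{SCI_McMullen}: it quotes the result verbatim from \cite{McMullen1} and \cite{Doyle_McMullen} and then moves directly to recasting the quartic and quintic constructions as general towers, explicitly flagging that this identification holds only ``in a slightly weakened sense'' because the Doyle--McMullen iterations converge only on a dense open set of initial data. Your proposal therefore goes beyond what the paper itself does, and as a recapitulation of the cited literature it is in the right spirit; in particular, you correctly single out the weakened-convergence caveat as the one genuine subtlety in matching Definition~\ref{DMcMullen} to Definition~\ref{tower_funct}. One detail worth tightening: the obstruction for $d\geq 6$ in \cite{Doyle_McMullen} is phrased in terms of \emph{near solvability} of the Galois group (every composition factor cyclic or isomorphic to $A_5$), not merely membership of $A_d$ among the finite subgroups of $\mathrm{PSL}_2(\mathbb{C})$. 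The two criteria happen to agree for $S_d$ exactly when $d\leq 5$, which is why your sketch reaches the correct conclusion, but the finer group-theoretic formulation is what the original argument actually uses and is what one needs in order to apply the theorem to non-generic polynomial families.
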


Note that, as shown in \cite{Smale3}, there are generally convergent algorithms if, in addition, one allows the operation of complex conjugation.  
%changed the wording somewhat
In the following, we present how the Doyle--McMullen towers can be recast in the form of a general tower as defined in Definition \ref{tower_funct}.

\subsection{A height three tower for the quartic}

In the following $X,Y,\dots$ denote variables in the polynomials while $x,y,\dots \in \mathbb C$.
We build the tower following the standard reduction path, see e.g. \cite{Dickson59}.
Given 
$$
p(X):=X^4+a_1 X^3+a_2 X^2+a_3 X + a_4
$$
one first transforms the equation by change of variable $Y=X+a_1/4$ to arrive into the polynomial
$$
q(Y):= Y^4+b_2Y^2+b_3  Y +b_4,
$$
which one writes, with help of a parameter $z$, as 
$
q(Y) = (Y^2+z)^2 -r(Y,z)
$
where
$$
r(Y,z)= (2z-b_2)Y^2-b_3 Y+z^2-b_4.
$$
Here one wants a value of $z$ such that $r(Y,z)$ becomes a square which requires the discriminant to vanish:
$
4(2z-b_2)(z^2-b_4) - b_3^2=0.
$
Viewing this as polynomial in $Z$, making a change of variable $W=Z+ (1/6) b_2$ and scaling the polynomial to monic we arrive at asking for a root of 
\begin{equation}\label{kolmas}
s(W):= W^3+ c_2 W+c_3.
\end{equation}
As all these are rational computations on the coefficients of $p$, we shall not express them explicitly.

We denote by $N(f,\xi_0)$ the function in Newton's iteration with initial value $\xi_0$:
$$
\xi_{j+1} := N(f(\xi_j))   \     \text{ where } \   N(f(\xi))=\xi - \frac{f(\xi)}{f'(\xi)}
$$
and further by $N_j$ the mapping from initial data to the $j^{th}$ iterate
$
N_j :  \  (f,\xi_0)\mapsto \xi_j.
$
We shall apply Newton's iteration to the rational function \cite{Doyle_McMullen}
$$
t(W):=\frac{s(W)}{3c_2 W^2+9c_3W -c_2^2}.
$$
 Thus
$
w_j=N_j(t,w_0)
$
denotes the $j^{th}$ iterate $w_j$ for a zero for $s(w)=0$. This iteration converges in an open dense set of initial data. Denote $w_\infty:= \lim_{j\rightarrow \infty} w_j$. Now we change the variable $Z=W-(1/6)b_2$ and, denoting by $z_j$ and $z_\infty$ the corresponding values, we obtain $r(Y,z_\infty)$ as a square:
$$
r(Y,z_\infty)=(2z_\infty-b_2)\left(Y-\frac{b_3}{2(2z_\infty-b_2)}\right)^2.
$$

To find a zero of $q(Y)$ we shall need to have a generally convergent iteration for $\sqrt{2z-b_2}$. Thus, we set 
$
u_j(V):= V^2+b_2-2z_j 
$
and apply Newton's method for this, starting with initial guess $v_0$ and iterating $k$ times and set
$
v_{k, j}:= N_{k}(u_j, v_0).
$
From 
$q(Y)= (Y^2+ z_\infty)^2- r(Y,z_\infty)=0
$ we move to solve 
one of the factors
$$
Q(Y)=Y^2+z_\infty -\sqrt{2z_\infty-b_2}\left(Y-\frac{b_3}{2(2z_\infty-b_2)}\right)=0.
$$
However, we can do this only based on approximative values for the parameters, so we set
$$
Q_{k,j}(Y)=Y^2+z_j -v_{k,j}\left(Y-\frac{b_3}{2(2z_j-b_2)}\right)=0.
$$
Now apply Newton's iteration to this, say $n$ times, using starting value $y_0$ and denote the output by $y_{n,k,j}$:
$$
y_{n,k,j}=N_n(Q_{k,j}, y_0).
$$
Finally, we set
$
x_{n,k,j}=y_{n,k,j}-a_1/4
$
in order to get an approximation to a root of $p$.
Suppose now $j=n_1, k=n_2, n=n_3$. If $n_1\rightarrow \infty$ then $w_{n_1}\rightarrow w_\infty$ and hence $z_{n_1}\rightarrow z_\infty$, too.
It is natural to denote $u(V):= V^2+b_2-2z_\infty$ and correspondingly
$
v_{n_2}:=N_{n_2}(u,v_0)
$
and
$$
Q_{n_2}(Y)=Y^2+z_\infty -v_{n_2}\left(Y-\frac{b_3}{2(2z_\infty-b_2)}\right)=0.
$$
Then in an obvious manner
$
x_{n_3,n_2}= N_{n_3}(Q_{n_2}, y_0)  -a_1/4.
$
Then we have
$
\lim_{n_1\rightarrow \infty} x_{n_3,n_2,n_1} = x_{n_3,n_2}.
$
If we denote
$
x_{n_3} = N_{n_3} (Q, y_0)-a_1/4,
$
then clearly
$
\lim_{n_2\rightarrow \infty} x_{n_3,n_2} = x_{n_3}.
$
Finally 
$
x_\infty= \lim_{n_3\rightarrow \infty}x_{n_3}
$
is a root of $p$.

\subparagraph{\textbf{The link to the $\mathrm{SCI}$}}

One special feature of these towers, which are built on generally convergent algorithms, is the following: in addition to the polynomial $p$, the initial values for the iterations have to be read into the process via evaluation functions.
Denoting the initial values for the three different Newton's iterations by $d_0=(w_0, v_0,y_0) \in \mathbb C^3$ we can now put this Doyle--McMullen tower in the form of a general tower as defined in Definition \ref{tower_funct}, with the slight weakening that, for each $p\in\mathbb P_4$, the tower might converge only at a dense subset of initial values. In particular, set
\begin{equation*}
\begin{split}
\Gamma_{n_3} &:  \mathbb P_4 \times  \mathbb C^3  \rightarrow \mathbb C,  \text{  by  }
(p, d_0) \mapsto x_{n_3},\\
\Gamma_{n_3,n_2} &:  \mathbb P_4 \times  \mathbb C^3  \rightarrow \mathbb C \text{  by  }
(p, d_0) \mapsto x_{n_3,n_2},\\
\Gamma_{n_3,n_2,n_1} &:  \mathbb P_4 \times \mathbb C^3 \rightarrow \mathbb C  \text{  by  }
(p, d_0) \mapsto x_{n_3,n_2,n_1}.
\end{split}
\end{equation*}
%Now, for each $p\in\mathbb P_4$ the limits exist  in an open dense set of initial values.  
Thus, if we let $\Omega = \mathbb P_4 \times \mathbb C^3$ and $\Xi,\mathcal{M}$ be as in Example \ref{Ex1} (III), and complement $\Lambda$ by the mappings that read $w_0, v_0,y_0$ from the input, then by the construction above and Theorem \ref{SCI_McMullen} we have that 
$$
\mathrm{SCI}(\Xi,\Omega)_{\mathrm{DM}} \in \{2,3\}.
$$

 \subsection{A height three tower for the quintic}
Let
$$
p(X)=X^5+a_1X^4+a_2X^3+a_3X^2+a_4X+a_5
$$
be the given quintic. Doyle and McMullen \cite{Doyle_McMullen} give a generally convergent algorithm for the quintic in Brioschi form. Thus, one needs first to bring the general quintic to Brioschi form, then apply the iteration and finally construct at least one root for $p(X)$. In the following, we outline a path for doing this, which follows L. Kiepert \cite{Kiepert} except that the Brioschi quintic is solved by Doyle--McMullen iteration rather than by using Jacobi sextic. This path can be found in \cite{King}.  

One begins applying a Tschirnhaus transformation
$
Y=X^2 -uX+v 
$
to arrive into {\it principal} form
$$
q(Y)= Y^5+b_3Y^2+b_4Y+b_5.
$$
Here $v$ is obtained from a linear equation but to solve $u$ one needs to solve a quadratic equation 
$
 Q(U)=U^2 +\alpha U + \beta,
$
 where the coefficients $\alpha, \beta$ are rational expressions of the coefficients of $p(X)$, (see for example p. 100, eq. (6.2-9) in \cite{King}).
  
Here is the first application of Newton's method. We are given an initial value $u_0$ and iterate $j$ times $u_j= N_{j}(Q,u_0).$ We may assume that $v$ is known exactly but we only have an approximation $u_j$ to make the transformation. So, suppose the Newton iteration converges to $u_\infty$. Thus, we make the transformation using $u_j$ and {\it force} the coefficients $b_{2,j}=b_{1,j}=0$ while keep the others as they appear. The transformation being continuous yields polynomials
$$
q_j(Y)= Y^5+b_{3,j}Y^2+b_{4,j}Y+b_{5,j},
$$  
whose roots shall converge to those of $q(Y)$. The next step is to transform $q_j(Y)$ into Brioschi form. Let the Brioschi form corresponding to the exact polynomial $q(Y)$ be denoted by $B(Z)$
\begin{equation}\label{Brioschi form}
B(Z)=Z^5-10 C Z^3 +45 C^2 Z -C^2=0,
\end{equation}
while with $B_j(Z)$ we denote the exact Brioschi form corresponding to $q_j(Y)$. 
The transformation from $q(Y)$ to $B(Z)$ is of the form
\begin{equation}\label{Tsch}
Y= \frac{\lambda+\mu Z}{(Z^2/C)-3}.
\end{equation}
Here $\lambda$ satisfies a quadratic equation with coefficients being polynomials of the coefficients in the principal form (p. 107, eq. (6.3-28) in \cite{King}). Let us denote that quadratic by $R(L)$ when it comes from $q(Y)$ and by $R_j(L)$ when it comes from $q_j(Y)$ respectively. Thus here we meet our second application of Newton's method. So, we denote by 
$$\lambda_{k,j}:= N_k(R_j, \lambda_0)
$$ the output of iterating $k$ times for a solution of 
$
R_j(L)=0.
$ 
And, in a natural manner, we denote also
$$\lambda_k=N_k(R,\lambda_0) \quad \text{and} \quad \lambda=\lim_{k\rightarrow \infty} N_k(R,\lambda_0).$$
The corresponding values of $\mu_{k,j}, \mu_k$ and $\mu$ are then obtained by simple substitution (p. 107, eq. (6.3-30) in \cite{King}). The Tschirnhaus transformation with exact values $(\lambda,\mu)$ transforms the equation not yet to the Brioschi form with just one parameter $C$ but such that the constant term may be different. However, the last step is just a simple scaling, and then one is in the Brioschi form (\ref{Brioschi form}). However, when we apply the transformation with the approximated values
$(\lambda_{k,j},\mu_{k,j})$ or with $(\lambda_{k},\mu_{k})$ we do not arrive at the Brioschi form. So, we {\it force} the coefficients of the fourth and second powers to vanish and replace the coefficients of the first power to match with the coefficients in the third power. Finally, after scaling the constant terms we have the Brioschi quintics $B_{k,j}$ and $B_k$, e.g.
\begin{equation}\label{Brioschi-kj}
B_{k,j}(Z)=Z^5-10 C_{k,j} Z^3 +45 C_{k,j}^2 Z -C_{k,j}^2=0.
\end{equation}
Provided that the Newton iterations converge, that is, the initial values $
 (u_0, \lambda_0)$ are generic, these quintics converge to the exact one.

Here we apply the generally convergent iteration by Doyle and McMullen \cite{Doyle_McMullen}.
They specify a rational function 
$$
T_C(Z)=z-12 \frac{g_C(Z)}{g_C'(Z)}
$$
where $g$ is a polynomial of degree 6 in the variable $C$ and of degree 12 in $Z$.
Starting from an initial guess $w_o$ from an initial guess $w_{n+1}=T_C(T_C(w_n))$ to convergence and applying $T_C$ still once, we obtain, after a finite rational computation with these two numbers, two roots of the Brioschi, say $z_I$ and $z_{II}$. If applied to the approximative quintics and if the iteration is truncated after $n$ steps, together with the corresponding post-processing, we have obtained e.g. a pair $(z_{I, n , k , j},z_{II,n,k,j})$.
 
What remains is to invert the Tschirnhaus transformations. Suppose $z$ is a root of the exact Brioschi form (\ref{Brioschi form}). Then the corresponding root of the principal quintic is obtained 
immediately from (\ref{Tsch})
$$t
y= \frac{\lambda+\mu z}{(z^2/C)-3}.
$$
Naturally, we can only apply this using approximated values for the parameters. Finally, one needs to transform the (approximative roots) of the principal quintic to (approximative) roots for the original  general quintic $p(X)$. This is done by a rational function
$
X= r(Y)
$
where $r(Y)$ is of second order in $Y$ and  the coefficients are polynomials of the coefficients if the original $p(X)$ and $u$ and $v$ (p. 127, eq. (6.8-3) in \cite{King}). Again, we would be using only approximative values $u_j$ in place of the exact $u$. In any case, at the end we obtain a pair of approximations to the roots of the original quintic. If we put $n_1=j, n_2=k$ and $n_3=n$, then this pair could be denoted by $(x_{I, n_3,n_2,n_1}, x_{II, n_3,n_2,n_1})$.

\subparagraph{\textbf{The link to the $\mathrm{SCI}$}}
In the same way as with the quartic, we assume that the initial value $d_0 =(u_0, \lambda_0, w_0) \in \mathbb C^3$ is generic, so that all iterations converge for large enough values and since the transformations are continuous functions of the parameters in it, all necessary limits exist and match with each other. The functions $\Gamma_{n_3,n_2,n_1}$ can then be identified in a natural manner:
\begin{equation*}
\begin{split}
&\Gamma_{n_3}  \ :  \mathbb P_5 \times \mathbb C^3  \rightarrow \mathbb C^2,   \text{  by  }
(p,d_0) \mapsto (x_{I, n_3}, x_{II, n_3}),\\
&\Gamma_{n_3,n_2}  \ :  \mathbb P_5 \times \mathbb C^3 \rightarrow \mathbb C^2  \text{  by  }
(p,d_0) \mapsto (x_{I, n_3,n_2}, x_{II, n_3,n_2}),\\
&\Gamma_{n_3,n_2,n_1}  \ :  \mathbb P_5 \times \mathbb C^3 \rightarrow \mathbb C^2  \text{  by  }
(p,d_0) \mapsto (x_{I, n_3,n_2,n_1}, x_{II, n_3,n_2,n_1}),
\end{split}
\end{equation*}
where $(x_{I, n_3,n_2}, x_{II, n_3,n_2})$ and $(x_{I, n_3}, x_{II, n_3})$ are the limits as $n_1 \rightarrow \infty$ and $n_2 \rightarrow \infty$ respectively. These limits exist for initial values in an open dense subset of $\mathbb C^3$.
Hence, we let $\Omega = \mathbb P_5 \times \mathbb C^3$, and $\Xi,\mathcal{M},\Lambda$ be as in case of the quartic. Then, by the construction above and Theorem \ref{SCI_McMullen} we have, again in a slightly weakened sense, that 
$$
\mathrm{SCI}(\Xi,\Omega)_{\mathrm{DM}} \in \{2,3\}.
$$

\subsection{Particular initial guesses and height one towers}
The special feature of the above mentioned Doyle--McMullen towers is that they address the question of whether one can achieve converge to the roots of a polynomial $p$ for (almost) arbitrary initial guesses. With a slight change of perspective, one might also ask the question of how large the SCI gets if one applies purely iterative algorithms \textit{after a suitable clever choice} of initial values. And indeed, the answer to this question is very satisfactory: For polynomials of arbitrary degree, one can compute the whole set of roots (more precisely: approximate it in the sense of the Hausdorff distance) by a tower of height one which just consists of Newton's method.

\medskip 

The key tool for the choice of the initial values is the main theorem of \cite{HubbSS}:
\begin{theorem}[Hubbard, Schleicher and Sutherland \cite{HubbSS}]\label{THubSchleiSuth}
For every $d\geq 2$ there is a set $S_d$ consisting of at most $1.11 d \log^2 d$ points in $\Cb$ with the property that for every polynomial $p$ of degree $d$ and every root $z$ of $p$ there is a point $s\in S_d$ such that the sequence of Newton iterates $\{s_n\}_{n\in\Nb}:=\{N_p^n(s)\}_{n\in\Nb}$ converges to $z$.
In particular, the proof is constructive, and these sets $S_d$ can easily be computed.
\end{theorem}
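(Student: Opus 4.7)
The plan is to follow the strategy of complex dynamics for Newton's method applied to polynomials. First I would normalize: by an affine change of variable $z \mapsto az+b$, assume without loss of generality that $p$ is monic with all roots in the closed unit disk $\overline{\Dbb}$; Newton's method $N_p(z) = z - p(z)/p'(z)$ transforms equivariantly under such changes. With this normalization, $N_p$ is a rational self-map of the Riemann sphere of degree $d$, having the $d$ roots of $p$ as super-attracting fixed points and $\infty$ as a repelling fixed point. The choice of $S_d$ will then be made in an annulus of radii depending only on $d$, and its effectiveness will be independent of the particular $p$.

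The central technical ingredient I would invoke is that each immediate basin of attraction $U_\alpha$ of a root $\alpha$ is simply connected and \emph{unbounded}: each $U_\alpha$ contains at least one access to $\infty$, i.e.\ an unbounded simple curve whose image under $N_p^n$ tends to $\alpha$. This is a classical result (going back to Przytycki) and follows from the observation that $\infty$ is a repelling fixed point for $N_p$ with $d$ local inverse branches, one asymptotic to each root. Consequently, near $\infty$ the dynamical plane decomposes into $d$ ``invariant channels,'' one for each root, and a local conformal analysis yields a uniform lower bound of order $1/d$ on the angular width of each channel at sufficiently large radius.

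Given this geometric picture, $S_d$ is constructed by a covering argument. One places points on a family of $N_r \sim \log d$ concentric circles of radii geometrically spaced between $1$ and $R_d \sim e^{O(d)}$, and on each circle one distributes $N_\theta \sim d \log d$ equally spaced points. The radial spacing is dictated by the fact that the channels may be narrow at certain scales but are guaranteed to have angular width at least $c/d$ on at least one circle of the chosen family; the angular spacing is then chosen so that any angular sector of width $c/d$ contains at least one of the $N_\theta$ points. By the channel property above, for every root $\alpha$ at least one point of $S_d$ falls inside $U_\alpha$, so Newton's method started there converges to $\alpha$. The total count is $N_r \cdot N_\theta$, and optimizing constants yields the bound $1.11\, d \log^2 d$.

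The main obstacle is the quantitative lower bound on the channel widths: one must show that, although the $d$ basins compete for room at infinity, each one occupies an angular sector of width $\gtrsim 1/d$ on the relevant scales. This is essentially a statement about the Green's function of $N_p$ at $\infty$ and about equidistribution of preimages under the $d$-to-$1$ map $N_p$ near $\infty$. The rest of the argument -- the constructivity of $S_d$ and its computability -- is then explicit from the radii and angular grids, completing the proof.
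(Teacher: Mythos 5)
The paper merely cites this theorem to Hubbard, Schleicher, and Sutherland \cite{HubbSS} and gives no proof, so the comparison must be made with the original HSS argument rather than with anything in the paper. Your outline has the right high-level shape: normalize so that all roots lie in the closed unit disk, invoke unboundedness and simple connectivity of the immediate basins (Przytycki/Shishikura/Meier), use the resulting channels to $\infty$, and cover them with a grid of points on $O(\log d)$ concentric circles.

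The gap is in the quantitative step. Your central technical claim --- that ``a local conformal analysis yields a uniform lower bound of order $1/d$ on the angular width of each channel at sufficiently large radius'' --- is not what HSS prove and is stronger than what is true: channels can be arbitrarily thin at any single radius. The actual argument establishes a lower bound on the \emph{conformal modulus} of at least one channel of each basin: conjugating $N_p|_{U_\alpha}$ via the Riemann map to a Blaschke product of degree $k_\alpha\leq d$, the channels correspond to repelling boundary fixed points, and the holomorphic fixed-point index identity $\sum_i 1/(\lambda_i-1)=1$ forces at least one multiplier $\lambda$ to satisfy $\lambda\leq k_\alpha\leq d$, giving a channel of modulus at least $\pi/\log d$. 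The geometric content is then extracted by a Gr\"otzsch/length--area argument: slicing the relevant annulus near $\infty$ into $\sim\log d$ sub-annuli, one deduces that the channel crosses at least one of the chosen circles in an arc of angular measure $\gtrsim 1/(d\log d)$, not $\gtrsim 1/d$. Your own count already flags the problem: a width $\gtrsim 1/d$ would require only $O(d)$ points per circle, giving $O(d\log d)$ total rather than the $O(d\log^2 d)$ in the statement, so the bound you assert cannot be what drives the construction. Finally, the tools you suggest for this step --- the Green's function of $N_p$ at $\infty$ and equidistribution of preimages --- are not the ones used: $\infty$ is a repelling, not superattracting, fixed point of $N_p$, so there is no B\"ottcher coordinate or Green's function attached to it, and the real work is done by moduli of quadrilaterals inside the basins together with the Blaschke-product model of the restricted dynamics.
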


A further important property of Newton's method is that, in the case of convergence, the speed is at least linear: If $z_n:=N_p^n(s)$ tend to a root $z$ of $p$ then there exists a constant $c$ such that $|z_n-z|\leq c/n$.
Finally we have the following.
\begin{proposition} \label{PRootDist}
Let $p$ be a polynomial of degree $d$, $\epsilon>0$ and $z_n:=N_p^n(s)$. If $|z_n-z_{n+1}|<\frac{\epsilon}{d}$ then there is a root $z$ of $p$ with $|z_n-z|<\epsilon$.
\end{proposition}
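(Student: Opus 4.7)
The plan is to express the Newton step explicitly in terms of the roots of $p$ and apply a one-line estimate that translates smallness of $|z_n - z_{n+1}|$ into proximity of $z_n$ to some root. First I would recall that the Newton update gives
\[
z_n - z_{n+1} = \frac{p(z_n)}{p'(z_n)},
\]
so the hypothesis becomes $\bigl|p(z_n)/p'(z_n)\bigr| < \epsilon/d$. The case $p(z_n) = 0$ is trivial (then $z_n$ is itself a root and distance $0 < \epsilon$), and $p'(z_n) = 0$ together with $p(z_n) \neq 0$ would make the Newton iterate undefined, so we may assume both are nonzero.

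Next I would use the logarithmic derivative. Writing $p(z) = c\prod_{i}(z - \zeta_i)^{m_i}$ where the $\zeta_i$ are the distinct roots with multiplicities $m_i$ satisfying $\sum_i m_i = d$, one has
\[
\frac{p'(z_n)}{p(z_n)} = \sum_i \frac{m_i}{z_n - \zeta_i}.
\]
Let $\zeta_{i^*}$ be a root closest to $z_n$, and set $\delta := |z_n - \zeta_{i^*}| = \min_i |z_n - \zeta_i|$. By the triangle inequality,
\[
\left| \frac{p'(z_n)}{p(z_n)} \right| \;\le\; \sum_i \frac{m_i}{|z_n - \zeta_i|} \;\le\; \frac{1}{\delta} \sum_i m_i \;=\; \frac{d}{\delta}.
\]
Taking reciprocals,
\[
|z_n - z_{n+1}| = \left|\frac{p(z_n)}{p'(z_n)}\right| \;\ge\; \frac{\delta}{d}.
\]

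Combining this with the hypothesis $|z_n - z_{n+1}| < \epsilon/d$ gives $\delta/d < \epsilon/d$, i.e. $\delta < \epsilon$, which is exactly the desired conclusion with $z := \zeta_{i^*}$.

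There is no real obstacle here; the only subtlety worth mentioning is the bookkeeping with multiplicities in the partial-fraction expansion of $p'/p$ so that the total weight in the sum is precisely $d$, which is what makes the constant $d$ in the statement sharp (rather than the number of distinct roots).
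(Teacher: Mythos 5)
Your proof is correct and is essentially the same argument as the paper's: both rely on the decomposition of $p$ into linear factors and the resulting expansion of $p'(z_n)/p(z_n)$ as a sum over roots. The paper phrases it by choosing the index $j$ that maximizes $|\prod_{i\neq j}(z_n - x_i)|$ (equivalently, minimizes $|z_n - x_j|$) and bounding $|p'(z_n)|$ by $d$ times that maximal factor, while you use the logarithmic derivative $p'/p = \sum m_i/(z_n - \zeta_i)$ and bound it by $d/\delta$ — the same estimate in a slightly cleaner dress.
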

\begin{proof}
We have $\left|\frac{p(z_n)}{p' (z_n)}\right|=|z_n-z_{n+1}|<\frac{\epsilon}{d},
\text{ hence } |p(z_n)|<\frac{\epsilon |p' (z_n)|}{d}$.
Decompose $p(x)=a\Pi_{i=1}^d(x-x_i)$, notice that $p' (x)=a\sum_{j=1}^d\Pi_{i=1,i\neq j}^d(x-x_i)$, choose $j$ such that
$|\Pi_{i=1,i\neq j}^d(z_n-x_i)|$ is maximal, and conclude that 
\[|a\Pi_{i=1}^d(z_n-x_i)|=|p(z_n)|<\frac{\epsilon |p' (z_n)|}{d}\leq \epsilon |a\Pi_{i=1,i\neq j}^d(z_n-x_i)|,\] 
thus $|z_n-x_j|<\epsilon$. Now $z=x_j$ is a root as asserted.
\end{proof}

Let $p$ be a polynomial of degree $d$. For each $s\in S_d$ let $s_n$ denote the $n$th Newton iterates of $s$, and define
\begin{equation}\label{EPurelyIterative}
\Gamma_n(p):=\left\{s_n: s\in S_d, |s_n-s_{n+1}|<\frac{1}{\sqrt{n}}\right\}.
\end{equation}
Then $(\Gamma_n(p))$ converges to the set $\mathcal{Z}(p)$ of all zeros of $p$ in the Hausdorff metric.
Indeed, let $z$ be a zero of $p$. By Theorem \ref{THubSchleiSuth} there is an initial value $s\in S_d$ such that $s_n=N_p^n(s)$ tend to $z$ with at least linear speed, i.e.
\[|s_n-s_{n+1}|\leq |s_n-z|+|s_{n+1}-z|\leq \frac{2c}{n} <\frac{1}{\sqrt{n}}\]
for all large $n$, hence $s_n\in \Gamma_n(p)$ for all large $n$.
Conversely, each $s_n\in\Gamma_n(p)$ has the property that its distance to the set $\mathcal{Z}(p)$ is less than $\epsilon=\frac{d}{\sqrt{n}}$ by Proposition \ref{PRootDist}.

Therefore we define $\Omega_d=\mathbb P_d$ to be the set of polynomials of degree $d$, $\mathcal M$ the set of finite subsets of $\Cb$ equipped with the Hausdorff metric, and $\Xi:\Omega_d\to\mathcal M$ be the mapping that sends $p\in\Omega_d$ to the set of its zeros. Further, $\Lambda_d$ shall consist of the evaluation functions that read the coefficients of the polynomial $p\in\Omega_d$, and the constant functions with the values $s\in S_d$. Note again that these values can be effectively constructed.

\begin{theorem}
Consider $(\Xi,\Omega_d,\mathcal M, \Lambda_d)$ as above. Then the algorithms \eqref{EPurelyIterative} define an arithmetic tower of height one for the computation of the roots of each input polynomial $p$ with error control. Thus this tower realises $\{\Xi,\Omega_d,\mathcal M, \Lambda_d\}\in\Sigma_1^A$. Moreover, this tower employs just Newton's Method, i.e. a purely iterative algorithm.
\end{theorem}

\section{Computational Examples}\label{numerics}
The purpose of this section is to demonstrate that the new towers of algorithms developed to yield the sharp classifications in the SCI hierarchy yield implementable algorithms that are efficient. In the case of $\Sigma^A_1$ classifications, up to a user specified error tolerance, they will never produce incorrect output. This fact makes the algorithms particularly suited for computer assisted proofs. Moreover, they provide the first computations done of spectra of several types of operators that before were out of reach. Convergent algorithms that never make mistakes are of course sought after in the sciences, and the reader may consult \cite{PRL} to see the algorithms used in practice for large scale problems in physics. 

\subsection{Toeplitz operators}
\begin{figure}
\includegraphics[width=1\textwidth,clip,trim= 40mm 0mm 45mm 0mm]{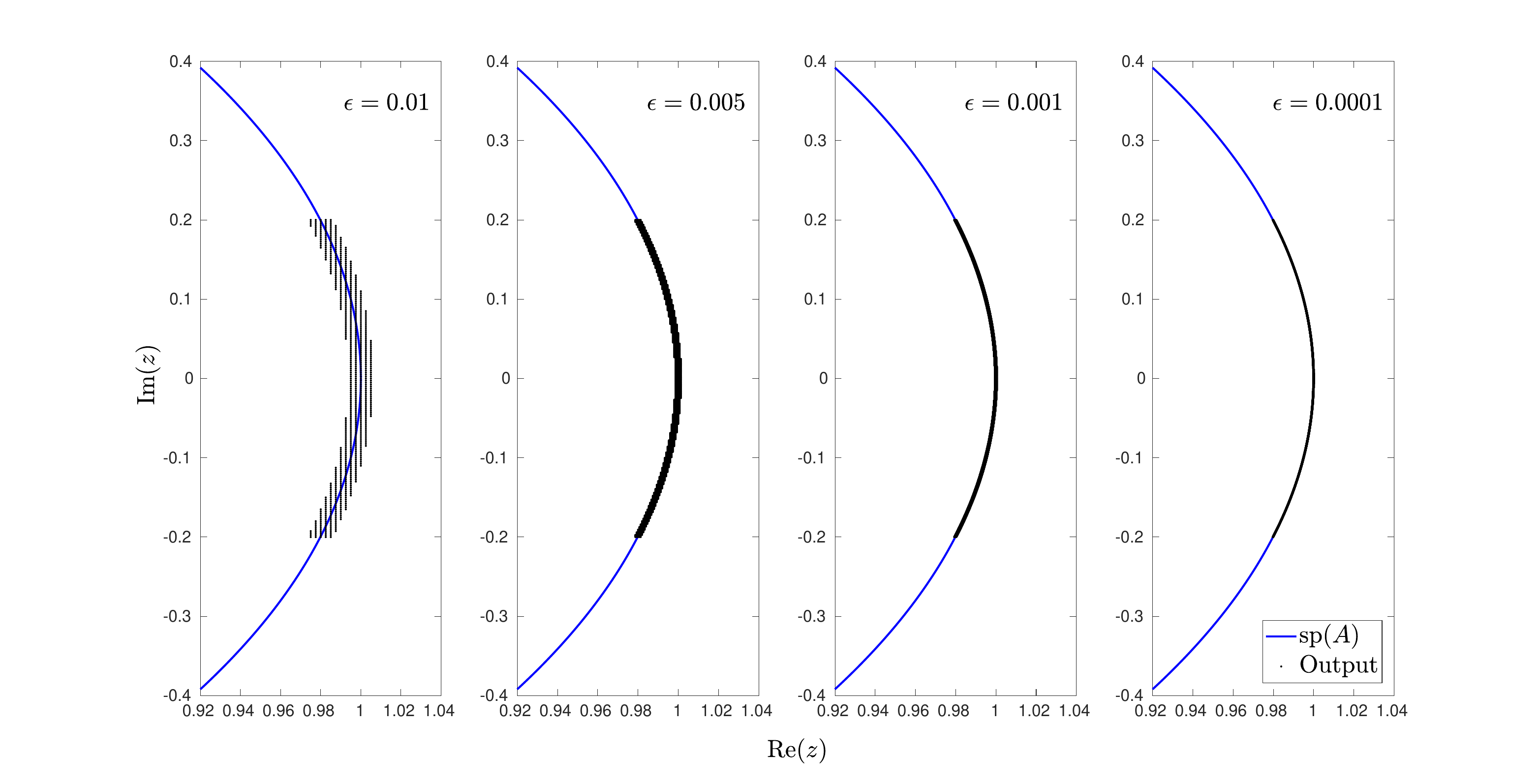}
\caption{The figure shows a $\Gamma_n(A) \cap K$ (black) for a compact set $K \subset \mathbb{C}$ on top of a part of $\mathrm{sp}(A)$ (blue) for different increasing values of $n$ corresponding to the chosen $\epsilon$, where $A$ is the shift operator on $l^2(\mathbb{Z})$.}
\label{shift}
\end{figure}

Toeplitz and Laurent operators are familiar test objects given that their spectra are very well understood \cite{bottcher2006analysis, Bottcher_book}. In this first example, we are concerned with operators that are banded with known growth on their resolvents. In particular, the problem of computing the spectrum lies in $\Sigma_1^A$ and has $\mathrm{SCI} = 1$. Since the problem does not lie in $\Pi_1^G$, we monitor the changes of $\Gamma_n(A)$ as $n \rightarrow \infty$. This is common practice in computations when error control is not available. In particular, we choose an $\epsilon > 0$ and $K \in \mathbb{N}$ and stop the iteration when 
\begin{equation}\label{stopping}
\max\{E_n(A),d(\Gamma_n(A),\Gamma_{n+k}(A))\} \leq \epsilon \, \text{ for all } \, k \leq K.
\end{equation}  
Here $E_n(A)$ refers to the error guarantee $\Gamma_n(A)\subset \mathrm{sp}(A)+B_{E_n(A)}(0)$ provided by the algorithm. To visualise the convergence, we tested the tower of height one on the shift operator in Figure \ref{shift}. Note that it is crucial to know the SCI of the problem so that one can apply the tower of algorithms with the correct height. In particular, trying to solve this problem with a tower of height two would make the computation incredibly more complex. Compare, for example, with the experiment in \S \ref{fQ}.

\subsection{Spectra and approximate eigenvectors of operators on aperiodic tilings}
\begin{figure}
\centering 
\includegraphics[width=1\textwidth,clip,trim= 0mm 0mm 0mm 0mm]{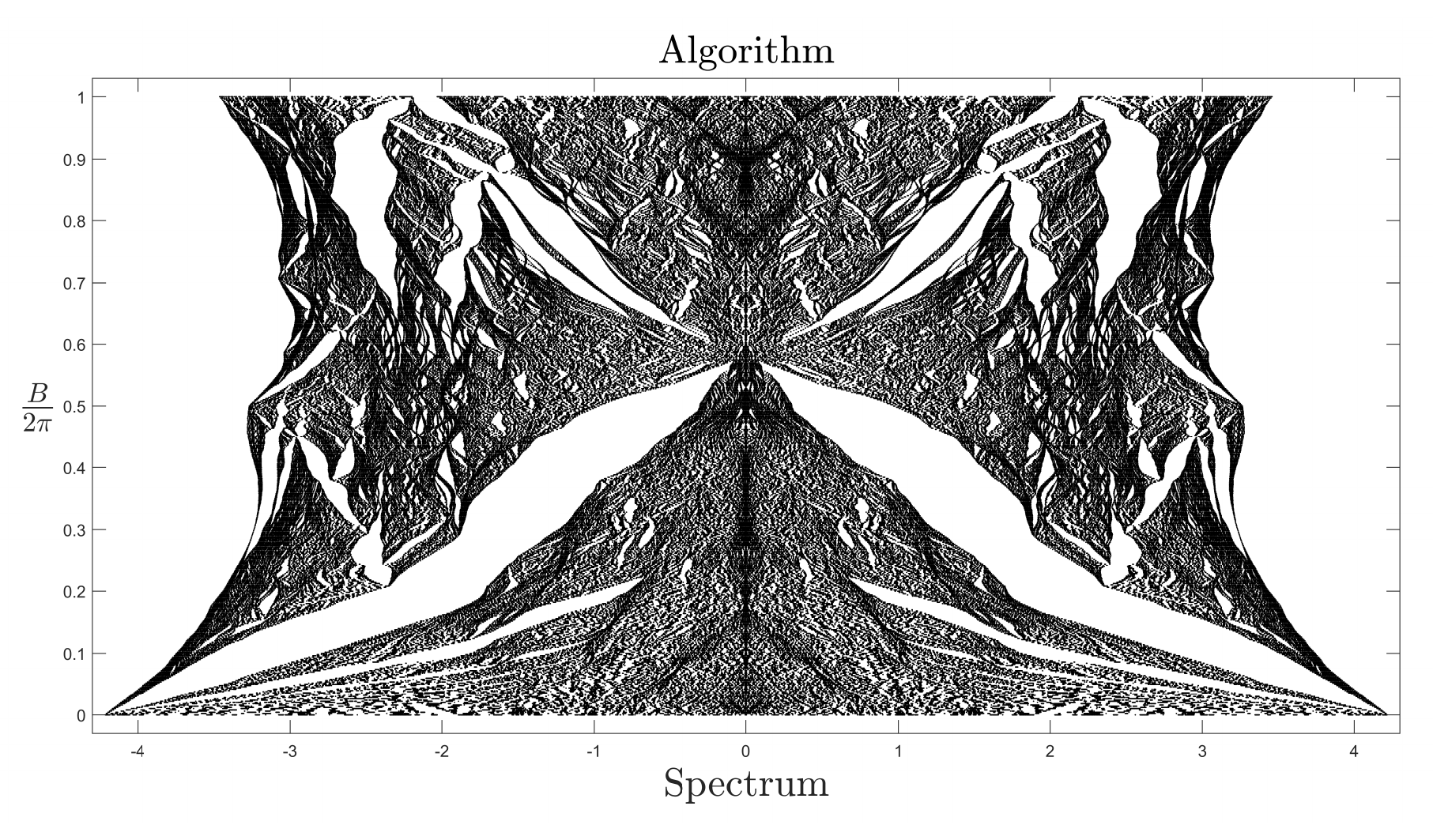}
\includegraphics[width=1\textwidth,clip,trim= 0mm 0mm 0mm 0mm]{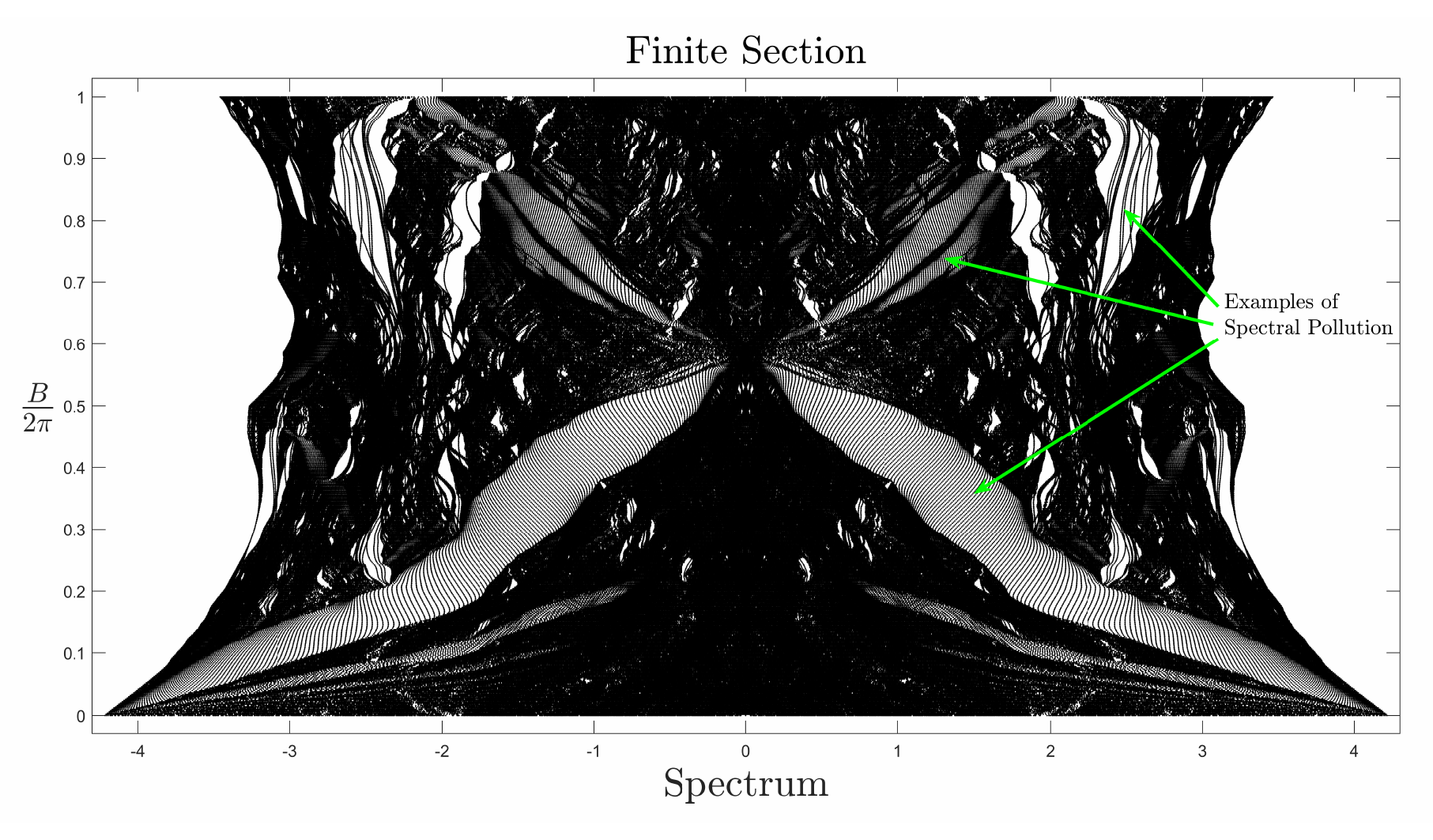}
%\begin{overpic}
%	\put(87,32)  {\vector(-1,-1){14}}
%	\put(85,34.5)  {\vector(-2,1){19}}
%	\put(87,37)  {\vector(-1,1){8}}
% 	\end{overpic}
\caption{Top: Output of the algorithm providing $\Sigma^A_1$ classification computing spectra of the Hamiltonian in \eqref{eq:mag_ham} with error tolerance parameter $10^{-2}$ and different strengths of the magnetic field. The algorithm correctly leaves out the gaps and shows the fractal nature of the spectrum. Bottom: Output of the finite section method ($4000$ basis sites) showing severe spectral pollution.}
\label{Quas2}
\end{figure}
\begin{figure}
\centering
\includegraphics[height=52mm,clip,trim= 55mm 90mm 55mm 90mm]{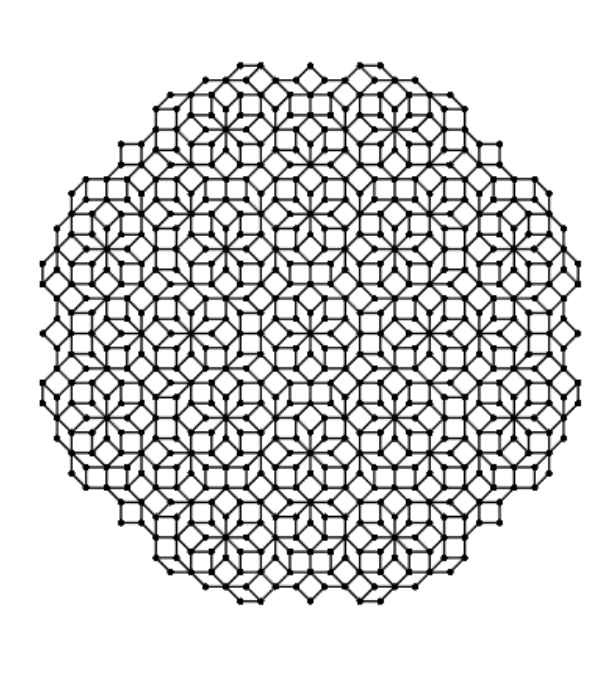}
\includegraphics[height=52mm,clip,trim= 20mm 0mm 34.5mm 0mm]{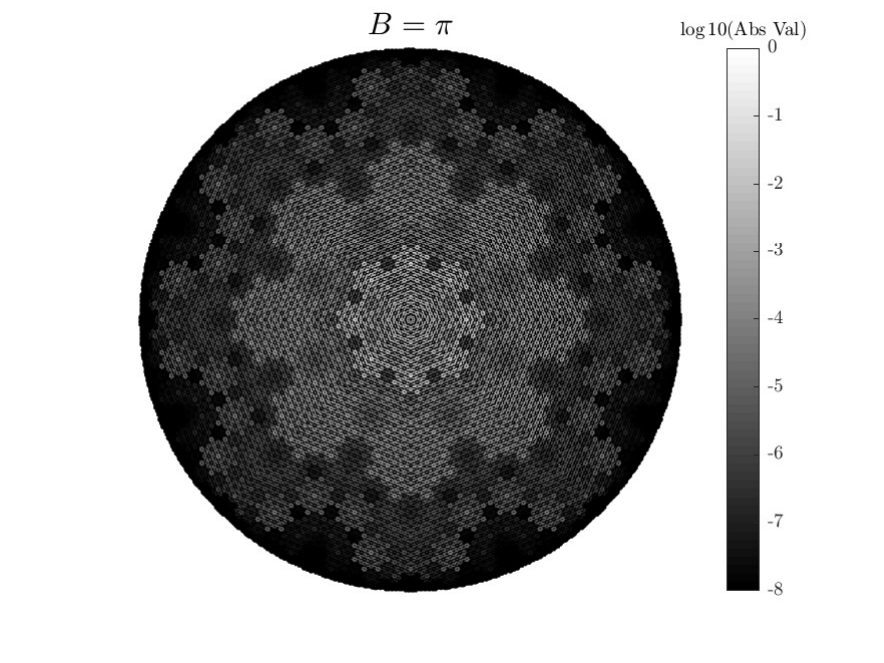}
\includegraphics[height=52mm,clip,trim= 20mm 0mm 5mm 0mm]{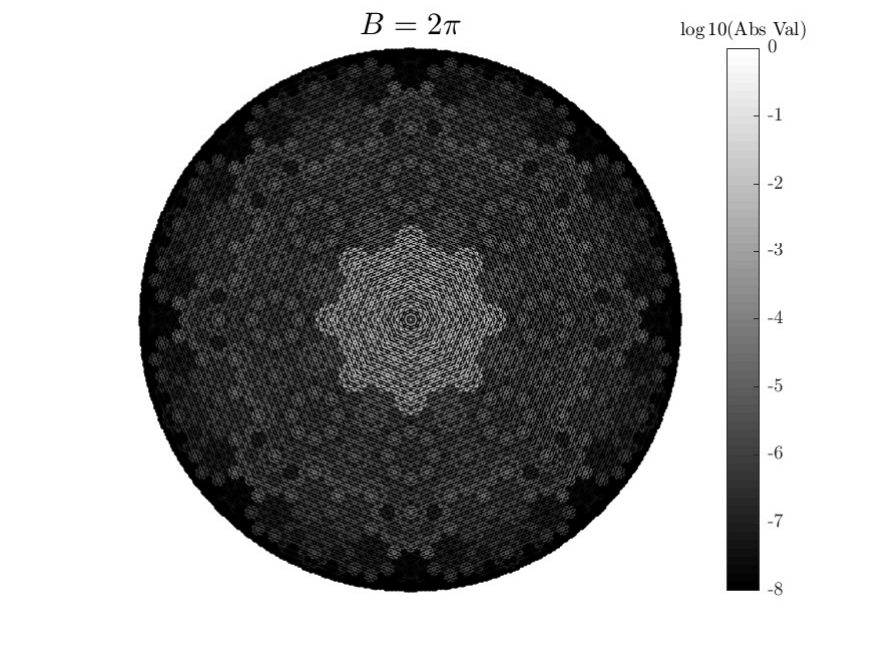}
\caption{Left: Finite portion of the Ammann--Beenker tiling. The vertices correspond to the sites. Middle and Right: Approximate states (eigenvectors) $\psi$ corresponding to the value $\lambda=0$ for $B = \pi, 2\pi$ (logarithm of absolute value shown). These have bounds of $\|(H- \lambda)\psi\|$ by $3.3\times 10^{-7}$ and $1.5\times 10^{-6}$ respectively and were computed using $10^5$ basis sites.}
\label{Quas1}
\end{figure}

Quasicrystals,\footnote{Discovered in 1982 by D. Shechtman who was awarded the Nobel prize in 2011 for his discovery.} and more generally aperiodic systems, have generated considerable interest due to their often exotic physical/spectral properties \cite{PhysRevLett.53.1951,stadnik2012physical}. However, the lack of reliable algorithms have limited the insight obtained from computations. We present the first rigorous spectral computational study with error bounds on an Ammann--Beenker tiling, a standard 2D model of a quasicrystal \cite{ammann1992aperiodic,steurer2004twenty}. Such models are difficult to deal with due to the lack of translational symmetry. The tiling has eight-fold rotational symmetry, shown in Figure \ref{Quas1} (left), which has been found to occur in real quasicrystals, e.g. in \cite{wang1987two}. We consider a magnetic Hamiltonian
\begin{equation}\label{eq:mag_ham}
(H\psi)_a=-\sum_{ a \sim b }e^{\mathrm{i}\alpha_{b,a}}\psi_b,
\end{equation}
where the $a\sim b$ means vertices $a$ and $b$ are connected by an edge, and hence the summation is over connected sites. A constant perpendicular magnetic field with potential $\textbf{A}(x,y,z)=(0,xB,0)$ with $B \in \mathbb{R}$ is applied, leading to the Peierls phase factor between sites $a$ and $b$:
$$
\alpha_{b,a}=\int_{b}^{a} \textbf{A}\cdot d\textbf{l},
$$
where $\textbf{l}$ is the arclength. 
Figure \ref{Quas2} shows the output of the algorithm providing $\Sigma^A_1$ classification computing spectra of the Hamiltonian in \eqref{eq:mag_ham} for different values of $B \in [0,2\pi]$ using the stopping criterion \eqref{stopping} and an error tolerance of $10^{-2}$ . The algorithm correctly leaves out the gaps in the spectrum, avoiding spectral pollution. We also show the output of the finite section method which suffers from severe spectral pollution. One can study quasiperiodic tilings via periodic approximates \cite{duneau1989approximants}. However, it is not clear how these approximations affect the spectrum \cite{massatt2017electronic} and in the case of magnetic field, this imposes severe restrictions on the values of $B$ allowed \cite{tran2015topological}. In contrast, there is no such limitation for the new algorithm, which also provides rigorous error bounds and is guaranteed to converge. The new algorithm can also be used for non-constant magnetic fields. Finally, in Figure \ref{Quas1}, we have also shown approximate eigenvectors for different values of $B$.

\subsection{Non-Hermitian Hamiltonians}
Non-Hermitian Hamiltonians have been standard in open systems, however, they have also found their way to quantum mechanics of closed systems due to the seminal work of C. Bender \cite{Bender, Bender3}. There are also other variants of non-Hermitian quantum mechanics pioneered by N. Hatano  and D. R. Nelson \cite{Hatano_96, Hatano_97}.  The non-self-adjointness make spectral computations incredibly difficult, and algorithms have typically not been available for rigorous computations. 
As an example of computing pseudospectra and to demonstrate generality, we consider a non-normal operator $A$ on $l^2(\mathbb{N})$ given by
\begin{equation}\label{pp}
(Ax)_n=\begin{cases}x_{n-1}+\mathrm{i}\sin(n)x_n-x_{n+1},\quad &\text{if }n+1\text{ is prime}\\
x_{n-1}+\mathrm{i}\sin(n)x_n+x_{n+1},\quad &\text{otherwise},
\end{cases},
\end{equation}
with the convention that $x_0=0$. Figure \ref{non_herm} shows pseudospectra computed using the new algorithm providing $\Sigma^A_1$ classification and attempts of computing pseudospectra using square finite section truncations for $2000$ basis vectors. We see that taking square truncations gives rise to over estimates of the resolvent norm resulting in incorrect spectral information.  

\begin{figure}
\centering
\includegraphics[width=0.49\textwidth,clip,trim= 0mm 0mm 0mm 0mm]{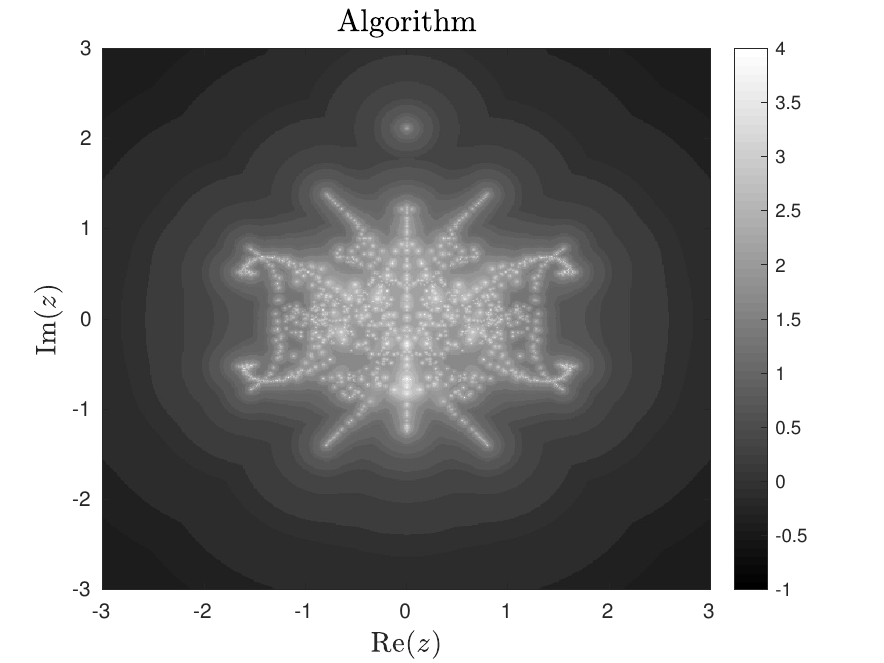}
\includegraphics[width=0.49\textwidth,clip,trim= 0mm 0mm 0mm 0mm]{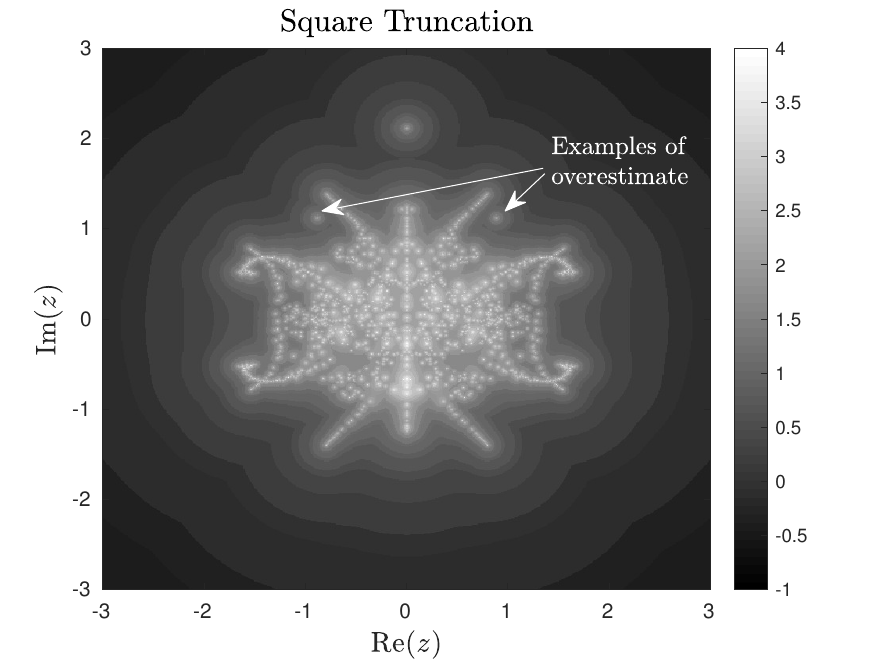}
\caption{Left: Pseudospectra for the operator given by (\ref{pp}) computed by the algorithm providing $\Sigma^A_1$ classification of the problem of computing pseudospectra. The colourbars correspond to the logarithm (base $10$) of the resolvent norm (truncated at $4$ for visibility). Right: failed attempt of computing pseudospectra with classical square truncation of the operator.}
\label{non_herm}
\end{figure}

\subsection{Schr\"odinger operator on $\mathbb{R}$}

\begin{figure}
\centering
\includegraphics[width=1.05\textwidth,clip,trim= 25mm 0mm 0mm 0mm]{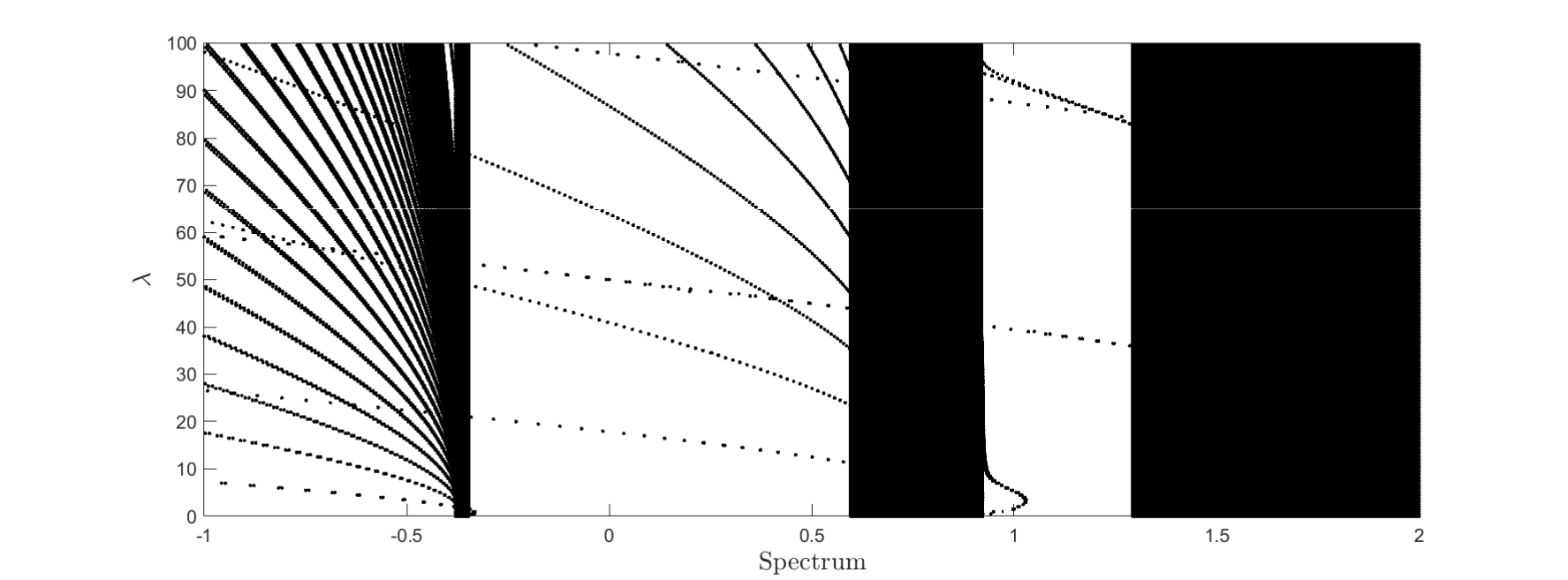}
\caption{A portion of the computed spectrum of the one dimensional Schr{\"o}dinger operator with potential $V_{\lambda}$ from \eqref{sinc_pot} computed with an algorithm, providing the  $\Sigma_1^A$ classification, with error bound $\epsilon=0.01$.}
\label{cts_model}
\end{figure}

We now test the algorithm that computes spectra of Schr\"odinger operators 
\begin{equation*}
	 H=-\Delta+V,\qquad V:\R^d\rightarrow \mathbb{R},
\end{equation*} 
acting on $\mathrm{W}^{2,2}(\Rbb)$ (i.e. a continuum model) with bounded potential. This example demonstrates the power of the $\Sigma^A_1$ classification of Theorem \ref{main_self_adjoint}. Recall that our algorithm uses only evaluations of the potential itself. As the class of problems considered are $\in \Sigma_1^A$ and have SCI $=1$, we shall use the stopping criterion in (\ref{stopping}) with $\epsilon=0.01$. 
We chose the slowly decaying potentials
\begin{equation}
\label{sinc_pot}
V_{\lambda}(x)=\cos(x)+\lambda \frac{\sin(x)}{x}.
\end{equation}
When $\lambda=0$, the operator is periodic, and computation of the spectrum reduces to computing spectra of two differential operators on a compact interval (with periodic and anti-periodic boundary conditions) which have compact resolvent. However, when $\lambda\neq 0$, eigenvalues appear in the gaps of the essential spectrum (and below), and methods based on finite section produce spectral pollution. Additionally to this, the slow decay of the potential makes this extremely difficult to detect via other means. In Figure \ref{cts_model} we display the computation of a portion of the spectrum in $[-1,2]$ for various choice of $\lambda$ and the error bound $\epsilon=0.01$. The algorithm allows us to track eigenvalues in the gaps with error control guaranteeing the error bound.

\subsection{The operator $f(Q)$}\label{fQ}
\begin{figure}
\includegraphics[height=105mm]{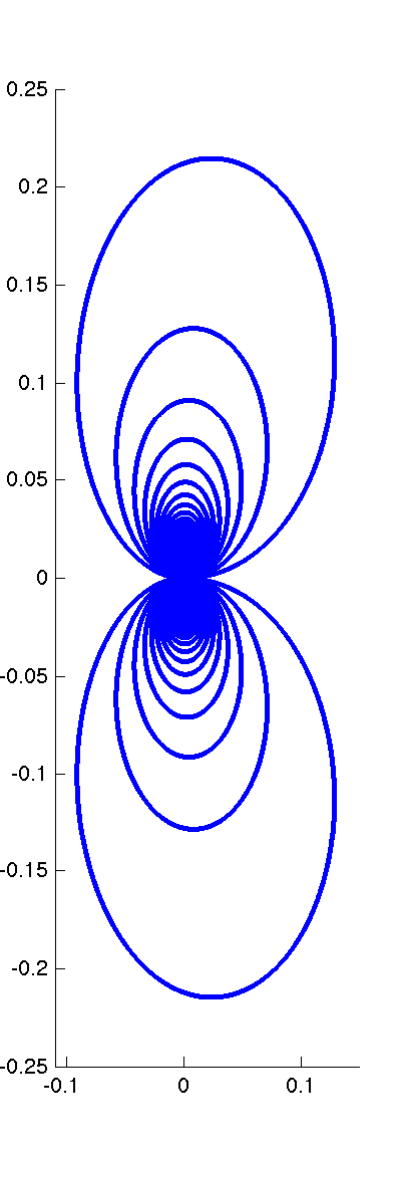}
\includegraphics[height=105mm]{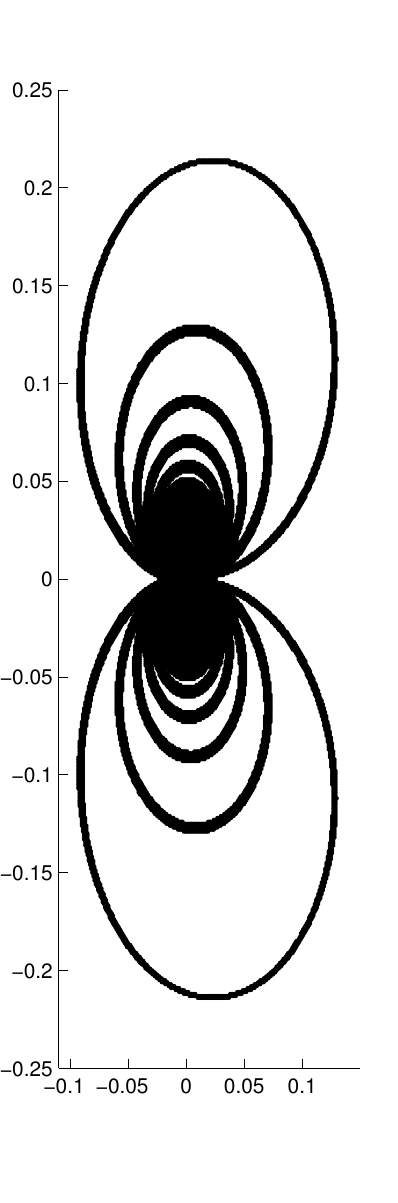}
\includegraphics[height=105mm]{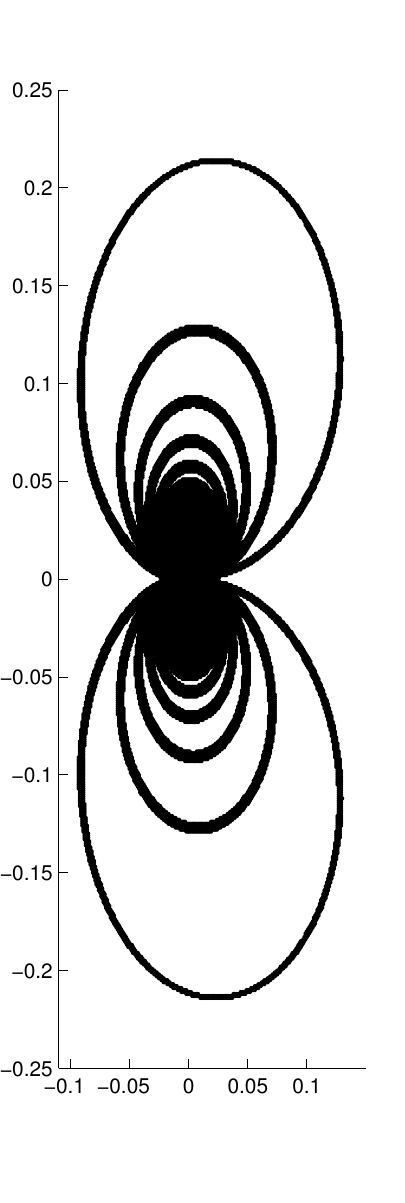}
\includegraphics[height=105mm]{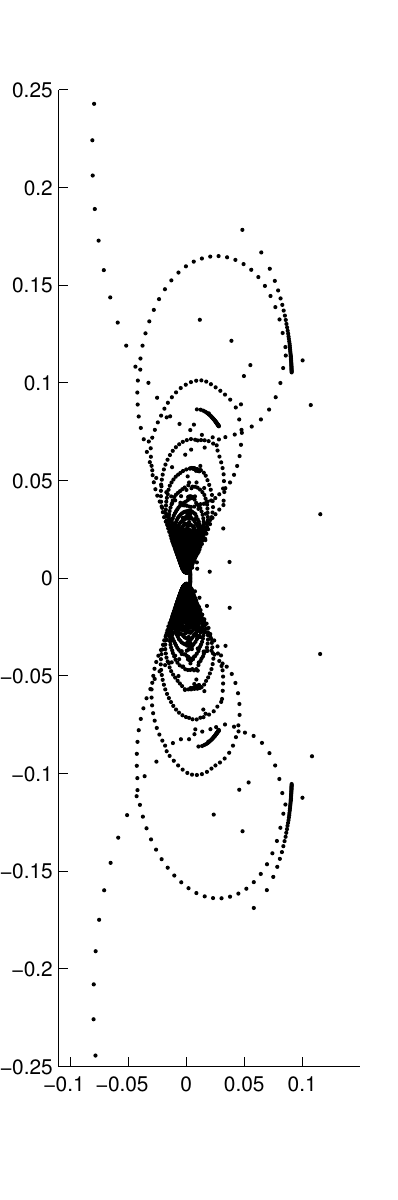}
\caption{The left figure is a zoomed in part of $\mathrm{sp}(f(Q))$. The two following figures are  $\Gamma_{n,m}(A)$ and $\Gamma_{n+p,m+s}(A)$ (restricted to the zoomed in part) to visualise the stopping criterion in (\ref{stopping2}).  To get a better approximation a smaller $\epsilon$ must be chosen. $A$ is the matrix representation of $f(Q)$. The right figure is the result of the finite section method trying to compute $\mathrm{sp}(f(Q))$.}
\label{fQ_fig}
\end{figure}
If we consider the multiplication operator $(Qg)(x) = xg(x)$ on $\mathrm{L}^2(\mathbb{R})$, then, for a bounded continuous function $f:\mathbb{R} \rightarrow \mathbb{R}$, the spectrum of $f(Q)$ is the range of the function $f$.
In this example we use $f(x) = \frac{i(\exp(-2\pi ix) - 1)}{2\pi x}$. To create an infinite matrix representation of $f(Q)$ we 
first consider the following Gabor basis for $\mathrm{L}^2(\mathbb{R})$:
$$
e^{2\pi i m x}\chi_{[0,1]}(x-n), \qquad m,n \in \mathbb{Z},
$$
(where $\chi$ is the characteristic function) and then chose some enumeration of $\mathbb{Z} \times \mathbb{Z}
$ into $\mathbb{N}$ to obtain a basis $\{\psi_j\}$ that is just
indexed over $\mathbb{N}.$ To get our basis we let $\varphi_j =
\mathcal{F}\psi_j$, where $\mathcal{F}$ is the Fourier Transform.
Finally we obtain the infinite matrix representation 
$
A_{ij} = \langle f(Q)\varphi_j,\varphi_i \rangle.
$
Note that this becomes a full infinite matrix, however, we know the growth of the resolvent of the operator, thus, this is a problem in the class $\Sigma_2^A$ with SCI $= 2$.  As there are now two limits, our algorithm depends on two parameters, namely $m$ and $n$, and we compute $\Gamma_{n,m}(A)$. This means that the stopping criterion from (\ref{stopping}) becomes as follows. Choose $\epsilon > 0$ and $K \in \mathbb{N}$.  Define, for any $n,l \in \mathbb{N}$,
\begin{equation}\label{stopping2}
\begin{split}
\tilde \Gamma_n(A) &:=  \Gamma_{n,m}(A),\quad
m = \min\{p: d(\Gamma_{n,p}(A),\Gamma_{n,p+k}(A) \leq \epsilon \, \text{ for all } \, k \leq K\}\\
\tilde \Gamma(A) &:= \tilde \Gamma_l(A), \qquad l = \min\{p: d(\tilde \Gamma_{p}(A), \tilde \Gamma_{p+k}(A) \leq \epsilon \, \text{ for all } \, k \leq K\},
\end{split}
\end{equation}  
and let the output be $\tilde \Gamma(A)$. This stopping criterion is obviously a generalisation of (\ref{stopping}) and extends in an obvious way to several limits. Note however how incredibly more complex it gets by adding one more limit. In Figure \ref{fQ_fig} we have plotted $\Gamma_{n,m}(A)$ and $\Gamma_{n+p,m+s}(A)$ visualizing an output based on the two limit stopping criterion in (\ref{stopping2}). We also plotted the result of the finite section method. As we are computing within the class of problems with SCI $=2$, there is of course no way that the finite section method could work.

\appendix

\section{Proof of Proposition \ref{thrm:prop_SCI} and generalisations} %and Proposition \ref{SCI_AH}}\label{sec:proofs_of_SCI_hierarchy}
\label{KS_results}

\subsection{Proof of Proposition \ref{thrm:prop_SCI} parts (i) and (ii)}

Let $(\mathcal{M},d)$ be a metric space with the Attouch--Wets or Hausdorff topology induced by another metric space $(\mathcal{M}',d_{\mathcal{M}'})$. For the Attouch--Wets topology and any fixed $x_0\in\mathcal{M}'$ we set
$$
d_{\mathrm{AW}}(C_1,C_2)=\sum_{n=1}^{\infty} 2^{-n}\min\{1,{\sup}_{d_{\mathcal{M}'}(x_0,x)\leq n}\left|\mathrm{dist}(x,C_1)-\mathrm{dist}(x,C_2)\right|\},
$$
for $C_1,C_2\in\mathrm{Cl}(\mathcal{M}'),$ where $\mathrm{Cl}(\mathcal{M}')$ denotes the set of non-empty closed subsets of $\mathcal{M}'$. In the case that $\mathcal{M}'=\mathbb{C}$ with the usual metric we take $x_0=0$. Using the notation of \S \ref{sec:background}, we have the following `sandwich' lemma.

\begin{lemma}
\label{sandwich}
Suppose that $(\mathcal{M},d)$ is the Hausdorff or Attouch--Wets topology induced by a metric space $(\mathcal{M}',d_{\mathcal{M}'})$. Let $\epsilon>0$. Suppose also that $A,A',B,B',C\in\mathcal{M}$ with $A\mathop{\subset}_{\mathcal{M}^{\prime}} A'$, $C\mathop{\subset}_{\mathcal{M}^{\prime}} B'$, $d(C,A')\leq \epsilon$ and $d(B,B')\leq \epsilon$. Then
$
d(A,C)\leq d(A,B)+2\epsilon.
$
\end{lemma}

\begin{proof}
Suppose first that $(\mathcal{M},d)$ is the Hausdorff topology. If $x\in C$ then $x\in B'$ and $
\mathrm{dist}(x,A)\leq d(B',A)\leq d(A,B)+\epsilon.$ On the other hand, if $x\in A$ then $x\in A'$ and $\mathrm{dist}(x,C)\leq d(A',C)\leq\epsilon$. The result now follows.

Suppose now that $(\mathcal{M},d)$ is the Attouch--Wets topology and let $x\in\mathcal{M}'$. Since $C\mathop{\subset}_{\mathcal{M}^{\prime}} B'$ we must have
$$
\mathrm{dist}(x,A)-\mathrm{dist}(x,C)\leq \mathrm{dist}(x,A)-\mathrm{dist}(x,B')\leq \left|\mathrm{dist}(x,A)-\mathrm{dist}(x,B)\right|+\left|\mathrm{dist}(x,B)-\mathrm{dist}(x,B')\right|.
$$
Similarly, since $A\mathop{\subset}_{\mathcal{M}^{\prime}} A'$ we must have
$$
\mathrm{dist}(x,C)-\mathrm{dist}(x,A)\leq \mathrm{dist}(x,C)-\mathrm{dist}(x,A')\leq \left|\mathrm{dist}(x,C)-\mathrm{dist}(x,A')\right|.
$$
It follows that
$$
\left|\mathrm{dist}(x,A)-\mathrm{dist}(x,C)\right|\leq \left|\mathrm{dist}(x,A)-\mathrm{dist}(x,B)\right|+\left|\mathrm{dist}(x,B)-\mathrm{dist}(x,B')\right|+\left|\mathrm{dist}(x,C)-\mathrm{dist}(x,A')\right|.
$$
The result now follows.
\end{proof}

\begin{proposition}
\label{eps_tower}
Let $(\mathcal{M},d)$ be either a metric space with the Attouch--Wets or Hausdorff topology induced by another metric space $(\mathcal{M}',d_{\mathcal{M}'})$ or a totally ordered metric space with order respecting metric. Suppose we have a computational problem
$$
\Xi:\Omega\rightarrow \mathcal{M},
$$
with a corresponding convergent $\Sigma_k^\alpha$ tower $\Gamma_{n_k,...,n_1}^1$ and a corresponding convergent $\Pi_k^\alpha$ tower $\Gamma_{n_k,...,n_1}^2$ (either both arithmetic or both general). Suppose also that $1\leq k\leq 3$ and that, in the case of arithmetic towers, we can compute for every $A\in\Omega$ the distance $d(\Gamma_{n_k,...,n_1}^1(A),\Gamma_{n_k,...,n_1}^2(A))$ to arbitrary precision using finitely many arithmetic operations and comparisons. Then $\{\Xi,\Omega\}\in\Delta_k^{\alpha}$.
\end{proposition}

\begin{remark}
This proposition essentially says that we can combine the two notions of error control $\Pi_k$ and $\Sigma_k$ to reduce the number of limits needed by one.
\end{remark}

\begin{proof}[Proof of Proposition \ref{eps_tower}]
\textbf{Step I:} For $k=1$ and the case that $(\mathcal{M},d)$ is either a metric space with the Attouch--Wets or Hausdorff topology, this is a trivial consequence of Lemma \ref{sandwich}. Let $\delta_{n_1}$ be an approximation of
$$
d(\Gamma_{n_1}^1(A),\Gamma_{n_1}^2(A))+2\cdot2^{-n_1}
$$
from above to accuracy $1/n_1$. Note that suitable approximations can easily be generated using approximations of $d(\Gamma_{n_1}^1(A),\Gamma_{n_1}^2(A))$. Let $\epsilon>0$, then simply choose $n_1\in\mathbb{N}$ minimal such that $\delta_{n_1}\leq\epsilon$. In the case that $(\mathcal{M},d)$ is totally ordered with order respecting metric
$$
d(\Gamma_{n_1}^1(A),\Xi(A))\leq d(\Gamma_{n_1}^1(A),\Gamma_{n_1}^2(A)),
$$
and we can take $n_1$ large such that the right-hand side is less than the given $\epsilon$ (recall we can compute the right-hand side to arbitrary precision). Set $\Gamma(A)=\Gamma^1(A)$, then we have
$$
d(\Gamma_(A),\Xi(A))\leq\epsilon.
$$

\textbf{Step II:} For larger $k$ we use the same idea, but we must be careful to ensure the first $k-1$ limits exist. For the rest of the proof, $\tilde d$ will denote an approximation of $d$ to accuracy $1/n_1$ (which by assumption can always be computed).

We first deal with the case $k=2$. Let $\epsilon>0$ and consider the intervals $J^1_{\epsilon}=[0,\epsilon]$ and $J^2_{\epsilon}=[2\epsilon,\infty)$. Let $\delta_{n_2,n_1}(A)$ be an approximation of
$$
d(\Gamma_{n_2,n_1}^1(A),\Gamma_{n_2,n_1}^2(A))+2\cdot2^{-n_2}
$$
from above to accuracy $1/n_1$. Again note that we can easily construct such approximations. It is clear that $\lim_{n_1\rightarrow\infty}\delta_{n_2,n_1}(A)=d(\Gamma_{n_2}^1(A),\Gamma_{n_2}^2(A))+2\cdot2^{-n_2}=:\delta_{n_2}(A)$ and that $d(\Gamma_{n_2}^1(A),\Xi(A))\leq \delta_{n_2}(A)$ (again appealing to Lemma \ref{sandwich} if we are in the case of the Attouch--Wets or Hausdorff topologies). Given $n_1,n_2$, let $l(n_2,n_1)\leq n_1$ be maximal such that $\delta_{n_2,l}(A)\in J_{\epsilon}^1\cup J_{\epsilon}^2$. If no such $l$ exists or $\delta_{n_2,l}(A)\in J_{\epsilon}^1$ then define $\mathrm{Osc}(\epsilon;n_1,n_2,A)=1$ otherwise define $\mathrm{Osc}(\epsilon;n_1,n_2,A)=0$. Since $\delta_{n_2,n_1}(A)$ cannot oscillate infinitely often between the two intervals $J^1_{\epsilon}$ and $J^2_{\epsilon}$, it follows that
$$
\mathrm{Osc}(\epsilon;n_2,A):=\lim_{n_1\rightarrow\infty}\mathrm{Osc}(\epsilon;n_1,n_2,A)
$$
exists. Define $\Gamma_{n_1}^{\epsilon}(A)$ as follows. Choose $j\leq n_1$ minimal such that $\mathrm{Osc}(\epsilon;n_1,j,A)=1$ if such a $j$ exists, and define $\Gamma_{n_1}^{\epsilon}(A)=\Gamma_{j,n_1}(A)$. If no such $j$ exists then define $\Gamma_{n_1}^{\epsilon}(A)=C_0$ where $C_0$ is any fixed member of $(\mathcal{M},d)$. In particular, $\Gamma_{n_1}^{\epsilon}$ is a type $\alpha$ algorithm. Now for large $n_2$, we must have $\delta_{n_2}(A)<\epsilon$ and hence $\mathrm{Osc}(\epsilon;n_2,A)=1$. It follows that $\Gamma^{\epsilon}(A)=\lim_{n_1\rightarrow\infty}\Gamma_{n_1}^{\epsilon}(A)$ exists and is equal to $\Gamma_{N}^1(A)$ where $N\in\mathbb{N}$ is minimal with $\mathrm{Osc}(\epsilon;N,A)=1$. It follows that $d(\Gamma^{\epsilon}(A),\Xi(A))\leq 2 \epsilon$.

We will use the $\Gamma_{n_1}^{\epsilon}(A)$ to construct a height one tower. Observe first of all that by our assumptions we can compute $\tilde d(\Gamma_{m}^{\epsilon_1}(A),\Gamma_{n}^{\epsilon_2}(A))$ for $m,n\in\mathbb{N}$ and $\epsilon_1,\epsilon_2>0$. Given $n_1$, choose $j=j(n_1)\leq n_1$ maximal such that for all $1\leq l\leq j$ we have
\begin{equation}
\label{key_diag_arg}
\tilde d(\Gamma_{n_1}^{2^{-j}}(A),\Gamma_{n_1}^{2^{-l}}(A))\leq 4(2^{-j}+2^{-l}).
\end{equation}
If no such $j$ exists then set $\Gamma_{n_1}=C_0$, otherwise set $\Gamma_{n_1}(A)=\Gamma_{n_1}^{2^{-j(n_1)}}(A)$. Again, this is easily seen to be a type $\alpha$ algorithm. Pick any $N\in\mathbb{N}$, then by the convergence of the $\Gamma_{n_1}^{\epsilon}(A)$ and $d(\Gamma^{\epsilon}(A),\Xi(A))\leq 2 \epsilon$, (\ref{key_diag_arg}) must hold for $j=N$ and $1\leq l\leq N$ if $n_1$ is large enough. Hence by definition of $j(n_1)$,
$$
\limsup_{n_1\rightarrow\infty}d(\Gamma_{n_1}(A),\Xi(A))\leq \limsup_{n_1\rightarrow\infty}d(\Gamma_{n_1}^{2^{-N}}(A),\Xi(A)) + 2^{3-N}\leq 2^{4-N}.
$$
Since $N$ was arbitrary we must have convergence to $\Xi(A)$.

\textbf{Step III:} We now deal with $k=3$. The strategy will be similar to the $k=2$ case but now we construct $\Gamma_{n_2,n_1}^{\epsilon}(A)$ such that $\Gamma_{n_2}^{\epsilon}(A):=\lim_{n_1\rightarrow\infty}\Gamma_{n_2,n_1}^{\epsilon}(A)$ exists and is $3\epsilon$ close to $\Xi(A)$ for large $n_2$, but may not converge in $(\mathcal{M},d)$. Using this, we will construct a height two type $\alpha$ tower.

As in Step II, let $\epsilon>0$ and consider the intervals $J^1_{\epsilon}=[0,\epsilon]$ and $J^2_{\epsilon}=[2\epsilon,\infty)$. Let $\delta_{n_3,n_2,n_1}(A)$ be an approximation of
$$
d(\Gamma_{n_3,n_2,n_1}^1(A),\Gamma_{n_3,n_2,n_1}^2(A))+2\cdot2^{-n_3},
$$
from above to accuracy $1/n_1$. Again, we have
$$\lim_{n_2\rightarrow\infty}\lim_{n_1\rightarrow\infty}\delta_{n_3,n_2,n_1}(A)=d(\Gamma_{n_3}^1(A),\Gamma_{n_3}^2(A))+2\cdot2^{-n_3}=:\delta_{n_3}(A)$$
exists with $d(\Gamma_{n_3}^1(A),\Xi(A))\leq \delta_{n_3}(A)$. Given $n_1,n_2$ and $j$, let $l(j,n_2,n_1)\leq n_1$ be maximal such that $\delta_{j,n_2,l}(A)\in J_{\epsilon}^1\cup J_{\epsilon}^2$. If no such $l$ exists or $\delta_{j,n_2,l}(A)\in J_{\epsilon}^1$ then define $\mathrm{Osc}(\epsilon;n_1,n_2,j,A)=1$ otherwise define $\mathrm{Osc}(\epsilon;n_1,n_2,j,A)=0$. Arguing as in Step I we have
$$
\mathrm{Osc}(\epsilon;n_2,j,A):=\lim_{n_1\rightarrow\infty}\mathrm{Osc}(\epsilon;n_1,n_2,j,A)
$$
exists. Now consider $\mathrm{Osc}(\epsilon;n_1,n_2,j,A)$ for $j\leq n_2$. If such a $j$ exists with $\mathrm{Osc}(\epsilon;n_1,n_2,j,A)=1$ then let $j(n_1,n_2)$ be the minimal such $j$ and set $\Gamma_{n_2,n_1}^{\epsilon}(A)=\Gamma_{j(n_1,n_2),n_2,n_1}^{1}(A)$. Otherwise set $\Gamma_{n_2,n_1}^{\epsilon}(A)=C_0$, where again $C_0$ is some fixed member of $(\mathcal{M},d)$. Since we only deal with finitely many $j\leq n_2$, it is clear that $\Gamma_{n_2,n_1}^{\epsilon}$ is a type $\alpha$ algorithm. Furthermore, we must have that $\Gamma_{n_2}^{\epsilon}(A):=\lim_{n_1\rightarrow\infty}\Gamma_{n_2,n_1}^{\epsilon}(A)$ exists and is defined as follows. Let $j(n_2)\leq n_2$ be minimal with $\mathrm{Osc}(\epsilon;n_2,j,A)=1$ (if such a $j$ exists). If such a $j$ exists then $\Gamma_{n_2}^{\epsilon}(A)=\Gamma_{j(n_2),n_2}^{1}(A)$, otherwise $\Gamma_{n_2}^{\epsilon}(A)=C_0$.

Now there exists $N\in\mathbb{N}$ such that $\delta_{N}(A)<\epsilon/2$ and hence $\delta_{N,n_2}(A)<\epsilon$ for large $n_2$. But this implies that $\mathrm{Osc}(\epsilon;n_2,N,A)=1$. Hence for $n_2$ large we must have $j(n_2)\leq N$. If $\delta_{l}(A)>2\epsilon$ then for large $n_2$ we must have $\delta_{l,n_2}(A)>2\epsilon$ and hence $\mathrm{Osc}(\epsilon;n_2,l,A)=0$. As $n_2$ increases, $j(n_2)$ may not converge. However, the above arguments show that for large $n_2$ it can take only finitely many values, say in the set $S=\{s_1,...,s_m\}$, all of which must have $\delta_{s_i}(A)\leq 2\epsilon$. It follows that for large $n_2$ we must have
\begin{equation}
\label{phew}
d(\Gamma_{n_2}^{\epsilon}(A),\Xi(A))\leq 3\epsilon.
\end{equation}

Now we get to work using these `towers' (which don't necessarily converge in the last limit) and the trick to avoid oscillations. Define
\begin{align*}
F(n_1,n_2,j,l,A)&:=\tilde d(\Gamma_{n_2,n_1}^{2^{-j}}(A),\Gamma_{n_2,n_1}^{2^{-l}}(A)),\\
F(n_2,j,l,A)&:=\lim_{n_1\rightarrow\infty}F(n_1,n_2,j,l,A)=d(\Gamma_{n_2}^{2^{-j}}(A),\Gamma_{n_2}^{2^{-l}}(A))
\end{align*}
and the intervals $J_{j,l}^1=[0,4(2^{-j}+2^{-l})],J_{j,l}^2=[8(2^{-j}+2^{-l}),\infty)$. Given $j,l,n_1$ and $n_2$, let $i(j,l,n_2,n_1)\leq n_1$ be maximal such that $F(i,n_2,j,l,A)\in J_{j,l}^1\cup J_{j,l}^2$. If no such $i$ exists or if it does and $F(i,n_2,j,l,A)\in J_{j,l}^1$ then define $\widehat{\mathrm{Osc}}(n_1,n_2,j,l,A)=1$ otherwise define $\widehat{\mathrm{Osc}}(n_1,n_2,j,l,A)=0$. Choose $j=j(n_1,n_2)\leq n_2$ maximal such that for all $1\leq l\leq j$ we have $\widehat{\mathrm{Osc}}(n_1,n_2,j,l,A)=1$. If no such $j$ exists then set $\Gamma_{n_2,n_1}=C_0$, otherwise set $\Gamma_{n_2,n_1}(A)=\Gamma_{n_2,n_1}^{2^{-j(n_1,n_2)}}(A)$. Again, this is easily seen to be a type $\alpha$ algorithm. 

Arguing as before, we have the existence of
$$
\widehat{\mathrm{Osc}}(n_2,j,l,A):=\lim_{n_1\rightarrow\infty}\widehat{\mathrm{Osc}}(n_1,n_2,j,l,A).
$$
Now define $h=h(n_2)\leq n_2$ maximal such that for all $1\leq l\leq h$ we have $\widehat{\mathrm{Osc}}(n_2,h,l,A)=1$. If no such $h$ exists then we must have
$$
\Gamma_{n_2}(A):=\lim_{n_1\rightarrow\infty}\Gamma_{n_2,n_1}(A)=C_0,
$$
otherwise we must have
$$
\Gamma_{n_2}(A):=\lim_{n_1\rightarrow\infty}\Gamma_{n_2,n_1}(A)=\Gamma_{n_2}^{2^{-h(n_2)}}(A).
$$
By (\ref{phew}), for any fixed $j,l$ we have $\widehat{\mathrm{Osc}}(n_2,j,l,A)=1$ for large $n_2$ and hence $h(n_2)$ exists for large $n_2$ and diverges to $\infty$. Now let $N\in\mathbb{N}$ then it follows that
\begin{align*}
\limsup_{n_2\rightarrow\infty}d(\Gamma_{n_2}^{2^{-h(n_2)}}(A),\Xi(A))&\leq \limsup_{n_2\rightarrow\infty}d(\Gamma_{n_2}^{2^{-N}}(A),\Xi(A)) + d(\Gamma_{n_2}^{2^{-h(n_2)}}(A),\Gamma_{n_2}^{2^{-N}}(A))\\
&\leq 3\cdot 2^{-N}+\limsup_{n_2\rightarrow\infty}8(2^{-h(n_2)}+2^{-N})\leq 11\cdot2^{-N}.
\end{align*}
Since $N$ was arbitrary we must have convergence to $\Xi(A)$.
\end{proof}

%\begin{remark}
%In the case of general algorithms we don't need to assume each $d(\Gamma_{n_k,...,n_1}^1(A),\Gamma_{n_k,...,n_1}^2(A))$ can be `computed' since they can be simply considered as general operations. For arithmetical towers and the case of the complex plane, if each output is a finite collection of points (or simple objects such as squares etc.) then these can be computed to arbitrary precision (one needs radicals to extract the square root). Such approximations can be absorbed into the first limits of the above constructions and all the algorithms constructed in this paper are of this simple form. Hence the assumptions of Proposition \ref{eps_tower} are not restrictive for our purposes.
%\end{remark}

\begin{proof}[Proof of Proposition \ref{thrm:prop_SCI} parts (i) and (ii)]
The statement regarding intersections follows directly from Proposition \ref{eps_tower} and the following remark - no assumptions on being able to compute distances between output of algorithms is necessary when considering general towers. For the sharpness result in (i), we deal with $X=\Sigma$ and the $X=\Pi$ follows from an identical argument. Suppose that $\Delta_k^G\not\ni\{\Xi,\Omega\}\in \Sigma_k^{\alpha}$. If $\{\Xi,\Omega\}\in\Pi_k^{\alpha}$, we would have $\{\Xi,\Omega\}\in\Sigma_k^\alpha\cap\Pi_k^\alpha\subset \Sigma_k^G\cap\Pi_k^G=\Delta_k^G$, a contradiction. 
\end{proof}

\subsection{Proof of Proposition \ref{thrm:prop_SCI} part (iii)}\label{sec:Cucker}
To prove this part, we consider the following alternative definition in the case that $\mathcal{M} = \{0,1\}$. Note that if we restricted to recursivity in the Turing sense with $\Xi$ describing subsets of $\mathbb{N}$, this would correspond to the classical arithmetical hierarchy. However, it is much more general and also encompasses the work of F. Cucker \cite{Cucker_AH_real} in the BSS model. 

\begin{definition}[SCI hierarchy, $\mathcal{M} = \{0,1\}$ (alternative definition)]\label{general_stuff}
Suppose that $\mathcal{M} = \{0,1\}$. We define the following
\begin{itemize}
\item[(i)] We say that $\Xi:\Omega\to\mathcal{M}$ permits a representation by an alternating 
quantifier form of length $m$ if 
$$
\Xi=(Q_mn_m)\cdots (Q_1n_1)\Gamma_{n_m,\ldots,n_1},
$$
where $(Q_i)$ is a list of alternating quantifiers $(\forall)$ and $(\exists)$, and all $\Gamma_{n_m,\ldots,n_1}:\Omega\to\mathcal{M}$ are general algorithms in the sense of
Definition \ref{alg}.
\item[(ii)]
We say that $\{\Xi,\Omega\}$ is $\Sigma_m$ if an alternating quantifier form of length $m$ exists
with $Q_m$ being $(\exists)$, and that $\{\Xi,\Omega\}$ is $\Pi_m$ if an alternating quantifier form 
of length $m$ exists with $Q_m$ being $(\forall)$. 
\item[(iii)]
We say that $\{\Xi,\Omega\}$ is $\Delta_m$ if $\{\Xi,\Omega\}$ is $\Sigma_m$ and $\Pi_m$.
\end{itemize}
\end{definition}

It is not clear from the wordings of Definition \ref{def:tot_ord} and Definition \ref{general_stuff} that they are equivalent. However, the next proposition provides the link. 

\begin{proposition}[The SCI hierarchy encompasses the arithmetical hierarchy]\label{thrm:prop_SCI_00}
When $\mathcal{M} = \{0,1\}$, Definition \ref{def:tot_ord} and Definition \ref{general_stuff} are equivalent and hence the SCI encompasses generalisations of the arithmetical hierarchy. This also holds for arithmetic towers which extends the arithmetical hierarchy to arbitrary domains.
\end{proposition}

This immediately implies part (iii) or Proposition \ref{thrm:prop_SCI} and hence the rest of this subsection is devoted to proving Proposition \ref{thrm:prop_SCI_00}.

\begin{remark}
 In classical hierarchies the $\Delta_k$ class is defined by $\Delta_k = \Sigma_k \cap \Pi_k$. This is not the case in the SCI hierarchy. The $\Delta^{\alpha}_k$ classes form the core of the hierarchy, and only when there is extra structure on the metric space does it makes sense to define the $\Sigma^{\alpha}_k$ and the $\Pi^{\alpha}_k$. Moreover, in the general SCI hierarchy, we may have that 
$$
\Delta^{\alpha}_k \neq \Sigma^{\alpha}_k \cap \Pi^{\alpha}_k.
$$
Of course, in the special cases of the SCI hierarchy such as the arithmetical hierarchy, then $\Delta_k = \Sigma_k \cap \Pi_k$. Also, we show that $\Delta^{\alpha}_k = \Sigma^{\alpha}_k \cap \Pi^{\alpha}_k$ for $k = 1,2,3$ and $\alpha = G,A$ in the computational spectral problem case, however, there is no reason that this should hold for $k > 3$ in general. 
Moreover, classical hierarchies have that $\Sigma_k\setminus \Delta_{k-1} \neq \emptyset$ and $\Pi_k\setminus \Delta_{k-1} \neq \emptyset$. This does not have to be the case in general SCI hierarchies. Indeed, one may have that 
\[
\Sigma^{\alpha}_k\setminus \Delta^{\alpha}_{k-1} = \emptyset \text{ or } \Pi^{\alpha}_k\setminus \Delta^{\alpha}_{k-1} = \emptyset.
\]
 This happens, for example, in the SCI hierarchy for the computational spectral problem. 
\end{remark}

To prove Proposition \ref{thrm:prop_SCI_00} we make the following definition, which corresponds to the SCI hierarchy in the main text.

\begin{definition}[Limit forms]\label{general_stuff2}
If $\mathcal{M} = \{0,1\}$, we define the following with respect to a given type of tower of algorithms (arithmetical, radical general etc.):
\begin{itemize}
\item[(i)] We say that $\{\Xi,\Omega\}$ is $\tilde\Sigma_m$ if there exists a height $m$ tower solving the computational problem such that the final limit is monotonic from below. We say that $\{\Xi,\Omega\}$ is $\tilde\Pi_m$ if there exists a height $m$ tower solving the computational problem such that the final limit is monotonic from above.
\item[(ii)] We say that $\{\Xi,\Omega\}$ is $\tilde\Delta_{m+1}$ if there exists a height $m$ tower solving the computational problem.
\end{itemize}
\end{definition}

The following theorem demonstrates how the SCI framework can be viewed, in the special case of $\mathcal{M} = \{0,1\}$, as a generalisation of the Arithmetical Hierarchy to arbitrary computational problems. In particular, one can define a hierarchy for any kind of tower. Here we do this for a general tower, and obviously, this can be done for any tower. We will call the hierarchy described below a {\it General Hierarchy}.

\begin{proposition}[General Hierarchy] \label{theorem_general}
Suppose that $\mathcal{M} = \{0,1\}$. Following Definitions \ref{general_stuff} and \ref{general_stuff2}, for any $m\geq 1$ we have that
$$
\tilde\Sigma_m=\Sigma_m, \quad \tilde\Pi_m=\Pi_m\quad\text{and}\quad\tilde\Delta_m=\Delta_m.
$$
\end{proposition}

\begin{proof}[Proof of Proposition \ref{theorem_general}]
{\bf Step I:} We show that if $\mathrm{SCI}(\Xi,\Omega)_{\mathrm{G}}\leq m$ then $\Xi$ is $\Delta_{m+1}$.
Let $p=\lim_i p_i$. Then
\[p=\tr \quad\Leftrightarrow\quad
\forall n \exists k (k\geq n \wedge p_k) \quad\Leftrightarrow\quad
\exists n \forall k (k\leq n \vee p_k).\]
Further, let $\varphi:\Nb\to\Nb\times\Nb$, $k\mapsto(\varphi_1(k),\varphi_2(k))$ 
be a bijection which enumerates all pairs of natural numbers, and note that 
$$
\exists n \exists m (p_{n,m})\Leftrightarrow\exists k (p(\varphi_1(k),\varphi_2(k))),
\qquad 
\forall n \forall m (p_{n,m})\Leftrightarrow\forall k (p(\varphi_1(k),\varphi_2(k))),
$$
for any family $(p_{n,m})_{n,m\in\Nb}\subset\mathcal{M}$.
Thus, every limit in a tower of height $m$ can be converted alternately into an expression with two quantifiers 
($\forall \exists$ or $\exists\forall$), and then $m-1$ doubles $\exists\exists$ or 
$\forall\forall$ can be replaced by single quantifiers. This easily gives the claim.

{\bf Step II:} We show that if $\Xi$ is $\Sigma_m$ or $\Pi_m$ then $\mathrm{SCI}(\Xi,\Omega)_{\mathrm{G}}\leq m$. In fact we show that $\Sigma_m\subset\tilde\Sigma_m$ and $\Pi_m\subset\tilde\Pi_m$.
As a start let $(p_i)\subset\mathcal{M}$ be a sequence. Then 
%\begin{align*}
%(\forall i (p_i))=\tr \quad&\text{if and only if} 
%		\quad \left(\lim_{n\to\infty} \bigwedge_{i=1}^n p_i\right) = \tr,\\
%(\exists i (p_i))=\tr \quad&\text{if and only if} 
%		\quad \left(\lim_{n\to\infty} \bigvee_{i=1}^n p_i\right) = \tr.
%\end{align*}
\[
(\forall i (p_i))=\tr \quad\Leftrightarrow 
		\quad \left(\lim_{n\to\infty} \bigwedge_{i=1}^n p_i\right) = \tr,\quad\quad
(\exists i (p_i))=\tr \quad\Leftrightarrow 
		\quad \left(\lim_{n\to\infty} \bigvee_{i=1}^n p_i\right) = \tr.
\]
Furthermore, the conjunction (disjunction) of limits coincides with the limit of 
the elementwise conjunction (disjunction), hence
\[\forall n_m\exists n_{m-1}\cdots \forall n_1 \Gamma_{n_m,\cdots,n_1}
= \lim_{k_m} \lim_{k_{m-1}}\cdots\lim_{k_1}\bigwedge_{i_m=1}^{k_m}\bigvee_{i_{m-1}=1}^{k_{m-1}} \cdots\bigwedge_{i_1=1}^{k_1}\Gamma_{i_m,i_{m-1},\cdots,i_1}\]
and similarly for any other possible alternating quantifier form. 
Since the $\Gamma_{n_m,\cdots,n_1}$ in the alternating quantifier form at the
left-hand side are General algorithms, the right-hand side obviously yields a tower
of algorithms of height $m$. Moreover, we obtain the required monotonic final limits.

{\bf Step III:} We show that $\tilde\Delta_m=\Delta_m$. Let $m\in\Nb$ be the smallest number with $\Xi$ being $\Delta_{m+1}$.
In the above steps we have already seen that 
$m\leq \mathrm{SCI}(\Xi,\Omega)_{\mathrm{G}} \leq m+1$, and we next prove the following:
If
\[\Xi(y) = \exists i \forall j (g_0(i,j,y)) = \forall n \exists m (g_1(n,m,y))\]
then $\Xi(y)=\lim_{k\to\infty} g(k,y)$ with a function $g$ being easily derivable
from $g_0$, $g_1$.
The following construction is adopted from \cite[Proofs of Theorems 1 and 3]{Gold65}.
Fix $y$ and define a function $h_0:\Nb\to\mathcal{M}$ recursively as follows:

{\ttfamily
$i(1):= 1$, $j(1):=1$, $h_0(1):=g_0(i(1),j(1),y)$.

If $h_0(l)=\tr$ 

then: $i(l+1):=i(l)$, $j(l+1):=j(l)+1$

else: $i(l+1):=i(l)+1$, $j(l+1):=1$.

$l:=l+1$.

$h_0(l):=g_0(i(l),j(l),y)$.
}\\
We observe that, if $\Xi(y)=\tr$ then $h_0(l)$ converges as $l\to\infty$ with limit $\tr$.
Otherwise, the limit does not exist or is $\f$.
The same construction applies to 
$\neg(\forall n \exists m (g_1(n,m,y))) = \exists n \forall m \neg(g_1(n,m,y))$ and yields a
function $h_1$ which converges to $\tr$ if and only if $\Xi(y)=\f$.
Clearly, exactly one of the functions $h_0$, $h_1$ converges to $\tr$.
Now we derive the desired $g$ from $h_0$ and $h_1$ as follows:

{\ttfamily
$\alpha(1)=0$.

If $h_{\alpha(k)}(k)=\tr$ 

then: $\alpha(k+1):=\alpha(k)$

else: $\alpha(k+1):=1-\alpha(k)$.

$k:=k+1$.

If $\alpha(k)=0$ 

then: $g(k,y):=\tr$

else: $g(k,y):=\f$.
}\\
This provides $\Xi(y)=\lim_{k\to\infty} g(k,y)$.

Next, let $g_0$ and $g_1$ be of the form 
$g_s(i,j,y)=\lim_{r} f^s_{i,j,r}(y)$, $s\in\{0,1\}$. Fix $y$.
Then for every pair $(i,j)$ there is an $r(i,j)$ such that $f^s_{u,v,r}(y)=g_s(u,v,y)$
for all $u\leq i$, $v\leq j$, $s\in\{0,1\}$ and $r\geq r(i,j)$. Thus, $g$ is also of 
the form $g(k,y)=\lim_{r} f_{k,r}(y)$
with $f_{k,r}$ being defined by the above procedure applied to the functions 
$(i,j,y)\mapsto f^s_{i,j,k}(y)$ instead of $g_s(i,j,y)$ ($s\in\{0,1\}$).

Now we are left with iterating this argument: If both functions $g_s$ ($s\in\{0,1\}$) 
are of the form 
$g_s(i,j,y)=\lim_{k_{m-1}} \lim_{k_{m-2}}\cdots\lim_{k_1} f^s_{i,j,k_{m-1},\cdots,k_1}(y)$
with certain General algorithms $f^s_{i,j,k_{m-1},\cdots,k_1}$, then also
$g$ is of the form
\[g(k,y)=\lim_{k_{m-1}} \lim_{k_{m-2}}\cdots\lim_{k_1} f_{k,k_{m-1},\cdots,k_1}(y)\]
with $f_{k,k_{m-1},\cdots,k_1}$ being defined by the same 
procedure as before applied to the functions 
$(i,j,y)\mapsto f^s_{i,j,k_{m-1},\cdots,k_1}(y)$ instead of $g_s(i,j,y)$ ($s\in\{0,1\}$).
The resulting functions $y\mapsto f_{k,k_{m-1},\cdots,k_1}(y)$ are General algorithms 
for every $k$, since their evaluation requires only finitely 
many evaluations of the General algorithms $f^s_{i,j,k_{m-1},\cdots,k_1}$.

{\bf Step IV:} It remains to show that $\tilde\Sigma_m\subset\Sigma_m$ and $\tilde\Pi_m\subset\Pi_m$. Suppose that $\Xi\in\tilde\Sigma_m$($\in\tilde\Pi_m$) then by considering the first $m-1$ limits there exists a family $\Xi_{n_m}\in\tilde\Delta_{m}=\Delta_{m}$ (this is also trivially true if $m=1$) such that
$$
\Xi(y)=\lim_{n_m\rightarrow\infty}\Xi_{n_m}(y)
$$
with the final limit monotonic from below (above). But then we must have $\Xi(y)=\exists n_m\Xi_{n_m}(y)$ ($\Xi(y)=\forall n_m\Xi_{n_m}(y)$). But $\Xi_{n_m}\in\Sigma_m$($\in\Pi_m$) and we can collapse the double quantifier $\exists\exists$ ($\forall\forall$) to a single $\exists$($\forall$).
\end{proof}

\subsection{The Baire hierarchy}
To end this appendix, we also make some remarks on the Baire hierarchy. The Baire hierarchy \cite{kechris1995classical}, which is closely related to the Borel hierarchy \cite{kechris1995classical}, in descriptive set theory, has similarities to the SCI hierarchy, however, is fundamentally different.  However, it is worth mentioning, as the Baire hierarchy does include classes of functions that are obtained as limits of functions from lower levels in the hierarchy, hence the two hierarchies share some similarities. 

Recall that given metrisable spaces $X,Y$ and a continuous function $f : X \rightarrow Y$ we say that $f$ is of Baire class $0$. We define a function $g: X \rightarrow Y$ to be in Baire class $1$ if there is a sequence of functions $\{g_n\}$, all of Baire class $0$, such that $g(x) = \lim_{n\rightarrow \infty}g_n(x)$ for all $x \in X$. In general, for $1 < \rho < \omega_1$ we define a function $f: X \rightarrow Y$ to be of Baire class $\rho$ if it is the pointwise limit of a sequence of functions $f_n:X \rightarrow Y$, where $f_n$ is of Baire class $\rho_n < \rho$. In order to understand the similarities and differences between the two hierarchies, we provide a short discussion below. 

{\it Similarities between the SCI and Baire hierarchies.} The main similarity between the hierarchies is the concept of pointwise limits. Indeed, for the integer values of the Baire classes, this number indeed resembles the SCI.     

{\it Differences between the SCI and Baire hierarchies.} The differences between the hierarchy are due to the fact that they describe very different problems. This can be summed up as follows.
\begin{itemize}[noitemsep]
\item[(i)] ({\it Generality}). The SCI hierarchy is designed to be able to handle all types of computational problems such as Smale's problem on iterative algorithms for polynomial root-finding, Doyle--McMullen towers, the insolvability of the quintic etc. This is obviously not within the scope of the Baire hierarchy, however, this was never the intention for this hierarchy. 
\item[(ii)] ({\it Refinements}). An important difference between the hierarchies is that the SCI hierarchy, when extra structure on $\mathcal{M}$ is available, allows for the refinements in terms of the $\Sigma^{\alpha}_k$ and $\Pi^{\alpha}_k$ classes. This type of refinement is not captured by the Baire hierarchy, however, that has never been the motivation. 

\item[(iii)] ({\it Topology vs information}). A striking difference is that the Baire hierarchy is based on metrisable topologies, whereas the SCI hierarchy is based on the information $\Lambda$ (see Definition \ref{def:comp_prob}) available to the algorithm. The computational spectral problem is a good example to illustrate the issue. Let $\Xi: \Omega \ni A \mapsto \mathrm{sp}(A) \in \mathcal{M}$ where $\Omega$ is the set of self-adjoint operators in $\mathcal{B}(l^2(\mathbb{N}))$ and $\mathcal{M}$ is the collection of non-empty compact subsets of $\mathbb{C}$ with the Hausdorff metric. If we equip $\Omega$ with the operator norm topology, then $\Xi$ is Baire class $0$. Yet, the SCI $= 2$ for $\Xi$. If one changes the metric on $\Omega$, the Baire class will change, yet the SCI remains unchanged. Also, as a side note, the algorithms used in this paper to show that the SCI $= 2$ are not continuous in any metrisable topology. Thus, there is no metric on $\Omega$ such that these become Baire class $0$. 

Finally, if we consider self-adjoint Schr\"odinger operators on $\mathrm{L}^2(\mathbb{R}^d)$ with bounded potential $V$ such that $V \in \mathrm{BV}_{\mathrm{loc}}(\mathbb{R}^d)$, then the SCI of the spectral map is $1$ if we can access point samples of $V$. Also, if we equip this set of operators with the natural graph metric (equivalent to norm convergence in the bounded case) the spectral map is Baire class $0$. However, if one changes $\Lambda$, such that we are given matrix elements of the operator with respect to some orthonormal basis of the domain, we may get that the SCI $= \infty$, as the matrix representation may not uniquely determine the spectrum. Thus, the SCI changes with $\Lambda$ (see Definition \ref{def:comp_prob}) that determines which information is available whereas the Baire class changes with the metric. 
\end{itemize}

%\section*{Acknowledgements}
%ACH would like to thank Caroline Series for pointing out the connection between the results of Doyle and McMullen in \cite{Doyle_McMullen} and the work in \cite{Hansen_JAMS}. It was this connection that initiated the work leading to this paper. The authors are grateful to Roman Bogdan for producing the data for Figure \ref{fQ_fig}. MJC acknowledges support from the UK Engineering and Physical Sciences Research Council (EPSRC) grant EP/L016516/1. ACH acknowledges support from a Royal Society University Research Fellowship as well as the UK Engineering and Physical Sciences Research Council (EPSRC) grant EP/L003457/1.

\bibliography{bib_file_SCI}
\bibliographystyle{abbrv}
\end{document}